\theoremstyle{definition}
\newtheorem{defn}{\protect\definitionname}
\theoremstyle{plain}
\newtheorem{thm}{\protect\theoremname}
\theoremstyle{remark}
\newtheorem*{rem*}{\protect\remarkname}
\theoremstyle{plain}
\newtheorem{prop}{\protect\propositionname}
\theoremstyle{plain}
\newtheorem{assumption}{\protect\assumptionname}
\theoremstyle{definition}
 \newtheorem{example}{\protect\examplename}
\theoremstyle{plain}
\newtheorem{cor}{\protect\corollaryname}
\theoremstyle{remark}
\newtheorem{rem}{\protect\remarkname}
\theoremstyle{plain}
\newtheorem{lem}{\protect\lemmaname}
\newcolumntype{C}[1]{>{\centering\arraybackslash}p{#1}}
\newcolumntype{J}[1]{>{\justify\arraybackslash}p{#1}}
\newcolumntype{R}[1]{>{\RaggedLeft\arraybackslash}p{#1}}
\newcolumntype{Q}[1]{>{\columncolor{Gray}\RaggedLeft\arraybackslash}p{#1}}
\newcolumntype{L}[1]{>{\RaggedRight\arraybackslash}p{#1}}
\newcolumntype{G}{@{\extracolsep{0.5cm}}l@{\extracolsep{0pt}}}%
\newcolumntype{P}[1]{>{\centering\arraybackslash}p{#1}}
\newcolumntype{Y}{>{\centering\arraybackslash}X}
\newcommand{\nhphantom}[1]{\sbox0{#1}\hspace{-\the\wd0}} 
\renewcommand*{\fps@figure}{htb}
\providecommand{\assumptionname}{Assumption}
\providecommand{\corollaryname}{Corollary}
\providecommand{\definitionname}{Definition}
\providecommand{\examplename}{Example}
\providecommand{\lemmaname}{Lemma}
\providecommand{\propositionname}{Proposition}
\providecommand{\remarkname}{Remark}
\providecommand{\theoremname}{Theorem}
\begin{document}
\title{Convolution-$t$ Distributions\thanks{We thank George Tauchen and seminar participants at University of
Aarhus and North Carolina State University for valuable comments.
Chen Tong acknowledges financial support from the Youth Fund of the
National Natural Science Foundation of China (72301227), and the Ministry
of Education of China, Humanities and Social Sciences Youth Fund (22YJC790117).
This research obtained database services of TickData.com from the
School of Economics at Xiamen University. Corresponding author: Chen
Tong, Email: tongchen@xmu.edu.cn.}\emph{\normalsize{}\medskip{}
}}
\author{\textbf{Peter Reinhard Hansen}$^{\mathsection}$$\quad$\textbf{ }and$\quad$\textbf{Chen
Tong}$^{\ddagger}$\bigskip{}
 \\
\\
 {\normalsize{}$^{\mathsection}$}\emph{\normalsize{}Department of
Economics, University of North Carolina at Chapel Hill}\\
{\normalsize{}$^{\ddagger}$}\emph{\normalsize{}Department of Finance,
School of Economics, Xiamen University\medskip{}
 }}
\date{This version{\normalsize{}:}\emph{\normalsize{} \today}}
\maketitle
\begin{abstract}
We introduce a new class of multivariate heavy-tailed distributions
that are convolutions of heterogeneous multivariate $t$-distributions.
Unlike commonly used heavy-tailed distributions, the multivariate
convolution-$t$ distributions embody cluster structures with flexible
nonlinear dependencies and heterogeneous marginal distributions. Importantly,
convolution-$t$ distributions have simple density functions that
facilitate estimation and likelihood-based inference. The characteristic
features of convolution-$t$ distributions are found to be important
in an empirical analysis of realized volatility measures and help
identify their underlying factor structure.

\bigskip{}
\end{abstract}
\textit{\small{}{\noindent}Keywords:}{\small{} Multivariate heavy-tailed
distributions. Convolutions of $t$-distributions, Voigt profile.}{\small\par}

\noindent \textit{\small{}{\noindent}JEL Classification:}{\small{}
C01, C32, C46, C58}{\small\par}

\clearpage{}

\section{Introduction}

Heavy-tailed distributions play a central role in the modeling of
risk and extreme events in economics, finance, and beyond, and their
multivariate properties are key to risk management, portfolio optimization,
and the assessment of systemic risk, see e.g. \citet{EmbrechtsKluppelbergMikosch:1997},
\citet{Harvey2013}, and \citet{IbragimovIbragimovWalden:2015}.

This paper introduces a new class of multivariate heavy-tailed distributions,
\emph{convolution-$t$ distributions,} that are convolutions of mutually
independent \textit{\emph{multivariate}} $t$-distributions. The new
distributions can accommodate heterogeneous marginal distributions
and various forms of nonlinear dependencies. Moreover, the nonlinear
dependencies can be used to identify factors of economic significance.
Conveniently, the expression for the density function of convolution-$t$
distributions is simple. This facilitates straight forward estimation
and inference. The framework includes the multivariate-$t$ distribution
as a special case, and makes clear how it is restrictive in a number
of ways. For instance, the multivariate-$t$ distribution implies
all marginal distributions have identical shape (e.g. same kurtosis)
and it implies a particular type of nonlinear dependence between elements,
because they all share a single common Gamma-distributed (mixing)
variable.

A convolution-$t$ distribution, denoted $\mathrm{CT}_{\boldsymbol{n},\boldsymbol{\nu}}(\mu,\Xi)$,
is characterized by a location vector, $\mu$, a scale-rotation matrix,
$\Xi$, and two $K$-dimensional vectors, $\bm{n}$ and $\bm{\nu}$,
that specify cluster sizes and degrees of freedom, respectively. The
matrix, $\Xi$, define key features of the distribution and the resulting
non-linear dependencies identify the latent factor structure. We provide
a novel identifying representation of $\Xi$, which can accommodate
block structure and other sparse structures when such are needed.
This makes the framework amenable to high-dimensional applications
of the convolution-$t$ distributions.

The main contributions of this paper are as follows: First, we introduce
a family of convolution-$t$ distributions and obtain their marginal
density and cumulative distribution function by means of the characteristic
function. Second, we characterize the identification problem and show
that convolution-$t$ distributions can help unearth latent factor
structures, which is not possible from the covariance structure alone.
Third, we derive the score, hessian, and information matrix from the
log-likelihood function. And the proof of consistency and asymptotic
normality of maximum-likelihood estimators are well established. The
asymptotic properties are confirmed in a simulation study. Fourth,
we extend the moment-based approximation method by \citet{Patil:1965}
to convolutions of any number of $t$-distributions. Fifth, we show
that the convolution-$t$ distributions provide substantially empirical
gain in an application with ten financial volatility series. Importantly,
the convolution-$t$ framework adds insight about the latent factor
structure, where nonlinear dependencies identify a common market factor,
and a cluster structure that aligns with sector classifications.

Convolutions of $t$-distributions arise in classical problems, such
as the Behrens-Fisher problem (statistics) and the Voigt profile (spectroscopy),
where the latter is the convolution of a Gaussian distribution and
a Cauchy distribution.\footnote{This convolution is closely related to a Mills ratio that appears
in the Heckman model.} The literature has mainly focused on convolutions of two $t$-distributions,
see e.g. \citet{Chapman:1950}, \citet{Ghosh:1975}, and \citet{PrudnikovBrychkovMarichev:1986}.\footnote{\citet{Ruben:1960} expressed the density function as an integral
that, in the general case, involves hypergeometric functions. \citet{RahmanSaleh:1974}
and \citet{Dayal:1976} derived an expression for the distribution
of the Behrens-Fisher statistic using Appell series.} \citet{BergVignat:2010} derived an expression for the density, which
includes convolutions of two multivariate $t$-distributions. Their
expression comprises an infinite sum with coefficients that are given
from integrals. 

The convolution-$t$ distributions offer a parametric approach to
modeling multivariate variables with complex dependencies and heterogeneous
marginal distributions, similar to copula-based methods, see e.g.
\citet{Patton:Copula2012} and \citet{FanPatton:Copula2014}. For
instance, \citet{Fang2002} proposed the meta-elliptical distributions,
which construct the joint density function by combining elliptical
copula function with certain marginals, e.g. Student's $t$-distribution.
The Meta-$t$ distribution of \citet{Demarta2007} is often used in
high dimensional settings, see e.g. \citet[2017b, 2023]{OhPatton2017JBES}\nocite{OhPatton2017}\nocite{OhPatton2023},
\citet{OpschoorLucasBarraVanDick:2021}, and \citet{CrealTsay2015}.
Another related strand of literature is that on multivariate stochastic
scale mixture of Gaussian family distributions, see e.g. \citet{Eltoft2006},
\citet{FinegoldDrton:2011}, and \citet{Forbes2014}. However, for
the aforementioned two types of multivariate distributions, only the
scaling (or covariance) matrix matters. The structure of convolution-$t$
distributions is differ from existing distributions in many ways,
with an important one being the underlying cluster structure in convolution-$t$
distributions that, in conjunction with a novel scale-rotation matrix,
$\Xi$, define particular nonlinear dependencies that can be used
to identify clusters and latent factor structures in the data.

The rest of this paper is organized as follows. We introduce the convolution-$t$
distributions in Section \ref{sec:Convolution-t-Distributions} and
establish a number of its properties, including moments and marginal
densities. We derive the score, Hessian, and Fisher information from
the log-likelihood function in Section \ref{sec:Likelihood-Analysis},
and characterize an interesting identification problem in this model.
We consider the finite-sample properties of the maximum likelihood
estimator (MLE) in Section \ref{subsec:ScoreInfo}, and confirm that
the analytical expressions for standard errors, based on the asymptotic
expressions, are reliable. We introduce a standardized variant of
convolution-$t$ distribution with finite variance in Section \ref{sec:Simulation-Study},
which makes it easier to interpret estimators in the empirical analysis.
We derive a moment-based approximation methods of marginal convolution-$t$
distributions in Section \ref{sec:Approximating}. In Section \ref{sec:Empirical-Analysis},
we apply the convolution-$t$ framework to model a vector of realized
volatilities. We conclude in Section 8 and present all proofs in the
Appendix. Some additional theoretical results and simulation results
for non-standard situations are presented in the `Supplemental Material'.

\section{Convolution-$t$ Distributions\label{sec:Convolution-t-Distributions}}

The $n$-dimensional Student's $t$-distribution has density function,
\begin{equation}
f_{X}(x)=\tfrac{\Gamma\left(\tfrac{\nu+n}{2}\right)}{\Gamma\left(\tfrac{\nu}{2}\right)}(\nu\pi)^{-\frac{n}{2}}|\Sigma|^{-\frac{1}{2}}\left[1+\tfrac{1}{\nu}(x-\mu)^{\prime}\Sigma^{-1}(x-\mu)\right]^{-\frac{\nu+n}{2}},\quad x\in\mathbb{R}^{n},\label{eq:t-density}
\end{equation}
where $\nu>0$ is the degrees of freedom, $\mu\in\mathbb{R}^{n}$
is the location parameter, and $\Sigma\in\mathbb{R}^{n\times n}$
is the symmetric and positive definite scale matrix. We write $X\sim t_{n,\nu}(\mu,\Sigma)$
to denote a random variable with this distribution and include the
limited case as $\nu\rightarrow\infty$, such that $t_{\infty,n}(\mu,\Sigma)$
represents the Gaussian distribution, $X\sim N_{n}(\mu,\Sigma)$,
with density,
\begin{equation}
f_{X}(x)=(2\pi)^{-\frac{n}{2}}|\Sigma|^{-1/2}\exp(-\tfrac{1}{2}x^{\prime}\Sigma^{-1}x).\label{eq:N-density}
\end{equation}
If $X\sim t_{n,\nu}(\mu,\Sigma)$, then it is easy to verify that
\begin{equation}
Y=BX\sim t_{m,\nu}(B\mu,B\Sigma B^{\prime}),\label{eq:Y=00003DBX}
\end{equation}
for any $m\times n$ matrix $B$ with full row rank. Any marginal
distributions of a multivariate $t$-distribution is a univariate
$t$-distribution with the same degrees of freedom, $\nu$, as that
of the multivariate distribution. This can be too restrictive and
rules out marginal distributions with heterogeneity in the kurtosis.
Recall that a multivariate $t$-distribution has the representation:
$\sqrt{\frac{\nu}{\xi}}Z\sim t_{n,\nu}(\mu,\Sigma)$, where $Z\sim N_{n}(\mu,\Sigma)$
and $\xi\sim\mathrm{Gamma}(\tfrac{\nu}{2},2)$ are independent. This
highlights another characteristic of the multivariate $t$-distribution,
which is that common mixing variable $\xi$, induces very particular
nonlinear dependencies. The convolution-$t$ distribution, which we
introduce next, is a more versatile class of distributions.

\subsection{Convolution-$t$ Distribution}

Consider now the case where $X$ is composed of $K$ independent multivariate
$t$-distributions, 
\[
X=\left(\begin{array}{c}
X_{1}\\
\vdots\\
X_{K}
\end{array}\right),\qquad\text{where}\quad X_{k}\sim t_{n_{k},\nu_{k}}(0,I_{k}),\qquad\text{for}\quad k=1,\ldots,K,
\]
such that the dimension of $X$ is $n=\sum_{k=1}^{K}n_{k}$. Because
$X_{1},\ldots,X_{K}$ are mutually independent, it follows that the
density function for $X$ is given by
\begin{equation}
f_{X}(x)=\prod_{k=1}^{K}f_{X_{k}}(x_{k}),\qquad\text{for all}\quad x\in\mathbb{R}^{n},\label{eq:fx1}
\end{equation}
where $f_{X_{k}}(x_{k})$ has the form in (\ref{eq:t-density}) if
$\nu_{k}<\infty$ and the form (\ref{eq:N-density}) if $\nu_{k}=\infty$.
We will use the following notation for convolutions of heterogeneous
multivariate $t$-distributions (including Gaussian distributions).
\begin{defn}[Convolution-$t$ distribution]
We write $Y\sim\mathrm{CT}_{\boldsymbol{n},\boldsymbol{\nu}}(\mu,\Xi)$
for $Y=\mu+\Xi X$, where $\boldsymbol{n}=(n_{1},\ldots,n_{K})$,
$\boldsymbol{\nu}=(\nu_{1},\ldots,\nu_{K})$, $\mu\in\mathbb{R}^{m}$
and $\Xi\in\mathbb{R}^{m\times n}$, with $n=\sum_{k}n_{k}$, and
$X=(X_{1}^{\prime},\ldots,X_{K}^{\prime})^{\prime}$ with $X_{k}\sim t_{n_{k},\nu_{k}}(0,I_{k})$
independent for $k=1,\ldots,K$. 
\end{defn}
An important characteristic of the convolution-$t$ distribution is
that the multivariate density of $Y=\mu+\Xi X$ is very simple when
$\Xi$ is an invertible matrix. In this case, we have
\begin{equation}
f_{Y}(y)=|\det\Xi^{-1}|f_{X}(\Xi^{-1}(y-\mu))\qquad\text{ for all}\quad y\in\mathbb{R}^{n}.\label{eq:MultConv-t-n2n}
\end{equation}
The simple expression makes the analysis of the log-likelihood function
straight forward.

The properties of any convolution-$t$ distribution can be deduced
from the case where $\Xi$ is invertible (which implies $m=n$). The
case $m>\mathrm{rank}(\Xi)$ implies perfect collinearity in $Y$,
such that the properties can be inferred from a lower-dimensional
subvector. For the case $m<n$, we can introduce $m-n$ auxiliary
$Y$-variables, which can be integrated out to obtain the $m$-dimensional
distribution. We derived detailed results for the special case, $m=1$,
which represents a marginal distribution of any convolution-$t$ distribution.

\subsubsection{A Simple Trivariate Convolution-$t$ with Cluster Structure}

If $X\sim t_{n,\nu}(0,I_{n})$, then it is well known that $\Xi X$
and $\tilde{\Xi}X$ are observationally equivalent whenever $\Xi\Xi^{\prime}=\tilde{\Xi}\tilde{\Xi}^{\prime}$.
This situation is different for convolution-$t$ distributions, where
more structure in $\Xi$ can be identified. We illustrate this with
a simple trivariate example, which also highlights some properties
of convolution-$t$ distributions. 

Let $n_{1}=1$ and $n_{2}=2$ such that $Y=\mu+\Xi X\in\mathbb{R}^{3}$,
where $X_{1}\sim t_{\nu_{1}}(0,1)$ and $X_{2}\sim t_{\nu_{2}}(0,I_{2})$,
and suppose that $\nu_{1},\nu_{2}>2$ such that the variances, $\mathrm{var}(\sqrt{\tfrac{\nu_{1}-2}{\nu_{1}}}X_{1})=1$
and $\mathrm{var}(\sqrt{\tfrac{\nu_{2}-2}{\nu_{2}}}X_{2})=I_{2}$
are well-defined. Thus with
\[
\Xi=A\left[\begin{array}{ccc}
\sqrt{\tfrac{\nu_{1}-2}{\nu_{1}}} & 0 & 0\\
0 & \sqrt{\tfrac{\nu_{2}-2}{\nu_{2}}} & 0\\
0 & 0 & \sqrt{\tfrac{\nu_{2}-2}{\nu_{2}}}
\end{array}\right],
\]
we have $\mathrm{var}(Y)=AA^{\prime}$. It is easy to verify that
\[
A_{\text{sym}}=\left[\begin{array}{ccc}
\cellcolor{black!10}0.943 & 0.236 & 0.236\\
0.236 & \cellcolor{black!10}0.943 & \cellcolor{black!10}0.236\\
0.236 & \cellcolor{black!10}0.236 & \cellcolor{black!10}0.943
\end{array}\right],\quad\text{and}\quad A_{\text{asym}}=\left[\begin{array}{ccc}
\cellcolor{black!10}0.943 & -0.236 & -0.236\\
0.707 & \cellcolor{black!10}0.707 & \cellcolor{black!10}0.000\\
0.707 & \cellcolor{black!10}0.000 & \cellcolor{black!10}0.707
\end{array}\right],
\]
both result in the same covariance matrix,\footnote{The $A$-matrices are presented with approximate numerical values
for readability. The exact values are $1/(3\sqrt{2})\approx0.236$,
$1/\sqrt{2}\approx0.707$, and $2\sqrt{2}/3\approx0.943$.} which is given by
\[
\mathrm{var}(Y)=\left[\begin{array}{ccc}
1 & \rho & \rho\\
\rho & 1 & \rho\\
\rho & \rho & 1
\end{array}\right],\quad{\rm with}\quad\rho=\tfrac{1}{2}.
\]
The symmetric $A$-matrix, $A_{\mathrm{sym}}$, is the symmetric square
root of $\mathrm{var}(Y)$ and $A_{\mathrm{asym}}$ is an asymmetric
matrix. These two $A$-matrices would result in the exact same distribution
of $Y$, if $X$ had a multivariate $t$-distribution (including the
Gaussian distribution). For convoluted $t$-distributions the two
$A$-matrices lead to very different distributions for $Y$. In this
example, we intentionally chose $A_{\mathrm{asym}}$ to have a symmetric
$2\times2$ lower-right submatrix, because this emerges as an identifying
assumption in our likelihood analysis.

The joint density function of $Y$ is simply given by
\[
f_{Y}\left(y\right)=\frac{c_{1}c_{2}}{|\det\Xi|}\left(1+\tfrac{1}{\nu_{1}}x_{1}^{2}\right)^{-\frac{\nu_{1}+1}{2}}\left(1+\tfrac{1}{\nu_{2}}x_{2}^{\prime}x_{2}\right)^{-\frac{\nu_{2}+2}{2}},\qquad\text{for }y\in\mathbb{R}^{3},
\]
where $\left(x_{1},x_{2}\right)^{\prime}=\Xi^{-1}y$ and $c_{i}=c(\nu_{i},n_{i})$
with $c(\nu,n)=(\nu\pi)^{-\frac{n}{2}}\Gamma\left(\tfrac{\nu+n}{2}\right)/\Gamma\left(\tfrac{\nu}{2}\right)$.
While the covariance matrices for $(Y_{1},Y_{2})$, $(Y_{1},Y_{3})$,
and $(Y_{2},Y_{3})$, are identical, this is not the case for the
bivariate densities. We do have $f_{Y_{1},Y_{3}}=f_{Y_{1},Y_{2}}$,
but 
\[
f_{Y_{1},Y_{2}}\left(y_{1},y_{2}\right)=\int_{-\infty}^{+\infty}f_{Y}\left(y\right){\rm d}y_{3},\qquad f_{Y_{2},Y_{3}}\left(y_{2},y_{3}\right)=\int_{-\infty}^{+\infty}f_{Y}\left(y\right){\rm d}y_{1},
\]
are different and the bivariate distributions also depend on the two
$A$-matrices.

Contour plots of the bivariate distributions of convolution-$t$ distributions
are presented in Figure \ref{fig:TriConvolt}, panels (c)-(f), for
the case where $\nu_{1}=4$ and $\nu_{2}=8$. Panels (c) and (d) display
the distributions of $\left(Y_{1},Y_{2}\right)$ and $\left(Y_{2},Y_{3}\right)$
for the symmetric $A$-matrix and panels (e) and (f) are for the asymmetric
$A$-matrix. For comparison, we include the corresponding bivariate
distributions when $X\sim N(0,I_{3})$ in panel (a) and for $X\sim t_{6}(0,I_{3})$,
in panel (b).\footnote{We only display one bivariate distribution for each of the cases where
$X\sim N(0,I_{3})$ and $X\sim t_{6}(0,I_{3})$, because they are
identical for all pairs of variables and both $A$-matrices.} 

The contour plots reveal many interesting features of convolution-$t$
distributions. First, the convolution-$t$ distributions can generate
non-elliptical distributions, which is not possible with a multivariate
$t$-distribution. The Gaussian and multivariate $t$ lead to particular
forms of tail dependence, and the correlation, $\rho=1/2$, implies
that their probability is more concentrated along the 45$^{\circ}$
line. Second, the convolution-$t$ distribution can produce very heterogeneous
bivariate distributions, even if the $A$-matrix is symmetric, as
can be seen from panels (c) and (d). The distribution in panel (d)
has quasi-elliptical shape, however this is not always the case for
a pair of variables in the same group, even if $A$ is symmetric.
Third, the asymmetric $A$-matrix lead to contour plots with additional
type shapes and different degrees of tail dependence, such as the
very high level of tail dependence in panel (f). Many other bivariate
distributions can be generated by varying the choice of asymmetric
$A$-matrix, while preserving the covariance matrix. 
\begin{figure}[ph]
\begin{centering}
\includegraphics[height=0.8\textheight]{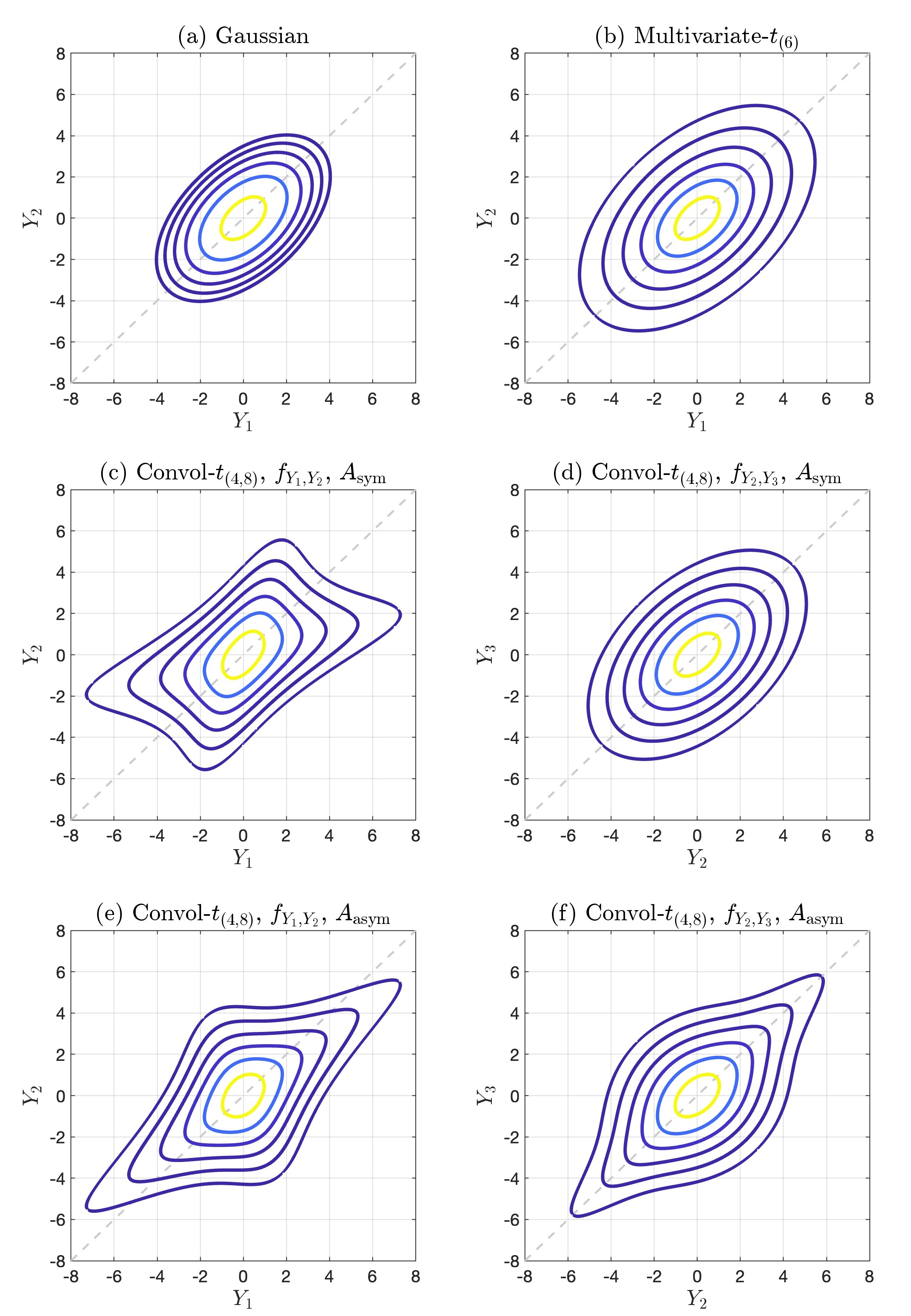}
\par\end{centering}
\caption{{\small{}Contour plots for the marginal bivariate densities from a
trivariate convolution-$t$ distributions when the covariance matrix
$\mathrm{var}(Y)$ is fixed as an equicorrelation matrix with $\rho=\frac{1}{2}$.
We consider two types of $A$ matrix, where $A_{\text{sym}}$ is from
the square root of $\mathrm{var}(Y)$, and $A_{\text{asym}}$ is obtained
by restricting (1) a block structure with positive diagonal elements
and (2) a diagonal structure for the $2\times2$ matrix in the bottom
right corner.\label{fig:TriConvolt}}}
\end{figure}

Next, we characterize the marginal distribution of a convolution-$t$
random variable.

\subsection{Marginal Distribution of Convolution-$t$}

Consider the univariate convolution of $t$-distributions, $Y_{1}=\mu+\beta^{\prime}X\in\ensuremath{\mathbb{R}}$,
where $\beta\in\mathbb{R}^{n}$. We seek to characterize the distribution
of $Y_{1}$, that could represent an element of $Y$. We achieve this
by means of the characteristic function for $Y_{1}$.\footnote{This approach as has a long history, see e.g. \citet{Gurland1948},
\citet{Gil1951}, \citet{Imhof1961}, \citet{Bohman1970}, \citet{Davies1973},
\citet[1991b]{Shephard:1991}\nocite{Shephard1991Numerical}, and
\citet{Waller1995}. This approach is also commonly used in for derivative
pricing when density function is unavailable in closed-form, see e.g.
\citet{Heston1993}, \citet{HestonNandi2000}, and \citet{BakshiMadan2000}.} 

The characteristics function for $t_{\nu}(0,1)$ is given by 
\begin{equation}
\phi_{\nu}(s)=\frac{K_{\frac{\nu}{2}}(\sqrt{\nu}|s|)(\sqrt{\nu}|s|)^{\frac{1}{2}\nu}}{\Gamma\left(\frac{\nu}{2}\right)2^{\frac{\nu}{2}-1}},\qquad\text{for}\quad s\in\mathbb{R},\label{eq:Hurst}
\end{equation}
where 
\[
K_{\nu}(x)=\tfrac{1}{2}\int_{0}^{\infty}u^{\nu-1}e^{-x\frac{u+u^{-1}}{2}}\mathrm{d}u,
\]
is the modified Bessel function of the second kind.\footnote{The modified Bessel function can be expressed in many ways and is,
confusingly, sometimes called the ``modified Bessel function of the
third kind'', e.g. \citet{Hurst:1995}. It is also known as the Basset
function, the Macdonald function, and the modified Hankel function.} The expression for $\phi_{\nu}(s)$, (\ref{eq:Hurst}), is due to
\citet{Hurst:1995} and \citet{Joarder:1995}, see \citet{Gaunt:2021}
for a very elegant proof. We also include the standard normal, $N(0,1)$,
in the analysis with the convention $\phi_{\infty}(s)\equiv e^{-s^{2}/2}$.
For $\nu=1$ (the Cauchy distribution) the expression in (\ref{eq:Hurst})
simplifies to $\phi_{1}(s)=e^{-|s|}$. Convolution-$t$ distribution
with odd degrees of freedom typically have much simpler expressions,
because their characteristic function is simpler, see the example
in Section \ref{subsec:Example2}.

Now we turn to the interesting case where $X$ is composite and derive
the characteristic function of $Y_{1}=\mu+\beta^{\prime}X=\mu+\sum_{k=1}^{K}\beta_{k}^{\prime}X_{k}$,
which we use to obtain expressions for its density and cumulative
distribution function.
\begin{thm}
\label{thm:Convo-t}Suppose that $Y_{1}=\mu+\sum_{k=1}^{K}\beta_{k}^{\prime}X_{k}\in\mathbb{R}$,
where $X_{1},\ldots,X_{K}$ are independent with $X_{k}\sim t_{n_{k},\nu_{k}}(0,I_{k})$
for $k=1,\ldots,K$. Then $Y_{1}$ has characteristic function
\[
\varphi_{Y_{1}}(s)=e^{is\mu}\prod_{k=1}^{K}\phi_{\nu_{k}}(\omega_{k}s),
\]
where $\omega_{k}=\sqrt{\beta_{k}^{\prime}\beta_{k}}$, for $k=1,\ldots,K$.
\end{thm}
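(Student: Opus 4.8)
The plan is to exploit the mutual independence of $X_{1},\ldots,X_{K}$ together with the affine-transformation property (\ref{eq:Y=00003DBX}) of the multivariate $t$-distribution, so that the whole computation reduces to the known univariate characteristic function $\phi_{\nu}$ in (\ref{eq:Hurst}). First I would write, using independence and pulling out the deterministic shift by $\mu$,
\[
\varphi_{Y_{1}}(s)=\mathbb{E}\!\left[e^{is\left(\mu+\sum_{k=1}^{K}\beta_{k}^{\prime}X_{k}\right)}\right]=e^{is\mu}\prod_{k=1}^{K}\mathbb{E}\!\left[e^{is\beta_{k}^{\prime}X_{k}}\right],
\]
so that it suffices to identify $\mathbb{E}[e^{is\beta_{k}^{\prime}X_{k}}]$ for each $k$ separately.

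Second, fix $k$ and suppose $\beta_{k}\neq0$. Then the $1\times n_{k}$ row vector $\beta_{k}^{\prime}$ has full row rank, so (\ref{eq:Y=00003DBX}) gives $\beta_{k}^{\prime}X_{k}\sim t_{1,\nu_{k}}(0,\beta_{k}^{\prime}\beta_{k})$. Since $t_{1,\nu}(0,\sigma^{2})$ is the law of $\sigma$ times a $t_{1,\nu}(0,1)$ variable, we have $\beta_{k}^{\prime}X_{k}\overset{\mathrm{d}}{=}\omega_{k}\tilde{X}_{k}$ with $\omega_{k}=\sqrt{\beta_{k}^{\prime}\beta_{k}}$ and $\tilde{X}_{k}\sim t_{1,\nu_{k}}(0,1)$. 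Hence $\mathbb{E}[e^{is\beta_{k}^{\prime}X_{k}}]=\mathbb{E}[e^{i(\omega_{k}s)\tilde{X}_{k}}]=\phi_{\nu_{k}}(\omega_{k}s)$ by the definition (\ref{eq:Hurst}) of the characteristic function of $t_{\nu_{k}}(0,1)$, where the convention $\phi_{\infty}(s)\equiv e^{-s^{2}/2}$ handles any Gaussian component ($\nu_{k}=\infty$). If instead $\beta_{k}=0$, then $\beta_{k}^{\prime}X_{k}=0$ and $\omega_{k}=0$, and since $\phi_{\nu_{k}}(0)=1$ the identity $\mathbb{E}[e^{is\beta_{k}^{\prime}X_{k}}]=\phi_{\nu_{k}}(\omega_{k}s)$ still holds. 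Substituting these factors into the product above yields the stated expression.

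The argument is essentially routine once these pieces are assembled; the only points requiring care are the degenerate case $\beta_{k}=0$, the Gaussian limit $\nu_{k}=\infty$, and the observation that the reduction (\ref{eq:Y=00003DBX}) is stated for full-row-rank $B$, which in the scalar case is precisely the condition $\beta_{k}\neq0$. The single nontrivial analytic ingredient, the closed form (\ref{eq:Hurst}) for $\phi_{\nu}$, is quoted from \citet{Hurst:1995} and \citet{Joarder:1995} rather than re-derived, so there is no real obstacle in the proof itself.
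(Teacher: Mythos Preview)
Your proof is correct and follows essentially the same route as the paper: factor the characteristic function by independence and identify each $\beta_{k}^{\prime}X_{k}$ as a scaled univariate $t_{\nu_{k}}(0,1)$ variable with scale $\omega_{k}$. The only cosmetic difference is that you invoke the affine property (\ref{eq:Y=00003DBX}) directly, whereas the paper re-derives that step via the Gamma-mixture representation $X_{k}=Z_{k}/\sqrt{W_{k}/\nu_{k}}$ in its Lemma~\ref{lem:MVt2Y1}; both arguments deliver the same reduction, and your explicit handling of the degenerate case $\beta_{k}=0$ is a small bonus.
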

By the Gil-Pelaez inversion theorem we now have the following expressions
for the marginal density and cumulative distribution function for
$Y_{1}$. These are given by
\begin{align}
f_{Y_{1}}(y) & =\frac{1}{\pi}\int_{0}^{\infty}{\rm Re}\left[e^{-isy}\varphi_{Y_{1}}(s)\right]\mathrm{d}s,\quad\text{and}\quad F_{Y_{1}}(y)=\ensuremath{\frac{1}{2}-\frac{1}{\pi}\int_{0}^{\infty}\frac{{\rm Im}\left[e^{-isy}\varphi_{Y_{1}}(s)\right]}{s}\mathrm{d}s},\label{eq:DensityCDF}
\end{align}
respective, where ${\rm Re}\left[z\right]$ and ${\rm Im}\left[z\right]$
denotes the real and imaginary part of $z\in\mathbb{C}$, respectively.

The main advantage of these expressions is that there is just a single
variable, $s$, to be integrated out, regardless of the dimensions,
$n$, and the underlying number of independent $t$-distributions,
$K$. In this case, $Y_{1}$ is a linear combination of $n=\sum_{k}n_{k}$
variables, and the conventional approach to obtain its marginal distribution
is to integrate our $n-1$ variables. This is computationally impractical
unless $n$ is small. Thus, the expressions in (\ref{eq:DensityCDF})
are likely to be computationally advantageous for $n\geq3$, whereas
the conventional approach is simpler when $n=2$, assuming that the
densities of the underlying variables are readily available, as is
the case for $t$-distributed random variables.

\subsubsection{Moments and Moments-based Density Approximation}

A key feature of the multivariate convolution-t distribution is that
it can generate heterogeneous marginal distributions with different
levels of heavy tails. The kurtosis of $Y_{1}$ is well-defined if
$\nu_{\min}=\min_{k}\nu_{k}>4$, and, as shown below, the excess kurtosis
is a linear combination of the excess kurtosis of $X_{1},\ldots,X_{K}$,
where the weights are determined by vector $\beta$ and $\boldsymbol{\nu}$. 
\begin{thm}
\label{thm:Convo-t-kurtosis}Suppose $Y_{1}=\mu+\sum_{k=1}^{K}\beta_{k}^{\prime}X_{k}$
where $X_{1},\ldots,X_{K}$ are independent with $X_{k}\sim t_{n_{k},\nu_{k}}(0,I_{k})$,
$\nu_{k}>4$, for $k=1,\ldots,K$. The variance and excess kurtosis
of $Y_{1}$ is given by
\[
\sigma_{Y_{1}}^{2}=\omega^{\prime}\omega,\qquad\kappa_{Y_{1}}=\sum_{k=1}^{K}\frac{\omega_{k}^{4}}{\left(\omega^{\prime}\omega\right)^{2}}\ \kappa_{X_{k}},
\]
where $\omega$ is a $K\times1$ vector with $\omega_{k}=\sqrt{\frac{\nu_{k}}{\nu_{k}-2}\beta_{k}^{\prime}\beta_{k}}$
for $k=1,\ldots,K$, and $\kappa_{X_{k}}$ is the excess kurtosis
of $X_{k}$ which is given by $\kappa_{X_{k}}=\frac{6}{\nu_{k}-4}$.
\end{thm}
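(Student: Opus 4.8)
The plan is to compute the variance and fourth central moment of $Y_{1}$ directly, using the independence of $X_{1},\ldots,X_{K}$ and the known moments of a standardized multivariate $t$-distribution. Since $\mu$ is a pure location shift, I would work with the centered variable $Y_{1}-\mu=\sum_{k=1}^{K}\beta_{k}^{\prime}X_{k}$ and set $U_{k}=\beta_{k}^{\prime}X_{k}$, so that $Y_{1}-\mu=\sum_{k}U_{k}$ with the $U_{k}$ mutually independent and each symmetric about $0$ (hence all odd moments vanish). The first task is to record the needed moments of $U_{k}$: because $X_{k}\sim t_{n_{k},\nu_{k}}(0,I_{k})$ has $\mathrm{var}(X_{k})=\tfrac{\nu_{k}}{\nu_{k}-2}I_{k}$, one gets $\mathrm{var}(U_{k})=\tfrac{\nu_{k}}{\nu_{k}-2}\beta_{k}^{\prime}\beta_{k}=\omega_{k}^{2}$, which immediately gives $\sigma_{Y_{1}}^{2}=\sum_{k}\omega_{k}^{2}=\omega^{\prime}\omega$ by independence.

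For the kurtosis I would compute $\mathbb{E}[(Y_{1}-\mu)^{4}]=\mathbb{E}\big[(\sum_{k}U_{k})^{4}\big]$ and expand the multinomial. By independence and vanishing odd moments, the only surviving terms are $\sum_{k}\mathbb{E}[U_{k}^{4}]$ and $\sum_{k\neq l}\binom{4}{2}\mathbb{E}[U_{k}^{2}]\mathbb{E}[U_{l}^{2}]=3\sum_{k\neq l}\omega_{k}^{2}\omega_{l}^{2}$. Writing $\mathbb{E}[U_{k}^{4}]=(\kappa_{X_{k}}+3)\omega_{k}^{4}$, where $\kappa_{X_{k}}=\mathbb{E}[U_{k}^{4}]/\omega_{k}^{4}-3$ is by definition the excess kurtosis of the (symmetric) univariate random variable $U_{k}=\beta_{k}^{\prime}X_{k}$, the fourth moment becomes $\sum_{k}(\kappa_{X_{k}}+3)\omega_{k}^{4}+3\sum_{k\neq l}\omega_{k}^{2}\omega_{l}^{2}=\sum_{k}\kappa_{X_{k}}\omega_{k}^{4}+3(\omega^{\prime}\omega)^{2}$. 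Dividing by $\sigma_{Y_{1}}^{4}=(\omega^{\prime}\omega)^{2}$ and subtracting $3$ yields $\kappa_{Y_{1}}=\sum_{k}\tfrac{\omega_{k}^{4}}{(\omega^{\prime}\omega)^{2}}\kappa_{X_{k}}$, as claimed.

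Two small points remain to be pinned down. First, I must verify that the excess kurtosis of $U_{k}=\beta_{k}^{\prime}X_{k}$ equals that of any single coordinate of $X_{k}$, i.e.\ $\kappa_{X_{k}}=\tfrac{6}{\nu_{k}-4}$; this follows because a multivariate $t$-distribution is elliptical, so $\beta_{k}^{\prime}X_{k}/\omega_{k}$ (appropriately scaled) is again a univariate $t_{\nu_{k}}$ variable — concretely, by the representation $X_{k}=\sqrt{\nu_{k}/\xi_{k}}\,Z_{k}$ with $Z_{k}\sim N_{n_k}(0,I_k)$ and $\xi_{k}\sim\mathrm{Gamma}(\nu_{k}/2,2)$, one has $\beta_{k}^{\prime}X_{k}=\sqrt{\nu_{k}/\xi_{k}}\,\beta_{k}^{\prime}Z_{k}$ with $\beta_{k}^{\prime}Z_{k}\sim N(0,\beta_{k}^{\prime}\beta_{k})$, hence $\beta_{k}^{\prime}X_{k}\sim t_{\nu_k}(0,\beta_{k}^{\prime}\beta_{k})$, whose excess kurtosis is the standard $6/(\nu_{k}-4)$. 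Second, finiteness of all moments up to order four requires $\nu_{k}>4$ for every $k$, which is exactly the hypothesis. I do not anticipate a genuine obstacle here — the argument is a routine moment computation — the only mild care needed is the bookkeeping in the multinomial expansion and the ellipticity reduction in the previous point; everything else is immediate from independence and the scaling $\omega_{k}^{2}=\tfrac{\nu_{k}}{\nu_{k}-2}\beta_{k}^{\prime}\beta_{k}$.
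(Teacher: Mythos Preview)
Your proof is correct and follows essentially the same route as the paper: reduce each $\beta_{k}^{\prime}X_{k}$ to a scaled univariate $t_{\nu_{k}}$ via the Gaussian-scale-mixture representation (the paper does this through its Lemma~A.1), then compute the fourth central moment of a sum of independent symmetric summands. The only cosmetic difference is that the paper packages the fourth-moment computation as $\mathbb{E}[(\omega^{\prime}U)^{4}]=\mathbb{E}[(U^{\prime}AU)^{2}]$ with $A=\omega\omega^{\prime}$ and invokes a quadratic-form moment formula from Seber (2003), whereas you expand the multinomial directly; the underlying arithmetic is identical.
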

\begin{rem*}
The sum of the weights
\[
\sum_{k=1}^{K}\omega_{k}^{4}/\left(\omega^{\prime}\omega\right)^{2}=\frac{\sum_{k=1}^{K}\omega_{k}^{4}}{\sum_{k=1}^{K}\omega_{k}^{4}+\sum_{k\neq l}^{K}\omega_{k}^{2}\omega_{l}^{2}}\leq1,
\]
implies that $\kappa_{Y_{1}}\leq\max\left(\kappa_{X_{k}}\right)$,
and in the special case where all weights are identical, $\omega_{k}=c$
we have $\kappa_{Y_{1}}=\frac{1}{K}\bar{\kappa}_{X}$ where $\bar{\kappa}_{X}=\frac{1}{K}\sum_{k=1}^{K}\kappa_{X_{k}}.$
\end{rem*}
This shows that the convolution-$t$ distribution can have heterogeneous
marginal distributions with different shapes. In Section \ref{sec:Approximating},
we use the results of Theorem \ref{thm:Convo-t-kurtosis} to approximate
the distributions of $Y_{1}$, using a univariate $t$-distribution
\begin{equation}
f_{Y_{1}}^{\star}(y)=\tfrac{\Gamma\left(\tfrac{\nu_{\star}+1}{2}\right)}{\Gamma\left(\tfrac{\nu_{\star}}{2}\right)}\left[\nu_{\star}\pi\sigma_{\star}^{2}\right]^{-\frac{1}{2}}\left[1+\frac{\left(y-\mu\right)^{2}}{\nu_{\star}\sigma_{\star}^{2}}\right]^{-\frac{\nu_{\star}+1}{2}},\label{eq:Std-t-density-fit-1}
\end{equation}
where the first four moments (assumed to be finite) are matched by
setting
\[
\mu_{\star}=\mu,\quad\nu_{\star}=4+\tfrac{6}{\kappa_{Y_{1}}},\quad\sigma_{\star}^{2}=\omega^{\prime}\omega\tfrac{\nu_{\star}-2}{\nu_{\star}}.
\]

The approximating distribution can be used to simplify calculating
of several interesting quantities, such as the Value-at-Risk (VaR)
and the Expected Shortfall (ES), given by
\[
{\rm VaR}_{\alpha}^{\star}\left(Y_{1}\right)=F_{\star}^{-1}\left(\alpha\right),\quad\mathrm{ES}_{\alpha}^{\star}\left(Y_{1}\right)=\mu_{\star}-\sigma_{\star}\frac{\nu_{\star}+\left(F_{\star}^{-1}\left(\alpha\right)\right)^{2}}{\alpha\left(\nu_{\star}-1\right)}f_{Y_{1}}^{\star}\left(F_{\star}^{-1}\left(\alpha\right)\right),
\]
where $F_{\star}^{-1}\left(\alpha\right)$ is the $\alpha$-quantile
of the univariate $t$-distribution $f^{\star}(y)$.

Because a convolution of symmetric distributions is symmetric, it
follows that all odd moments (less than $\nu_{\min}$) are zero. General
moments, including fractional moments, $r>0$, can be obtained with
the method in \citet[theorem 11.4.4]{Kawata1972}. For instance, for
$0<r<2$, we have
\[
\mathbb{E}|Y_{1}|^{r}=\delta(r)\int_{0}^{\infty}s^{-(1+r)}\left(1-{\rm Re}[\varphi_{Y_{1}}(s)]\right)\mathrm{d}s,\qquad\delta(r)=\frac{r(1-r)}{\Gamma(2-r)}\frac{1}{\sin(\tfrac{\pi}{2}(1-r))},
\]
with the convention $\delta(1)=2/\pi$. See Appendix \ref{subsec:Moments-of-Convol-t}
for expressions of general moments $2<r<\nu_{\min}$.

\subsubsection{Example 1. Convolution of Gaussian and Cauchy (Voigt Profile) }

An interesting convolution is that of a Gaussian distributed random
variable and an independent Cauchy distributed random variable.\footnote{The convolution of a Gaussian random variable and a $t$-distribution
with odd degrees of freedom is analyzed in \citet{Nason:2006}, and
\citet{Forchini:2008} generalized this analysis to $t$-distributions
with any degrees of freedom.} Suppose that $Z\sim N(0,1)$ and $X\sim\text{\ensuremath{\mathrm{Cauchy}(0,1)}}$
are independent, and we seek the distribution of their convolution,
$Y=Z+X$. The resulting distribution is known as the Voigt profile
in the field of spectroscopy, and \citet{Kendall_D:1938} derived
an expression involving the complementary error function.\footnote{Impressively, David G. Kendall published this result as a second year
undergraduate student at Oxford University, see \citet[p.164]{Bingham:1996}.} We can apply Theorem \ref{thm:Convo-t} to obtain the expression
for the density in \citet{Kendall_D:1938}.

The characteristic functions for $Z$ and $X$ are $\varphi_{Z}(s)=e^{-s^{2}/2}$
and $\varphi_{X}(s)=e^{-|s|}$, respectively, such that $\varphi_{Y}(s)=e^{-s^{2}/2-s}$,
for $s\geq0$. By Euler's formula, the expression for the density
in Theorem \ref{thm:Convo-t} simplifies to
\[
f_{Y}(y)=\frac{1}{\pi}\int_{0}^{\infty}\cos(sy)e^{-s^{2}/2-s}\mathrm{d}s,
\]
and with some additional simplifications we arrive at the following
expression for $f_{Y}(y)$. 
\begin{prop}[Kendall, 1938]
\label{Prop:Y=00003DZ+X_density}The density of $Y=X+Z$, where $Z\sim N(0,1)$
and $X\sim\text{\ensuremath{\mathrm{Cauchy}(0,1)}}$ are independent,
is given by 
\begin{equation}
f_{Y}(y)=\frac{1}{\sqrt{2\pi}}{\rm Re}\left[{\rm erfcx}\left(\tfrac{1+iy}{\sqrt{2}}\right)\right],\label{eq:YdensityAnalytical}
\end{equation}
where ${\rm erfcx}(z)=\exp(z^{2})\frac{2}{\sqrt{\pi}}\int_{z}^{\infty}e^{-t^{2}}\mathrm{d}t$
is the scaled complementary error function, and ${\rm Re}\left[\cdot\right]$
takes the real part of the complex number inside the square bracket.
\end{prop}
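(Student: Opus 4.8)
The plan is to evaluate in closed form the Fourier-type integral already displayed just above the statement,
\[
f_Y(y)=\frac{1}{\pi}\int_0^\infty \cos(sy)\,e^{-s^2/2-s}\,\mathrm{d}s ,
\]
which itself follows from Theorem~\ref{thm:Convo-t} together with the characteristic functions $\varphi_Z(s)=e^{-s^2/2}$ and $\varphi_X(s)=e^{-|s|}$ of the two independent summands. The idea is simply to complete the square in the exponent and recognise the scaled complementary error function $\mathrm{erfcx}$ in the resulting one-sided Gaussian integral with a complex coefficient.

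Concretely, I would first write $\cos(sy)=\mathrm{Re}\,[e^{-isy}]$ and move the real part outside the integral; this is legitimate because $\int_0^\infty e^{-s^2/2-s}\,\mathrm{d}s<\infty$, so the integrand $e^{-s^2/2-s(1+iy)}$ is absolutely integrable and $\int\mathrm{Re}=\mathrm{Re}\int$. This gives $f_Y(y)=\tfrac1\pi\,\mathrm{Re}\bigl[\int_0^\infty e^{-\frac12 s^2-s(1+iy)}\,\mathrm{d}s\bigr]$. Next, setting $a=\tfrac{1+iy}{\sqrt 2}$, I would complete the square, $-\tfrac12 s^2-s(1+iy)=-\tfrac12\bigl(s+(1+iy)\bigr)^2+a^2$, and substitute $t=\bigl(s+(1+iy)\bigr)/\sqrt2$ to obtain
\[
\int_0^\infty e^{-\frac12 s^2-s(1+iy)}\,\mathrm{d}s=\sqrt 2\,e^{a^2}\int_{a}^{\infty}e^{-t^2}\,\mathrm{d}t ,
\]
where the path of integration is the horizontal ray starting at $a$ and running to $+\infty$. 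Since $e^{-t^2}$ is entire and decays as $\mathrm{Re}(t)\to\infty$ along such rays, a contour shift identifies this path integral with $\tfrac{\sqrt\pi}{2}\,\mathrm{erfc}(a)$ in the sense of the analytically continued $\mathrm{erfc}$ used in the statement. Hence the bracketed integral equals $\sqrt2\,e^{a^2}\cdot\tfrac{\sqrt\pi}{2}\,\mathrm{erfc}(a)=\sqrt{\pi/2}\,\mathrm{erfcx}(a)$, and dividing by $\pi$ yields $f_Y(y)=\tfrac{1}{\sqrt{2\pi}}\,\mathrm{Re}\,[\mathrm{erfcx}(\tfrac{1+iy}{\sqrt 2})]$, which is (\ref{eq:YdensityAnalytical}).

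I expect the only point requiring genuine care to be this contour deformation, i.e.\ the identity $\int_0^\infty e^{-\frac12(s+(1+iy))^2}\mathrm{d}s=\tfrac{\sqrt\pi}{2}\,\mathrm{erfc}(a)$ with a complex argument $a$; it is a routine but not entirely trivial application of Cauchy's theorem to a rectangle with two edges on the lines $\mathrm{Im}(t)=0$ and $\mathrm{Im}(t)=\mathrm{Im}(a)$, using the Gaussian decay of $e^{-t^2}$ to send the two vertical edges off to $+\infty$ with vanishing contribution. Everything else—completing the square, the linear change of variables, and the identity $\mathrm{erfcx}(z)=e^{z^2}\mathrm{erfc}(z)$—is bookkeeping. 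As consistency checks one may note that $\mathrm{erfcx}$ maps complex conjugates to complex conjugates, so $\mathrm{Re}\,[\mathrm{erfcx}(\tfrac{1+iy}{\sqrt2})]=\mathrm{Re}\,[\mathrm{erfcx}(\tfrac{1-iy}{\sqrt2})]$ (so the sign of $iy$ is immaterial and $f_Y$ is manifestly even in $y$), and $f_Y(0)=\tfrac{1}{\sqrt{2\pi}}\,\mathrm{erfcx}(1/\sqrt2)>0$, as a density value should be.
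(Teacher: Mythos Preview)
Your proposal is correct and follows essentially the same route as the paper: complete the square in the exponent, make the linear substitution, and identify the resulting one-sided Gaussian integral with $\mathrm{erfc}$ (hence $\mathrm{erfcx}$) at a complex argument. The only cosmetic difference is that you take $\mathrm{Re}[e^{-isy}]$ at the outset and carry a single integral, whereas the paper writes $\cos(sy)=\tfrac12(e^{isy}+e^{-isy})$, evaluates the two conjugate pieces $I_1,I_2$ separately, and recombines them as $2\,\mathrm{Re}[\cdot]$ at the end; you are also more explicit about the contour deformation (Cauchy's theorem on a horizontal strip) that the paper leaves implicit when passing from $\int_{a}^{\infty}e^{-t^2}\,\mathrm{d}t$ to $\tfrac{\sqrt\pi}{2}\,\mathrm{erfc}(a)$.
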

\begin{proof}
See Appendix \ref{sec:Proofs}.
\end{proof}
The analytical expression in (\ref{eq:YdensityAnalytical}) is very
accurate and fast to evaluate. The scaled complementary error function
can be expressed as ${\rm erfcx}(z)=\exp(z^{2})[1-{\rm erf}(z)]$,
where ${\rm erf}(z)=\frac{2}{\sqrt{\pi}}\int_{-\infty}^{z}e^{-t^{2}}\mathrm{d}t$
is the error function. These are standard non-elementary functions
that are implemented in software, such as Julia, MATLAB, and R. Interestingly,
the function, ${\rm erfcx}$, is also related to the Mills ratio that
appears in other econometric problems, such as the Heckman model,
see \citet{Heckman:1979}. This follows from 
\[
{\rm erfcx}\left(\tfrac{u}{\sqrt{2}}\right)=e^{u^{2}/2}\frac{2}{\sqrt{\pi}}\int_{-\infty}^{u/\sqrt{2}}e^{-t^{2}}\mathrm{d}t=\frac{2}{\sqrt{\pi}}\frac{\tfrac{1}{\sqrt{2}}\frac{1}{\sqrt{2\pi}}\int_{-\infty}^{u}e^{-s^{2}/2}\mathrm{d}s}{\frac{1}{\sqrt{2\pi}}e^{-u^{2}/2}}=\sqrt{\frac{2}{\pi}}\frac{1-\Phi(u)}{\varphi(u)},
\]
where we here used $\Phi(u)$ and $\varphi(u)$ to denote the cumulative
distribution function and the density, respectively, for a standard
normal random variable.

\subsubsection{Example 2. Convolutions of $t$-distributions with odd degrees of
freedom\label{subsec:Example2}}

The expression for the density of a convolution-$t$ distribution
is typically complicated. An exception is a convolution of $t$-distributions
with odd degrees of freedom, which has a much simpler expression.
The reason is that the characteristic function for a $t$-distribution
with odd degrees of freedom have the simple form, $\phi_{\nu}(s)=e^{-s_{\nu}}p(s_{\nu})$,
where $s_{\nu}=\sqrt{\nu}|s|$ and $p(s_{\nu})$ is a polynomial of
order $(\nu-1)/2\in\mathbb{N}$. The first few are given by, $\phi_{1}(s)=e^{-|s|}$,
$\phi_{3}(s)=e^{-\sqrt{3}|s|}(1+\sqrt{3}|s|)$, $\phi_{5}(s)=e^{-s_{5}}(1+s_{5}+\tfrac{1}{3}s_{5}^{2})$,
and $\phi_{7}(s)=e^{-s_{7}}(1+s_{7}+\tfrac{6}{15}s_{7}^{2}+\tfrac{1}{15}s_{7}^{3})$.\footnote{The highest-order term is $s_{\nu}^{m}/(\nu-2)!!$, where $m=(\nu-1)/2$.
E.g. for $\nu=13$ this term is $s_{13}^{6}/10395$. Also note that
$(\nu-2)!!=(2m-1)!!=2^{m}\Gamma(m+\tfrac{1}{2})/\Gamma(\tfrac{1}{2})$. } These simplifications explain that the most detailed results for
convolutions involve odd degrees of freedom, see e.g. \citet{FisherHealy:1956},
\citet{WalkerSaw:1978}, \citet{FanBerger1990}, \citet{NadarajahDey:2005},
and \citet{Nason:2006}.

For $\nu_{1}=1$ and $\nu_{2}=3$ we have $\varphi_{Y_{1}}(s)=(1+\sqrt{3}|s|)e^{-isy-(1+\sqrt{3})|s|}$
and from
\[
\int_{-\infty}^{\infty}(1+\sqrt{3}|s|)e^{-isy-(1+\sqrt{3})|s|}ds=\frac{iy+2\sqrt{3}+1}{(iy+\sqrt{3}+1)^{2}},
\]
we have that
\[
f_{t_{1}+t_{3}}(y)=\frac{1}{\pi}\mathrm{Re}\left[\frac{iy+2\sqrt{3}+1}{(iy+\sqrt{3}+1)^{2}}\right]=\frac{1}{\pi}\frac{y^{2}+16+10\sqrt{3}}{\left(y^{2}+4+2\sqrt{3}\right)^{2}}.
\]
Similarly, we can obtain
\[
f_{t_{1}+t_{5}}(y)=\frac{1}{\pi}\frac{y^{4}+2(11+3\sqrt{5})y^{2}+\tfrac{8}{3}(131+61\sqrt{5})}{\left(y^{2}+2(3+\sqrt{5})\right)^{3}},
\]
and other convolutions of $t$-distributions with odd degrees of freedom
lead to similar expressions that are ratios of polynomials in $y$.
The expressions presented above are simpler than those in \citet{NadarajahDey:2005}.\footnote{For instance, \citet{NadarajahDey:2005} has $f_{t_{1}+t_{3}}(y)=\frac{1}{\pi}(y^{2}+16+10\sqrt{3})(y^{2}+4+2\sqrt{3})^{2}(y^{2}+4-2\sqrt{3})^{4}/(4+8y^{2}+y^{4})^{2}$
and $f_{t_{1}+t_{5}}(y)=\frac{1}{\pi}(488\sqrt{5}+1048+(66+18\sqrt{5})y^{2}+3y^{4})(y^{2}+6+2\sqrt{5})^{3}(y^{2}+6-2\sqrt{5})^{6}/[5(16+12y^{2}+y^{4})^{8}]$,
which can be verified to be identical to our expressions.}
\begin{figure}[tbh]
\begin{centering}
\includegraphics[width=0.9\textwidth]{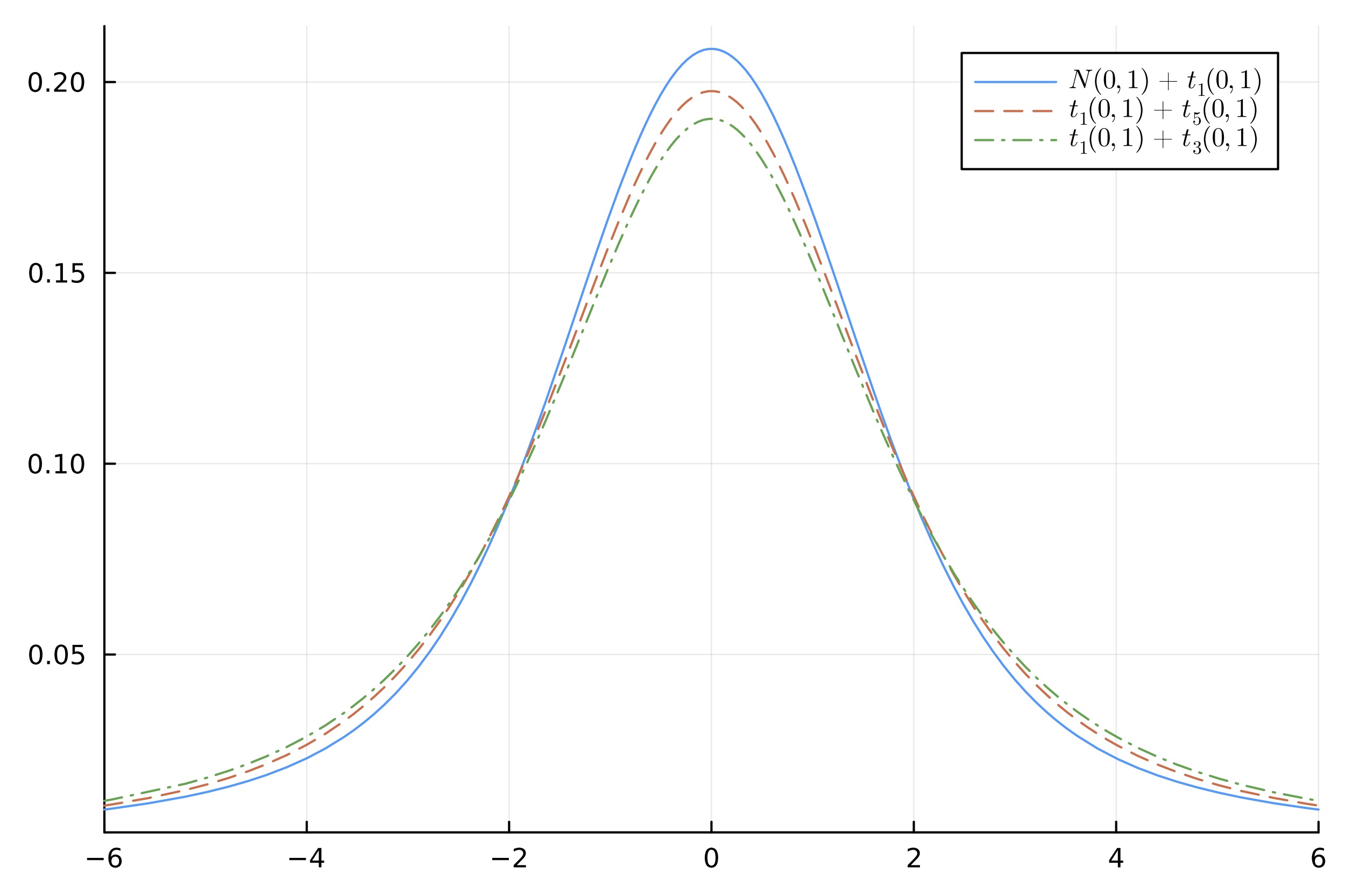}
\par\end{centering}
\caption{{\small{}Density of three simple convolution-$t$ distributions, $Y=Z+X$
where $Z\sim N(0,1)$ and $X\sim\mathrm{Cauchy(0,1)}$ independent.\label{fig:DenCauchyNormal}}}
\end{figure}

\section{Likelihood Analysis and Inference\label{sec:Likelihood-Analysis}}

We address identification, estimation, and inference of convolution-$t$
distributions in this section. Thus, consider $Y=\mu+\Xi X=\mu+\sum_{k=1}^{K}\Xi_{k}X_{k}$,
where $\Xi=(\Xi_{1},\ldots,\Xi_{K})$ with $\Xi_{k}\in\mathbb{R}^{n\times n_{k}}$,
and define $\Omega_{k}=\Xi_{k}\Xi_{k}^{\prime}$, for $k=1,\ldots,K$.

Throughout this section we make the following assumption about $\Xi\in\mathbb{R}^{n\times n}$.
\begin{assumption}
\label{assu:XiInvertible}$\det\Xi\neq0$.
\end{assumption}
It is helpful to consider the case, $K=1$, where $X$ has a multivariate
$t$-distribution. In this case $\tilde{X}=Q^{\prime}X$ has the same
distribution as $X$, for any orthonormal $n\times n$ matrix, $Q$.
Thus, if we set $\tilde{\Xi}=\Xi Q$, then $\tilde{\Xi}X=\Xi QQ^{\prime}X$
has the same distribution as $\Xi X$. Given any $\Xi$ that satisfies
Assumption \ref{assu:XiInvertible}, then $Q=\Xi^{\prime}(\Xi\Xi^{\prime})^{-\frac{1}{2}}$
satisfies $Q^{\prime}Q=I_{n}$. With this choice we have $\tilde{\Xi}=\Xi Q=(\Xi\Xi^{\prime})^{\frac{1}{2}}$,
which shows that any $\Xi$-matrix will have an observationally equivalent
symmetric $\Xi$-matrix, in the special case where $K=1$. For $K>1$
the situation is different. 

\subsection{Identification of Convolution-$t$ Distribution}

Identification is somewhat simplified if all degrees of freedom are
unique, such that they can be ordered as $\nu_{1}<\cdots<\nu_{K}$.
However, this rules out important convolutions of $t$-distributions
with the same degrees of freedom. One exception is the Gaussian case,
$\nu=\infty$, because we can always combing multiple Gaussian subvectors
into a single subvector. Thus, the following assumption is without
loss of generality.
\begin{assumption}
\label{assu:AtMostOneGaussian}If $X_{k}\sim N(0,I_{n_{k}})$ for
some $k$ then $\nu_{k^{\prime}}<\infty$ for all $k^{\prime}\neq k$.
\end{assumption}
That Assumption \ref{assu:AtMostOneGaussian} is without loss of generality
follows from the a simple argument. Suppose $X_{i}\sim N(0,I_{n_{i}})$
and $X_{j}\sim N(0,I_{n_{j}})$ then $\sum_{k}\Xi_{k}X_{k}$ and $\sum_{k\neq i,j}\Xi_{k}X_{k}+\tilde{\Xi}_{m}\tilde{X}_{m}$
have the same distribution if we set $\tilde{\Xi}_{m}=(\Xi_{i},\Xi_{j})$
and $\tilde{X}_{m}\sim N(0,I_{n_{i}+n_{j}})$. 

It is, however, not possible to combine $t$-distributed subvectors
with identical degrees of freedom into a single subvector. Even if
the subvectors are Cauchy distributed ($\nu_{k}=1)$, which is the
only $t$-distribution that is a stable distribution (as is the Gaussian
distribution). To see that there is not a way to merge two independent
Cauchy distributions into a single distribution. Suppose that $X_{1}$
and $X_{2}$ are independent and Cauchy distributed. The question
is if there is a way to express $\Xi_{1}X_{1}+\Xi_{2}X_{2}$ as $\tilde{\Xi}\tilde{X}$,
such that their distributions are identical, where $\tilde{X}$ is
an $n_{1}+n_{2}$ dimensional $t$-distribution. We have $\tilde{\Xi}\tilde{X}\sim\mathrm{Cauchy}(0,\tilde{\Xi}\tilde{\Xi}^{\prime})$
and from \citet[theorem 1]{Bian1991} it follows that $\Xi_{1}X_{1}+\Xi_{2}X_{2}$
is Cauchy distributed if and only if $\Xi_{1}\Xi_{1}^{\prime}\propto\Xi_{2}\Xi_{2}^{\prime}$.
However, this proportionality implies that $\Xi$ has reduced rank,
which is a violation of Assumption \ref{assu:XiInvertible}. 
\begin{thm}[Identification]
\label{thm:Identify}Given Assumptions \ref{assu:XiInvertible} and
\ref{assu:AtMostOneGaussian}. The location parameter, $\mu\in\mathbb{R}^{n}$
and the triplets $(n_{k},\nu_{k},\Omega_{k})$, $k=1,\ldots,K$, are
identified from the distribution of $Y=\mu+\Xi X$ (up to permutations
of the triplets), where $\Omega_{k}$ is a symmetric $n\times n$
matrix with rank $n_{k}$, $k=1,\ldots,K$ and $\Xi\Xi^{\prime}=\sum_{k}\Omega_{k}$.
\end{thm}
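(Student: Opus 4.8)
To prove Theorem~\ref{thm:Identify}, the plan is to work entirely with the characteristic function of $Y$ and to exploit the fact that the characteristic function of a $t$-distribution with finite degrees of freedom is real-analytic away from the origin but carries a singularity there whose order encodes the degrees of freedom, whereas the Gaussian characteristic function is entire. First, each $X_k$ is symmetric about $0$, so $Y$ is symmetric about $\mu$, which is therefore identified; equivalently, by Theorem~\ref{thm:Convo-t} applied to the scalar $s'Y$ for $s\in\mathbb{R}^n$, $\varphi_Y(s)=e^{is'\mu}\prod_{k=1}^{K}\phi_{\nu_k}(\sqrt{s'\Omega_k s})$, and since each $\phi_{\nu_k}$ is real and strictly positive, $e^{is'\mu}=\varphi_Y(s)/|\varphi_Y(s)|$ recovers $\mu$. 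We may thus set $\mu=0$ and are left to recover $\{(n_k,\nu_k,\Omega_k)\}_{k=1}^{K}$ from the real-valued function $g(s):=\log\varphi_Y(s)=\sum_{k=1}^{K}\psi_{\nu_k}(s'\Omega_k s)$, where $\psi_\nu(q):=\log\phi_\nu(\sqrt q)$, $q\ge 0$.

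The analytic input is the small-argument expansion of the modified Bessel function $K_{\nu/2}$: near $q=0$, $\psi_\nu(q)$ equals a function analytic in $q$ plus a leading singular term equal to a nonzero universal constant times $q^{\nu/2}$ when $\nu$ is odd or non-integer, and times $q^{\nu/2}\log q$ when $\nu$ is an even integer, whereas $\psi_\infty(q)=-q/2$. Since $s'\Omega_k s=\|\Omega_k^{1/2}s\|^2$ is a positive semidefinite quadratic form that vanishes to exactly second order on the subspace $\ker\Omega_k$ (of dimension $n-n_k$), and $\psi_{\nu_k}$ is analytic on $(0,\infty)$, each summand $\psi_{\nu_k}(s'\Omega_k s)$ is real-analytic on $\mathbb{R}^n\setminus\ker\Omega_k$ and is genuinely non-analytic along $\ker\Omega_k$ exactly when $\nu_k<\infty$.

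Assumption~\ref{assu:XiInvertible} enters through a rank count. If $\ker\Omega_k\subseteq\ker\Omega_{k'}$ for some $k\ne k'$, then $\mathrm{range}\,\Omega_{k'}\subseteq\mathrm{range}\,\Omega_k$, so, using $\mathrm{range}\,\sum_j\Omega_j=\sum_j\mathrm{range}\,\Omega_j$ for positive semidefinite summands, $\mathrm{rank}\,\Xi\Xi'\le\sum_{j\ne k'}n_j<n$, contradicting $\det\Xi\ne 0$. Hence $\ker\Omega_1,\dots,\ker\Omega_K$ are pairwise distinct with no inclusions among them, and the non-analytic locus of $g$ is exactly $\Sigma:=\bigcup_{k:\,\nu_k<\infty}\ker\Omega_k$, a finite union of distinct linear subspaces none of which contains another. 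Its intrinsic decomposition into maximal linear subspaces therefore identifies the collection $\{\ker\Omega_k:\nu_k<\infty\}$, and hence each corresponding cluster size $n_k=n-\dim\ker\Omega_k$.

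It remains to pin down $\nu_k$ and $\Omega_k$ for each non-Gaussian $k$. Choose $s_0\in\ker\Omega_k$ lying on no other $\ker\Omega_j$ (possible because the linear space $\ker\Omega_k$ cannot be covered by the proper subspaces $\ker\Omega_k\cap\ker\Omega_j$) and $v\in(\ker\Omega_k)^\perp\setminus\{0\}$, and consider the even part $t\mapsto\tfrac12\big(g(s_0+tv)+g(s_0-tv)\big)$ for small $t$. Using $\Omega_k s_0=0$, the $k$-th summand contributes exactly $2\psi_{\nu_k}\!\big((v'\Omega_k v)\,t^2\big)$, while every other summand, evaluated near the positive value $s_0'\Omega_j s_0$, contributes a function analytic in $t^2$; regarded as a function of $q=(v'\Omega_k v)\,t^2$, the exponent of the leading non-analytic term (and whether it carries a logarithm) identifies $\nu_k$, and the coefficient of that term then identifies $v'\Omega_k v$, for every $v\in(\ker\Omega_k)^\perp$. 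Polarization converts this into the quadratic form $\Omega_k$ on $(\ker\Omega_k)^\perp$, which together with $\ker\Omega_k$ determines the symmetric matrix $\Omega_k$. Subtracting the now-known functions $\psi_{\nu_k}(s'\Omega_k s)$, over all $k$ with $\nu_k<\infty$, from $g$ leaves $-\tfrac12 s'\Omega_0 s$ when a Gaussian block is present---identifying $\Omega_0$, with $\nu_0=\infty$ and $n_0=\mathrm{rank}\,\Omega_0$---and the zero function otherwise. The step I expect to be the main obstacle is making the second and fourth paragraphs rigorous: extracting the exact local form of $\psi_\nu$ near $q=0$ from the Bessel expansion in the odd, even-integer and non-integer regimes, and arguing that singular contributions of distinct blocks can neither cancel each other nor combine into an analytic function---which is exactly why the non-nesting of the $\ker\Omega_k$ and the non-removability of each $q^{\nu_k/2}$ (or $q^{\nu_k/2}\log q$) singularity must be established with care.
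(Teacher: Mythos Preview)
Your proposal is correct and considerably more detailed than the paper's own proof. The paper's argument consists essentially of writing down the multivariate characteristic function
\[
\varphi_Y(r)=e^{ir'\mu}\prod_{k=1}^{K}\frac{K_{\nu_k/2}\!\left(\sqrt{\nu_k\,r'\Omega_k r}\right)\left(\nu_k\,r'\Omega_k r\right)^{\nu_k/4}}{\Gamma(\nu_k/2)\,2^{\nu_k/2-1}},
\]
observing that only $\mu$ and the pairs $\{\nu_k,\Omega_k\}$ enter, and then asserting identification by appeal to the fact that the multivariate $t$-distribution is not stable except in the Cauchy and Gaussian cases. No explicit recovery procedure is given.

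Your route is genuinely different in execution: you use the same characteristic function but turn the identification into a concrete analytic mechanism---the non-analyticity locus of $\log\varphi_Y$ is the union of the kernels $\ker\Omega_k$ (for $\nu_k<\infty$), and the local order of the singularity along each kernel reads off $\nu_k$, after which the coefficient of the singular term gives $v'\Omega_k v$ and polarization yields $\Omega_k$. Your use of Assumption~\ref{assu:XiInvertible} to rule out inclusions among the $\ker\Omega_k$ via the rank bound $\mathrm{rank}(\Xi\Xi')\le\sum_{j\ne k'}n_j<n$ is exactly what is needed to make the decomposition of the singular locus into maximal linear subspaces canonical; this step is absent from the paper and is the real content of the argument. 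What your approach buys is a constructive, self-contained proof that does not rely on an external non-stability statement; what the paper's approach buys is brevity, at the cost of leaving the key step implicit.

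Two minor points: your ``$2\psi_{\nu_k}$'' should be ``$\psi_{\nu_k}$'' since you already divided by $2$ in forming the even part; and your closing caveat is well placed---the Bessel expansion does split into the three regimes you describe (half-integer order for odd $\nu$, integer order with logarithm for even $\nu$, and generic non-integer order otherwise), and the non-cancellation follows because the singular supports $\ker\Omega_k$ are pairwise non-nested, so near a generic point of any one kernel all other summands are genuinely analytic.
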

Theorem \ref{thm:Identify} shows that it is the $n\times n$ scaling
matrices, $\Omega_{k}=\Xi_{k}\Xi_{k}^{\prime}$, $k=1,\ldots,K$ that
hold the key to identification. 

We seek convenient identifying assumptions that will lead to a unique
$\Xi_{k}\in\mathbb{R}^{n\times n_{k}}$, which satisfies $\Omega_{k}=\Xi_{k}\Xi_{k}^{\prime}$.
The parameter-matrix, $\Xi_{k}$, is partially identified the subspace
spanned by the reduced-rank matrix, $\Omega_{k}$, and that $\mathrm{tr}\{\Xi_{k}^{\prime}\Xi_{k}\}=\mathrm{tr}\{\Omega_{k}\}$.
Suppose that $\Xi_{k}$ is a solution to $\Omega_{k}=\Xi_{k}\Xi_{k}^{\prime}$.
For any orthonormal $n_{k}\times n_{k}$ matrix $Q$, (i.e. $QQ^{\prime}=I_{n_{k}}$)
we observe that $\tilde{\Xi}_{k}=\Xi_{k}Q$ is also a solution, because
$\Omega_{k}=\tilde{\Xi}_{k}\tilde{\Xi}_{k}^{\prime}$, and $\mathrm{tr}\{\tilde{\Xi}_{k}^{\prime}\tilde{\Xi}_{k}\}=\mathrm{tr}\{Q^{\prime}\Xi_{k}^{\prime}\Xi_{k}Q\}=\mathrm{tr}\{\Xi_{k}^{\prime}\Xi_{k}QQ^{\prime}\}=\mathrm{tr}\{\Xi_{k}^{\prime}\Xi_{k}\}$.
The following theorem gives our preferred identification scheme for
$\Xi$ matrix. 
\begin{thm}
\label{Thm:XiExists}Let $\Omega_{k}$ be symmetric and positive semidefinite
with $\mathrm{rank}(\Omega_{k})=n_{k}$, $k=1,\ldots,K$. There exists
an $\Xi$ matrix with the following structure,
\begin{equation}
\Xi(\Omega_{1},\ldots,\Omega_{K})=\left[\begin{array}{cccc}
\Xi_{11} & \Xi_{12} & \cdots & \Xi_{1K}\\
\Xi_{21} & \Xi_{22}\\
\vdots &  & \ddots & \vdots\\
\Xi_{K1} & \Xi_{K2} & \cdots & \Xi_{KK}
\end{array}\right]=(\Xi_{1},\ldots,\Xi_{K}),\label{eq:Xi(Omega)}
\end{equation}
such that $\Xi_{k}\Xi_{k}^{\prime}=\Omega_{k}$ for $k=1,\ldots,K$,
where $\ensuremath{\Xi_{kl}}\in\mathbb{R}^{n_{k}\times n_{l}}$ and
$\Xi_{kk}$ is a symmetric and psd $n_{k}\times n_{k}$ matrix. Moreover,
if $\Xi_{kk}$ is also invertible for all $k=1,\ldots,K$, then $\Xi$
is unique. 
\end{thm}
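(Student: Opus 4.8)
The plan is to build $\Xi$ one block column at a time. Since $\Xi=(\Xi_{1},\ldots,\Xi_{K})$ with $\Xi_{k}\in\mathbb{R}^{n\times n_{k}}$ the $k$-th block column, the constraint $\Xi_{k}\Xi_{k}^{\prime}=\Omega_{k}$ involves only the entries of $\Xi_{k}$, and once each $\Xi_{k}$ is chosen the full matrix automatically satisfies $\Xi\Xi^{\prime}=\sum_{k}\Xi_{k}\Xi_{k}^{\prime}=\sum_{k}\Omega_{k}$. Write $P_{k}\in\mathbb{R}^{n_{k}\times n}$ for the selection matrix extracting the $n_{k}$ rows of coordinate block $k$, so the diagonal block of a candidate $\Xi_{k}$ is $\Xi_{kk}=P_{k}\Xi_{k}$.

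For existence, fix any full-column-rank factorization $\Omega_{k}=\Psi_{k}\Psi_{k}^{\prime}$ with $\Psi_{k}\in\mathbb{R}^{n\times n_{k}}$; this exists because $\Omega_{k}$ is symmetric positive semidefinite of rank $n_{k}$ (e.g. $\Psi_{k}=U_{k}\Lambda_{k}^{1/2}$ from the reduced spectral decomposition $\Omega_{k}=U_{k}\Lambda_{k}U_{k}^{\prime}$). Because $\mathrm{rank}(\Omega_{k})=n_{k}$, every $n\times n_{k}$ factor of $\Omega_{k}$ has the form $\Psi_{k}Q$ for an orthogonal $Q\in\mathbb{R}^{n_{k}\times n_{k}}$: the column spaces coincide and $\Psi_{k}$ is left-invertible, which forces $QQ^{\prime}=I_{n_{k}}$. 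It remains to choose $Q_{k}$ so that $M_{k}Q_{k}$ is symmetric positive semidefinite, where $M_{k}:=P_{k}\Psi_{k}\in\mathbb{R}^{n_{k}\times n_{k}}$. This is exactly a polar decomposition: from the singular value decomposition of $M_{k}$ one gets $M_{k}=S_{k}V_{k}$ with $S_{k}=(M_{k}M_{k}^{\prime})^{1/2}$ symmetric positive semidefinite and $V_{k}$ orthogonal, so $Q_{k}=V_{k}^{\prime}$ gives $M_{k}Q_{k}=S_{k}$. Setting $\Xi_{k}:=\Psi_{k}V_{k}^{\prime}$ then yields $\Xi_{k}\Xi_{k}^{\prime}=\Psi_{k}\Psi_{k}^{\prime}=\Omega_{k}$ and diagonal block $\Xi_{kk}=S_{k}=(P_{k}\Omega_{k}P_{k}^{\prime})^{1/2}$, which is symmetric positive semidefinite, as required.

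For uniqueness under the additional hypothesis that every $\Xi_{kk}$ is invertible, note first that the diagonal block is pinned down for any admissible $\Xi$: the $(k,k)$ block of $\Xi_{k}\Xi_{k}^{\prime}=\Omega_{k}$ reads $\Xi_{kk}\Xi_{kk}^{\prime}=\Xi_{kk}^{2}=P_{k}\Omega_{k}P_{k}^{\prime}$, and since $\Xi_{kk}$ is symmetric positive semidefinite and the positive semidefinite square root is unique, $\Xi_{kk}=(P_{k}\Omega_{k}P_{k}^{\prime})^{1/2}$. Reading off the $(i,k)$ block of $\Xi_{k}\Xi_{k}^{\prime}=\Omega_{k}$ for $i\neq k$ gives $\Xi_{ik}\Xi_{kk}=P_{i}\Omega_{k}P_{k}^{\prime}$, so if $\Xi_{kk}$ is invertible then $\Xi_{ik}=P_{i}\Omega_{k}P_{k}^{\prime}\Xi_{kk}^{-1}$ is determined; hence each $\Xi_{k}$, and therefore $\Xi$, is unique.

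The step I expect to be the main obstacle is the general existence claim when some $P_{k}\Omega_{k}P_{k}^{\prime}$ is singular: then the off-diagonal blocks of $\Xi_{k}$ are genuinely non-unique and cannot be recovered by solving $\Xi_{ik}\Xi_{kk}=P_{i}\Omega_{k}P_{k}^{\prime}$, so the argument must go through the orbit description $\{\Psi_{k}Q:Q\in O(n_{k})\}$ and the polar decomposition of the possibly rank-deficient $M_{k}$. The point to get right there is that the polar (equivalently singular value) decomposition still delivers an orthogonal factor in the rank-deficient case, and that the identity $M_{k}M_{k}^{\prime}=P_{k}\Omega_{k}P_{k}^{\prime}$ makes the resulting diagonal block coincide with the value forced in the uniqueness argument. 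It is also worth remarking that the rank hypothesis is consistent rather than restrictive: any $n\times n_{k}$ factor $\Xi_{k}$ of $\Omega_{k}$ has $\mathrm{rank}(\Omega_{k})\le n_{k}$, with equality whenever $\Xi_{kk}$ is invertible.
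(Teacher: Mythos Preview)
Your proof is correct and, for existence, follows essentially the same approach as the paper: both start from an arbitrary rank-$n_k$ factorization $\Omega_k=\Psi_k\Psi_k'$ and right-multiply by the orthogonal polar factor of the $k$-th diagonal block to make $\Xi_{kk}=(P_k\Omega_kP_k')^{1/2}$ symmetric psd; the paper simply writes out the polar factor explicitly (and handles the singular case via an SVD completion with $V_\perp U_\perp'$ rather than invoking polar decomposition directly).

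Your uniqueness argument is a genuine simplification relative to the paper. The paper proves uniqueness through the orbit picture: any two admissible factors differ by an orthogonal $P_{kk}$, and a separate lemma (their Lemma~A.2, using eigendecompositions) shows that if both $\Xi_{kk}$ and $\Xi_{kk}P_{kk}$ are symmetric positive definite then $P_{kk}=I$. You instead read the blocks directly off $\Xi_k\Xi_k'=\Omega_k$: the $(k,k)$ block forces $\Xi_{kk}$ to be the unique psd square root of $P_k\Omega_kP_k'$, and invertibility of $\Xi_{kk}$ then determines each $\Xi_{ik}$ from the $(i,k)$ block. This bypasses the auxiliary lemma entirely and is more elementary; the paper's route, on the other hand, makes the group-theoretic structure (the $O(n_k)$-orbit of factors) more visible, which is what they exploit later when discussing permutation-equivalent parametrizations and the Jacobian $M_{\tilde\Xi}$.
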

From the distribution of $Y$ we can identify the triplets, $(n_{k},\nu_{k},\Omega_{k})$,
while the ordering of the $K$ triplets can arbitrary. For instance,
suppose that $\Xi(\Omega_{1},\ldots,\Omega_{K})$ has the structure
stated in Theorem \ref{Thm:XiExists}, then reordering of the triplet
will, initially, lead to a different $\Xi$ matrix, $\tilde{\Xi}=(\Xi_{j_{1}},\ldots,\Xi_{j_{K}})$,
where $j_{1},\ldots,j_{K}$ is a permutation of $1,\ldots,K$. However,
this matrix can always be manipulated into, $\Xi(\Omega_{j_{1}},\ldots,\Omega_{j_{K}})$,
which has the diagonal-symmetry structure in Theorem \ref{Thm:XiExists}. 

Each permutation of the $K$ clusters define a particular $\Xi$-matrix
with (\ref{eq:Xi(Omega)}), which are all observationally equivalent.
Fortunately, there is only a finite number of them, and we will therefore
identify $\Xi$ by the permutation that maximizes the trace, $\mathrm{tr}\{\Xi\}$.
This choice will often be meaningful, because it sorts the clusters
in a way that align with the observed variables. We illustrate this
in Example 1. This identification strategy will almost always suffice,
because permutations that produce $\Xi$-matrices with identical traces
are rare (they have Lebesgue measure zero in $\mathbb{R}^{n\times n}$).
Another simple identification scheme is available if the pairs, $(n_{k},\nu_{k})$,
$k=1,\ldots,K$, are unique. Here we may simple pick a ordering of
the clusters, which identifies $\Xi$ with the structure in Theorem
\ref{Thm:XiExists}. This is an alternative to using the trace of
$\Xi$. The two may also be combined. For instance, we may sort the
clusters by $n_{k}$, and sort clusters with the same size by maximizing
the trace. 

It is the independent components in $X$ that enables us to identify
additional parameters in $\Xi$. This is similar to independent component
analysis (ICA), which is commonly used in signal processing and machine
learning, where it is known as \textit{blind source separation}, see\emph{
}\textit{\emph{e.g.}} \citet{Stone2004}. However, our distributional
framework is more general, because we do not require $n$ mutually
independent components. To the contrary, elements of the same subvector,
$X_{k}$, are dependent, and this dependence is informative about
the underlying cluster structure.
\begin{example}
We revisit the trivariate example considered previously, where $X=(X_{1}^{\prime},X_{2}^{\prime})^{\prime}$,
with $X_{1}\sim t_{\nu_{1},1}(0,1)$ and $X_{2}\sim t_{\nu_{2},2}(0,I_{2})$.
The asymmetric $A$-matrix lead to
\[
\Xi=[\Xi_{1},\Xi_{2}]=\left[\begin{array}{ccc}
\cellcolor{black!10}\frac{4}{3}\sqrt{\frac{c_{1}}{2}} & \cellcolor{black!5}-\frac{1}{3}\sqrt{\frac{c_{2}}{2}} & \cellcolor{black!5}-\frac{1}{3}\sqrt{\frac{c_{2}}{2}}\\
\cellcolor{black!10}\sqrt{\frac{c_{1}}{2}} & \cellcolor{black!5}\sqrt{\frac{c_{2}}{2}} & \cellcolor{black!5}0\\
\cellcolor{black!10}\sqrt{\frac{c_{1}}{2}} & \cellcolor{black!5}0 & \cellcolor{black!5}\sqrt{\frac{c_{2}}{2}}
\end{array}\right],
\]
which has the structure defined in Theorem \ref{Thm:XiExists}, where
$c_{k}=\tfrac{\nu_{k}-2}{\nu_{k}}$, $k=1,2$. Here we have
\begin{eqnarray*}
\Xi\Xi^{\prime} & = & \left[\begin{array}{ccc}
\cellcolor{black!15}\tfrac{8}{9}c_{1}+\tfrac{1}{9}c_{2} & \cellcolor{black!10}\tfrac{2}{3}c_{1}-\tfrac{1}{6}c_{2} & \cellcolor{black!10}\tfrac{2}{3}c_{1}-\tfrac{1}{6}c_{2}\\
\cellcolor{black!10}\tfrac{2}{3}c_{1}-\tfrac{1}{6}c_{2} & \cellcolor{black!5}\tfrac{1}{2}c_{1}+\tfrac{1}{2}c_{2} & \cellcolor{black!5}\tfrac{1}{2}c_{1}\\
\cellcolor{black!10}\tfrac{2}{3}c_{1}-\tfrac{1}{6}c_{2} & \cellcolor{black!5}\tfrac{1}{2}c_{1} & \cellcolor{black!5}\tfrac{1}{2}c_{1}+\tfrac{1}{2}c_{2}
\end{array}\right]=\Omega_{1}+\Omega_{2}
\end{eqnarray*}
where 
\[
\Omega_{1}=\Xi_{1}\Xi_{1}^{\prime}=c_{1}\left[\begin{array}{ccc}
\tfrac{8}{9} & \tfrac{2}{3} & \tfrac{2}{3}\\
\tfrac{2}{3} & \tfrac{1}{2} & \tfrac{1}{2}\\
\tfrac{2}{3} & \tfrac{1}{2} & \tfrac{1}{2}
\end{array}\right],\qquad\Omega_{2}=\Xi_{2}\Xi_{2}^{\prime}=c_{2}\left[\begin{array}{ccc}
\phantom{{-}}\tfrac{1}{9} & -\tfrac{1}{6} & -\tfrac{1}{6}\\
-\tfrac{1}{6} & \phantom{{-}}\tfrac{1}{2} & \phantom{{-}}0\\
-\tfrac{1}{6} & \phantom{{-}}0 & \phantom{{-}}\tfrac{1}{2}
\end{array}\right].
\]
If we reverse the ordering of the triplet $(n_{k},\nu_{k},\Omega_{k})$,
such that $\check{X}=(X_{2}^{\prime},X_{1})^{\prime}$, then we obviously
have 
\[
\Xi X=\tilde{\Xi}\tilde{X},\qquad\text{with}\quad\tilde{\Xi}=[\Xi_{2},\Xi_{1}]=\left[\begin{array}{ccc}
\cellcolor{black!5}-\frac{1}{3}\sqrt{\frac{c_{2}}{2}} & \cellcolor{black!5}-\frac{1}{3}\sqrt{\frac{c_{2}}{2}} & \cellcolor{black!10}\frac{4}{3}\sqrt{\frac{c_{1}}{2}}\\
\cellcolor{black!5}\sqrt{\frac{c_{2}}{2}} & \cellcolor{black!5}0 & \cellcolor{black!10}\sqrt{\frac{c_{1}}{2}}\\
\cellcolor{black!5}0 & \cellcolor{black!5}\sqrt{\frac{c_{2}}{2}} & \cellcolor{black!10}\sqrt{\frac{c_{1}}{2}}
\end{array}\right],
\]
but this matrix does not have the required structure in Theorem \ref{Thm:XiExists},
because the upper-left $2\times2$ submatrix is not symmetric. The
desired structure is achieved with $\check{\Xi}=[\check{\Xi}_{1},\check{\Xi}_{2}]$
where $\tilde{\Xi}_{2}=\Xi_{1}$ and 
\[
\check{\Xi}_{1}=\Xi_{2}\check{P}_{11}=\sqrt{\frac{c_{2}}{2}}\left[\begin{array}{cc}
\cellcolor{black!5}\phantom{{-}}0.404 & \cellcolor{black!5}-0.242\\
\cellcolor{black!5}-0.242 & \cellcolor{black!5}\phantom{{-}}0.970\\
\cellcolor{black!5}-0.970 & \cellcolor{black!5}\phantom{{-}}0.242
\end{array}\right]
\]
with 
\[
\check{P}_{11}=\left[\begin{array}{cc}
-\frac{1}{3} & -\frac{1}{3}\\
\phantom{{-}}1 & \phantom{{-}}0
\end{array}\right]^{\prime}\left(\left[\begin{array}{cc}
\phantom{{-}}\frac{2}{9} & -\frac{1}{3}\\
-\frac{1}{3} & \phantom{{-}}1
\end{array}\right]\right)^{-1/2},
\]
inferred from the upper $2\times2$ matrix of $\Xi_{2}$. Thus $\Xi$
and $\check{\Xi}$ are observationally equivalent and we proceed to
use the maximum trace to identity the matrix. In this example we have,
$\mathrm{tr}\{\Xi\}=\frac{4}{3}\sqrt{\frac{c_{1}}{2}}+2\sqrt{\frac{c_{2}}{2}}$
and $\mathrm{tr}\{\check{\Xi}\}\simeq\sqrt{\frac{c_{1}}{2}}+1.374\sqrt{\frac{c_{2}}{2}}$.
Because the former is larger, $\Xi$ is the identify parameter.
\end{example}
There are other identification strategies than that of Theorem \ref{Thm:XiExists}.
For instance, one could impose a Cholesky structure on $\Xi_{kk}$.
We prefer the structure with symmetric diagonal blocks, because it
is well suited for the parsimonious block structures we introduce
below, as well as general symmetry restrictions that also reduce the
number of free parameters.
\begin{defn}
We say that $B\in\mathbb{R}^{n\times n}$ is a block matrix with $K$
blocks, if it can be expressed as the following form
\[
B=\left[\begin{array}{cccc}
B_{[1,1]} & B_{[1,2]} & \cdots & B_{[1,K]}\\
B_{[2,1]} & B_{[2,2]}\\
\vdots &  & \ddots\\
B_{[K,1]} &  &  & B_{[K,K]}
\end{array}\right],
\]
where $B_{[k,l]}$ is an $n_{k}\times n_{l}$ matrix such that
\begin{equation}
B_{[k,k]}=\left[\begin{array}{cccc}
d_{k} & b_{kk} & \cdots & b_{kk}\\
b_{kk} & d_{k} & \ddots\\
\vdots & \ddots & \ddots\\
b_{kk} &  &  & d_{k}
\end{array}\right]\quad\text{and }\quad B_{[k,l]}=\left[\begin{array}{ccc}
b_{kl} & \cdots & b_{kl}\\
\vdots & \ddots\\
b_{kl} &  & b_{kl}
\end{array}\right]\quad\text{if }k\neq l,\label{eq:BlockStructure}
\end{equation}
where $d_{k}$ and $b_{kl}$ for $k,l=1,\ldots,K$ are constants and
$\sum_{k}n_{k}=n$.
\end{defn}
This definition is taken from \citet{ArchakovHansen:CanonicalBlockMatrix}.
The case where the scaling matrix, $\Xi\Xi^{\prime}=\sum_{k=1}^{K}\Omega_{k}$,
has a block matrix is interesting, because $\Xi$ may be assumed to
have the same block structure, albeit not necessarily a symmetric
block structure.
\begin{cor}
If Assumptions \ref{assu:XiInvertible} and \ref{assu:AtMostOneGaussian}
hold and $\sum_{k=1}^{K}\Omega_{k}$ is a symmetric block matrix.
Then there exists a (possibly non-symmetric) block matrix, $\Xi$,
such that $\Xi\Xi^{\prime}=\sum_{k=1}^{K}\Omega_{k}$.
\end{cor}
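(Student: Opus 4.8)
The matrix $A:=\sum_{k=1}^{K}\Omega_{k}=\Xi\Xi^{\prime}$ is symmetric and, by Assumption~\ref{assu:XiInvertible}, positive definite, and by hypothesis it is a block matrix. The plan is to build a block-structured factor of $A$ in three steps: block-diagonalize $A$ with the orthogonal transformation attached to block matrices in \citet{ArchakovHansen:CanonicalBlockMatrix}; take an ordinary symmetric square root of the resulting, very simple, block-diagonal matrix; and then check that this square root, transformed back, still has the block form of the Definition.

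For the first step I would take $P=\mathrm{diag}(P_{1},\ldots,P_{K})$, block-diagonal, where each $P_{k}$ is an orthonormal $n_{k}\times n_{k}$ matrix whose first column is $n_{k}^{-1/2}\mathbf{1}_{n_{k}}$ (e.g.\ a Helmert matrix), and let $R$ be the permutation that lists the $K$ block-mean coordinates first and the $n-K$ within-block contrast coordinates afterwards. As established in \citet{ArchakovHansen:CanonicalBlockMatrix}, for any block matrix $B$ with parameters $(d_{k},b_{kl})$ one then has
\[
R^{\prime}P^{\prime}BPR=\mathrm{diag}\bigl(\bar{B},\,(d_{1}-b_{11})I_{n_{1}-1},\,\ldots,\,(d_{K}-b_{KK})I_{n_{K}-1}\bigr),
\]
where $\bar{B}$ is the $K\times K$ matrix with $\bar{B}_{kk}=d_{k}+(n_{k}-1)b_{kk}$ and $\bar{B}_{kl}=\sqrt{n_{k}n_{l}}\,b_{kl}$ for $k\neq l$, and a block of size $n_{k}=1$ contributes no contrast coordinates. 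Applying this with $B=A$, positive definiteness of $A$ passes to $\bar{A}$ and to each $d_{k}-b_{kk}$ with $n_{k}\geq2$, so the symmetric positive semidefinite square roots $\bar{A}^{1/2}$ and $(d_{k}-b_{kk})^{1/2}I_{n_{k}-1}$ exist. I would then set $M:=\mathrm{diag}\bigl(\bar{A}^{1/2},(d_{1}-b_{11})^{1/2}I_{n_{1}-1},\ldots,(d_{K}-b_{KK})^{1/2}I_{n_{K}-1}\bigr)$ and $\Xi:=PRMR^{\prime}P^{\prime}$; since $P$ and $R$ are orthogonal and $MM^{\prime}=R^{\prime}P^{\prime}APR$, one gets $\Xi\Xi^{\prime}=PR\,MM^{\prime}\,R^{\prime}P^{\prime}=A$ at once.

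It then remains to verify that $\Xi=PRMR^{\prime}P^{\prime}$ has the block structure of the Definition. Here I would use that $M$ acts as $\bar{A}^{1/2}$ on the span of the $K$ block-mean directions $n_{k}^{-1/2}\mathbf{1}_{n_{k}}$ and as the scalar $(d_{k}-b_{kk})^{1/2}$ on the contrast subspace of block $k$, with no coupling between contrast coordinates of distinct blocks nor between a contrast coordinate and any block-mean direction. Reading off the $(k,l)$ sub-block of $\Xi$ then gives, for $k\neq l$, $\Xi_{[k,l]}=\tfrac{[\bar{A}^{1/2}]_{kl}}{\sqrt{n_{k}n_{l}}}\,\mathbf{1}_{n_{k}}\mathbf{1}_{n_{l}}^{\prime}$, a constant matrix, and for $k=l$, $\Xi_{[k,k]}=(d_{k}-b_{kk})^{1/2}I_{n_{k}}+\tfrac{[\bar{A}^{1/2}]_{kk}-(d_{k}-b_{kk})^{1/2}}{n_{k}}\,\mathbf{1}_{n_{k}}\mathbf{1}_{n_{k}}^{\prime}$, a compound-symmetry matrix; hence $\Xi$ is a block matrix (and, as it happens, a symmetric one, which a fortiori satisfies the ``possibly non-symmetric'' conclusion).

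I expect the only delicate point to be the block-diagonalization identity together with the bookkeeping in the last paragraph: one must invoke the precise structure of the transformation in \citet{ArchakovHansen:CanonicalBlockMatrix} --- in particular that the within-block contrast subspaces are mutually orthogonal and orthogonal to every block-mean direction --- in order to be sure that $P^{\prime}\Xi P$ inherits the decoupled ``mean-plus-contrast'' structure of $P^{\prime}AP$. Once that is granted the remaining computations are routine; in fact the conclusion can be obtained in one line from the isometry in \citet{ArchakovHansen:CanonicalBlockMatrix} between block matrices and tuples $(\bar{A},d_{1}-b_{11},\ldots,d_{K}-b_{KK})$, under which the symmetric square root of a positive definite block matrix is again a block matrix.
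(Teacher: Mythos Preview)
Your argument is correct. The paper does not include a proof of this Corollary in its appendix; it is stated without argument, presumably because the authors regard it as immediate from the canonical representation of block matrices in \citet{ArchakovHansen:CanonicalBlockMatrix} that underlies the Definition --- which is precisely the route you take. Your block-diagonalization formula for $R^{\prime}P^{\prime}BPR$ and your explicit expressions for $\Xi_{[k,l]}$ and $\Xi_{[k,k]}$ are correct, and the resulting $\Xi=A^{1/2}$ is in fact symmetric; the qualifier ``possibly non-symmetric'' in the statement is not needed for bare existence (it is relevant only for the identified parametrization of Theorem~\ref{Thm:XiExists}, which is a different construction). Note also that Assumption~\ref{assu:AtMostOneGaussian} plays no role here; only Assumption~\ref{assu:XiInvertible} is used, to guarantee that $A=\sum_k\Omega_k$ is positive definite.

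A one-line alternative that avoids the change-of-basis bookkeeping in your last paragraph: block matrices in the sense of the Definition are exactly those $n\times n$ matrices that commute with every block-wise permutation $g\in S_{n_1}\times\cdots\times S_{n_K}$ (acting on coordinates). Since conjugation by an orthogonal matrix commutes with taking the unique positive semidefinite square root, $gA^{1/2}g^{\prime}=(gAg^{\prime})^{1/2}=A^{1/2}$ for every such $g$, and hence $\Xi=A^{1/2}$ is automatically a block matrix. This is the same content as your closing remark about the algebra isomorphism, just phrased without invoking the explicit orthogonal transformation.
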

\begin{rem}
If $\Omega_{k}$ has a symmetric block structure from group assignment
$\left(n_{1}\ldots,n_{K}\right)$, then $\Xi_{k}$ preserves the same
block structure with
\[
\Xi_{kk}=\left[\begin{array}{cccc}
d_{kk} & \beta_{kk} & \cdots & \beta_{kk}\\
\beta_{kk} & d_{kk} & \ddots\\
\vdots & \ddots & \ddots\\
\beta_{kk} &  &  & d_{kk}
\end{array}\right],\quad\Xi_{kl}=\left[\begin{array}{ccc}
\beta_{kl} & \cdots & \beta_{kl}\\
\vdots & \ddots & \vdots\\
\beta_{kl} & \cdots & \beta_{kl}
\end{array}\right]\quad\text{for }k\neq l.
\]
\end{rem}
According to Theorem \ref{Thm:XiExists}, if $\Xi_{kk}$ is symmetric
and positive definite, then the $\Xi_{k}$ matrix is unique and identified
from $\Omega_{k}$. Thus, we can directly imposing the following structure
of $\Xi$ matrix in estimation
\[
\Xi=\left[\Xi_{1},\Xi_{2},\ldots,\Xi_{K}\right]=\left(\begin{array}{cccc}
\Xi_{11} & \Xi_{12} & \cdots & \Xi_{1K}\\
\Xi_{21} & \Xi_{22} & \ddots & \vdots\\
\vdots & \ddots & \ddots & \Xi_{K-1,K}\\
\Xi_{K,1} & \cdots & \Xi_{K,K-1} & \Xi_{KK}
\end{array}\right)
\]
where $\Xi_{kk}$ can by parameterized by $\Xi_{kk}=\exp(\gamma_{k})$,
where $\gamma_{k}$ is an unrestricted symmetric matrix, and $\exp\left(\cdot\right)$
takes matrix exponential. 

Note that if $\Omega_{k}$, $k=1,2,\ldots,K$, all have a block structure
with identical block sizes, then $\Xi=\left[\Xi_{1},\Xi_{2},\ldots,\Xi_{K}\right]$
has the same block structure (but need not be symmetric). Block structures
may be inferred from estimates of $\Omega=\sum_{k}\Omega_{k}$ that
will have the same block structure, such as in our empirical analysis
where $\hat{\Omega}$ is found to have an approximate block structures.

\subsection{The Score, Hessian, and Fisher Information\label{subsec:ScoreInfo}}

Next, we proceed to establish likelihood-based results that are used
for inference about $\mu$, $\Xi$, and $\boldsymbol{\nu}$. We do
not address inference about the integer-valued vector, $\boldsymbol{n}$,
which is a non-standard problem, similar to lag-length selection in
autoregressions and determining the cointegration rank in vector autoregressions.

We define the following three key parameter vectors with
\begin{align}
\mu\in\mathbb{R}^{n\times1},\quad{\rm vec}(\tilde{\Xi})\in\mathbb{R}^{n^{2}\times1},\quad{\rm and}\quad\boldsymbol{\nu} & =(\nu_{1},\nu_{2},\ldots,\nu_{K})^{\prime}\in\mathbb{R}^{K\times1},\label{eq:parameters}
\end{align}
where the $\tilde{\Xi}$ matrix is unrestricted (i.e. it need not
have the identifying structure of Theorem \ref{Thm:XiExists}). The
corresponding score vectors are denoted by
\[
\nabla_{\mu}=\tfrac{\partial\ell}{\partial\mu},\quad\nabla_{\tilde{\Xi}}=\tfrac{\partial\ell}{\partial{\rm vec}\left(\tilde{\Xi}\right)},\quad{\rm and}\quad\nabla_{\nu_{k}}=\tfrac{\partial\ell}{\partial\nu_{k}},
\]
and, for later use, we define
\[
W_{k}=\frac{\nu_{k}+n_{k}}{\nu_{k}+X_{k}^{\prime}X_{k}},\quad X_{k}=e_{k}^{\prime}X,
\]
where $e_{k}$ is a $n\times n_{k}$ matrix from identity matrix $I_{n}=\left(e_{1},e_{1},\ldots,e_{K}\right)$.
We also introduce the notation $J_{k}\equiv e_{k}e_{k}^{\prime}$
and note that $I_{n}=\sum_{k=1}^{K}J_{k}$. 

We have the following formulas for the score vectors, hessian matrix,
and corresponding information matrix.
\begin{thm}
\label{thm:ScoreHess}Suppose that $Y=\mu+\tilde{\Xi}X\in\mathbb{R}^{n}$,
where $\det(\tilde{\Xi})\neq0$ and $X\sim\mathrm{CT}_{\boldsymbol{n},\boldsymbol{\nu}}(0,I_{n})$.
The score vectors for the parameter vectors $\mu$, $\bm{\nu}$ and
${\rm vec}(\tilde{\Xi})$, are given by
\begin{align*}
\nabla_{\mu} & =\sum_{k=1}^{K}W_{k}A^{\prime}e_{k}X_{k},\\
\nabla_{\tilde{\Xi}} & =\sum_{k=1}^{K}W_{k}{\rm vec}\left(A^{\prime}e_{k}X_{k}X^{\prime}\right)-{\rm vec}\left(A^{\prime}\right),\\
\nabla_{\nu_{k}} & =\tfrac{1}{2}\left[\psi\left(\tfrac{v_{k}+n_{k}}{2}\right)-\psi\left(\tfrac{v_{k}}{2}\right)+1-W_{k}-\log\left(1+\tfrac{X_{k}^{\prime}X_{k}}{\nu_{k}}\right)\right],
\end{align*}
where $\psi(\cdot)$ is the digamma function and $A\equiv\tilde{\Xi}^{-1}$.
The Hessian matrix has the following components
\begin{align*}
\nabla_{\mu\mu^{\prime}} & =\sum_{k=1}^{K}\tfrac{2W_{k}^{2}}{\nu_{k}+n_{k}}A^{\prime}e_{k}X_{k}X_{k}^{\prime}e_{k}^{\prime}A-W_{k}A^{\prime}J_{k}A,\\
\nabla_{\mu\tilde{\Xi}^{\prime}} & =\sum_{k=1}^{K}W_{k}\left(AX_{k}^{\prime}e_{k}^{\prime}\otimes A^{\prime}\right)K_{n}+\tfrac{2W_{k}^{2}}{\nu_{k}+n_{k}}A^{\prime}e_{k}X_{k}{\rm vec}\left(A^{\prime}e_{k}X_{k}X^{\prime}\right)^{\prime}-W_{k}A^{\prime}\left(X^{\prime}\otimes J_{k}A\right),\\
\nabla_{\mu\nu_{k}} & =\sum_{k=1}^{K}\tfrac{1}{\nu_{k}+n_{k}}A^{\prime}e_{k}X_{k}\left(W_{k}-W_{k}^{2}\right),\\
\nabla_{\tilde{\Xi}\tilde{\Xi}^{\prime}} & =-\ensuremath{\left(A\otimes A^{\prime}\right)}K_{n}+\sum_{k=1}^{K}\tfrac{2W_{k}^{2}}{\nu_{k}+n_{k}}{\rm vec}\left(A^{\prime}e_{k}X_{k}X^{\prime}\right){\rm vec}\left(A^{\prime}e_{k}X_{k}X^{\prime}\right)^{\prime},\\
 & \quad+\sum_{k=1}^{K}W_{k}\left[\left(XX_{k}^{\prime}e_{k}^{\prime}A\otimes A^{\prime}\right)K_{n}-\left(XX^{\prime}\otimes A^{\prime}J_{k}A\right)-\left(A\otimes A^{\prime}e_{k}X_{k}X^{\prime}\right)K_{n}\right],\\
\nabla_{\tilde{\Xi}\nu_{k}} & =\tfrac{1}{\nu_{k}+n_{k}}\left(W_{k}-W_{k}^{2}\right){\rm vec}\left(A^{\prime}e_{k}X_{k}X^{\prime}\right),\\
\nabla_{\nu_{k}\nu_{k}} & =\tfrac{1}{4}\psi^{\prime}\left(\tfrac{v_{k}+n_{k}}{2}\right)-\tfrac{1}{4}\psi^{\prime}\left(\tfrac{v_{k}}{2}\right)+\tfrac{1}{2\nu_{k}}+\tfrac{1}{2}\tfrac{1}{\nu_{k}+n_{k}}\left(W_{k}^{2}-2W_{k}\right),\\
\nabla_{\nu_{k}\nu_{l}}^ {} & =0,\quad{\rm if}\quad k\neq l,
\end{align*}
where $\psi^{\prime}\left(\cdot\right)$ is the trigamma function.
\end{thm}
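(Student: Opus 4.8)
The plan is to differentiate the log-likelihood directly, using the closed-form density (\ref{eq:MultConv-t-n2n}). With $A\equiv\tilde{\Xi}^{-1}$, $X=A(Y-\mu)$ and $X_{k}=e_{k}^{\prime}X$, the log-likelihood is
\[
\ell=\log|\det A|+\sum_{k=1}^{K}\Big[\log c(\nu_{k},n_{k})-\tfrac{\nu_{k}+n_{k}}{2}\log\big(1+\tfrac{1}{\nu_{k}}X_{k}^{\prime}X_{k}\big)\Big],\qquad c(\nu,n)=(\nu\pi)^{-n/2}\tfrac{\Gamma(\tfrac{\nu+n}{2})}{\Gamma(\tfrac{\nu}{2})},
\]
and $W_{k}=(\nu_{k}+n_{k})/(\nu_{k}+X_{k}^{\prime}X_{k})$ is precisely the scalar that surfaces whenever the quadratic-form term is differentiated. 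Every formula will follow from four elementary ingredients: (i) $\partial X/\partial\mu^{\prime}=-A$ and, from $d(\tilde{\Xi}^{-1})=-A\,(d\tilde{\Xi})\,A$, the differential $\mathrm{vec}(dA)=-(A^{\prime}\otimes A)\,\mathrm{vec}(d\tilde{\Xi})$; (ii) $\mathrm{vec}(M^{\prime})=K_{n}\mathrm{vec}(M)$ together with the commutation identity $K_{n}(A^{\prime}\otimes A)=(A\otimes A^{\prime})K_{n}$; (iii) $\partial\log|\det\tilde{\Xi}|/\partial\mathrm{vec}(\tilde{\Xi})=\mathrm{vec}(A^{\prime})$; and (iv) the chain rule for $W_{k}$, which gives $\partial W_{k}/\partial(X_{k}^{\prime}X_{k})=-W_{k}^{2}/(\nu_{k}+n_{k})$ and $\partial W_{k}/\partial\nu_{k}=(W_{k}-W_{k}^{2})/(\nu_{k}+n_{k})$, both immediate from the definition of $W_{k}$.

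First I would compute the score. Differentiating $\ell$ in $X_{k}$ gives $\partial\ell/\partial X_{k}=-W_{k}X_{k}$, hence $\nabla_{\mu}=-A^{\prime}\sum_{k}e_{k}(\partial\ell/\partial X_{k})=\sum_{k}W_{k}A^{\prime}e_{k}X_{k}$; the $\tilde{\Xi}$-score combines $-\mathrm{vec}(A^{\prime})$ from $\log|\det A|$ (ingredient (iii)) with the $W_{k}$-weighted piece obtained by letting $\mathrm{vec}(dA)$ act inside $X$ and $X_{k}$ (ingredients (i)--(ii)); and the $\nu_{k}$-score picks up $\tfrac12\psi(\tfrac{\nu_{k}+n_{k}}{2})-\tfrac12\psi(\tfrac{\nu_{k}}{2})-\tfrac{n_{k}}{2\nu_{k}}$ from $\log c(\nu_{k},n_{k})$, then $-\tfrac12\log(1+X_{k}^{\prime}X_{k}/\nu_{k})$ and a further term from differentiating $-\tfrac{\nu_{k}+n_{k}}{2}\log(1+X_{k}^{\prime}X_{k}/\nu_{k})$; combining $-\tfrac{n_{k}}{2\nu_{k}}$ with that last term and using $W_{k}(\nu_{k}+X_{k}^{\prime}X_{k})=\nu_{k}+n_{k}$ collapses them to $\tfrac12(1-W_{k})$, which is the stated expression.

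Next I would differentiate each score block once more. For $\nabla_{\mu\mu^{\prime}}$, $\nabla_{\mu\nu_{k}}$, $\nabla_{\mu\tilde{\Xi}^{\prime}}$, $\nabla_{\tilde{\Xi}\nu_{k}}$ and $\nabla_{\nu_{k}\nu_{k}}$ the product rule splits every term into ``$W_{k}$ times the derivative of the accompanying vector/matrix factor'' plus ``the derivative of $W_{k}$ times that factor'', and ingredient (iv) --- together with $\partial(X_{k}^{\prime}X_{k})/\partial\mu=-2A^{\prime}e_{k}X_{k}$ and its $\tilde{\Xi}$- and $\nu_{k}$-analogues --- converts the second piece into the displayed coefficients $2W_{k}^{2}/(\nu_{k}+n_{k})$ and $(W_{k}-W_{k}^{2})/(\nu_{k}+n_{k})$. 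The identity $\nabla_{\nu_{k}\nu_{l}}=0$ for $k\neq l$ is immediate, since $\nabla_{\nu_{k}}$ depends on the data only through $X_{k}$, a function of $(\mu,\tilde{\Xi})$ but not of $\nu_{l}$; and $\nabla_{\nu_{k}\nu_{k}}$ needs one further small rearrangement, $X_{k}^{\prime}X_{k}/[\nu_{k}(\nu_{k}+X_{k}^{\prime}X_{k})]-1/\nu_{k}=-1/(\nu_{k}+X_{k}^{\prime}X_{k})$, to reach the stated $W_{k}^{2}-2W_{k}$ form.

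The main obstacle is the $\nabla_{\tilde{\Xi}\tilde{\Xi}^{\prime}}$ block. Here $\tilde{\Xi}$ enters the score $\nabla_{\tilde{\Xi}}=\sum_{k}W_{k}\mathrm{vec}(A^{\prime}e_{k}X_{k}X^{\prime})-\mathrm{vec}(A^{\prime})$ through both copies of $A^{\prime}$ inside $A^{\prime}e_{k}X_{k}X^{\prime}=A^{\prime}e_{k}X_{k}(Y-\mu)^{\prime}A^{\prime}$, through $X_{k}=e_{k}^{\prime}A(Y-\mu)$, and through $W_{k}$, so differentiation produces the five terms in the statement, and each application of ingredient (ii) must be carried through with the commutation matrix $K_{n}$ so that the $A$ versus $A^{\prime}$ slots and the order of the Kronecker factors match the displayed arrangement. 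I would organize this by differentiating $\mathrm{vec}(A^{\prime}e_{k}X_{k}X^{\prime})$ one factor at a time --- the variations in the two copies of $A^{\prime}$ giving the $(\,\cdot\otimes A^{\prime})K_{n}$ and $(A\otimes\,\cdot\,)K_{n}$ terms, the variation in $X_{k}$ giving the $(XX^{\prime}\otimes A^{\prime}J_{k}A)$ term (with $J_{k}=e_{k}e_{k}^{\prime}$), and the variation in $W_{k}$ giving the rank-one $\mathrm{vec}(\cdot)\mathrm{vec}(\cdot)^{\prime}$ term --- and then appending the second derivative of $-\log|\det\tilde{\Xi}|$, which contributes the leading $(A\otimes A^{\prime})K_{n}$ term. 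Verifying that this commutation-matrix bookkeeping reproduces the displayed expression exactly is the only genuinely delicate part; everything else is mechanical.
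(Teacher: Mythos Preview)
Your proposal is correct and follows essentially the same route as the paper: both start from the explicit log-likelihood, record the same building-block derivatives ($\partial X_{k}/\partial\mu^{\prime}$, $\partial X_{k}/\partial\mathrm{vec}(\tilde{\Xi})^{\prime}$, $\partial W_{k}/\partial\mu^{\prime}$, $\partial W_{k}/\partial\mathrm{vec}(\tilde{\Xi})^{\prime}$, $\partial W_{k}/\partial\nu_{k}$), and then obtain each Hessian block by product rule, with the $\nabla_{\tilde{\Xi}\tilde{\Xi}^{\prime}}$ block handled by differentiating $\mathrm{vec}(A^{\prime}e_{k}X_{k}X^{\prime})$ factor by factor and tracking the commutation matrix $K_{n}$. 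The only cosmetic difference is that the paper splits $A^{\prime}e_{k}X_{k}X^{\prime}$ as $(A^{\prime}e_{k})\cdot(X_{k}X^{\prime})$ and then further decomposes $\partial\mathrm{vec}(X_{k}X^{\prime})$, whereas you view it as two copies of $A^{\prime}$ flanking $X_{k}$; the resulting five terms are identical.
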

Note that in the expressions for the score and Hessian matrix, extreme
values of $X_{k}$ are dampened by $W_{k}$. For instance, each of
the terms $\nabla_{\mu}$, $\nabla_{\mu\mu^{\prime}}$, $\nabla_{\mu\nu_{k}}$,
and $\nabla_{\nu_{k}\nu_{k}}$ are bounded, because $W_{k}\leq\frac{\nu_{k}+n_{k}}{\nu_{k}}$,
$\left\Vert W_{k}X_{k}\right\Vert \leq\frac{1}{2}\sqrt{\frac{n_{k}}{\nu_{k}}}\left(\nu_{k}+n_{k}\right)$,
and $W_{k}X_{k}^{\prime}X_{k}\leq\nu_{k}+n_{k}$, and it follows that
all moments are finite for these terms. For the term, $\nabla_{\nu_{k}}$,
we note that $\log\left(1+X_{k}^{\prime}X_{k}/\nu_{k}\right)$ has
the same distribution as $\log(1/U)$, where $U\sim\mathrm{Beta}(\frac{\nu_{k}}{2},\frac{n_{k}}{2})$,
for which all moments are finite,\footnote{$\mathbb{E}([\log(1/U)]^{m})=\psi^{(m-1)}(\nu_{k})-\psi^{(m-1)}(\nu_{k}+n_{k})$,
where $\psi^{(n)}(x)$ is the $n$-th derivatives of $\log\Gamma(x)$,
which are known as polygamma functions.} see Lemma \ref{lem:qHomogeneous} for expressions of the first two
moments. It can also be show that $\nabla_{\tilde{\Xi}}$ has finite
moment when $K=1$, however this need not be the case in general.
When $K\geq2$, there are interaction terms, such as $W_{k}X_{k}^{\prime}X_{l}$,
which lacks $W_{l}$ to dampen the effect of large realizations of
$X_{l}$. For the case, $K\geq2$, we show $\min_{k}\nu_{k}>1$ is
needed to guarantee the existences of $\mathbb{E}\nabla_{\tilde{\Xi}}$,
$\mathbb{E}\nabla_{\mu\tilde{\Xi}^{\prime}}$, and $\mathbb{E}\nabla_{\tilde{\Xi}\nu_{k}}$,
and $\min_{k}\nu_{k}>2$ is needed for $\mathbb{E}\nabla_{\tilde{\Xi}\tilde{\Xi}^{\prime}}$
to be well-defined. The latter is intuitive since it involves the
terms, $\frac{\nu_{k}}{\nu_{k}-2}$, $k=1,\ldots,K$.
\begin{thm}
\label{thm:ExpecScoreHess}Given the assumptions of Theorem \ref{thm:ScoreHess},
we have $\mathbb{E}[\nabla_{\mu}]=0$ and $\mathbb{E}[\nabla_{\nu_{k}}]=0$,
for $k=1,2,\ldots,K$. If, in addition, $\min_{k}\nu_{k}>1$, we have
$\mathbb{E}[\nabla_{\tilde{\Xi}}]=0$. 

The Fisher information, 
\[
\mathcal{I}=\mathbb{E}\left[\nabla_{\theta}\nabla_{\theta}^{\prime}\right]=\left[\begin{array}{ccc}
\mathcal{I}_{\mu} & \mathcal{I}_{\tilde{\Xi}\mu} & \mathcal{I}_{\nu\mu}\\
\mathcal{I}_{\mu\tilde{\Xi}} & \mathcal{I}_{\tilde{\Xi}} & \mathcal{I}_{\tilde{\Xi}\nu}\\
\mathcal{I}_{\mu\nu} & \mathcal{I}_{\nu\tilde{\Xi}} & \mathcal{I}_{\nu}
\end{array}\right],
\]
is such that 
\begin{align*}
\mathcal{I}_{\mu}=\mathbb{E}(\nabla_{\mu}\nabla_{\mu}^{\prime}) & =\sum_{k=1}^{K}\phi_{k}A^{\prime}J_{k}A,\qquad\phi_{k}\equiv\frac{\nu_{k}+n_{k}}{\nu_{k}+n_{k}+2},
\end{align*}
is well-defined for all $\boldsymbol{\nu}>0$. Moreover, if $\min_{k}\nu_{k}>1$,
then $\mathcal{I}_{\mu\tilde{\Xi}}=\mathcal{I}_{\tilde{\Xi}\mu}^{\prime}=0$,
$\mathcal{I}_{\mu\nu}=\mathcal{I}_{\mu\nu}^{\prime}=0$ (are well-defined
and equal zero). If $\min_{k}\nu_{k}>2$, then $\mathcal{I}_{\tilde{\Xi}}$
is well-defined and given by
\begin{align*}
\mathcal{I}_{\tilde{\Xi}} & =\mathbb{E}(\nabla_{\tilde{\Xi}}\nabla_{\tilde{\Xi}}^{\prime})=\left(A\otimes A^{\prime}\right)K_{n}+\Upsilon_{K},
\end{align*}
where $\Upsilon_{K}=\sum_{k=1}^{K}\Psi_{k}$ with
\begin{align*}
\Psi_{k} & =\phi_{k}J_{k}^{\bullet}\otimes\left(A^{\prime}J_{k}A\right)+\left(\phi_{k}-1\right)\left[\left(J_{k}A\otimes A^{\prime}J_{k}\right)K_{n}+{\rm vec}(A^{\prime}J_{k}){\rm vec}(A^{\prime}J_{k})^{\prime}\right],
\end{align*}
and $J_{k}^{\bullet}=\sum_{l\neq k}\frac{\nu_{l}}{\nu_{l}-2}J_{l}+J_{k}$
. Next, $\mathcal{I}_{\nu}$ is a diagonal matrix with diagonal elements
\[
\mathcal{I}_{\nu_{k}}=\mathbb{E}(\nabla_{\nu_{k}}^{2})=\tfrac{1}{4}\left(\psi^{\prime}\left(\tfrac{\nu_{k}}{2}\right)-\psi^{\prime}\left(\tfrac{\nu_{k}+n_{k}}{2}\right)\right)-\tfrac{n_{k}\left(\nu_{k}+n_{k}+4\right)}{2\nu_{k}\left(\nu_{k}+n_{k}+2\right)\left(\nu_{k}+n_{k}\right)}.
\]
Finally, $\mathcal{I}_{\tilde{\Xi}\nu}=\mathcal{I}_{\nu\tilde{\Xi}}^{\prime}$
is an $n^{2}\times K$ matrix with $k$-th column given by
\[
\mathcal{I}_{\tilde{\Xi}\nu_{k}}=\mathbb{E}\left[\nabla_{\nu_{k}}\nabla_{\tilde{\Xi}}\right]=\tfrac{\phi_{k}-1}{v_{k}+n_{k}}{\rm vec}\left(A^{\prime}J_{k}\right).
\]
When $\mathcal{I}_{\theta_{1}\theta_{2}}$ is well defined, $\theta_{1},\theta_{2}\in\{\mu,\tilde{\Xi},\boldsymbol{\nu}$\}
then so is the corresponding submatrix of $\mathcal{J}=\mathbb{E}\left(-\nabla_{\theta\theta^{\prime}}\right)$
and the information matrix identity, $\mathcal{I}_{\theta_{1}\theta_{2}}=\mathcal{J}_{\theta_{1}\theta_{2}}$
holds.
\end{thm}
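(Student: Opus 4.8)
The plan is to compute the required expectations by exploiting the structure already visible in the score and Hessian formulas of Theorem \ref{thm:ScoreHess}, together with a handful of moment identities for the multivariate $t$ building blocks $X_k \sim t_{n_k,\nu_k}(0,I_k)$. The key elementary facts I would establish first (or invoke from a lemma) are: (i) $\mathbb{E}[W_k X_k] = 0$ by symmetry of $X_k$; (ii) $\mathbb{E}[W_k X_k X_k'] = \phi_k I_{n_k}$ with $\phi_k = \tfrac{\nu_k+n_k}{\nu_k+n_k+2}$, which follows from the representation $X_k = \sqrt{\nu_k/\xi_k}\,Z_k$ with $Z_k\sim N(0,I_{n_k})$, $\xi_k\sim\mathrm{Gamma}(\nu_k/2,2)$, since $W_k X_k X_k'$ then reduces to a Beta-weighted Gaussian outer product; (iii) $\mathbb{E}[W_k] = \tfrac{n_k}{\nu_k}\phi_k + \dots$ type identities and $\mathbb{E}[W_k^2]$, obtained the same way; (iv) the cross-block independence $X_k \perp X_l$ for $k\neq l$, so that any term mixing indices $k\neq l$ factorizes, and $\mathbb{E}[X_l X_l'] = \tfrac{\nu_l}{\nu_l-2} I_{n_l}$ when $\nu_l>2$. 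I would also record $\mathbb{E}[\log(1+X_k'X_k/\nu_k)] = \psi(\tfrac{\nu_k+n_k}{2}) - \psi(\tfrac{\nu_k}{2})$ and $\mathbb{E}[W_k] = \phi_k\cdot\tfrac{\nu_k+n_k}{\nu_k+n_k}$-type closed forms via the Beta$(\tfrac{\nu_k}{2},\tfrac{n_k}{2})$ law of $1/(1+X_k'X_k/\nu_k)$; these immediately give $\mathbb{E}[\nabla_{\nu_k}]=0$ and, combined with (i), $\mathbb{E}[\nabla_\mu]=0$. For $\mathbb{E}[\nabla_{\tilde\Xi}]=0$ one needs $\mathbb{E}[W_k\,\mathrm{vec}(A'e_kX_kX')] = \mathrm{vec}(A')$ blockwise: the diagonal block ($X_k X_k'$) gives $\phi_k$ times something, the off-diagonal blocks ($X_k X_l'$, $l\neq k$) vanish by independence and $\mathbb{E}[X_l]=0$ — but this requires $\mathbb{E}\|W_k X_k X_l'\| < \infty$, which is where $\min_k\nu_k>1$ enters (the factor $\|X_l\|$ has a finite first moment only if $\nu_l>1$, and $W_k\|X_k\|$ is bounded).

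Next I would assemble the Fisher information blocks. For $\mathcal{I}_\mu = \mathbb{E}[\nabla_\mu\nabla_\mu']$, expand the double sum $\sum_{k,l} W_k W_l A' e_k X_k X_l' e_l' A$; by independence the $k\neq l$ terms vanish (each contributes $\mathbb{E}[W_kX_k]\mathbb{E}[W_lX_l']=0$), and the diagonal terms give $\sum_k A' e_k \mathbb{E}[W_k^2 X_kX_k'] e_k' A$. Here I need $\mathbb{E}[W_k^2 X_k X_k'] = \phi_k I_{n_k}$ — this is the crucial moment computation, again done via the Gaussian–Gamma mixture so that it becomes a Beta-moment times $\mathbb{E}[Z_kZ_k']$; the constant works out to exactly $\phi_k$, giving $\mathcal{I}_\mu = \sum_k \phi_k A'J_kA$. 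This is finite for all $\boldsymbol\nu>0$ because $W_k^2X_kX_k'$ is bounded by $(\nu_k+n_k)^2/\nu_k$ in norm. For $\mathcal{I}_{\mu\tilde\Xi}$ and $\mathcal{I}_{\mu\nu}$, each term again splits into a within-block piece (which I expect to vanish by an \emph{odd-moment} argument: the relevant integrand is an odd function of $X_k$, e.g. $W_k^2 X_k \otimes (X_k X_k')$ or similar, whose expectation is zero by the symmetry $X_k\mapsto -X_k$) and cross-block pieces that vanish by independence and $\mathbb{E}[X_l]=0$ — requiring $\min_k\nu_k>1$ for integrability of the cross terms.

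The main obstacle, and the bulk of the work, will be $\mathcal{I}_{\tilde\Xi}=\mathbb{E}[\nabla_{\tilde\Xi}\nabla_{\tilde\Xi}']$. Writing $\nabla_{\tilde\Xi} = \sum_k W_k\,\mathrm{vec}(A'e_kX_kX') - \mathrm{vec}(A')$, the product expands into (a) a constant term $\mathrm{vec}(A')\mathrm{vec}(A')'$, (b) cross terms between $\mathrm{vec}(A')$ and the sum, and (c) the double sum $\sum_{k,l}W_kW_l\,\mathrm{vec}(A'e_kX_kX')\mathrm{vec}(A'e_lX_lX')'$. The heart of the calculation is evaluating $\mathbb{E}[W_k^2\,\mathrm{vec}(A'e_kX_kX')\mathrm{vec}(A'e_kX_kX')']$, which, after pulling out the constant matrices via Kronecker/commutation-matrix identities ($\mathrm{vec}(BCD) = (D'\otimes B)\mathrm{vec}(C)$ and the $K_n$ manipulations), reduces to fourth-moment expressions of the form $\mathbb{E}[W_k^2 X_k X_k' \otimes X X']$. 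Because $X = (X_1',\dots,X_K')'$, the outer $XX'$ splits into the block $X_kX_k'$ (fourth moment of $X_k$, needing a Gaussian-fourth-moment $+$ Beta-moment computation, producing the $\phi_k J_k^\bullet$-type terms) and blocks $X_kX_l'$ or $X_lX_l'$ for $l\neq k$ (which factor as $\mathbb{E}[W_k^2X_kX_k']\otimes \tfrac{\nu_l}{\nu_l-2}I_{n_l}$, explaining the $\tfrac{\nu_l}{\nu_l-2}$ weights in $J_k^\bullet$ and the need for $\min_k\nu_k>2$). The cross-index terms $k\neq l$ in the double sum (c) again factor by independence, contributing the $(J_kA\otimes A'J_k)K_n$ and $\mathrm{vec}(A'J_k)\mathrm{vec}(A'J_k)'$ pieces after combining with the cross terms (b). Careful bookkeeping of the commutation matrix $K_n$ and of which Kronecker factors absorb $A$ versus $A'$ is the delicate part; I would organize it by first proving the scalar/Gaussian moment lemmas, then handling the $K=1$ case as a template, then adding the cross-block contributions. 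The remaining blocks $\mathcal{I}_\nu$ and $\mathcal{I}_{\tilde\Xi\nu}$ follow similarly: $\mathcal{I}_{\nu_k}$ from $\mathrm{Var}(\nabla_{\nu_k})$ using the polygamma identities for $\mathbb{E}[W_k\log(\cdot)]$, $\mathbb{E}[W_k^2]$, and $\mathrm{Var}(\log(1+X_k'X_k/\nu_k))$ (all from the Beta law), and $\mathcal{I}_{\tilde\Xi\nu_k} = \mathbb{E}[\nabla_{\nu_k}\nabla_{\tilde\Xi}]$ using $\mathbb{E}[(W_k - W_k^2)\,\mathrm{vec}(A'e_kX_kX')]$ together with cross-block vanishing. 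Finally, the information matrix identity $\mathcal{I}_{\theta_1\theta_2}=\mathcal{J}_{\theta_1\theta_2}$ on each well-defined block follows from the standard argument: $\mathcal{J} = -\mathbb{E}[\nabla_{\theta\theta'}]$ computed term-by-term from the Hessian in Theorem \ref{thm:ScoreHess} using the \emph{same} moment lemmas must match $\mathcal{I}$; alternatively, one differentiates the identity $\int f_Y = 1$ twice under the integral sign, which is justified precisely on the parameter regions where the relevant second moments are finite (the $\min_k\nu_k>1$ and $\min_k\nu_k>2$ thresholds providing the needed domination).
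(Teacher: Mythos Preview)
Your overall strategy is essentially the paper's: it also reduces everything to a small catalogue of moment identities for $W_k^{p/2}$ against homogeneous functions of $X_k$ (packaged there as a lemma with constants $\zeta_{p,q}$, derived exactly via the Gaussian--Gamma mixture you describe), uses the Beta$(\tfrac{\nu_k}{2},\tfrac{n_k}{2})$ law of $(1+X_k'X_k/\nu_k)^{-1}$ for the log-moments, invokes the odd-function symmetry to kill the $\mu$--cross blocks, and organizes the $\mathcal{I}_{\tilde\Xi}$ computation by first stripping off $M=(I\otimes A')$ and then splitting the double sum into $k=l$ and $k\neq l$ pieces.

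One concrete slip to fix: your item (ii), $\mathbb{E}[W_k X_k X_k']=\phi_k I_{n_k}$, is wrong---the correct value is $I_{n_k}$ (in the paper's notation $\zeta_{2,2}=1$), while $\phi_k$ is the constant for $\mathbb{E}[W_k^2 X_k X_k']$ (your later claim, which is right, since $\zeta_{4,2}=\phi_k$). This matters because with your stated (ii) the argument for $\mathbb{E}[\nabla_{\tilde\Xi}]=0$ would fail: you would get $\sum_k \phi_k\,\mathrm{vec}(A'J_k)-\mathrm{vec}(A')\neq 0$. With the corrected value, $\sum_k \mathrm{vec}(A'e_k\,\mathbb{E}[W_kX_kX']\,)=\sum_k\mathrm{vec}(A'e_ke_k')=\mathrm{vec}(A')$ and the cancellation goes through. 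Everything else in your plan lines up with the paper's proof.
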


\subsubsection{The Jacobian for the identified $\Xi$}

Recall that $\Xi=\tilde{\Xi}P$, where $P=\mathrm{diag}(P_{11},\ldots,P_{KK})$
with $P_{kk}=\tilde{\Xi}_{kk}^{\prime}(\tilde{\Xi}_{kk}\tilde{\Xi}_{kk}^{\prime})^{-\frac{1}{2}}\in\mathbb{R}^{n_{k}\times n_{k}}$.
From the results we established for the unstructured $\tilde{\Xi}$,
we obtain the results for the identified parametrization, $\Xi$,
by means of the Jacobian matrix of $\partial{\rm vec}(\Xi)$ with
respect to $\partial{\rm vec}(\tilde{\Xi})$. Below, $K_{n,m}\in\mathbb{R}^{mn\times mn}$
denotes the commutation matrix.
\begin{lem}
\label{lem:JacobianMatrix}The Jacobian is given by
\[
M_{\tilde{\Xi}}\equiv\frac{\partial{\rm vec}(\Xi)}{\partial{\rm vec}(\tilde{\Xi})^{\prime}}={\rm diag}\left(\Gamma_{11},\Gamma_{22},\ldots,\Gamma_{KK}\right),
\]
where $\Gamma_{kk}=K_{n_{k},n}\Pi^{(k)}K_{n,n_{k}}$ and $\Pi^{(k)}\in\mathbb{R}^{nn_{k}\times nn_{k}}$
is made of the submatrices
\[
\Pi_{ij}^{(k)}=\begin{cases}
0_{n_{i}n_{k}\times n_{j}n_{k}} & i\neq j,j\neq k,\\
I_{n_{i}}\otimes P_{kk}^{\prime} & i=j\neq k,\\
\left(\tilde{\Xi}_{ik}\otimes I_{n_{k}}\right)\frac{\partial{\rm vec}\left(P_{kk}^{\prime}\right)}{\partial{\rm vec}\left(\tilde{\Xi}_{kk}^{\prime}\right)^{\prime}} & i\neq j,j=k,\\
\left(I_{n_{k}}\otimes P_{kk}^{\prime}\right)+\left(\tilde{\Xi}_{kk}\otimes I_{n_{k}}\right)\frac{\partial{\rm vec}\left(P_{kk}^{\prime}\right)}{\partial{\rm vec}\left(\tilde{\Xi}_{kk}^{\prime}\right)^{\prime}} & i=j=k,
\end{cases}
\]
for $i,j=1,\ldots,K$, with 
\[
\frac{\partial{\rm vec}\left(P_{kk}^{\prime}\right)}{\partial{\rm vec}(\tilde{\Xi}_{kk}^{\prime})^{\prime}}=K_{n_{k}}\left(\Xi_{kk}^{-1}\otimes I_{n_{k}}\right)\left[I_{n_{k}^{2}}-\left(I_{n_{k}}\otimes\tilde{\Xi}_{kk}+\Xi_{kk}\otimes P_{kk}^{\prime}\right)^{-1}\left(I_{n_{k}^{2}}+K_{n_{k}}\right)\left(I_{n_{k}}\otimes\tilde{\Xi}_{kk}\right)\right].
\]
\end{lem}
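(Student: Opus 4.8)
The plan is to first reduce the computation to a single diagonal block and then dispatch the one genuinely nontrivial ingredient, the derivative of the polar factor $P_{kk}=\tilde\Xi_{kk}'(\tilde\Xi_{kk}\tilde\Xi_{kk}')^{-1/2}$. Partition $\tilde\Xi=(\tilde\Xi_1,\ldots,\tilde\Xi_K)$ and $\Xi=(\Xi_1,\ldots,\Xi_K)$ by column blocks, so that $\Xi_k=\tilde\Xi_k P_{kk}$. Since $P_{kk}$ depends only on the diagonal subblock $\tilde\Xi_{kk}$ of $\tilde\Xi_k$, the block $\Xi_k$ depends on $\tilde\Xi$ only through $\tilde\Xi_k$; because ${\rm vec}(\Xi)$ stacks ${\rm vec}(\Xi_1),\ldots,{\rm vec}(\Xi_K)$ and likewise for $\tilde\Xi$, the Jacobian is block diagonal, $M_{\tilde\Xi}={\rm diag}(\Gamma_{11},\ldots,\Gamma_{KK})$ with $\Gamma_{kk}=\partial{\rm vec}(\Xi_k)/\partial{\rm vec}(\tilde\Xi_k)'$. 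Using ${\rm vec}(\Xi_k)=K_{n_k,n}{\rm vec}(\Xi_k')$ and ${\rm vec}(\tilde\Xi_k')=K_{n,n_k}{\rm vec}(\tilde\Xi_k)$, the chain rule gives $\Gamma_{kk}=K_{n_k,n}\,\Pi^{(k)}\,K_{n,n_k}$ with $\Pi^{(k)}=\partial{\rm vec}(\Xi_k')/\partial{\rm vec}(\tilde\Xi_k')'$. The purpose of transposing is that $\Xi_k'=P_{kk}'\tilde\Xi_k'$ and the column blocks of $\tilde\Xi_k'$ relative to the partition $n=\sum_i n_i$ are precisely $\tilde\Xi_{ik}'$, $i=1,\ldots,K$, matching the block layout assigned to $\Pi^{(k)}$.

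Next I would compute the $(i,j)$ block of $\Pi^{(k)}$, namely $\partial{\rm vec}(P_{kk}'\tilde\Xi_{ik}')/\partial{\rm vec}(\tilde\Xi_{jk}')'$, by the product rule together with ${\rm vec}(AXC)=(C'\otimes A){\rm vec}(X)$. The explicit factor $\tilde\Xi_{ik}'$ contributes only when $j=i$ (and $i\neq k$), giving $I_{n_i}\otimes P_{kk}'$; the factor $P_{kk}'$ contributes only when $j=k$, giving $(\tilde\Xi_{ik}\otimes I_{n_k})\,\partial{\rm vec}(P_{kk}')/\partial{\rm vec}(\tilde\Xi_{kk}')'$; when $i=j=k$ both mechanisms act and the two terms add; and when $i\neq j$ with $j\neq k$ neither acts, so the block vanishes. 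This yields the four cases in the statement and reduces the lemma to a single formula for $\partial{\rm vec}(P_{kk}')/\partial{\rm vec}(\tilde\Xi_{kk}')'$.

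For that last derivative, write $B=\tilde\Xi_{kk}$ and $G=\Xi_{kk}=(BB')^{1/2}$, which is invertible under the assumptions of Theorem \ref{Thm:XiExists}, so $P_{kk}'=G^{-1}B$ is well-defined, $P_{kk}$ is orthogonal, and $GP_{kk}'=B$. Differentiating $P_{kk}'=G^{-1}B$ gives $dP_{kk}'=G^{-1}\,dB-G^{-1}(dG)P_{kk}'$, while differentiating $G^{2}=BB'$ gives the Sylvester equation $(dG)G+G(dG)=(dB)B'+B(dB')$, which vectorizes to ${\rm vec}(dG)=(I\otimes G+G\otimes I)^{-1}[(B\otimes I)+(I\otimes B)K_{n_k}]{\rm vec}(dB)$. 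Substituting this in, and passing from ${\rm vec}(dB)$ to ${\rm vec}(dB')=K_{n_k}{\rm vec}(dB)$, produces an expression that I would then compress using $K_{n_k}(C\otimes D)=(D\otimes C)K_{n_k}$, the commuting of $(I\otimes G+G\otimes I)^{-1}$ with $K_{n_k}$, the identity $(G\otimes I)(G^{-1}\otimes P_{kk})=I\otimes P_{kk}$, and---this is the key step---$GP_{kk}'=\tilde\Xi_{kk}$, which turns $(I\otimes P_{kk})(I\otimes G+G\otimes I)^{-1}$ into $(I\otimes\tilde\Xi_{kk}+\Xi_{kk}\otimes P_{kk}')^{-1}$; collecting terms then gives the stated closed form. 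I expect this final simplification to be the main obstacle: it requires careful bookkeeping of the commutation matrices and the algebra that folds the Sylvester-equation inverse into the compact factor $(I_{n_k}\otimes\tilde\Xi_{kk}+\Xi_{kk}\otimes P_{kk}')^{-1}$, whereas the rest of the argument is routine manipulation of vec and Kronecker identities.
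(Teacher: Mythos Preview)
Your proposal is correct and follows essentially the same architecture as the paper: block-diagonality from $\Xi_k=\tilde\Xi_k P_{kk}$ with $P_{kk}$ depending only on $\tilde\Xi_{kk}$, the transposition trick $\Gamma_{kk}=K_{n_k,n}\Pi^{(k)}K_{n,n_k}$, and the product-rule casework for $\Pi_{ij}^{(k)}$ are identical to the paper's argument. The only genuine difference is in the computation of $\partial{\rm vec}(P_{kk}')/\partial{\rm vec}(\tilde\Xi_{kk}')'$: the paper differentiates $P_{kk}=\tilde\Xi_{kk}'(\tilde\Xi_{kk}\tilde\Xi_{kk}')^{-1/2}$ directly, chaining through $(\tilde\Xi_{kk}\tilde\Xi_{kk}')^{-1/2}\to(\tilde\Xi_{kk}\tilde\Xi_{kk}')^{-1}\to\tilde\Xi_{kk}\tilde\Xi_{kk}'\to\tilde\Xi_{kk}$ and invoking the Kronecker-sum inverse $((\cdot)^{-1/2}\oplus(\cdot)^{-1/2})^{-1}$ for the square-root step, then simplifies and finally conjugates by $K_{n_k}$ to pass to transposes; you instead write $P_{kk}'=G^{-1}B$ and obtain $dG$ from the Sylvester equation $G\,dG+dG\,G=dB\,B'+B\,dB'$. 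These are two standard, equivalent routes to the same object, and the algebraic simplifications you anticipate (commutation identities, $GP_{kk}'=\tilde\Xi_{kk}$, collapsing $(I\otimes G+G\otimes I)^{-1}$ into $(I\otimes\tilde\Xi_{kk}+\Xi_{kk}\otimes P_{kk}')^{-1}$) are exactly what the paper does in its corresponding reduction. Your version is marginally cleaner because it avoids the extra chain-rule link through the matrix inverse, but neither approach offers a real shortcut over the other.
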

Lemma \ref{lem:JacobianMatrix} is important for deriving the score
and hessian matrix with respect to the identified $\Xi$. More specific,
let $M_{\Xi}^{+}$ be the Moore-Penrose inverse of $M_{\tilde{\Xi}}$
matrix when evaluated at $\tilde{\Xi}=\Xi$, then we have
\begin{equation}
\nabla_{\Xi}^{\prime}=\nabla_{\tilde{\Xi}}^{\prime}M_{\Xi}^{+},\quad\nabla_{\Xi\Xi^{\prime}}=\ensuremath{M_{\Xi}^{+\prime}\nabla_{\tilde{\Xi}\tilde{\Xi}^{\prime}}M_{\Xi}^{+}}+\left(\nabla_{\tilde{\Xi}}^{\prime}\otimes I_{n}\right)\frac{\partial{\rm vec}(M_{\Xi}^{+\prime})}{\partial{\rm vec}(\Xi)^{\prime}},\label{eq:HessXi}
\end{equation}
and $\nabla_{\Xi\mu^{\prime}}=M_{\Xi}^{+\prime}\nabla_{\tilde{\Xi}\nu\mu^{\prime}}$,
$\nabla_{\Xi\nu^{\prime}}=M_{\Xi}^{+\prime}\nabla_{\tilde{\Xi}\nu^{\prime}}$.
This also facilitate the computation of the two information matrices,
given by
\[
\mathcal{I}_{\Xi}=M_{\Xi}^{+\prime}\mathcal{I}_{\tilde{\Xi}}M_{\Xi}^{+},\quad\mathcal{J}_{\Xi}=M_{\Xi}^{+\prime}\mathcal{J}_{\tilde{\Xi}}M_{\Xi}^{+}.
\]
Note that the expression for $\mathcal{J}_{\Xi}$ does not requires
$\frac{\partial{\rm vec}(M_{\Xi}^{+\prime})}{\partial{\rm vec}(\Xi)^{\prime}}$
to be evaluated because $\mathbb{E}[\nabla_{\tilde{\Xi}}]=0$. Also
note that $\mathcal{I}_{\tilde{\Xi}}$ has reduced rank whereas $\mathcal{I}_{\Xi}$
is nonsingular, because $\Xi$ is identified. 

\subsection{Asymptotic Properties of Maximum Likelihood Estimator}

Given a random sample of convolution-$t$ distributed random variables,
$Y_{1},\ldots,Y_{T}$, the asymptotic properties of maximum likelihood
estimators of $\theta=\{\mu,\Xi,\bm{\nu}\}$ are derived below. 
\begin{assumption}
\label{assu:Compact}$\Theta$ is compact and for all $\theta\in\Theta$
it holds that $\min_{k}\nu_{k}>0$ and $\max_{k}\nu_{k}<\infty$.
\end{assumption}
\begin{thm}
\label{thm:MLE-consistent-asN}Let $Y_{1},\ldots,Y_{T}\sim iid\ CT_{\boldsymbol{\nu}_{0},\boldsymbol{n}_{0}}(\mu,\Xi_{0})$.
Suppose that Assumptions \ref{assu:XiInvertible}- \ref{assu:Compact}
hold and that $\Xi$ is identified by Theorem \ref{Thm:XiExists}.
$(i)$The maximum likelihood estimator, $\hat{\theta}_{T}=\max_{\theta\in\Theta}\ell_{T}(\theta)$,
where $\ell_{T}$ is the log-likelihood function, is consistent. $(ii)$
Suppose further that $\min_{k}\nu_{k}>2$ and that $\theta=\{\mu_{0},\bm{\nu}_{0},\Xi_{0}\}$
is interior to $\Theta$, then 
\[
\sqrt{T}(\hat{\theta}_{T}-\theta)\overset{d}{\rightarrow}N(0,\mathcal{I}_{\theta}^{-1}).
\]
\end{thm}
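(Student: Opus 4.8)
The plan is to verify the classical regularity conditions for consistency and asymptotic normality of an extremum estimator, using the identification results of Theorems~\ref{thm:Identify} and \ref{Thm:XiExists} and the explicit score, Hessian, and Fisher-information formulas from Theorems~\ref{thm:ScoreHess} and \ref{thm:ExpecScoreHess}. For part $(i)$, I would first show that the population criterion $Q(\theta)=\mathbb{E}_{\theta_{0}}[\log f_{Y}(Y;\theta)]$ is uniquely maximized at $\theta_{0}$: by Jensen's inequality $Q(\theta_{0})-Q(\theta)$ equals the Kullback--Leibler divergence of $f_{Y}(\cdot;\theta)$ from $f_{Y}(\cdot;\theta_{0})$, which is nonnegative and vanishes only if the two densities agree almost everywhere; by Theorem~\ref{thm:Identify} this forces the triplets $(n_{k},\nu_{k},\Omega_{k})$ to coincide up to relabeling, and by Theorem~\ref{Thm:XiExists} (invertibility of the diagonal blocks, together with the maximal-trace tie-break) it pins down $\Xi=\Xi_{0}$ and $\mu=\mu_{0}$. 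I would then establish the uniform law of large numbers $\sup_{\theta\in\Theta}|T^{-1}\ell_{T}(\theta)-Q(\theta)|\overset{p}{\rightarrow}0$. Continuity of $\theta\mapsto\log f_{Y}(y;\theta)$ is immediate from (\ref{eq:MultConv-t-n2n}), while Assumption~\ref{assu:Compact} makes $\Theta$ compact, keeps $|\det\Xi|$ bounded away from zero, and confines each $\nu_{k}$ to a compact subinterval of $(0,\infty)$; a dominating function follows from the uniform bound
\[
|\log f_{Y}(y;\theta)|\le C_{1}+C_{2}\log\bigl(1+\|y\|^{2}\bigr),
\]
which is $\mathbb{E}_{\theta_{0}}$-integrable for every $\boldsymbol{\nu}_{0}>0$ since the tail of $\|Y\|$ is polynomial. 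The standard extremum-estimator argument then gives $\hat{\theta}_{T}\overset{p}{\rightarrow}\theta_{0}$.

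For part $(ii)$, with $\theta_{0}$ interior to $\Theta$ and $\min_{k}\nu_{k}>2$, a componentwise mean-value expansion of the score yields $0=T^{-1/2}\nabla_{\theta}\ell_{T}(\theta_{0})+\{T^{-1}\nabla_{\theta\theta'}\ell_{T}(\bar{\theta}_{T})\}\sqrt{T}(\hat{\theta}_{T}-\theta_{0})$ for some $\bar{\theta}_{T}$ on the segment joining $\hat{\theta}_{T}$ and $\theta_{0}$. By Theorem~\ref{thm:ExpecScoreHess} the score has mean zero and, because $\min_{k}\nu_{k}>2$, finite variance equal to the Fisher information, so the Lindeberg--L\'{e}vy central limit theorem gives $T^{-1/2}\nabla_{\theta}\ell_{T}(\theta_{0})\overset{d}{\rightarrow}N(0,\mathcal{I}_{\theta})$; throughout, one works in the identified parametrization via the Jacobian $M_{\Xi}$ of Lemma~\ref{lem:JacobianMatrix}, with $\mathcal{I}_{\Xi}=M_{\Xi}^{+\prime}\mathcal{I}_{\tilde{\Xi}}M_{\Xi}^{+}$ nonsingular since $\Xi$ is identified. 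A second uniform law of large numbers gives $T^{-1}\nabla_{\theta\theta'}\ell_{T}(\theta)\overset{p}{\rightarrow}-\mathcal{J}_{\theta}$ uniformly on a neighborhood of $\theta_{0}$ (again needing $\min_{k}\nu_{k}>2$ to dominate the Hessian entries, in particular the interaction terms $W_{k}X_{k}'X_{l}$ and the $\nu_{k}/(\nu_{k}-2)$ factors), hence $T^{-1}\nabla_{\theta\theta'}\ell_{T}(\bar{\theta}_{T})\overset{p}{\rightarrow}-\mathcal{J}_{\theta}$ by the consistency from part $(i)$. Since the information matrix identity of Theorem~\ref{thm:ExpecScoreHess} gives $\mathcal{J}_{\theta}=\mathcal{I}_{\theta}$, which is invertible, solving for $\sqrt{T}(\hat{\theta}_{T}-\theta_{0})$ and applying Slutsky's theorem yields $\sqrt{T}(\hat{\theta}_{T}-\theta_{0})\overset{d}{\rightarrow}N(0,\mathcal{I}_{\theta}^{-1})$.

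The step I expect to be the main obstacle is the moment bookkeeping behind the two uniform laws of large numbers: dominating the Hessian entries uniformly over a neighborhood of $\theta_{0}$ by an integrable envelope, where the interaction terms $W_{k}X_{k}'X_{l}$ are not dampened by a factor $W_{l}$ (precisely where $\min_{k}\nu_{k}>2$ is used), and verifying that the reparametrization $\tilde{\Xi}\mapsto\Xi$ of Lemma~\ref{lem:JacobianMatrix} is continuously differentiable with a well-behaved (constant-rank) Jacobian throughout $\Theta$, so that the limiting distribution can legitimately be transported from the unrestricted $\tilde{\Xi}$ parametrization to the identified $\Xi$ parametrization.
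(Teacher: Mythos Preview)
Your proposal is correct and follows essentially the same route as the paper: the paper verifies the conditions of \citet[Theorem~2.5]{NeweyMcFadden:1994} for consistency (identification via Theorems~\ref{thm:Identify}--\ref{Thm:XiExists}, compactness, continuity, and a dominating function obtained by bounding $|\log f_Y(Y\mid\theta)|$ in terms of $\sum_l \log(1+X_l'X_l/\nu_{l,0})$, which is integrable by Lemma~\ref{lem:qHomogeneous}), and then the conditions of \citet[Theorem~2.4]{Amemiya(85)} for asymptotic normality, with the bulk of the work spent exactly where you anticipate---dominating each block of the Hessian via elementary inequalities such as $|S_k Z_{k,i}|\le (\nu_k+n_k)/(2\sqrt{\nu_k})$ and $|S_k Z_{p,i}Z_{q,j}|\le C_0^*+C_1^*\sum_{i,j}|X_iX_j|$, the latter requiring $\min_k\nu_k>2$. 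The transport from $\tilde\Xi$ to $\Xi$ is handled precisely as you outline, using $M_\Xi^+$ and the fact that $\mathbb{E}[\nabla_{\tilde\Xi}]=0$ to avoid the second-derivative term of the Jacobian in $\mathcal{J}_\Xi$.
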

When $K=1$, $\mathrm{CT}_{\boldsymbol{\nu},\boldsymbol{n}}(\mu,\Xi)$
simplifies to the multivariate $t$-distribution, for which many results
exist and \citet{ZhuGalbraith:2010} established results for a univariate
generalized $t$-distribution. Simulation-based evidence in the Supplemental
Material suggests that the convergence rates for $\mu$, $\boldsymbol{\nu}$,
and $\Xi_{kk}$ continues to be $\sqrt{T}$ in non-standard situations
where $\min_{k}\nu_{k}\leq2$, whereas $\Xi$ coefficient in the off-diagonal
blocks, $\Xi_{kl}$, $k\neq l$, have faster rates of convergence.
The asymptotic normality of $\hat{\boldsymbol{\nu}}$ appears to be
quite robust and also hold for small values of $\nu_{k}$, whereas
the other parameters have more spiky limit distributions when $\min_{k}\nu_{k}\leq2$,
that resemble Laplace distributions. This is particularly pronounced
for the coefficients in the off-diagonal blocks of $\Xi$. 

Note that the analytical expression of the Hessian matrix in Theorem
\ref{thm:ScoreHess} and (\ref{eq:HessXi}) facilitate the computation
of the sandwich-form of the asymptotic covariance matrix, $\frac{1}{T}\mathcal{\hat{J}}_{T}^{-1}\mathcal{\hat{I}}_{T}\mathcal{\hat{J}}_{T}^{-1}$,
where $\mathcal{\hat{J}}_{T}=\frac{1}{T}\sum_{t=1}^{T}\nabla_{\hat{\theta}\hat{\theta}^{\prime},t}$
and $\mathcal{\hat{I}}_{T}=\frac{1}{T}\sum_{t=1}^{T}\left(\nabla_{\hat{\theta},t}\nabla_{\hat{\theta},t}^{\prime}\right)$.
Note that $\hat{\theta}_{T}=\max_{\theta\in\Theta}\ell_{T}(\theta)$
implies $\frac{1}{T}\sum_{t=1}^{T}\nabla_{\tilde{\Xi},t}=0$ under
our assumptions, such that the term $\frac{\partial{\rm vec}(M_{\Xi}^{+\prime})}{\partial{\rm vec}(\Xi)^{\prime}}$
is not needed in the computation of $\mathcal{\hat{J}}_{T}$.

\subsection{Simulation Study\label{sec:Simulation-Study}}

We conduct a simulation study to verify the accuracy the asymptotic
results for the maximum likelihood estimators. We consider the following
trivariate system, given by
\[
\mu=\left[\begin{array}{c}
\cellcolor{black!10}0.1\\
\cellcolor{black!5}0.2\\
\cellcolor{black!5}0.3
\end{array}\right],\quad\Xi=\left[\begin{array}{ccc}
\cellcolor{black!10}0.6 & \cellcolor{black!5}0.3 & \cellcolor{black!5}0.1\\
\cellcolor{black!10}0.5 & \cellcolor{black!5}0.7 & \cellcolor{black!5}0.2\\
\cellcolor{black!10}0.4 & \cellcolor{black!5}0.2 & \cellcolor{black!5}0.8
\end{array}\right]\quad\nu=\left[\begin{array}{c}
\cellcolor{black!10}4\\
\cellcolor{black!5}8
\end{array}\right]
\]
where the first two elements belong to one group, i.e. $n_{1}=1$
and $n_{2}=2$. And the structure of $\Xi_{11}$ and $\Xi_{22}$ are
all symmetric and positive definite. Additional, because $n_{1}\neq n_{2}$,
then according to Theorem \ref{thm:Identify}, the parameters $\theta=\left(\mu^{\prime},\nu^{\prime},{\rm vec}\left(\Xi\right)^{\prime}\right)^{\prime}$
are identified. We conduct a simulation study with $T=500,1000,2000,4000$,
based on 50,000 Monte-Carlo simulations. We report the mean, standard
deviation of the estimated parameters, $\alpha_{L}^{0.025}$ and $\alpha_{R}^{0.025}$
(defined below). The asymptotic standard deviations are also included
for comparison, which are taken from the square root of the diagonal
elements of the Cramèr-Rao bound, i.e., $\mathcal{I}_{\theta}^{-1}/T$.
Note that we report the inverse of degree of freedoms, i.e. $1/\nu_{k}$,
because its empirical distribution is more close to normal especially
in small samples. 

To examine the normality of asymptotic distribution, we reports the
empirical left and right quantiles, defined as
\[
\alpha_{L}^{0.025}=\frac{1}{T}\sum_{t=1}^{T}\left[\frac{\hat{\theta}_{t}-\theta}{{\rm aStd}\left(\theta\right)}<-1.96\right],\quad\alpha_{R}^{0.025}=\frac{1}{T}\sum_{t=1}^{T}\left[\frac{\hat{\theta}_{t}-\theta}{{\rm aStd}\left(\theta\right)}>1.96\right]
\]
and we will have $\alpha_{L}^{0.025}\rightarrow0.025$ and $\alpha_{R}^{0.025}\rightarrow0.025$
if $\hat{\theta}_{t}$ is approximately normally distributed. Table
\ref{tab:MLESimuLarge} reports the simulation results in large and
small samples, respectively. We can find that all maximum likelihood
estimators $\hat{\theta}$ are consistent, and the empirical standard
deviations match the asymptotic standard deviations very well. The
empirical distributions of $\hat{\theta}$ converge to normal distribution
when the sample size increases. In the Table \ref{tab:MLESimuSmall}
in Appendix \ref{subsec:Simulations_SmallSample}, we also reports
the estimation results in small samples.
\begin{table}
\caption{Simulations with Large Samples}

\begin{centering}
\vspace{0.2cm}
\begin{footnotesize}
\begin{tabularx}{\textwidth}{p{1.0cm}Yp{-0.2cm}YYYYYp{-0.2cm}YYYYYY}
\toprule
\midrule
          & {True} &       & Mean   & Std   & aStd  &$\alpha_L^{0.025}$ &$\alpha_R^{0.025}$ &       & Mean   & Std   & aStd &$\alpha_L^{0.025}$ &$\alpha_R^{0.025}$\\
\midrule
\\[-0.3cm]
          &   &  &  \multicolumn{5}{c}{$T=500$}     &       & \multicolumn{5}{c}{$T=1,000$} \\
\\[-0.3cm]           
    $\mu_1$   & 0.1000   &       & 0.0998 & 0.0355 & 0.0353 & 0.0260 & 0.0245 &       & 0.1000 & 0.0250 & 0.0250 & 0.0264 & 0.0250 \\
\\[-0.3cm]
    $\mu_2$   & 0.2000   &       & 0.1998 & 0.0446 & 0.0444 & 0.0257 & 0.0252 &       & 0.2000 & 0.0314 & 0.0314 & 0.0256 & 0.0247 \\
\\[-0.3cm]
    $\mu_3$   & 0.3000   &       & 0.2997 & 0.0457 & 0.0456 & 0.0255 & 0.0245 &       & 0.3001 & 0.0325 & 0.0322 & 0.0258 & 0.0263 \\
\\[-0.3cm]
    \multirow{9}[0]{*}{$\rm{vec}(\Xi)$} & 0.6000   &       & 0.5950 & 0.0392 & 0.0381 & 0.0325 & 0.0256 &       & 0.5979 & 0.0273 & 0.0269 & 0.0275 & 0.0253 \\
\\[-0.3cm]
          & 0.5000   &       & 0.4958 & 0.0543 & 0.0500 & 0.0367 & 0.0335 &       & 0.4983 & 0.0369 & 0.0354 & 0.0292 & 0.0316 \\
\\[-0.3cm]
          & 0.4000   &       & 0.3963 & 0.0578 & 0.0520 & 0.0387 & 0.0368 &       & 0.3988 & 0.0389 & 0.0368 & 0.0315 & 0.0331 \\
\\[-0.3cm]
          & 0.3000   &       & 0.3003 & 0.0492 & 0.0451 & 0.0351 & 0.0367 &       & 0.3001 & 0.0335 & 0.0319 & 0.0307 & 0.0323 \\
\\[-0.3cm]
          & 0.7000   &       & 0.6985 & 0.0478 & 0.0452 & 0.0347 & 0.0294 &       & 0.6991 & 0.0330 & 0.0320 & 0.0309 & 0.0273 \\
\\[-0.3cm]
          & 0.2000   &       & 0.2005 & 0.0335 & 0.0314 & 0.0342 & 0.0315 &       & 0.2001 & 0.0229 & 0.0222 & 0.0295 & 0.0274 \\
\\[-0.3cm]
          & 0.1000   &       & 0.1011 & 0.0433 & 0.0391 & 0.0357 & 0.0383 &       & 0.1002 & 0.0290 & 0.0277 & 0.0309 & 0.0307 \\
\\[-0.3cm]
          & 0.2000   &       & 0.2005 & 0.0335 & 0.0314 & 0.0342 & 0.0315 &       & 0.2001 & 0.0229 & 0.0222 & 0.0295 & 0.0274 \\
\\[-0.3cm]
          & 0.8000   &       & 0.7979 & 0.0444 & 0.0425 & 0.0344 & 0.0261 &       & 0.7988 & 0.0307 & 0.0301 & 0.0289 & 0.0253 \\
\\[-0.3cm]
    $1/\nu_1$  & 0.2500 &       & 0.2497 & 0.0472 & 0.0463 & 0.0293 & 0.0253 &       & 0.2495 & 0.0330 & 0.0327 & 0.0274 & 0.0240 \\
\\[-0.3cm]
    $1/\nu_2$  & 0.1250 &       & 0.1233 & 0.0287 & 0.0280 & 0.0333 & 0.0238 &       & 0.1243 & 0.0201 & 0.0198 & 0.0298 & 0.0250 \\
\\[-0.3cm]
          &       &       &       &       &       &       &       &       &  \\
          &  &    & \multicolumn{5}{c}{$T=2,000$}     &       & \multicolumn{5}{c}{$T=4,000$} \\
\\[-0.2cm]
    $\mu_1$   & 0.1000   &       & 0.1000 & 0.0176 & 0.0177 & 0.0248 & 0.0253 &       & 0.1001 & 0.0125 & 0.0125 & 0.0253 & 0.0258 \\
\\[-0.3cm]          
    $\mu_2$   & 0.2000   &       & 0.2000 & 0.0222 & 0.0222 & 0.0244 & 0.0263 &       & 0.2000 & 0.0158 & 0.0157 & 0.0250 & 0.0260 \\
\\[-0.3cm]
    $\mu_3$   & 0.3000   &       & 0.3000 & 0.0228 & 0.0228 & 0.0252 & 0.0255 &       & 0.3000 & 0.0162 & 0.0161 & 0.0266 & 0.0250 \\
\\[-0.3cm]
    \multirow{9}[0]{*}{$\rm{vec}(\Xi)$} & 0.6000 &       & 0.5989 & 0.0191 & 0.0190 & 0.0271 & 0.0247 &       & 0.5994 & 0.0135 & 0.0135 & 0.0259 & 0.0253 \\
\\[-0.3cm]
          & 0.5000   &       & 0.4990 & 0.0255 & 0.0250 & 0.0278 & 0.0278 &       & 0.4995 & 0.0179 & 0.0177 & 0.0256 & 0.0263 \\
\\[-0.3cm]
          & 0.4000   &       & 0.3992 & 0.0266 & 0.0260 & 0.0279 & 0.0288 &       & 0.3994 & 0.0186 & 0.0184 & 0.0266 & 0.0262 \\
\\[-0.3cm]
          & 0.3000   &       & 0.3002 & 0.0231 & 0.0226 & 0.0278 & 0.0287 &       & 0.3002 & 0.0161 & 0.0159 & 0.0267 & 0.0267 \\
\\[-0.3cm]
          & 0.7000   &       & 0.6998 & 0.0231 & 0.0226 & 0.0279 & 0.0270 &       & 0.7000 & 0.0161 & 0.0160 & 0.0263 & 0.0258 \\
\\[-0.3cm]
          & 0.2000   &       & 0.2001 & 0.0159 & 0.0157 & 0.0273 & 0.0267 &       & 0.2001 & 0.0112 & 0.0111 & 0.0252 & 0.0254 \\
\\[-0.3cm]
          & 0.1000   &       & 0.1002 & 0.0200 & 0.0196 & 0.0275 & 0.0294 &       & 0.1002 & 0.0139 & 0.0138 & 0.0254 & 0.0264 \\
\\[-0.3cm]
          & 0.2000   &       & 0.2001 & 0.0159 & 0.0157 & 0.0273 & 0.0267 &       & 0.2001 & 0.0112 & 0.0111 & 0.0252 & 0.0254 \\
\\[-0.3cm]
          & 0.8000   &       & 0.7994 & 0.0215 & 0.0213 & 0.0270 & 0.0249 &       & 0.7998 & 0.0151 & 0.0150 & 0.0271 & 0.0246 \\
\\[-0.3cm]
    $1/\nu_1$  & 0.2500 &       & 0.2500 & 0.0232 & 0.0232 & 0.0259 & 0.0246 &       & 0.2498 & 0.0164 & 0.0164 & 0.0269 & 0.0244 \\
\\[-0.3cm]
    $1/\nu_2$  & 0.1250 &       & 0.1246 & 0.0141 & 0.0140 & 0.0268 & 0.0242 &       & 0.1248 & 0.0099 & 0.0099 & 0.0267 & 0.0237 \\
\\[0.0cm]
\\[-0.5cm]
\midrule
\bottomrule
\end{tabularx}
\end{footnotesize}
\par\end{centering}
{\small{}Note: The mean, standard deviation of the MLE estimated parameters
for convolution-$t$ distribution. The asymptotic standard deviations
are also included for comparison. we conduct a simulation study with
sample sizes $T=500,1000,2000,4000$, based on 50,000 Monte-Carlo
simulations. \label{tab:MLESimuLarge}}{\small\par}
\end{table}

\section{Standardized Distributions}

The conventional multivariate $t$-distribution, $X\sim t_{n,\nu}(\mu,\Sigma)$
with $\nu>2$ has $\mathrm{var}(X)=\tfrac{\nu}{\nu-2}\Sigma$. Some
theoretical results are more elegantly expressed in terms of standardized
variables, $\tilde{X}=\mu+\sqrt{\tfrac{\nu-2}{\nu}}(X-\mu)$, which
has $\mathbb{E}(\tilde{X})=\mu$, $\mathrm{var}(\tilde{X})=\Sigma$,
and density
\begin{equation}
f(\tilde{x})=\tfrac{\Gamma\left(\tfrac{\nu+n}{2}\right)}{\Gamma\left(\tfrac{\nu}{2}\right)}[(\nu-2)\pi]^{-\frac{n}{2}}|\Sigma|^{-\frac{1}{2}}\left[1+\tfrac{1}{\nu-2}\left(\tilde{x}-\mu\right){}^{\prime}\Sigma^{-1}\left(\tilde{x}-\mu\right)\right]^{-\frac{\nu+n}{2}},\qquad\nu>2.\label{eq:Std-t-density}
\end{equation}
We will refer to this distribution as the standardized $t$-distribution
and use the notation, $\tilde{X}\sim t_{n,\nu}^{\mathrm{std}}(\mu,\Sigma)$.
Obviously, if $Y_{1}$ is a linear combination of independent $t$-distributed
vectors, then it is also a linear combination of independent standardized
$t$-distributed vectors if all $\nu_{k}>2$. Note that the standardized
$t$-distribution disentangles the degrees of freedom parameter, $\nu$,
from the variance. This is convenient in dynamic volatility models,
such as the one we estimate in the empirical analysis. In our empirical
analysis we will use the multivariate convolution-$t$ distribution,
given by $Y=\mu+\Xi X$, where $X\in\mathbb{R}^{n}$ is composed of
$K$ independent standardized multivariate $t$-distributions with
$X_{k}\sim t_{n_{k},\nu_{k}}^{\mathrm{std}}(0,I_{k})$ and $n=\sum_{k}n_{k}$.
This redefines $\Xi$ slightly, because elements are scaled by $\sqrt{\tfrac{\nu_{k}}{\nu_{k}-2}}$.
We have the following notation for this standardized multivariate
convolution-$t$ distribution
\[
Y\sim\mathrm{CT}_{\boldsymbol{n},\boldsymbol{\nu}}^{\mathrm{std}}(\mu,\Xi),
\]
where $\boldsymbol{\nu}=(\nu_{1},\ldots,\nu_{K})^{\prime}$ with $\nu_{k}>2$
for all $k=1,\ldots,K$.

\section{Approximating Marginal Distributions of $\mathrm{CT}_{\boldsymbol{n},\boldsymbol{\nu}}^{{\rm std}}(\mu,\Xi)$\label{sec:Approximating}}

The marginal distribution of the univariate, $Y_{1}=\mu+\beta^{\prime}X$,
where $X\sim\mathrm{CT}_{\boldsymbol{n},\boldsymbol{\nu}}^{\mathrm{std}}(0,I_{n}),$
and $\min_{k}\nu_{k}>4$, then convolution of $t$-distributions may
be approximated by other distributions. In this section, we will approximate
the distribution of $Y_{1}$ with a standardized $t$-distribution,
which is either determined either by matching the first four moments
(Method of Moments) or determined by minimizing the Kullback-Leibler
divergence (KL divergence). For convolutions of two $t$-distributions,
the method of moments approach was used in \citet{Patil:1965}. Our
result in Theorem \ref{thm:Convo-t-kurtosis} makes it simple to generalize
the results in \citet{Patil:1965} to convolutions of multiple $t$-distributions.

Let $g(\cdot)$ denote the true density function of $Y_{1}$, as derived
in Theorem \ref{thm:Convo-t}, and let $f_{\mu,\sigma^{2},\nu}(y)$
denote the density of a standardized $t$-distribution, with mean
$\mu$, variance, $\sigma^{2}$, and $\nu$ degrees of freedom. The
first four moments of $Y_{1}$ were derived in Theorem \ref{thm:Convo-t-kurtosis},
and we will determine the $t$-distribution with the same first four
moments. The third centralized moment is zero for both distributions,
and we match the match the first, second, and fourth moments by selecting
$\mu$, $\sigma^{2}$, and $\nu$, and this is achieved with $\mu_{\star}=\mu$,
$\sigma_{\star}^{2}=\sum_{k=1}^{K}\beta_{k}^{\prime}\beta_{k}$, and
$\nu_{\star}=4+\frac{6}{\kappa_{Y_{1}}},$ such that the moment-matching
density is given by,
\begin{equation}
f^{\star}(y)=f_{\mu_{\star},\sigma_{\star}^{2},\nu_{\mathrm{\star}}}(y)=\tfrac{\Gamma\left(\tfrac{\nu_{\star}+1}{2}\right)}{\Gamma\left(\tfrac{\nu_{\star}}{2}\right)}\left[(\nu_{\star}-2)\pi\omega^{\prime}\omega\right]^{-\frac{1}{2}}\left[1+\frac{\left(y-\mu_{\star}\right)^{2}}{\left(\nu_{\star}-2\right)\omega^{\prime}\omega}\right]^{-\frac{\nu_{\star}+1}{2}}.\label{eq:Std-t-density-fit}
\end{equation}

Alternatively we can determine $\mu$, $\sigma^{2}$, and $\nu$ by
minimizing the Kullback-Leibler discrepancy between $f_{\mu,\sigma^{2},\nu}(y)$
and $g$. The Kullback-Leibler discrepancy is given by
\[
\ensuremath{{\rm KLIC}(g,f)=\int_{-\infty}^{+\infty}\log\left(\frac{f(y)}{g(y)}\right)g(y)\mathrm{d}y},
\]
and by solving
\[
(\mu_{\ast},\sigma_{\ast}^{2},\nu_{\ast})=\arg\max_{\mu,\sigma^{2},\nu}{\rm KLIC}(g,f_{\mu,\sigma^{2},\nu}),
\]
we obtain the best approximating density, $f^{\ast}(y)=f_{\mu_{\ast},\sigma_{\ast}^{2},\nu_{\ast}}(y)$,
in terms of the $\mathrm{KLIC}$.

In the left panels of Figure \ref{fig:KLICden}, we consider the case
where the true density is a convolution of two independent $t$-distributions,
with the same degrees of freedom, and in the right panels we consider
the case where the true density is a convolution of a normal distribution
and an independent $t$-distribution. In the upper panels, (a) and
(b), we report how well $f^{\star}$ (line) and $f^{\ast}$ (line)
approximates the true density for $\nu>4$, in units of logarithm
KLIC. In both cases it is evident that the approximating densities
differ from the true density. Thus, a convolution $t$-distribution
is not a $t$-distribution. The upper panels also show that the method
of moments density is different from the KLIC minimizing density,
especially for small values of $\nu$. And their differences decrease
when $\nu$ becomes larger. However, one should note that the difference
are small as the scale is in logarithmic units of KLIC.
\begin{figure}[tbph]
\begin{centering}
\includegraphics[width=1\textwidth]{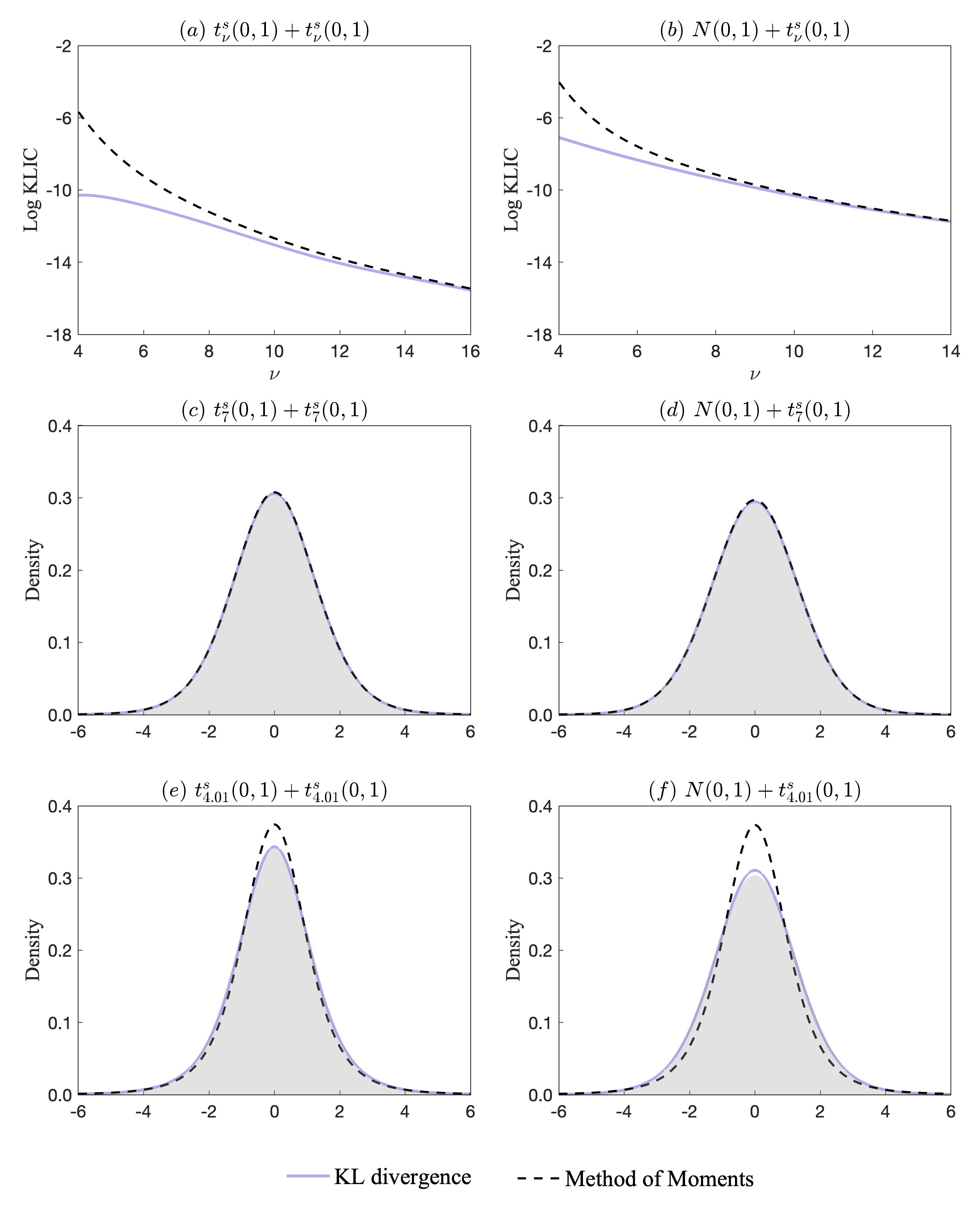}
\par\end{centering}
\caption{{\small{}Panel (a) and (b) measures the differences of $f^{\ast}$
(solid line) and $f^{\star}$ (dash line) relative to the true density
true density of $Y=X_{1}+X_{2}$ in terms of logarithm of the KLIC
for $\nu>4$. Panel (c) and (d) plots the density functions of $f^{\ast}$
and $f^{\star}$, as well as the true density of $Y=X_{1}+X_{2}$
represented with the shaded area.\label{fig:KLICden}}}
\end{figure}

In Panels (c) and (e) we plot the approximating densities (lines)
and the true density (shaded area) of $Y=X_{1}+X_{2}$ with $X_{1},X_{2}\sim t_{\nu}^{{\rm std}}(0,1)$
for $\nu=7$ and $\nu=4.01$. Similarly in Panels (d) and (f) we plot
the approximating densities (lines) and the true density (shaded area)
of $Y=Z+X$ with $Z\sim N(0,1)$ and $X\sim t_{\nu}^{{\rm std}}$
with $\nu=7$ and $\nu=4.01$, respectively. Panels (c) and (d) show
that the approximating densities (lines) match the true density very
well when $\nu\geq7$. However, when $\nu$ become smaller and approaches
$4$, Panel (e) and (f) shows that there exit significant differences
between the approximating density from method of moments and true
density. As for the $f^{\ast}$ (line) from minimizing KLIC, it still
works in the sum of two $t$-distribution with same degree of freedom
$\nu=4.01$, but it performs poor for the sum of a normal and $t$-distribution
with $\nu=4.01$.

We note that the method of moments method is relatively poor, when
the convolutions involve $t$-distributions with small degrees of
freedom.

\section{Empirical Analysis of Realized Volatilities\label{sec:Empirical-Analysis}}

Realized measures of volatility are typically found to be very persistent
and HAR model by \citet{Corsi:2009} (and variations thereof) are
often used to model realized measures, see e.g. \citet{Andersen2007},
\citet{CorsiFusariVecchia2013}, and \citet{Bollerslev2016}.

In this section, we model a vector of daily logarithmically transformed
realized kernel estimators, see \citet[2011]{BNHLS-RK:2008}\nocite{BNHLS-MRK:2011}.
We use a HAR specification to model the conditional mean of the transformed
realized measures and use convolution-$t$ distributions to model
the errors in the model. 

\subsection{Data}

The sample spans the period from January 1, 2001 to December 31, 2020,
which includes $T=4,972$ trading days and we include $n=10$ assets
in stock markets in our analysis. These consist of S\&P 500 index
(SPX) presenting market portfolio, GE, HOM, and MMM, from the Industrials
sector, JNJ, LLY, and MRK, from the Health Care sector, and AAPL,
IBM, and INTC, from the Information Technology sector. We compute
daily realized kernel estimator using intraday transaction data, where
the high-frequency data of SPX were downloaded from TickData.com and
the remaining nine individual stocks were obtained from the TAQ database.
These were cleaned following the methodology in \citet{BNHLS-RKpractice:2009},
and we used the Parzen kernel to compute the realized kernel estimates.

\subsection{HAR Model with Convolution-$t$ Innovations}

Let $Y_{i,t}=\log RK_{i,t}$ where $i=1,\ldots,n$, where $i$ indexes
the stocks and $t=1,\ldots,T$ the day. The righthand side variables
in the HAR model are defined by:
\[
Y_{t}^{(d)}=Y_{t-1},\quad Y_{t}^{(w)}=\frac{1}{4}\sum_{i=2}^{5}Y_{t-i},\quad\text{and}\quad Y_{t}^{(m)}=\frac{1}{17}\sum_{i=6}^{22}Y_{t-i},
\]
which correspond to daily, weekly, and monthly frequencies, respectively.
Thees are defined such that no lagged $Y_{t-i}$ is used twice, which
makes their coefficients easier to interpret. Our HAR model is given
by
\begin{equation}
Y_{t}=\underbrace{\xi+\beta_{d}Y_{t}^{(d)}+\beta_{w}Y_{t}^{(w)}+\beta_{m}Y_{t}^{(m)}}_{\mu_{t}}+V_{t},\qquad\text{with}\quad V_{t}\sim iid\ \mathrm{CT}_{\boldsymbol{n},\boldsymbol{\nu}}^{\mathrm{std}}(0,\Xi)\label{eq:HARmodel}
\end{equation}
where $\xi\in\mathbb{R}^{n}$, $\beta_{d}$, $\beta_{w}$, and $\beta_{m}$
are diagonal matrices. From $V_{t}\sim\mathrm{CT}_{\boldsymbol{n},\boldsymbol{\nu}}^{\mathrm{std}}(0,\Xi)$
it follows that the conditional covariance of $Y_{t}$ is given by
$\Sigma=\Xi\Xi^{\prime}$, and we can express each elements of $Y_{t}$
as an univariate HAR model
\begin{equation}
Y_{t,i}=\xi_{i}+\beta_{d,i}Y_{t,i}^{(d)}+\beta_{w,i}Y_{t,i}^{(w)}+\beta_{m,i}Y_{t,i}^{(m)}+V_{t,i},\quad i=1,2,\ldots,n,\label{eq:IndiHAR}
\end{equation}
where $\beta_{d,i}$ is the $i$-th diagonal element of $\beta_{d}$,
and $\beta_{w,i}$ and $\beta_{m,i}$ defined similarly. 

Let $U_{t}=\Xi^{-1}V_{t}$ be the standardized residuals, and let
$\mathcal{F}_{t}$ be the natural filtration, then the conditional
log-likelihood function $Y_{t}|\mathcal{F}_{t-1}$ is,
\begin{equation}
\ell(Y_{t})=-\frac{1}{2}\log|\Xi\Xi^{\prime}|+\ensuremath{\sum_{k=1}^{K}c_{k}-\frac{\nu_{k}+n_{k}}{2}\log\left(1+\frac{1}{\nu_{k}-2}U_{k,t}^{\prime}U_{k,t}\right).}\label{eq:LogL}
\end{equation}

The expression for the marginal density of convolution $t$-distributions
in Theorem \ref{thm:Convo-t} is particularly useful in our empirical
analysis, because it facilitates a factorization of the joint density
into marginal densities and the copula density by Sklar's theorem.
This leads to the following decomposition of the log-likelihood, 
\[
\ell(Y_{t})=\sum_{i=1}^{n}\ell(Y_{t,i})+\log\left(c\left(Y_{t}\right)\right),
\]
where $c(Y_{t})$ denotes the copula density. This factorization can
be used to compare different model specifications. For instance, if
one specification has a larger value for the log-likelihood, then
the factorization can be used to investigate if the gains are driven
by gains in the marginal distributions, by gains in the copula density,
or a combination of the two. It may reveal that some model features
important for the marginal distribution, whereas other features are
more specific to the copula.

\subsection{Estimation Method and Distributional Specifications}

Motivated by the diagonal-block structure of the information matrix
shown in Theorem \ref{thm:ScoreHess}, we adopt a two-sage estimation
method. In the first stage, we estimate the regression coefficients,
$\beta_{d,i}$, $\beta_{w,i}$, and $\beta_{m,i}$, by least squares
for each of the ten assets separately and stack their residuals into
the vector $\hat{V}_{t}$. In the second stage, we fit parametric
distributions to $\hat{V}_{t}$, by maximizing the corresponding log-likelihood
function, $\ell=\sum_{t=1}^{T}\ell_{t-1}\left(Y_{t}\right)$ in (\ref{eq:LogL}).

The multivariate convolution-$t$ distribution, $\mathrm{CT}_{\boldsymbol{n},\boldsymbol{\nu}}^{\mathrm{std}}(0,\Xi)$,
is characterized by the group assignments, $\bm{n}$, the vector of
degrees of freedoms, $\bm{\nu}$, and a scale-rotation matrix, $\Xi$.
In this paper, we consider three types of group assignments $\bm{n}$.
The first time is the single cluster case, $K=1$ (i.e. with $\bm{n}=n=10$),
which is simply a multivariate $t$-distribution (or Gaussian). The
second case has $K=4$, with $\bm{n}=\left(1,3,3,3\right)$, which
matches the sector classification of the 10 assets, and use ``Cluster-$t$''
to refer this structure. The third case, has $K=10$ independent components
(i.e. no cluster structure) and $\bm{n}=\left(1,1,\ldots,1\right)$.
We use ``Hetero-$t$'' as the label for this structure. 

For the scale-rotation matrix, $\Xi$, we consider both the just-identified
asymmetric structure and the over-identified symmetric structure.
We also combine the symmetric and asymmetric $\Xi$-matrices with
block structure, which has the label ``Block'', whereas $\Xi$-matrices
without an imposed block structure are labelled as ``Just-Identified''.
Imposing block structure on $\Xi$ is motivated by the approximate
block structure seen in the empirical conditional covariance matrix
of $Y_{t}$ (shown below). One should remember that, for the Gaussian
case with $\bm{\nu}\rightarrow\infty$ and the conventional multivariate-$t$
case with $K=1$, only the covariance matrix $\Sigma=\Xi\Xi^{\prime}$
is identified.

\subsection{Empirical Results}

We present the results for the HAR regression coefficients from first-stage
estimation in Table \ref{tab:BetaEstimation}. The HAR coefficients
are reported along with their corresponding standard errors in parentheses.
For the logarithmic realized kernel estimators, $Y_{i,t}$, $i=1,\ldots,10,$
we report the estimated values of the model-implied expected values,
$\mathbb{E}[Y_{t,i}]$, their persistence parameter, $\pi_{i}=\beta_{d,i}+\beta_{w,i}+\beta_{m,i}$,
and the standard deviation of the residuals, $\sigma_{u_{i}}$. There
are some variations in the levels of log-volatility, $\mathbb{E}[Y]$,
across assets, as one would expect. The estimated HAR regression coefficients
and conditional standard deviations, $\hat{\sigma}_{1},\ldots,\hat{\sigma}_{9}$,
are very similar for all stocks, whereas the SPX is somewhat different.
It has a higher weight on the first lag and less weight on distant
lags as defined by $\beta_{m}$. The average log-volatility is also
substantially smaller for SPX, implying that the volatilities for
the individual stocks are substantially larger.\footnote{The log-differences of between 0.66 and 1.32 translate to variances
that are between 92\% and 275\% larger or, alternatively, volatilities
that are 39\%-94\% larger.}

From the first-stage residuals, $\hat{V}_{t}$, $t=1,\ldots,T$, we
compute the empirical covariance matrix, $\hat{\Sigma}=\frac{1}{T}\sum_{t=1}^{T}\hat{V}_{t}\hat{V}_{t}^{\prime}$,
which we present in Table \ref{tab:CovEst}. The sector-based block
structure is illustrated with the shaded regions. Interestingly, the
covariance matrix has an approximate block structure that coincides
with the sector classification of the stocks. The within-sector covariances
are much higher than between-sector covariances, and we will explore
this structure in Section \ref{subsec:Estimations-with-Block}.

\begin{table}
\caption{HAR Coefficients and Key Statistics (OLS Estimation)}

\begin{centering}
\vspace{0.2cm}
\begin{footnotesize}
\begin{tabularx}{\textwidth}{p{0.7cm}Yp{0cm}YYYp{0cm}YYYp{0cm}YYYY}
\toprule
\midrule
          & Market &       & \multicolumn{3}{c}{Industrials} &       & \multicolumn{3}{c}{Health Care} &       & \multicolumn{3}{c}{Information Tech.} \\
\cmidrule{2-2}\cmidrule{4-6}\cmidrule{8-10}\cmidrule{12-14}
          & SPX   &       & GE    & HOM   & MMM   &       & JNJ   & LLY   & MRK   &       & AAPL  & IBM   & INTC \\
\midrule           
          &       &       &       &       &       &       &       &       &       &       &       &       &  \\
    $\xi$ & -0.432 &       & -0.329 & -0.402 & -0.544 &       & -0.666 & -0.623 & -0.671 &       & -0.371 & -0.405 & -0.536 \\
          & \textit{(0.075)} &       & \textit{(0.073)} & \textit{(0.078)} & \textit{(0.094)} &       & \textit{(0.108)} & \textit{(0.105)} & \textit{(0.105)} &       & \textit{(0.075)} & \textit{(0.079)} & \textit{(0.091)} \\
\\[-0.2cm]         
    $\beta_d$ & 0.557 &       & 0.423 & 0.449 & 0.415 &       & 0.405 & 0.389 & 0.433 &       & 0.446 & 0.433 & 0.417 \\
          & \textit{(0.013)} &       & \textit{(0.014)} & \textit{(0.014)} & \textit{(0.014)} &       & \textit{(0.014)} & \textit{(0.014)} & \textit{(0.014)} &       & \textit{(0.014)} & \textit{(0.014)} & \textit{(0.014)} \\
\\[-0.2cm]
   $\beta_w$ & 0.312 &       & 0.342 & 0.338 & 0.353 &       & 0.325 & 0.348 & 0.326 &       & 0.323 & 0.321 & 0.364 \\
          & \textit{(0.017)} &       & \textit{(0.019)} & \textit{(0.019)} & \textit{(0.019)} &       & \textit{(0.019)} & \textit{(0.020)} & \textit{(0.019)} &       & \textit{(0.019)} & \textit{(0.019)} & \textit{(0.019)} \\
\\[-0.2cm]
    $\beta_m$ & 0.088 &       & 0.198 & 0.169 & 0.174 &       & 0.197 & 0.195 & 0.168 &       & 0.189 & 0.199 & 0.159 \\
          & \textit{(0.014)} &       & \textit{(0.018)} & \textit{(0.017)} & \textit{(0.018)} &       & \textit{(0.019)} & \textit{(0.019)} & \textit{(0.018)} &       & \textit{(0.017)} & \textit{(0.018)} & \textit{(0.017)} \\
          &       &       &       &       &       &       &       &       &       &       &       &       &  \\
    $\mathbb{E}(Y)$  & -10.05 &       & -8.938 & -9.053 & -9.396 &       & -9.179 & -9.155 & -9.159 &       & -8.851 & -8.726 & -9.086 \\          
    $\pi$    & 0.957 &       & 0.963 & 0.956 & 0.942 &       & 0.927 & 0.932 & 0.927 &       & 0.958 & 0.954 & 0.941 \\
    $\sigma_v$ & 0.505 &       & 0.557 & 0.538 & 0.527 &       & 0.546 & 0.559 & 0.528 &       & 0.509 & 0.505 & 0.532 \\
\\[0.0cm]
\\[-0.5cm]
\midrule
\bottomrule
\end{tabularx}
\end{footnotesize}
\par\end{centering}
{\small{}Note: HAR regression coefficients for logarithmically transformed
of realized measures of volatility along with the corresponding standard
errors in parentheses. The mean, $\mathbb{E}\left(Y\right)$, persistence,
$\pi=\beta_{d}+\beta_{w}+\beta_{m}$, and the standard deviation of
residuals, $\sigma_{v}$, as implied by the estimated models are listed
at the bottom of the table.\label{tab:BetaEstimation}}{\small\par}
\end{table}

\begin{table}
\caption{Empirical Covariance Matrix}

\begin{centering}
\vspace{0.2cm}
\begin{footnotesize}
\begin{tabularx}{\textwidth}{p{1cm}YYYYYYYYYYYY}
\toprule
\midrule
 & Market       & \multicolumn{3}{c}{Industrials} &   \multicolumn{3}{c}{Health Care}      & \multicolumn{3}{c}{Information Tech.} \\
\cmidrule(l){2-2}    
\cmidrule(l){3-5}
\cmidrule(l){6-8}
\cmidrule(l){9-11}
 & SPX & GE    & HOM   & MMM   & JNJ   & LLY   & MRK  & AAPL  & IBM   & INTC \\
\midrule
\\[-0.25cm]
 SPX & \cellcolor{black!8} 0.255 & 0.135 & 0.146 & 0.143 & \cellcolor{black!8} 0.103 & \cellcolor{black!8} 0.113 & \cellcolor{black!8} 0.110 & 0.137 & 0.122 & 0.146 \\
\\[-0.25cm]
 GE & 0.135 & \cellcolor{black!8}0.310 & \cellcolor{black!8}0.143 & \cellcolor{black!8}0.122 & 0.083 & 0.091 & 0.094 & \cellcolor{black!8}0.102 & \cellcolor{black!8}0.100 & \cellcolor{black!8}0.111 \\
 &  & \cellcolor{black!8} & \cellcolor{black!8} & \cellcolor{black!8} &  &  &  & \cellcolor{black!8} & \cellcolor{black!8} & \cellcolor{black!8}
 \\[-0.25cm]
  HOM & 0.146 & \cellcolor{black!8}0.143 & \cellcolor{black!8}0.290 & \cellcolor{black!8}0.139 & 0.085 & 0.094 & 0.095 & \cellcolor{black!8}0.108 & \cellcolor{black!8}0.102 & \cellcolor{black!8}0.121 \\
 &  & \cellcolor{black!8} & \cellcolor{black!8} & \cellcolor{black!8} &  &  &  & \cellcolor{black!8} & \cellcolor{black!8} & \cellcolor{black!8}  
\\[-0.25cm]
   MMM &  0.143 & \cellcolor{black!8}0.122 & \cellcolor{black!8}0.139 & \cellcolor{black!8}0.278 & 0.098 & 0.100 & 0.096 & \cellcolor{black!8}0.110 & \cellcolor{black!8}0.099 & \cellcolor{black!8}0.111 \\
\\[-0.25cm]
 JNJ & \cellcolor{black!8} 0.103 & 0.083 & 0.085 & 0.098 & \cellcolor{black!8}0.298 & \cellcolor{black!8}0.131 & \cellcolor{black!8}0.118 & 0.085 & 0.072 & 0.095 \\
  & \cellcolor{black!8}  &  &  &  & \cellcolor{black!8} & \cellcolor{black!8} & \cellcolor{black!8} &  &  &  
\\[-0.25cm]
  LLY & \cellcolor{black!8} 0.113 & 0.091 & 0.094 & 0.100 & \cellcolor{black!8}0.131 & \cellcolor{black!8}0.313 &\cellcolor{black!8} 0.140 & 0.088 & 0.076 & 0.092 \\
 & \cellcolor{black!8}  &  &  &  & \cellcolor{black!8} & \cellcolor{black!8} & \cellcolor{black!8} &  &  &   
\\[-0.25cm]
   MRK &  \cellcolor{black!8} 0.110 & 0.094 & 0.095 & 0.096 & \cellcolor{black!8}0.118 & \cellcolor{black!8}0.140 & \cellcolor{black!8}0.279 & 0.087 & 0.077 & 0.086 \\
\\[-0.25cm]
 AAPL & 0.137 & \cellcolor{black!8}0.102 & \cellcolor{black!8}0.108 & \cellcolor{black!8}0.110 & 0.085 & 0.088 & 0.087 & \cellcolor{black!8}0.259 & \cellcolor{black!8}0.118 & \cellcolor{black!8}0.125 \\
 &  & \cellcolor{black!8} & \cellcolor{black!8} & \cellcolor{black!8} &  &  &  & \cellcolor{black!8} & \cellcolor{black!8} & \cellcolor{black!8} 
\\[-0.25cm]
 IBM &   0.122 & \cellcolor{black!8}0.100 & \cellcolor{black!8}0.102 & \cellcolor{black!8}0.099 & 0.072 & 0.076 & 0.077 & \cellcolor{black!8}0.118 & \cellcolor{black!8}0.255 & \cellcolor{black!8}0.118 \\
 &  & \cellcolor{black!8} & \cellcolor{black!8} & \cellcolor{black!8} &  &  &  & \cellcolor{black!8} & \cellcolor{black!8} & \cellcolor{black!8}  
\\[-0.25cm]
 INTC &   0.146 & \cellcolor{black!8}0.111 & \cellcolor{black!8}0.121 & \cellcolor{black!8}0.111 & 0.095 & 0.092 & 0.086 & \cellcolor{black!8}0.125 & \cellcolor{black!8}0.118 & \cellcolor{black!8}0.284  \\
\\[0.0cm]
\\[-0.5cm]
\midrule
\bottomrule
\end{tabularx}
\end{footnotesize}
\par\end{centering}
{\small{}Note: The full-sample empirical covariance matrix $\hat{\Sigma}=\frac{1}{T}\sum_{t=1}^{T}\hat{V}_{t}\hat{V}_{t}^{\prime}$
where the residuals $\hat{V}_{t}$ are from the first-stage estimation.
The sector-based block structure is illustrated with the shaded regions.\label{tab:CovEst}}{\small\par}
\end{table}

\subsubsection{Results for Just-Identified $\Xi$}

Next, we report the second stage estimation results for the just-identified
$\Xi$ matrix, as defined by Theorem \ref{thm:Identify}. We maximize
the log-likelihood for for the vectors of residuals, $\hat{V}_{t}$,
$t=1,\ldots,T$, for each of the six distributional specifications.
Table \ref{tab:DFest-free} reports the estimates of the degrees of
freedom vectors, $\boldsymbol{\nu}=(\nu_{1},\ldots,\nu_{K})^{\prime}$
while Table \ref{tab:Best-Free} presents the estimated $\Xi$ matrix.
In Table \ref{tab:DFest-free}, we also include the number of parameters
$p$, the joint log-likelihood function $\ell$, the marginal log-likelihood
$\ell_{m}$, the logarithmically transformed copula density, $\ell_{c}$,
which is determined residually by $\ell_{c}=\ell-\ell_{m}$. We also
report the Bayesian information criterion (BIC).

Several key observations emerge from Table \ref{tab:DFest-free}.
There is strong evidence against the Gaussian specification. The multivariate-$t$
specification leads to a substantially better log-likelihood, resulting
from large improvement in both the marginal distributions, $\ell_{m}$,
and the log-copula density, $\ell_{c}$. The convolution-$t$ specifications,
both the Cluster-$t$ and Hetero-$t$ specifications, lead to further
improvements over the multivariate $t$-distribution. Once again we
observe large improvements in the log-likelihood, with the largest
gains originating from the log-copula density. There is some heterogenous
values in the estimated degrees of freedom. For instance, the log-volatility
related to the Health Care sector shows the heaviest tails. Both the
Cluster-$t$ and the Hetero-$t$ distributions, suggest that a symmetric
$\Xi$-matrix is not supported by the data. It is too restrictive
to capture the complex nonlinear dependences, which the just-identified
structure can accommodate. This is supported by the large improvements
in the log-copula densities $\ell_{c}$, whereas marginal densities
$\ell_{m}$ are slightly lower with the just-identified $\Xi$. In
terms of the total log-likelihood the Hetero-$t$ specification is
inferior to the Cluster-$t$ structure, despite having more free parameters. 

The Hetero-$t$ specifications have ten degrees of freedom parameters
whereas the Cluster-$t$ specifications have four. While the six additional
parameters result in slightly better marginal densities, $\ell_{m}$,
the copula densities, $\ell_{c}$, are actually worse for the Hetero-$t$
specifications. Recall that the Cluster-$t$ specification is not
nested in the corresponding Hetero-$t$ specification, because the
former imply non-linear dependencies between variables in the same
cluster variable, $X_{k}$, whereas the Hetero-$t$ structure implies
that all elements of $X$ are independent. The empirical results strongly
indicate that the cluster structure captures important nonlinear dependences,
because the log-copula densities are larger for the Cluster-$t$ specifications.
Fifth, the empirical evidence also favors the (just identified) asymmetric
structure for $\Xi$, over the (over identified) symmetric structure.
Despite having 36 and 45 additional parameters, the BIC favors the
asymmetric structure for both types of convolution-$t$ distributions. 

Table \ref{tab:Best-Free} reports the estimated $\Xi$ matrix for
each of the convolution-$t$ specifications in Table \ref{tab:DFest-free}.
The results for just-identified asymmetric $\Xi$-matrices are shown
in Panels (a) and (b) and the results for over-identified symmetric
$\Xi$-matrices are shown in Panels (c) and (d). Each of the four
estimates of $\Xi$ have an approximate block structures, which we
have highlighted with the (lighter) shaded regions. A darker shade
is used to highlight the largest elements of $\hat{\Xi}$. The asymmetric
structures in Panels (a) and (b) are particularly interesting, because
they indicate a factor structure, where all assets load on a common
(market) factor, where as the individual nine stocks also have large
loadings on a single distinct variable. This factor structure is similar
to the traditional CAPM model, but it is the structure implied by
convolution-$t$ distributions, that enables us to identify this factor
structure. Second, we also notice that the loading coefficients are
very similar within each cluster, which motives the block-matrix structure
on $\Xi$, which we explore below. Panel (c) and (c) present the estimates
for over-identified symmetric $\Xi$ matrices. All coefficients are
estimated to positive with the diagonal of $\hat{\Xi}$ being the
dominant elements. For this specification, we can therefore largely
link the degrees of freedom parameter of an element of $X$ to the
corresponding elements of $Y$. The approximated block structure is
also very much evident for the symmetric $\Xi$-matrices.

\begin{table}
\caption{Maximum Likelihood Estimates of $\boldsymbol{\nu}$ with Just-Identified
$\Xi$ Matrix}

\begin{centering}
\vspace{0.2cm}
\begin{footnotesize}
\begin{tabularx}{\textwidth}{p{2.0cm}Yp{0.3cm}Yp{0.3cm}YYp{0.3cm}YY}
    \toprule
          & Gaussian &       & {Multi-$t$} &       & \multicolumn{2}{c}{Cluster-$t$} &       & \multicolumn{2}{c}{Hetro-$t$} \\
\cmidrule{6-7}\cmidrule{9-10}          &       &       &       &       & Sym   & Asym  &       & Sym   & Asym \\    
    \midrule
          &       &       &       &       &       &       &       &       &  \\
    $\nu_{(\text{Joint/Market})}$    & $\infty$      &       & 9.983 &       & 7.742 & 8.615 &       & \cellcolor{black!8}7.806 & \cellcolor{black!8}8.666 \\
          &       &       & \textit{(0.371)} &       & \textit{(0.847)} & \textit{(1.338)} &       & \cellcolor{black!8}\textit{(0.950)} & \cellcolor{black!8}\textit{(1.263)} \\
          &       &       &       &       &       &       &       & 7.393 & 6.584 \\
          &       &       &       &       &       &       &       & \textit{(1.122)} & \textit{(0.805)} \\          
    $\nu_{(\text{Industrials})}$    &       &       &       &       & 6.963 & 6.247 &       & 5.726 & 4.809 \\
          &       &       &       &       & \textit{(0.358)} & \textit{(0.375)} &       & \textit{(0.705)} & \textit{(0.385)} \\
          &       &       &       &       &       &       &       & 5.829 & 5.262 \\
          &       &       &       &       &       &       &       & \textit{(0.683)} & \textit{(0.402)} \\
          &       &       &       &       &       &       &       & \cellcolor{black!8}5.796 & \cellcolor{black!8}5.500 \\
          &       &       &       &       &       &       &       & \cellcolor{black!8}\textit{(0.676)} & \cellcolor{black!8}\textit{(0.523)} \\                    
    $\nu_{(\text{Health Care})}$    &       &       &       &       & 6.156 & 5.943 &       & \cellcolor{black!8}5.035 & \cellcolor{black!8}4.844 \\
          &       &       &       &       & \textit{(0.309)} & \textit{(0.364)} &       & \cellcolor{black!8}\textit{(0.516)} & \cellcolor{black!8}\textit{(0.349)} \\
          &       &       &       &       &       &       &       & \cellcolor{black!8}6.075 & \cellcolor{black!8}5.681 \\
          &       &       &       &       &       &       &       & \cellcolor{black!8}\textit{(0.663)} & \cellcolor{black!8}\textit{(0.497)} \\
          &       &       &       &       &       &       &       & 7.667 & 6.701 \\
          &       &       &       &       &       &       &       & \textit{(1.022)} & \textit{(0.742)} \\          
    $\nu_{(\text{Info Tech.})}$    &       &       &       &       & 8.988 & 7.749 &       & 8.177 & 6.813 \\
          &       &       &       &       & \textit{(0.593)} & \textit{(0.625)} &       & \textit{(1.166)} & \textit{(0.701)} \\
          &       &       &       &       &       &       &       & 8.279 & 7.568 \\
          &       &       &       &       &       &       &       & \textit{(1.390)} & \textit{(1.039)} \\
          &       &       &       &       &       &       &       &       &  \\
    $p$     & 55    &       & 56    &       & 59    & 95    &       & {65} & 110 \\
          &       &       &       &       &       &       &       &       &  \\
    $\ell$   & -30696 &       & -29467 &       & -29282 & \textbf{-29053} &       & -29346 & -29074 \\
    $\ell_m$ & -38837 &       & -38016 &       & -37970 & -37991 &       & \textbf{-37966} & -37973 \\
    $\ell_c$ & 8141  &       & 8549  &       & 8689  & \textbf{8938} &       & 8621  & 8898 \\
          &       &       &       &       &       &       &       &       &  \\
    BIC   & 61860 &       & 59411 &       & 59066 & \textbf{58915} &       & 59245 & 59084 \\
\\[0.0cm]
\\[-0.5cm]
\midrule
\bottomrule
\end{tabularx}
\end{footnotesize}
\par\end{centering}
{\small{}Note: Maximum likelihood estimates of the degrees of freedom
vectors, $\boldsymbol{\nu}=(\nu_{1},\ldots,\nu_{K})^{\prime}$, for
the residuals from the first stage estimation, where $\Xi$ is either
symmetric or asymmetric (just-identified). The joint log-likelihood,
$\ell$, the marginal log-likelihood, $\ell_{m}$, and the residual
copula log-likelihood, $\ell_{c}$, are given along with the Bayesian
Information Criteria, ${\rm BIC}=-2\ell+p\log T$, and the the number
of free parameters, $p$, for each specification.\label{tab:DFest-free}}{\small\par}
\end{table}

\begin{table}
\caption{Estimation of $\Xi$ Matrices with Just-Identified Structure}

\begin{centering}
\vspace{0.2cm}          
\begin{footnotesize}
\begin{tabularx}{\textwidth}{p{1cm}YYYYYYYYYYYY}
\toprule
\midrule
 & Market       & \multicolumn{3}{c}{Industrials} &   \multicolumn{3}{c}{Health Care}      & \multicolumn{3}{c}{Information Tech.} \\
\cmidrule(l){2-2}    
\cmidrule(l){3-5}
\cmidrule(l){6-8}
\cmidrule(l){9-11}
 & SPX & GE    & HOM   & MMM   & JNJ   & LLY   & MRK  & AAPL  & IBM   & INTC \\
\midrule
 \multicolumn{11}{c}{\textit{Panel A: Asymmetric $\Xi$ Matrix from Cluster-$t$}} \\
\\[-0.33cm]
 SPX &\cellcolor{black!22} 0.404 & -0.152 & -0.085 & -0.111 & \cellcolor{black!8}-0.048 & \cellcolor{black!8}-0.047 & \cellcolor{black!8}-0.039 & -0.133 & -0.139 & -0.082 \\
\\[-0.33cm]
 GE & \cellcolor{black!22} 0.369 & \cellcolor{black!22} 0.384 &\cellcolor{black!8} -0.022 &\cellcolor{black!8}  -0.041 & -0.030 & -0.032 & -0.048 & \cellcolor{black!8}-0.094 & \cellcolor{black!8}-0.089 & \cellcolor{black!8}-0.071 \\
 & \cellcolor{black!22} & \cellcolor{black!8} & \cellcolor{black!8} & \cellcolor{black!8} &  &  &  & \cellcolor{black!8} & \cellcolor{black!8} & \cellcolor{black!8}
 \\[-0.33cm]
 HOM & \cellcolor{black!22} 0.350 & \cellcolor{black!8}-0.022 & \cellcolor{black!22} 0.372 & \cellcolor{black!8}-0.005 & -0.057 & -0.038 & -0.024 & \cellcolor{black!8}-0.081 & \cellcolor{black!8}-0.083 & \cellcolor{black!8}-0.050 \\
 & \cellcolor{black!22} & \cellcolor{black!8} & \cellcolor{black!8} & \cellcolor{black!8} &  &  &  & \cellcolor{black!8} & \cellcolor{black!8} & \cellcolor{black!8}
\\[-0.33cm]
 MMM &  \cellcolor{black!22} 0.342 & \cellcolor{black!8}-0.041 & \cellcolor{black!8}-0.005 & \cellcolor{black!22} 0.360 & -0.050 & -0.048 & -0.002 & \cellcolor{black!8}-0.077 & \cellcolor{black!8}-0.092 & \cellcolor{black!8}-0.085 \\
\\[-0.33cm]
 JNJ & \cellcolor{black!22} 0.280 & -0.034 & 0.016 & 0.035 & \cellcolor{black!22} 0.460 & \cellcolor{black!8}0.039 & \cellcolor{black!8}0.037 & -0.026 & -0.042 & -0.026 \\
 & \cellcolor{black!22} &  &  &  & \cellcolor{black!8} & \cellcolor{black!8} & \cellcolor{black!8} &  & & 
\\[-0.33cm]
LLY & \cellcolor{black!22} 0.294 & -0.033 & 0.002 & 0.023 & \cellcolor{black!8}0.039 & \cellcolor{black!22} 0.451 & \cellcolor{black!8} 0.048 & -0.015 & -0.060 & -0.034 \\
 & \cellcolor{black!22} &  &  &  & \cellcolor{black!8} & \cellcolor{black!8} & \cellcolor{black!8} &  & &  
\\[-0.33cm]
 MRK &  \cellcolor{black!22} 0.282 & -0.005 & 0.000 & -0.023 & \cellcolor{black!8} 0.037 & \cellcolor{black!8} 0.048 & \cellcolor{black!22} 0.434 & -0.032 & -0.035 & -0.039 \\
\\[-0.33cm]
 AAPL & \cellcolor{black!22} 0.375 & \cellcolor{black!8}-0.056 & \cellcolor{black!8}-0.030 & \cellcolor{black!8}-0.030 & -0.032 & -0.051 & -0.026 & \cellcolor{black!22}  0.324 & \cellcolor{black!8} -0.036 & \cellcolor{black!8} -0.035 \\
 & \cellcolor{black!22} & \cellcolor{black!8} & \cellcolor{black!8} & \cellcolor{black!8} &  &  &  & \cellcolor{black!8} & \cellcolor{black!8} & \cellcolor{black!8}
\\[-0.33cm]
 IBM &   \cellcolor{black!22} 0.361 & \cellcolor{black!8}-0.042 & \cellcolor{black!8}-0.019 & \cellcolor{black!8}-0.021 & -0.037 & -0.025 & -0.037 & \cellcolor{black!8} -0.036 & \cellcolor{black!22} 0.337 & \cellcolor{black!8} -0.036 \\
 & \cellcolor{black!22} & \cellcolor{black!8} & \cellcolor{black!8} & \cellcolor{black!8} &  &  &  & \cellcolor{black!8} & \cellcolor{black!8} & \cellcolor{black!8}
\\[-0.33cm]
INTC &   \cellcolor{black!22} 0.377 & \cellcolor{black!8}-0.042 & \cellcolor{black!8}-0.019 & \cellcolor{black!8}-0.003 & -0.017 & -0.032 & -0.023 & \cellcolor{black!8} -0.035 &\cellcolor{black!8}  -0.036 & \cellcolor{black!22} 0.365  \\
\\[-0.2cm]
 \multicolumn{11}{c}{\textit{Panel B: Asymmetric $\Xi$ Matrix from Hetero-$t$}} \\
\\[-0.33cm]
 SPX & \cellcolor{black!22} 0.416 & -0.127 & -0.103 & -0.101 & \cellcolor{black!8}-0.050 & \cellcolor{black!8}-0.059 & \cellcolor{black!8}-0.061 & -0.123 & -0.114 & -0.095 \\ 
\\[-0.33cm]
 GE & \cellcolor{black!22} 0.376 & \cellcolor{black!22} 0.380 & \cellcolor{black!8} 0.043 & \cellcolor{black!8} -0.014 & -0.029 & -0.033 & -0.048 & \cellcolor{black!8} -0.072 & \cellcolor{black!8} -0.065 & \cellcolor{black!8} -0.067 \\ 
 & \cellcolor{black!22} & \cellcolor{black!8} & \cellcolor{black!8} & \cellcolor{black!8} &  &  &  & \cellcolor{black!8} & \cellcolor{black!8} & \cellcolor{black!8}
 \\[-0.33cm]
 HOM & \cellcolor{black!22} 0.350 & \cellcolor{black!8} -0.076 & \cellcolor{black!22} 0.377 & \cellcolor{black!8} 0.000 & -0.042 & -0.036 & -0.031 & \cellcolor{black!8} -0.073 & \cellcolor{black!8} -0.054 & \cellcolor{black!8} -0.068  \\
 & \cellcolor{black!22} & \cellcolor{black!8} & \cellcolor{black!8} & \cellcolor{black!8} &  &  &  & \cellcolor{black!8} & \cellcolor{black!8} & \cellcolor{black!8}
\\[-0.33cm]
 MMM &  \cellcolor{black!22} 0.338 & \cellcolor{black!8} -0.053 & \cellcolor{black!8} -0.008 &\cellcolor{black!22} 0.369 & -0.037 & -0.032 & -0.006 & \cellcolor{black!8} -0.068 & \cellcolor{black!8} -0.066 & \cellcolor{black!8} -0.093 \\
\\[-0.33cm]
 JNJ & \cellcolor{black!22} 0.281 & -0.032 & 0.001 & 0.019 & \cellcolor{black!22} 0.447 & \cellcolor{black!8} 0.045 & \cellcolor{black!8} 0.098 & -0.032 & -0.036 & -0.028 \\
 & \cellcolor{black!22} &  &  &  & \cellcolor{black!8} & \cellcolor{black!8} & \cellcolor{black!8} &  & &  
\\[-0.33cm]
LLY & \cellcolor{black!22}   0.297 & -0.035 & -0.006 & 0.003 & \cellcolor{black!8} 0.021 & \cellcolor{black!22} 0.455 & \cellcolor{black!8} 0.067 & -0.017 & -0.045 & -0.046 \\
 & \cellcolor{black!22} &  &  &  & \cellcolor{black!8} & \cellcolor{black!8} & \cellcolor{black!8} &  & & 
\\[-0.33cm]
 MRK &  \cellcolor{black!22} 0.285 & -0.010 & -0.004 & -0.027 & \cellcolor{black!8} -0.027 & \cellcolor{black!8} 0.032 & \cellcolor{black!22} 0.430 & -0.039 & -0.036 & -0.036 \\
\\[-0.33cm]
 AAPL & \cellcolor{black!22} 0.367 & \cellcolor{black!8} -0.055 & \cellcolor{black!8} -0.029 & \cellcolor{black!8} -0.016 & -0.022 & -0.048 & -0.024 & \cellcolor{black!22} 0.333 & \cellcolor{black!8} -0.058 & \cellcolor{black!8} -0.002 \\
 & \cellcolor{black!22} & \cellcolor{black!8} & \cellcolor{black!8} & \cellcolor{black!8} &  &  &  & \cellcolor{black!8} & \cellcolor{black!8} & \cellcolor{black!8}
\\[-0.33cm]
 IBM &   \cellcolor{black!22} 0.358 & \cellcolor{black!8} -0.037 & \cellcolor{black!8} -0.028 & \cellcolor{black!8} -0.011 & -0.031 & -0.031 & -0.034 & \cellcolor{black!8} 0.004 & \cellcolor{black!22} 0.345 & \cellcolor{black!8} 0.017 \\
 & \cellcolor{black!22} & \cellcolor{black!8} & \cellcolor{black!8} & \cellcolor{black!8} &  &  &  & \cellcolor{black!8} & \cellcolor{black!8} & \cellcolor{black!8}
\\[-0.33cm]
INTC &   \cellcolor{black!22} 0.368 & \cellcolor{black!8} -0.047 & \cellcolor{black!8} -0.004 & 0.009 & -0.013 & -0.023 & -0.033 & \cellcolor{black!8} -0.061 & \cellcolor{black!8} -0.075 & \cellcolor{black!22} 0.361  \\
\\[-0.2cm]
 \multicolumn{11}{c}{\textit{Panel C: Symmetric $\Xi$ Matrix from Cluster-$t$}} \\
\\[-0.33cm]
 SPX & \cellcolor{black!22} 0.427 & 0.098 & 0.111 & 0.107 & \cellcolor{black!8} 0.067 & \cellcolor{black!8} 0.076 & \cellcolor{black!8} 0.077 & 0.104 & 0.090 & 0.110 \\
\\[-0.33cm]
GE & 0.098 & \cellcolor{black!22}0.523 & \cellcolor{black!8}0.106 & \cellcolor{black!8}0.089 & 0.051 & 0.055 & 0.060 & \cellcolor{black!8}0.064 & \cellcolor{black!8}0.068 & \cellcolor{black!8}0.072 \\
 & & \cellcolor{black!8} & \cellcolor{black!8} & \cellcolor{black!8} &  &  &  & \cellcolor{black!8} & \cellcolor{black!8} & \cellcolor{black!8}
\\[-0.33cm]
HOM & 0.111 & \cellcolor{black!8}0.106 & \cellcolor{black!22}0.483 & \cellcolor{black!8}0.110 & 0.052 & 0.059 & 0.061 & \cellcolor{black!8}0.070 & \cellcolor{black!8}0.070 & \cellcolor{black!8}0.082 \\
&  & \cellcolor{black!8} & \cellcolor{black!8} & \cellcolor{black!8} &  &  &  & \cellcolor{black!8} & \cellcolor{black!8} & \cellcolor{black!8}  
\\[-0.33cm]
 MMM &  0.107 & \cellcolor{black!8}0.089 & \cellcolor{black!8}0.110 & \cellcolor{black!22}0.477 & 0.063 & 0.064 & 0.061 & \cellcolor{black!8}0.073 & \cellcolor{black!8}0.065 & \cellcolor{black!8}0.071 \\
\\[-0.33cm]
 JNJ  & \cellcolor{black!8} 0.067 & 0.051 & 0.052 & 0.063 & \cellcolor{black!22} 0.517 & \cellcolor{black!8} 0.100 & \cellcolor{black!8} 0.092 & 0.054 & 0.043 & 0.062 \\
 & \cellcolor{black!8} &  &  &  & \cellcolor{black!8} & \cellcolor{black!8} & \cellcolor{black!8} &  & & 
\\[-0.33cm]
 LLY & \cellcolor{black!8} 0.076 & 0.055 & 0.059 & 0.064 & \cellcolor{black!8} 0.100 & \cellcolor{black!8} \cellcolor{black!22} 0.516 & \cellcolor{black!8} 0.107 & 0.055 & 0.046 & 0.054 \\
 & \cellcolor{black!8} &  &  &  & \cellcolor{black!8} & \cellcolor{black!8} & \cellcolor{black!8} &  & &   
\\[-0.33cm]
 MRK &   \cellcolor{black!8} 0.077 & 0.060 & 0.061 & 0.061 & \cellcolor{black!8} 0.092 & \cellcolor{black!8} \cellcolor{black!8} 0.107 & \cellcolor{black!22} 0.492 & 0.055 & 0.048 & 0.051 \\
\\[-0.33cm]
 AAPL & 0.104 & \cellcolor{black!8}0.064 & \cellcolor{black!8}0.070 & \cellcolor{black!8}0.073 & 0.054 & 0.055 & 0.055 & \cellcolor{black!22}0.462 & \cellcolor{black!8}0.096 & \cellcolor{black!8}0.095 \\
 & & \cellcolor{black!8} & \cellcolor{black!8} & \cellcolor{black!8} &  &  &  & \cellcolor{black!8} & \cellcolor{black!8} & \cellcolor{black!8} 
\\[-0.33cm]
 IBM &   0.090 & \cellcolor{black!8}0.068 & \cellcolor{black!8}0.070 & \cellcolor{black!8}0.065 & 0.043 & 0.046 & 0.048 & \cellcolor{black!8}0.096 & \cellcolor{black!22}0.463 & \cellcolor{black!8}0.089 \\
 & & \cellcolor{black!8} & \cellcolor{black!8} & \cellcolor{black!8} &  &  &  & \cellcolor{black!8} & \cellcolor{black!8} & \cellcolor{black!8}  
\\[-0.33cm]
 INTC &   0.110 & \cellcolor{black!8}0.072 & \cellcolor{black!8}0.082 & \cellcolor{black!8}0.071 & 0.062 & 0.054 & 0.051 & \cellcolor{black!8}0.095 & \cellcolor{black!8}0.089 & \cellcolor{black!22}0.485 \\
 \\[-0.2cm]
 \multicolumn{11}{c}{\textit{Panel D: Symmetric $\Xi$ Matrix from Hetero-$t$}} \\
\\[-0.33cm]
 SPX & \cellcolor{black!22} 0.427 & 0.097 & 0.113 & 0.109 & \cellcolor{black!8} 0.068 & \cellcolor{black!8} 0.077 & \cellcolor{black!8} 0.077 & 0.105 & 0.090 & 0.111 \\
\\[-0.33cm]
GE &  0.097 & \cellcolor{black!22}0.515 & \cellcolor{black!8}0.105 & \cellcolor{black!8}0.087 & 0.052 & 0.056 & 0.061 & \cellcolor{black!8}0.065 & \cellcolor{black!8}0.068 & \cellcolor{black!8}0.072 \\
 & & \cellcolor{black!8} & \cellcolor{black!8} & \cellcolor{black!8} &  &  &  & \cellcolor{black!8} & \cellcolor{black!8} & \cellcolor{black!8}
\\[-0.33cm]
HOM &  0.113 & \cellcolor{black!8}0.105 & \cellcolor{black!22}0.489 & \cellcolor{black!8}0.108 & 0.053 & 0.061 & 0.063 & \cellcolor{black!8}0.072 & \cellcolor{black!8}0.071 & \cellcolor{black!8}0.084 \\
&  & \cellcolor{black!8} & \cellcolor{black!8} & \cellcolor{black!8} &  &  &  & \cellcolor{black!8} & \cellcolor{black!8} & \cellcolor{black!8}  
\\[-0.33cm]
 MMM &   0.109 & \cellcolor{black!8}0.087 & \cellcolor{black!8}0.108 & \cellcolor{black!22}0.478 & 0.064 & 0.067 & 0.063 & \cellcolor{black!8}0.074 & \cellcolor{black!8}0.067 & \cellcolor{black!8}0.072 \\
\\[-0.33cm]
 JNJ  & \cellcolor{black!8} 0.068 & 0.052 & 0.053 & 0.064 & \cellcolor{black!22} 0.514 & \cellcolor{black!8} 0.100 & \cellcolor{black!8} 0.090 & 0.055 & 0.045 & 0.063 \\
 & \cellcolor{black!8} &  &  &  & \cellcolor{black!8} & \cellcolor{black!8} & \cellcolor{black!8} &  & & 
\\[-0.33cm]
 LLY &  \cellcolor{black!8} 0.077 & 0.056 & 0.061 & 0.067 & \cellcolor{black!8} 0.100 & \cellcolor{black!22} 0.522 & \cellcolor{black!8} 0.108 & 0.056 & 0.048 & 0.056 \\
 & \cellcolor{black!8} &  &  &  & \cellcolor{black!8} & \cellcolor{black!8} & \cellcolor{black!8} &  & &   
\\[-0.33cm]
 MRK &   \cellcolor{black!8} 0.077 & 0.061 & 0.063 & 0.063 & \cellcolor{black!8} 0.090 & \cellcolor{black!8} 0.108 &  \cellcolor{black!22}0.487 &  0.056 &  0.050 & 0.052 \\
\\[-0.33cm]
 AAPL & 0.105 & \cellcolor{black!8}0.065 & \cellcolor{black!8}0.072 & \cellcolor{black!8}0.074 & 0.055 & 0.056 & 0.056 & \cellcolor{black!22}0.463 & \cellcolor{black!8}0.095 & \cellcolor{black!8}0.095 \\
 & & \cellcolor{black!8} & \cellcolor{black!8} & \cellcolor{black!8} &  &  &  & \cellcolor{black!8} & \cellcolor{black!8} & \cellcolor{black!8} 
\\[-0.33cm]
 IBM &    0.090 & \cellcolor{black!8}0.068 & \cellcolor{black!8}0.071 & \cellcolor{black!8}0.067 & 0.045 & 0.048 & 0.050 & \cellcolor{black!8}0.095 & \cellcolor{black!22}0.463 & \cellcolor{black!8}0.088 \\
 & & \cellcolor{black!8} & \cellcolor{black!8} & \cellcolor{black!8} &  &  &  & \cellcolor{black!8} & \cellcolor{black!8} & \cellcolor{black!8}  
\\[-0.33cm]
 INTC &   0.111 & \cellcolor{black!8}0.072 & \cellcolor{black!8}0.084 & \cellcolor{black!8}0.072 & 0.063 & 0.056 & 0.052 & \cellcolor{black!8}0.095 & \cellcolor{black!8}0.088 & \cellcolor{black!22}0.485 \\
\\[0.0cm]
\\[-0.5cm]
\midrule
\bottomrule
\end{tabularx}
\end{footnotesize}
\par\end{centering}
{\small{}Note: Estimated $\Xi$ matrix with just-identified structure
from Table \ref{tab:DFest-free} .\label{tab:Best-Free}}{\small\par}
\end{table}

\subsubsection{Estimations with Block $\Xi$ matrix\label{subsec:Estimations-with-Block}}

Next, we imposed a block structure on $\Xi$, which reduces the number
of free parameters in $\Xi$. By imposing block structure, the free
parameters in $\Xi$ would greatly decrease which only depends on
the number of $K$, which is at most $K\left(K+1\right)$ if number
of elements in each group is at least two.\footnote{When there are $\tilde{K}\leq K$ groups with only one element, this
number become $K\left(K+1\right)-\tilde{K}$. The reason for the distinction
between these two cases is that an $1\times1$ diagonal block has
only one coefficient.} If additional asymmetry is imposed, then this number becomes $K\left(K+3\right)/2$.

Table \ref{tab:DFest-block} presents the estimation results of degrees
of freedom under block $\Xi$ matrix. When compared with the results
without block structure in Table \ref{tab:DFest-free}, we find that:
first, the estimates of degrees of freedom are very similar to these
in Table \ref{tab:DFest-free}, and the findings about the log-likelihood
values maintain. Second, importantly, the BIC values are much smaller
than these with Just-Identified structure, which means the specification
with block structure is not too restrictive and is preferred according
to BIC. The estimated $\Xi$ matrix is provided in Table \ref{tab:Best-Block},
and we can find that the factor patterns are robust under block structure.

\begin{table}
\caption{Maximum Likelihood Estimates of $\boldsymbol{\nu}$ with Block $\Xi$
Matrix}

\begin{centering}
\vspace{0.2cm}
\begin{footnotesize}
\begin{tabularx}{\textwidth}{p{2.0cm}Yp{0.3cm}Yp{0.3cm}YYp{0.3cm}YY}
    \toprule
          & Gaussian &       & {Multi-$t$} &       & \multicolumn{2}{c}{Cluster-$t$} &       & \multicolumn{2}{c}{Hetro-$t$} \\
\cmidrule{6-7}\cmidrule{9-10}          &       &       &       &       & Sym   & Asym  &       & Sym   & Asym \\    
    \midrule
          &       &       &       &       &       &       &       &       &  \\
    $\nu_{(\text{Joint/Market})}$    & $\infty$      &       & 9.973 &       & 7.621 & 8.629 &       & \cellcolor{black!8}7.701 & \cellcolor{black!8}8.645 \\
          &       &       & \textit{(0.371)} &       & \textit{(0.834)} & \textit{(1.103)} &       & \cellcolor{black!8}\textit{(0.833)} & \cellcolor{black!8}\textit{(1.120)} \\
          &       &       &       &       &       &       &       & 8.134 & 7.272 \\
          &       &       &       &       &       &       &       & \textit{(0.714)} & \textit{(0.556)} \\
    $\nu_{(\text{Industrials})}$    &       &       &       &       & 6.983 & 6.283 &       & 5.335 & 4.390 \\
          &       &       &       &       & \textit{(0.359)} & \textit{(0.290)} &       & \textit{(0.346)} & \textit{(0.223)} \\
          &       &       &       &       &       &       &       & 5.027 & 4.609 \\
          &       &       &       &       &       &       &       & \textit{(0.317)} & \textit{(0.267)} \\
          &       &       &       &       &       &       &       & \cellcolor{black!8}6.012 & \cellcolor{black!8}6.039 \\
          &       &       &       &       &       &       &       & \cellcolor{black!8}\textit{(0.419)} & \cellcolor{black!8}\textit{(0.407)} \\      
    $\nu_{(\text{Health Care})}$    &       &       &       &       & 6.107 & 5.963 &       & \cellcolor{black!8}5.408 & \cellcolor{black!8}5.161 \\
          &       &       &       &       & \textit{(0.301)} & \textit{(0.279)} &       & \cellcolor{black!8}\textit{(0.328)} & \cellcolor{black!8}\textit{(0.294)} \\
          &       &       &       &       &       &       &       & \cellcolor{black!8}5.212 & \cellcolor{black!8}4.871 \\
          &       &       &       &       &       &       &       & \cellcolor{black!8}\textit{(0.367)} & \cellcolor{black!8}\textit{(0.300)} \\
          &       &       &       &       &       &       &       & 7.100 & 6.066 \\
          &       &       &       &       &       &       &       & \textit{(0.686)} & \textit{(0.483)} \\
    $\nu_{(\text{Info Tech.})}$    &       &       &       &       & 8.985 & 7.772 &       & 7.737 & 7.326 \\
          &       &       &       &       & \textit{(0.589)} & \textit{(0.443)} &       & \textit{(0.810)} & \textit{(0.646)} \\
          &       &       &       &       &       &       &       & 9.129 & 7.607 \\
          &       &       &       &       &       &       &       & \textit{(0.927)} & \textit{(0.674)} \\
          &       &       &       &       &       &       &       &       &  \\
    $p$   & 13    &       & 14    &       & 17    & 23    &       & {23} & 29 \\          &       &       &       &       &       &       &       &       &  \\
    $\ell$   & -30825 &       & -29589 &       & -29401 & \textbf{-29191} &       & -29450 & -29200 \\
    $\ell_m$ & -38860 &       & -38034 &       & -37987 & -38009 &       & \textbf{-37983} & -37985 \\
    $\ell_c$ & 8035  &       & 8445  &       & 8587  & \textbf{8818} &       & 8533  & 8784 \\
          &       &       &       &       &       &       &       &       &  \\
    BIC   & 61761 &       & 59297 &       & 58947 & \textbf{58578} &       & 59096 & 58647 \\
\\[0.0cm]
\\[-0.5cm]
\midrule
\bottomrule
\end{tabularx}
\end{footnotesize}
\par\end{centering}
{\small{}Note: The MLE estimates of the degrees of freedom vectors,
$\boldsymbol{\nu}=(\nu_{1},\ldots,\nu_{K})^{\prime}$ based on the
residuals $\hat{V}_{t}$ obtained from the first stage estimation
under the block $\Xi$ matrix. The joint log-likelihood, $\ell$,
the marginal log-likelihood, $\ell_{m}$, and the residual copula
log-likelihood, $\ell_{c}$, and Bayesian information criteria ${\rm BIC}=-2\ell+p\log T$,
are listed in the bottom four rows, where $p$ is the number of free
parameters under constrains. We don't report the estimates of $\Xi$
for the convolution-$t$ to conserve space.\label{tab:DFest-block}}{\small\par}
\end{table}

\begin{table}
\caption{Estimation of $\Xi$ Matrices with Block Structure}

\begin{centering}
\vspace{0.2cm}
\begin{footnotesize}
\begin{tabularx}{\textwidth}{p{1cm}YYYYYYYYYYYY}
\toprule
\midrule
 & Market       & \multicolumn{3}{c}{Industrials} &   \multicolumn{3}{c}{Health Care}      & \multicolumn{3}{c}{Information Tech.} \\
\cmidrule(l){2-2}    
\cmidrule(l){3-5}
\cmidrule(l){6-8}
\cmidrule(l){9-11}
 & SPX & GE    & HOM   & MMM   & JNJ   & LLY   & MRK  & AAPL  & IBM   & INTC \\
\midrule
 \multicolumn{11}{c}{\textit{Panel A: Asymmetric $\Xi$ Matrix from Cluster-$t$}} \\
\\[-0.33cm]
 SPX & \cellcolor{black!22}0.404& -0.111& -0.111& -0.111& \cellcolor{black!8}-0.044& \cellcolor{black!8}-0.044& \cellcolor{black!8}-0.044& -0.128& -0.128& -0.128  \\
\\[-0.33cm]
 GE & \cellcolor{black!22}0.352& \cellcolor{black!22}0.377& \cellcolor{black!8}-0.019& \cellcolor{black!8}-0.019& -0.036& -0.036& -0.036& \cellcolor{black!8}-0.078& \cellcolor{black!8}-0.078& \cellcolor{black!8}-0.078  \\
 & \cellcolor{black!22} & \cellcolor{black!8} & \cellcolor{black!8} & \cellcolor{black!8} &  &  &  & \cellcolor{black!8} & \cellcolor{black!8} & \cellcolor{black!8}
 \\[-0.33cm]
 HOM & \cellcolor{black!22}0.352& \cellcolor{black!8}-0.019& \cellcolor{black!22}0.377& \cellcolor{black!8}\cellcolor{black!8}-0.019& -0.036& -0.036& -0.036& \cellcolor{black!8}-0.078& \cellcolor{black!8}-0.078& \cellcolor{black!8}-0.078  \\
 & \cellcolor{black!22} & \cellcolor{black!8} & \cellcolor{black!8} & \cellcolor{black!8} &  &  &  & \cellcolor{black!8} & \cellcolor{black!8} & \cellcolor{black!8}
\\[-0.33cm]
 MMM &  \cellcolor{black!22}0.352& \cellcolor{black!8}-0.019& \cellcolor{black!8}-0.019& \cellcolor{black!22}0.377& -0.036& -0.036& -0.036& \cellcolor{black!8}-0.078& \cellcolor{black!8}-0.078& \cellcolor{black!8}-0.078  \\
\\[-0.33cm]
 JNJ & \cellcolor{black!22}0.285& 0.000& 0.000& 0.000& \cellcolor{black!22}0.451& \cellcolor{black!8}0.041& \cellcolor{black!8}0.041& -0.036& -0.036& -0.036  \\
 & \cellcolor{black!22} &  &  &  & \cellcolor{black!8} & \cellcolor{black!8} & \cellcolor{black!8} &  & & 
\\[-0.33cm]
LLY & \cellcolor{black!22}0.285& 0.000& 0.000& 0.000& \cellcolor{black!8}0.041& \cellcolor{black!22}0.451& \cellcolor{black!8}0.041& -0.036& -0.036& -0.036   \\
 & \cellcolor{black!22} &  &  &  & \cellcolor{black!8} & \cellcolor{black!8} & \cellcolor{black!8} &  & &  
\\[-0.33cm]
 MRK &   \cellcolor{black!22}0.285& 0.000& 0.000& 0.000& \cellcolor{black!8}0.041& \cellcolor{black!8}0.041& \cellcolor{black!22}0.451& -0.036& -0.036& -0.036  \\
\\[-0.33cm]
 AAPL & \cellcolor{black!22}0.374& \cellcolor{black!8}-0.034& \cellcolor{black!8}-0.034& \cellcolor{black!8}-0.034& -0.032& -0.032& -0.032& \cellcolor{black!22}0.338& \cellcolor{black!8}-0.041& \cellcolor{black!8}-0.041  \\
 & \cellcolor{black!22} & \cellcolor{black!8} & \cellcolor{black!8} & \cellcolor{black!8} &  &  &  & \cellcolor{black!8} & \cellcolor{black!8} & \cellcolor{black!8}
\\[-0.33cm]
 IBM &   \cellcolor{black!22}0.374& \cellcolor{black!8}-0.034& \cellcolor{black!8}-0.034& \cellcolor{black!8}-0.034& -0.032& -0.032& -0.032& \cellcolor{black!8}-0.041& \cellcolor{black!22}0.338& \cellcolor{black!8}-0.041 \\
 & \cellcolor{black!22} & \cellcolor{black!8} & \cellcolor{black!8} & \cellcolor{black!8} &  &  &  & \cellcolor{black!8} & \cellcolor{black!8} & \cellcolor{black!8}
\\[-0.33cm]
INTC &   \cellcolor{black!22}0.374& \cellcolor{black!8}-0.034& \cellcolor{black!8}-0.034& \cellcolor{black!8}-0.034& -0.032& -0.032& -0.032& \cellcolor{black!8}-0.041& \cellcolor{black!8}-0.041& \cellcolor{black!22}0.338  \\
\\[-0.2cm]
 \multicolumn{11}{c}{\textit{Panel B: Asymmetric $\Xi$ Matrix from Hetero-$t$}} \\
\\[-0.33cm]
 SPX & \cellcolor{black!22}0.416 & -0.106 & -0.106 & -0.106 & \cellcolor{black!8}-0.058 & \cellcolor{black!8}-0.058 & \cellcolor{black!8}-0.058 & -0.116 & -0.116 & -0.116 \\ 
\\[-0.33cm]
 GE & \cellcolor{black!22}0.353 & \cellcolor{black!22}0.388 & \cellcolor{black!8}-0.015 & \cellcolor{black!8}-0.015 & -0.032 & -0.032 & -0.032 & \cellcolor{black!8}-0.068 & \cellcolor{black!8}-0.068 & \cellcolor{black!8}-0.068 \\ 
 & \cellcolor{black!22} & \cellcolor{black!8} & \cellcolor{black!8} & \cellcolor{black!8} &  &  &  & \cellcolor{black!8} & \cellcolor{black!8} & \cellcolor{black!8}
 \\[-0.33cm]
 HOM & \cellcolor{black!22}0.353 & \cellcolor{black!8}-0.015 & \cellcolor{black!22}0.388 & \cellcolor{black!8}-0.015 & -0.032 & -0.032 & -0.032 & \cellcolor{black!8}-0.068 & \cellcolor{black!8}-0.068 & \cellcolor{black!8}-0.068  \\
 & \cellcolor{black!22} & \cellcolor{black!8} & \cellcolor{black!8} & \cellcolor{black!8} &  &  &  & \cellcolor{black!8} & \cellcolor{black!8} & \cellcolor{black!8}
\\[-0.33cm]
 MMM &  \cellcolor{black!22}0.353 & \cellcolor{black!8}-0.015 & \cellcolor{black!8}-0.015 & \cellcolor{black!22}0.388 & -0.032 & -0.032 & -0.032 & -0.068 & -0.068 & -0.068 \\
\\[-0.33cm]
 JNJ & \cellcolor{black!22}0.289 & -0.009 & -0.009 & -0.009 & \cellcolor{black!22}0.449 & \cellcolor{black!8}0.039 & \cellcolor{black!8}0.039 & -0.037 & -0.037 & -0.037 \\
 & \cellcolor{black!22} &  &  &  & \cellcolor{black!8} & \cellcolor{black!8} & \cellcolor{black!8} &  & & 
\\[-0.33cm]
LLY & \cellcolor{black!22}0.289 & -0.009 & -0.009 & -0.009 & \cellcolor{black!8}0.039 & \cellcolor{black!22}0.449 & \cellcolor{black!8}0.039 & -0.037 & -0.037 & -0.037 \\
 & \cellcolor{black!22} &  &  &  & \cellcolor{black!8} & \cellcolor{black!8} & \cellcolor{black!8} &  & &  
\\[-0.33cm]
 MRK &  \cellcolor{black!22}0.289 & -0.009 & -0.009 & -0.009 & \cellcolor{black!8}0.039 & \cellcolor{black!8}0.039 & \cellcolor{black!22}0.449 & -0.037 & -0.037 & -0.037 \\
\\[-0.33cm]
 AAPL & \cellcolor{black!22}0.369 & \cellcolor{black!8}-0.027 & \cellcolor{black!8}-0.027 & \cellcolor{black!8}-0.027 & -0.030 & -0.030 & -0.030 & \cellcolor{black!22}0.348 & \cellcolor{black!8}-0.032 & \cellcolor{black!8}-0.032 \\
 & \cellcolor{black!22} & \cellcolor{black!8} & \cellcolor{black!8} & \cellcolor{black!8} &  &  &  & \cellcolor{black!8} & \cellcolor{black!8} & \cellcolor{black!8}
\\[-0.33cm]
 IBM &    \cellcolor{black!22}0.369 & \cellcolor{black!8}-0.027 & \cellcolor{black!8}-0.027 & \cellcolor{black!8}-0.027 & -0.030 & -0.030 & -0.030 & \cellcolor{black!8}-0.032 & \cellcolor{black!22}0.348 & \cellcolor{black!8}-0.032 \\
 & \cellcolor{black!22} & \cellcolor{black!8} & \cellcolor{black!8} & \cellcolor{black!8} &  &  &  & \cellcolor{black!8} & \cellcolor{black!8} & \cellcolor{black!8}
\\[-0.33cm]
INTC &   \cellcolor{black!22}0.369 & -0.027 & -0.027 & -0.027 & -0.030 & -0.030 & -0.030 & \cellcolor{black!8}-0.032 & \cellcolor{black!8}-0.032 & \cellcolor{black!22}0.348 \\
\\[-0.2cm]
 \multicolumn{11}{c}{\textit{Panel C: Symmetric $\Xi$ Matrix from Cluster-$t$}} \\
\\[-0.33cm]
 SPX &  \cellcolor{black!22}0.428 & 0.106 & 0.106 & 0.106 & \cellcolor{black!8}0.073 & \cellcolor{black!8}0.073 & \cellcolor{black!8}0.073 & 0.101 & 0.101 & 0.101 \\
\\[-0.33cm]
GE & 0.106 & \cellcolor{black!22}0.495 & \cellcolor{black!8}0.101 & \cellcolor{black!8}0.101 & 0.059 & 0.059 & 0.059 & \cellcolor{black!8}0.071 & \cellcolor{black!8}0.071 & \cellcolor{black!8}0.071 \\
 & & \cellcolor{black!8} & \cellcolor{black!8} & \cellcolor{black!8} &  &  &  & \cellcolor{black!8} & \cellcolor{black!8} & \cellcolor{black!8}
\\[-0.33cm]
HOM & 0.106 & \cellcolor{black!8}0.101 & \cellcolor{black!22}0.495 & \cellcolor{black!8}0.101 & 0.059 & 0.059 & 0.059 & \cellcolor{black!8}0.071 & \cellcolor{black!8}0.071 & \cellcolor{black!8}0.071 \\
&  & \cellcolor{black!8} & \cellcolor{black!8} & \cellcolor{black!8} &  &  &  & \cellcolor{black!8} & \cellcolor{black!8} & \cellcolor{black!8}  
\\[-0.33cm]
 MMM  & 0.106 & \cellcolor{black!8}0.101 & \cellcolor{black!8}0.101 & \cellcolor{black!22}0.495 & 0.059 & 0.059 & 0.059 & \cellcolor{black!8}0.071 & \cellcolor{black!8}0.071 & \cellcolor{black!8}0.071 \\
\\[-0.33cm]
 JNJ  & \cellcolor{black!8}0.073 & 0.059 & 0.059 & 0.059 & \cellcolor{black!22}0.509 & \cellcolor{black!8}0.100 & \cellcolor{black!8}0.100 & 0.052 & 0.052 & 0.052 \\
 & \cellcolor{black!8} &  &  &  & \cellcolor{black!8} & \cellcolor{black!8} & \cellcolor{black!8} &  & & 
\\[-0.33cm]
 LLY & \cellcolor{black!8}0.073 & 0.059 & 0.059 & 0.059 & \cellcolor{black!8}0.100 & \cellcolor{black!22}0.509 & \cellcolor{black!8}0.100 & 0.052 & 0.052 & 0.052 \\
 & \cellcolor{black!8} &  &  &  & \cellcolor{black!8} & \cellcolor{black!8} & \cellcolor{black!8} &  & &   
\\[-0.33cm]
 MRK & \cellcolor{black!8}0.073 & 0.059 & 0.059 & 0.059 & \cellcolor{black!8}0.100 & \cellcolor{black!8}0.100 & \cellcolor{black!22}0.509 & 0.052 & 0.052 & 0.052 \\
\\[-0.33cm]
 AAPL & 0.101 & \cellcolor{black!8}0.071 & \cellcolor{black!8}0.071 & \cellcolor{black!8}0.071 & 0.052 & 0.052 & 0.052 & \cellcolor{black!22}0.471 & \cellcolor{black!8}0.093 & \cellcolor{black!8}0.093 \\
 & & \cellcolor{black!8} & \cellcolor{black!8} & \cellcolor{black!8} &  &  &  & \cellcolor{black!8} & \cellcolor{black!8} & \cellcolor{black!8} 
\\[-0.33cm]
 IBM & 0.101 & \cellcolor{black!8}0.071 & \cellcolor{black!8}0.071 & \cellcolor{black!8}0.071 & 0.052 & 0.052 & 0.052 & \cellcolor{black!8}0.093 & \cellcolor{black!22}0.471 & \cellcolor{black!8}0.093 \\
 & & \cellcolor{black!8} & \cellcolor{black!8} & \cellcolor{black!8} &  &  &  & \cellcolor{black!8} & \cellcolor{black!8} & \cellcolor{black!8}  
\\[-0.33cm]
 INTC  & 0.101 & \cellcolor{black!8}0.071 & \cellcolor{black!8}0.071 & \cellcolor{black!8}0.071 & 0.052 & 0.052 & 0.052 & \cellcolor{black!8}0.093 & \cellcolor{black!8}0.093 & \cellcolor{black!22}0.471 \\
 \\[-0.2cm]
 \multicolumn{11}{c}{\textit{Panel D: Symmetric $\Xi$ Matrix from Hetero-$t$}} \\
\\[-0.33cm]
 SPX & \cellcolor{black!22}0.427 & 0.107 & 0.107 & 0.107 & \cellcolor{black!8}0.074 & \cellcolor{black!8}0.074 & \cellcolor{black!8}0.074 & 0.102 & 0.102 & 0.102 \\
\\[-0.33cm]
GE &  0.107 & \cellcolor{black!22}0.500 & \cellcolor{black!8}0.101 & \cellcolor{black!8}0.101 & 0.061 & 0.061 & 0.061 & \cellcolor{black!8}0.072 & \cellcolor{black!8}0.072 & \cellcolor{black!8}0.072 \\
 & & \cellcolor{black!8} & \cellcolor{black!8} & \cellcolor{black!8} &  &  &  & \cellcolor{black!8} & \cellcolor{black!8} & \cellcolor{black!8}
\\[-0.33cm]
HOM &  0.107 & \cellcolor{black!8}0.101 & \cellcolor{black!22}0.500 & \cellcolor{black!8}0.101 & 0.061 & 0.061 & 0.061 & \cellcolor{black!8}0.072 & \cellcolor{black!8}0.072 & \cellcolor{black!8}0.072 \\
&  & \cellcolor{black!8} & \cellcolor{black!8} & \cellcolor{black!8} &  &  &  & \cellcolor{black!8} & \cellcolor{black!8} & \cellcolor{black!8}  
\\[-0.33cm]
 MMM &   0.107 & \cellcolor{black!8}0.101 & \cellcolor{black!8}0.101 & \cellcolor{black!22}0.500 & 0.061 & 0.061 & 0.061 & \cellcolor{black!8}0.072 & \cellcolor{black!8}0.072 & \cellcolor{black!8}0.072 \\
\\[-0.33cm]
 JNJ  & \cellcolor{black!8}0.074 & 0.061 & 0.061 & 0.061 & \cellcolor{black!22}0.509 & \cellcolor{black!8}0.100 & \cellcolor{black!8}0.100 & 0.054 & 0.054 & 0.054 \\
 & \cellcolor{black!8} &  &  &  & \cellcolor{black!8} & \cellcolor{black!8} & \cellcolor{black!8} &  & & 
\\[-0.33cm]
 LLY &  \cellcolor{black!8}0.074 & 0.061 & 0.061 & 0.061 & \cellcolor{black!8}0.100 & \cellcolor{black!22}0.509 & \cellcolor{black!8}0.100 & 0.054 & 0.054 & 0.054 \\
 & \cellcolor{black!8} &  &  &  & \cellcolor{black!8} & \cellcolor{black!8} & \cellcolor{black!8} &  & &   
\\[-0.33cm]
 MRK &   \cellcolor{black!8}0.074 & 0.061 & 0.061 & 0.061 & \cellcolor{black!8}0.100 & \cellcolor{black!8}0.100 & \cellcolor{black!22}0.509 & 0.054 & 0.054 & 0.054 \\
\\[-0.33cm]
 AAPL & 0.102 & \cellcolor{black!8}0.072 & \cellcolor{black!8}0.072 & \cellcolor{black!8}0.072 & 0.054 & 0.054 & 0.054 & \cellcolor{black!22}0.471 & \cellcolor{black!8}0.093 & \cellcolor{black!8}0.093 \\
 & & \cellcolor{black!8} & \cellcolor{black!8} & \cellcolor{black!8} &  &  &  & \cellcolor{black!8} & \cellcolor{black!8} & \cellcolor{black!8} 
\\[-0.33cm]
 IBM &    0.102 & \cellcolor{black!8}0.072 & \cellcolor{black!8}0.072 & \cellcolor{black!8}0.072 & 0.054 & 0.054 & 0.054 & \cellcolor{black!8}0.093 & \cellcolor{black!22}0.471 & \cellcolor{black!8}0.093 \\
 & & \cellcolor{black!8} & \cellcolor{black!8} & \cellcolor{black!8} &  &  &  & \cellcolor{black!8} & \cellcolor{black!8} & \cellcolor{black!8}  
\\[-0.33cm]
 INTC &  0.102 & \cellcolor{black!8}0.072 & \cellcolor{black!8}0.072 & \cellcolor{black!8}0.072 & 0.054 & 0.054 & 0.054 & \cellcolor{black!8}0.093 & \cellcolor{black!8}0.093 & \cellcolor{black!22}0.471 \\
\\[0.0cm]
\\[-0.5cm]
\midrule
\bottomrule
\end{tabularx}
\end{footnotesize}
\par\end{centering}
{\small{}Note: Estimated $\Xi$ matrix with block structure from Table
\ref{tab:DFest-block}.\label{tab:Best-Block}}{\small\par}
\end{table}

\section{Summary\label{sec:Summery}}

We have introduced the convolution-$t$ distributions, which is a
versatile class of multivariate distributions for modeling heavy-tailed
data with complex dependencies. We have characterized a several properties
of these distributions and detailed results for the marginal distributions
of convolution-$t$ distributions, such as expressions their densities
and cumulative distribution functions. We obtained simple expressions
for the first four moments and used these to develop an approximation
method for convolution-$t$ distributions. 

A attractive feature of this class of distributions is that their
log-likelihood function have simple expressions. This makes estimation
and inference relatively straight forward. We have analyzed the identification
problem, which motivates a particular parametrization, and we established
consistency and asymptotic normality of the maximum likelihood estimator
in Theorem \ref{thm:MLE-consistent-asN}, 

Our analysis of the dynamic properties of ten realized volatility
measures highlights the empirical relevance of the new class of distributions.
The empirical results obtained with convolution-$t$ distributions
provide a far more nuanced understanding of the nonlinear dependencies
in these time series. The convolution-$t$ distributions improve the
empirical fit, with improvements seen in both marginal distributions
and their interdependencies, where the latter is characterized by
the copula density. The specifications labelled Cluster-$t$ and Hetero-$t$
both offered substantial improvements in the empirical fit.

We identified interesting cluster structures that align with the sector
classifications of stocks. There were identified from non-linear dependencies
that are made possible by the convolution-$t$ distributions. We found
heterogenous degrees of freedom with stocks in the Health Care sector
exhibiting the heaviest tails, $\nu_{k}\approx5.96$, whereas stock
in the Information Technology sector were about about $\nu_{k}\approx7.77$.

Conventional heavy-tailed distributions have also been found to be
useful in multivariate GARCH models, see e.g. \citet{CrealKoopmanLucas:2011},
and the convolution-$t$ distributions might prove to be useful in
this context.

\bibliographystyle{apalike}
\bibliography{prh}
\newpage{}

\appendix
\setcounter{equation}{0}
\global\long\def\theequation{A.\arabic{equation}}%

\setcounter{lem}{0}
\global\long\def\thelem{A.\arabic{lem}}%

\section{Appendix of Proofs\label{sec:Proofs}}

Below we adopt som notations from \citet{CrealKoopmanLucasJBES:2012},
$A_{\otimes}\equiv A\otimes A$ and $A\oplus B=A\otimes B+B\otimes A$,
where $\otimes$ is the Kronecker product. The ${\rm vec}(A)$ operator
stacks the columns of $m\times n$ matrix $A$ consecutively into
the $mn\times1$ column vector.

For a linear combination of a multivariate Student's $t$-distribution
we have the following expression for its characteristic function.
\begin{lem}
\label{lem:MVt2Y1}Suppose that $X\sim t_{n,\nu}(\mu,\Sigma)$ and
$\beta\in\mathbb{R}^{n}$. Then the characteristic function $Y_{1}=\beta^{\prime}X$
is given by
\[
\varphi_{Y_{1}}(s)=e^{is\alpha}\phi_{\nu}(\omega s),
\]
where $\alpha=\beta^{\prime}\mu$ and $\omega=\sqrt{\beta^{\prime}\Sigma\beta}$.
\end{lem}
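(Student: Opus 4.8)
The plan is to reduce the statement to a one-dimensional problem and then invoke the known characteristic function of the univariate $t$-distribution, $\phi_{\nu}$, as recorded in \eqref{eq:Hurst}. First I would note that $Y_{1}=\beta^{\prime}X$ is a scalar random variable, so its characteristic function is $\varphi_{Y_{1}}(s)=\mathbb{E}[e^{isY_{1}}]=\mathbb{E}[e^{is\beta^{\prime}X}]$, which is exactly the characteristic function of the multivariate $X\sim t_{n,\nu}(\mu,\Sigma)$ evaluated at the argument $s\beta$. So it suffices to understand the joint characteristic function of the multivariate $t$ along a ray, or — more cheaply — to reduce directly to a univariate $t$-distribution using the affine-transformation property \eqref{eq:Y=00003DBX}.

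The cleanest route is the latter: take $B=\beta^{\prime}\in\mathbb{R}^{1\times n}$. As long as $\beta\neq0$, $B$ has full row rank, so by \eqref{eq:Y=00003DBX} we immediately get $Y_{1}=\beta^{\prime}X\sim t_{1,\nu}(\beta^{\prime}\mu,\;\beta^{\prime}\Sigma\beta)$, i.e.\ a univariate $t$-distribution with degrees of freedom $\nu$, location $\alpha=\beta^{\prime}\mu$, and scale parameter $\omega^{2}=\beta^{\prime}\Sigma\beta$ (note $\omega^{2}>0$ by positive-definiteness of $\Sigma$ and $\beta\neq0$). The degenerate case $\beta=0$ gives $Y_{1}\equiv0$ with characteristic function $1$, which is consistent with the formula read as $\phi_{\nu}(0)=1$ and $\alpha=0$, so it needs only a one-line remark. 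Then I would write $Y_{1}=\alpha+\omega Z$ where $Z\sim t_{1,\nu}(0,1)$ (this is just the standardization $Z=(Y_{1}-\alpha)/\omega$, which is $t_{1,\nu}(0,1)$ by \eqref{eq:Y=00003DBX} again, or directly from the density \eqref{eq:t-density}), so that $\varphi_{Y_{1}}(s)=e^{is\alpha}\,\mathbb{E}[e^{is\omega Z}]=e^{is\alpha}\phi_{\nu}(\omega s)$, where $\phi_{\nu}$ is the characteristic function of $t_{1,\nu}(0,1)$ given in \eqref{eq:Hurst}. This is precisely the claimed expression. For the Gaussian limit $\nu=\infty$, the same argument applies with the convention $\phi_{\infty}(s)=e^{-s^{2}/2}$ and the Gaussian density \eqref{eq:N-density}, so no separate treatment is required.

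There is essentially no hard step here; the content is entirely bookkeeping. The one point deserving care is that $\phi_{\nu}(\omega s)$ should be interpreted with $\omega=\sqrt{\beta^{\prime}\Sigma\beta}\ge 0$, and since $\phi_{\nu}$ depends on its argument only through $|{\cdot}|$ (see \eqref{eq:Hurst}), the sign ambiguity in any square root is immaterial — I would simply take the nonnegative root. I would present the proof in two or three sentences: invoke \eqref{eq:Y=00003DBX} with $B=\beta^{\prime}$ to identify the distribution of $Y_{1}$ as $t_{1,\nu}(\alpha,\omega^{2})$, standardize, and pull the location out of the expectation. If one prefers not to cite \eqref{eq:Y=00003DBX} directly for a $1\times n$ matrix, the alternative is the mixture representation $X=\mu+\sqrt{\nu/\xi}\,\Sigma^{1/2}Z_{0}$ with $Z_{0}\sim N_{n}(0,I_{n})$, $\xi\sim\mathrm{Gamma}(\nu/2,2)$ independent, giving $Y_{1}=\alpha+\sqrt{\nu/\xi}\,\omega\,(u^{\prime}Z_{0})$ with $u=\Sigma^{1/2}\beta/\omega$ a unit vector, so $u^{\prime}Z_{0}\sim N(0,1)$ and $Y_{1}\sim t_{1,\nu}(\alpha,\omega^{2})$; then conclude as before. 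Either way the result follows immediately.
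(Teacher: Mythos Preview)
Your proposal is correct and essentially matches the paper's proof. The paper uses the mixture representation $X=\mu+\tfrac{1}{\sqrt{W/\nu}}Z$ (your stated alternative route) to identify $Y_{1}=\alpha+\omega U$ with $U\sim t_{\nu}(0,1)$ and then concludes exactly as you do; your primary route via \eqref{eq:Y=00003DBX} is just a one-line shortcut to the same reduction.
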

\noindent \textbf{Proof of Lemma \ref{lem:MVt2Y1}.} A multivariate
$t$-distribution, $X\sim t_{n,\nu}(\mu,\Sigma)$, has the representation,
$X=\mu+\tfrac{1}{\sqrt{W/\nu}}Z$, where $Z\sim N\left(0,\Sigma\right)$
and $W\sim\mathrm{Gamma}(\tfrac{\nu}{2},2)$ are independent. From
the properties of normal distribution, we have $\beta^{\prime}Z\sim N(0,\omega^{2})$
where $\omega^{2}=\beta^{\prime}\Sigma\beta$, such that 
\[
Y_{1}=\alpha+\omega\tfrac{\tilde{Z}}{\sqrt{W/\nu}}=\alpha+\omega U,
\]
where $\alpha=\beta^{\prime}\mu$ and $\tilde{Z}\sim N\left(0,1\right)$
is independent of $W$, such that $U=\tfrac{\tilde{Z}}{\sqrt{W/\nu}}\sim t_{\nu}\left(0,1\right)$.
The result for the characteristic function now follows from
\[
\varphi_{Y_{1}}(s)=\mathbb{E}[\exp(is(\alpha+\omega U))]=\exp(is\alpha)\mathbb{E}[\exp(\omega sU))]=\exp(is\alpha)\phi_{\nu}(\omega s).
\]
This completes the proof.

\hfill{}$\square$

\noindent \textbf{Proof of Theorem \ref{thm:Convo-t}.} By the same
arguments as in the proof of Lemma \ref{lem:MVt2Y1}, we have that
$\beta_{k}^{\prime}X_{k}=\alpha_{k}+\omega_{k}U_{k}$, where $\alpha_{k}=\beta_{k}^{\prime}\mu_{k}$,
$\omega_{k}=\sqrt{\beta_{k}^{\prime}\Sigma_{k}\beta_{k}}$, and $U_{1},\ldots,U_{k}$
are independent with $t_{k}\sim t_{\nu_{k}}(0,1)$. The characteristic
function of $Y_{1}=\sum_{k=1}^{K}\beta_{k}^{\prime}X_{k}$ is therefore
given by
\begin{align*}
\varphi_{Y_{1}}(t) & =e^{it\alpha_{\bullet}}\prod_{k=1}^{K}\phi_{\nu_{k}}(\omega_{k}t),
\end{align*}
where $\alpha_{\bullet}=\sum_{k=1}^{K}\alpha_{k}$. The expression
for the density, $f_{Y_{1}}(y)$, and cdf, $F_{Y_{1}}(y)$, follows
by the Gil-Pelaez inversion theorem.

\hfill{}$\square$

\noindent \textbf{Proof of Theorem \ref{thm:Convo-t-kurtosis}.}
Let $X_{k}$ follows independent scaled multivariate $t$-distributions
with $X_{k}\sim t_{n_{k},\nu_{k}}\left(0,I_{k}\right)$ with $\nu_{k}>4$
for $k=1,2,\ldots,K$, then from Theorem \ref{thm:Convo-t}, the excess
kurtosis of the random variable $Y_{1}=\mu+\sum_{k=1}^{K}\beta_{k}^{\prime}X_{k}$
can be expressed by
\[
Y_{1}=\mu+\omega^{\prime}U,
\]
where $\omega$ is a $K\times1$ vector with $\omega_{k}=\sqrt{\frac{\nu_{k}}{\nu_{k}-2}\beta_{k}^{\prime}\beta_{k}}$.
The $U$ is a $K\times1$ random vector with $U_{k}\sim t_{\nu_{k}}(0,\sqrt{\frac{\nu_{k}-2}{\nu_{k}}})$
having an independent univariate scaled Student's $t$-distribution
with unit variance ${\rm var}\left(U_{k}\right)=1$. So the expectation
of $Y_{1}$ is $\mu$ and variance is $\omega^{\prime}\omega$.

Since $\nu_{k}>4$ for $k=1,\ldots,K$ the excess kurtosis of $Y_{1}$
is given by
\[
\kappa_{Y_{1}}=\mathbb{E}\left[\left(\frac{Y_{1}-\alpha_{\bullet}}{\sqrt{\omega^{\prime}\omega}}\right)^{4}\right]-3=\frac{\mathbb{E}\left[\left(\omega^{\prime}U\right)^{4}\right]}{\left(\omega^{\prime}\omega\right)^{2}}-3=\frac{{\rm var}\left(U^{\prime}AU\right)}{\left(\omega^{\prime}\omega\right)^{2}}-3,
\]
where we defined $A\equiv\omega\omega^{\prime}$ in the last equality. 

The excess kurtosis of $U_{k}$ is $\kappa_{X_{k}}=\frac{6}{\nu_{k}-4}$,
and from \citet[p.11]{Seber2003} we have that the variance of a quadratic
forms of a vector, $U$, with independent elements, is
\begin{align*}
{\rm var}\left(U^{\prime}AU\right) & =\sum_{k=1}^{K}A_{kk}^{2}\mathbb{E}\left(U_{k}^{4}\right)+\left[\left[{\rm tr}(A)\right]^{2}-\sum_{k=1}^{K}3A_{kk}^{2}+2{\rm tr}\left(A^{2}\right)\right]\\
 & =\sum_{k=1}^{K}A_{kk}^{2}\kappa_{X_{k}}+\left[{\rm tr}(A)\right]^{2}+2{\rm tr}\left(A^{2}\right)\\
 & =\sum_{k=1}^{K}\omega_{k}^{4}\kappa_{X_{k}}+3(\omega^{\prime}\omega)^{2},
\end{align*}
where we used that $A_{kk}=\omega_{k}^{2}$, ${\rm tr}(A)=\omega^{\prime}\omega$,
and ${\rm tr}(A^{2})=\left(\omega^{\prime}\omega\right)^{2}$. Hence,
the excess kurtosis of $Y_{1}$ is given by
\[
\kappa_{Y_{1}}=\frac{{\rm var}\left(U^{\prime}AU\right)}{\left(\omega^{\prime}\omega\right)^{2}}-3=\sum_{k=1}^{K}\frac{\omega_{k}^{4}}{\left(\omega^{\prime}\omega\right)^{2}}\kappa_{X_{k}}.
\]
This completes the proof.

\hfill{}$\square$

\noindent \textbf{Proof of Proposition \ref{Prop:Y=00003DZ+X_density}.}
First note that for $s\geq0$ we have $\varphi_{Y}(s)=\varphi_{Z}(s)\varphi_{X}(s)=e^{-s^{2}/2}e^{-s}$,
such that $f_{Y}(y)=\frac{1}{\pi}\int_{0}^{\infty}\cos(sy)e^{-s^{2}/2-s}\mathrm{d}s$
follows by Euler's formula 
\[
\mathrm{Re}[e^{-isy}]=\mathrm{Re}[\cos(-sy)+i\sin(-sy)]=\cos(-sy)=\cos(sy).
\]
Next, we factorize the integral into two terms,
\begin{align*}
\frac{1}{\pi}\int_{0}^{\infty}\cos(sy)e^{-\frac{s^{2}}{2}-s}\mathrm{d}s & =\frac{1}{2\pi}\int_{0}^{\infty}\left(e^{isy}+e^{-isy}\right)e^{-\frac{s^{2}}{2}-s}\mathrm{d}s\\
 & =\underbrace{\frac{1}{2\pi}\int_{0}^{\infty}e^{isy}e^{-\frac{s^{2}}{2}-s}\mathrm{d}s}_{I_{1}}+\underbrace{\frac{1}{2\pi}\int_{0}^{\infty}e^{-isy}e^{-\frac{s^{2}}{2}-s}\mathrm{d}s}_{I_{2}}.
\end{align*}
Using $isy-\frac{s^{2}}{2}-s=\tfrac{1}{2}(1-iy)^{2}-\tfrac{1}{2}(s+1-iy)^{2}$
we rewrite the first term as
\begin{eqnarray*}
I_{1} & = & \frac{e^{\tfrac{1}{2}(1-iy)^{2}}}{2\pi}\int_{0}^{\infty}e^{-\tfrac{1}{2}(s+1-iy)^{2}}\mathrm{d}s=\frac{e^{\tfrac{1}{2}(1-iy)^{2}}}{2\pi}\int_{\frac{1-iy}{\sqrt{2}}}^{\infty}e^{-u^{2}}\sqrt{2}\mathrm{d}u\\
 & = & \frac{e^{\tfrac{1}{2}(1-iy)^{2}}}{2\sqrt{2\pi}}\frac{2}{\sqrt{\pi}}\int_{\frac{1-iy}{\sqrt{2}}}^{\infty}e^{-u^{2}}\mathrm{d}u=\frac{e^{\tfrac{1}{2}(1-iy)^{2}}}{2\sqrt{2\pi}}\left[1-{\rm erf}\left(\tfrac{1-iy}{\sqrt{2}}\right)\right],
\end{eqnarray*}
where we applied the substitution $u=(s+1-iy)/\sqrt{2}$. 

For the second term, we use the related identity, $-isy-\frac{s^{2}}{2}-s=\tfrac{1}{2}(1+iy)^{2}-\tfrac{1}{2}(s+1+iy))^{2}$,
and the substitution, $u=(s+1+iy)/\sqrt{2}$, to rewrite it as,
\[
I_{2}=\frac{e^{\tfrac{1}{2}(1+iy)^{2}}}{2\pi}\int_{0}^{\infty}e^{-\tfrac{1}{2}(s+1+iy)^{2}}\mathrm{d}s=\frac{e^{\tfrac{1}{2}(1+iy)^{2}}}{2\sqrt{2\pi}}\left[1-{\rm erf}\left(\tfrac{1+iy}{\sqrt{2}}\right)\right].
\]
Combining the results we arrive at
\begin{align*}
f_{Y}\left(y\right) & =\frac{1}{2\sqrt{2\pi}}\left[e^{\frac{(1-iy)^{2}}{2}}{\rm erfc}\left(\tfrac{1-iy}{\sqrt{2}}\right)+e^{\frac{(1+iy)^{2}}{2}}{\rm erfc}\left(\tfrac{1+iy}{\sqrt{2}}\right)\right]\\
 & =\frac{1}{\sqrt{2\pi}}{\rm Re}\left[e^{\frac{(1+\mathrm{i}y)^{2}}{2}}{\rm erfc}\left(\tfrac{1+iy}{\sqrt{2}}\right)\right].
\end{align*}
This completes the proof.

\hfill{}$\square$

\noindent \textbf{Proof of Theorem \ref{thm:Identify}.} We focus
on the case where $\nu_{k}<\infty$ for all $k$ and take the group
assignments $n_{1},n_{2},\ldots,n_{K}$ as given. Note that we can
express $Y=\mu+\Xi X$ as
\[
Y=\mu+\sum_{k=1}^{K}\Xi_{k}X_{k},\quad{\rm where}\ \Xi=\left[\Xi_{1},\ldots,\Xi_{K}\right],\ \Xi_{k}\in\mathbb{R}^{n\times n_{k}}.
\]
As $X_{k}\sim t_{n_{k},\nu_{k}}(0,I_{k})$, $k=1,\ldots,K$, are independent,
the characteristic function of $Y$ is
\begin{align*}
\varphi_{Y}(r) & =\exp\left(ir^{\prime}\mu\right)\prod_{k=1}^{K}\frac{K_{\frac{\nu_{k}}{2}}(\sqrt{\nu_{k}r^{\prime}\Omega_{k}r})\left(\nu_{k}r^{\prime}\Omega_{k}r\right)^{\frac{\nu_{k}}{4}}}{\Gamma\left(\frac{\nu_{k}}{2}\right)2^{\frac{\nu_{k}}{2}-1}},\quad{\rm where}\quad r\in\mathbb{R}^{n},
\end{align*}
where $\Omega_{k}=\Xi_{k}\Xi_{k}^{\prime}$ for $k=1,\ldots,K$. It
is evident that only $\mu$ and the pairs, $\{\nu_{k},\Omega_{k}\}$,
$k=1,\ldots,K$, matter for the distribution. Moreover, because the
multivariate-$t$ distribution is not a stable distribution, except
for $\nu=1$ (Cauchy) and $\nu=\infty$ (Gaussian), we can identify
$\{n_{k},\nu_{k},\Omega_{k}\}$, $k=1,\ldots,K$, along with $\mu$.

\hfill{}$\square$
\begin{lem}
\label{lem:B=00003DBP}If $P$ is orthonormal, $P^{\prime}P=I$, and
both $B$ and $BP$ are symmetric and positive definite, then $B=BP$.
\end{lem}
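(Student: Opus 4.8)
The plan is to show that $B$ and $BP$ are two symmetric positive‑definite square roots of one and the same matrix, and then to invoke uniqueness of such square roots. First I would extract the algebraic content of the two symmetry hypotheses. Since $B'=B$ and $(BP)'=BP$, transposing the second identity gives $P'B'=BP$, hence $P'B=BP$. Left‑multiplying by $P$ and using $PP'=I$ (legitimate because $P$ is a square matrix with $P'P=I$, so it is orthogonal and $PP'=I$ too) yields $B=PBP$. Right‑multiplying this by $P'=P^{-1}$ then gives the convenient non‑commuting identity $PB=BP'$.

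Next I would square $BP$: using $PB=BP'$ and $P'P=I$,
\[
(BP)^2 = B(PB)P = B(BP')P = B^2(P'P) = B^2 .
\]
Since $B$ is symmetric and positive definite, $B^2$ is symmetric and positive definite, and $B$ is by construction a symmetric positive‑definite square root of $B^2$. Likewise $BP$, which is symmetric and positive definite by hypothesis, is a symmetric positive‑definite square root of $(BP)^2=B^2$. Appealing to the uniqueness of the symmetric positive‑definite square root of a symmetric positive‑definite matrix, we conclude $B=BP$, which is exactly the assertion.

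I do not expect a genuine obstacle here; the two points that deserve a line of care are (i) that the positive‑definiteness of $BP$ — not merely its symmetry — is what licenses the square‑root uniqueness step, and (ii) keeping the non‑commutative manipulations straight when passing from $B=PBP$ to $PB=BP'$ and then squaring. An equivalent and perhaps even shorter route avoids squaring: set $C:=BP$, which is symmetric positive definite, hence invertible, and write $P=B^{-1}C$; then $P'P=I$ becomes $CB^{-2}C=I$, i.e. $B^{-2}=C^{-2}$, so $B^2=C^2$ and again $B=C=BP$ by uniqueness of the symmetric positive‑definite square root. Either presentation is short enough that I would simply include the squaring version in the paper.
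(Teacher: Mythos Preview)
Your proof is correct and follows essentially the same route as the paper: both arguments first derive $P'B=BP$ (and its consequences) to show $(BP)^2=B^2$, and then conclude $B=BP$ from the uniqueness of the symmetric positive-definite square root. The only cosmetic difference is that the paper writes out the uniqueness step via eigendecompositions of $B$ and $BP$, whereas you invoke the square-root uniqueness lemma directly; your presentation is, if anything, the tidier of the two.
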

\begin{proof}
By the symmetry we have $P^{\prime}B=BP$ and
\[
B=PBP=P^{\prime}BP^{\prime}.
\]
So it follows that
\[
(B)^{2}=P^{\prime}BP^{\prime}PBP=P^{\prime}BBP=(BP)^{2},
\]
From the eigendecomposition of the two symmetric matrices, $B=Q\Lambda Q^{\prime}$
and $BP=N\varUpsilon N^{\prime}$, we now have 
\[
Q\Lambda Q^{\prime}Q\Lambda Q^{\prime}=Q\Lambda^{2}Q^{\prime}=N\varUpsilon^{2}N^{\prime},
\]
such that$\Lambda^{2}=Q^{\prime}N\varUpsilon^{2}N^{\prime}Q$, from
which we can conclude that $\Lambda$ and $\varUpsilon$ are the same
diagonal matrices, aside from a reordering of the elements. Hence
$Q^{\prime}N$ is a permutation matrix. 
\[
B=Q\Lambda Q^{\prime}=Q[Q^{\prime}N\varUpsilon N^{\prime}Q]Q^{\prime}=Q[Q^{\prime}BPQ]Q^{\prime}=BP.
\]
\[
B=BP\Leftrightarrow B(I-P)=0,
\]
which implies $P=I$ because $B$ is pd.
\end{proof}
\noindent \textbf{Proof of Theorem \ref{Thm:XiExists}.} Since $\Omega_{k}$
is psd with rank $n_{k}$, we can express it as $\Omega_{k}=\tilde{\Xi}_{k}\tilde{\Xi}_{k}^{\prime}$
for some matrix $\tilde{\Xi}_{k}\in\mathbb{R}^{n\times n_{k}}$ with
full column rank, $n_{k}$. Next, let $P_{kk}\in\mathbb{R}^{n_{k}\times n_{k}}$
be orthonormal, $k=1,\ldots,K$, and define
\begin{equation}
\Xi=\tilde{\Xi}P,\qquad\text{where}\qquad P=\left(\begin{array}{cccc}
P_{11} & 0 & \cdots & 0\\
0 & P_{22} & \ddots & \vdots\\
\vdots & \ddots & \ddots & 0\\
0 & \cdots & 0 & P_{KK}
\end{array}\right)\in\mathbb{R}^{n\times n}.\label{eq:BigP}
\end{equation}
It follow that $\Xi_{k}\Xi_{k}^{\prime}=\tilde{\Xi}_{k}P_{kk}P_{kk}^{\prime}\tilde{\Xi}_{k}^{\prime}=\Omega_{k}$.
If $\tilde{\Xi}_{kk}$ is invertible, we can set 
\[
P_{kk}=\tilde{\Xi}_{kk}^{\prime}(\tilde{\Xi}_{kk}\tilde{\Xi}_{kk}^{\prime})^{-\frac{1}{2}},
\]
and it is easy to verify that $P_{kk}^{\prime}P_{kk}=I_{n_{k}}$.
Moreover, it follows that $\Xi_{kk}=\tilde{\Xi}_{kk}P_{kk}=(\tilde{\Xi}_{kk}\tilde{\Xi}_{kk}^{\prime})^{\frac{1}{2}}$
is a symmetric positive definite matrix. More generally, if $\tilde{\Xi}_{kk}$
is not invertible, but has (reduced) rank $r$, then we the singular
value decomposition gives us,
\[
\tilde{\Xi}_{kk}=V\Lambda_{k}U^{\prime},\qquad\text{for some}\quad V,U\in\mathbb{R}^{n_{k}\times r},\quad\text{with}\quad V^{\prime}V=U^{\prime}U=I_{r},
\]
where $\Lambda_{k}$ is an $r\times r$ diagonal matrix with the positive
eigenvalues of $\tilde{\Xi}_{kk}$ along the diagonal. For this case,
we define the auxiliary full-rank matrix,
\[
\check{\Xi}_{kk}=\tilde{\Xi}_{kk}+V_{\bot}U_{\bot}^{\prime},
\]
where $V_{\bot}\in\mathbb{R}^{n_{k}\times(n_{k}-r)}$ is orthogonal
to $V$, $V_{\bot}^{\prime}V=0$ and normalized, $V_{\bot}^{\prime}V_{\bot}=I_{n_{k}-r}$
, and $U_{\bot}$ is defined similarly. We now set 
\begin{eqnarray*}
P_{kk} & = & \check{\Xi}_{kk}^{\prime}(\check{\Xi}_{kk}\check{\Xi}_{kk}^{\prime})^{-\frac{1}{2}},\\
 & = & (V\Lambda_{k}U^{\prime}+V_{\bot}U_{\bot}^{\prime})^{\prime}[(V\Lambda_{k}U^{\prime}+V_{\bot}U_{\bot}^{\prime})(V\Lambda_{k}U^{\prime}+V_{\bot}U_{\bot}^{\prime})^{\prime}]^{-1/2}\\
 & = & (V\Lambda_{k}U^{\prime}+V_{\bot}U_{\bot}^{\prime})^{\prime}[(V\Lambda_{k}U^{\prime}U\Lambda_{k}V^{\prime}+V_{\bot}V_{\bot}^{\prime}]^{-1/2}\\
 & = & (U\Lambda_{k}V^{\prime}+U_{\bot}V_{\bot}^{\prime})(V\Lambda_{k}^{-1}V^{\prime}+V_{\bot}V_{\bot}^{\prime})\\
 & = & UV^{\prime}+U_{\bot}V_{\bot}^{\prime},
\end{eqnarray*}
and find that
\[
\Xi_{kk}=\tilde{\Xi}_{kk}P_{kk}=V\Lambda_{k}U^{\prime}(UV^{\prime}+U_{\bot}V_{\bot}^{\prime})=V\Lambda_{k}V^{\prime}=V\Lambda_{k}^{1/2}U^{\prime}U\Lambda_{k}^{1/2}V^{\prime}=(\tilde{\Xi}_{kk}\tilde{\Xi}_{kk}^{\prime})^{\frac{1}{2}},
\]
is symmetric and positive semidefinite with the same rank as $\tilde{\Xi}_{kk}$.
This proves the existence of a $\Xi$ matrix with symmetric positive
semidefinite diagonal blocks, $\Xi_{11},\ldots,\Xi_{KK}$. If $\Xi_{kk}$
is also invertible (implying it is symmetric and positive definite)
then it follows that $\Xi_{kk}$ is unique by Lemma \ref{lem:B=00003DBP}.
To see this, consider $\Xi P$ where $P$ has the structure in (\ref{eq:BigP}),
its $k$-th diagonal block is given by $\Xi_{kk}P_{kk}$ and for it
to be symmetric and pd, it follows Lemma \ref{lem:B=00003DBP} that
$P_{kk}=I_{n_{k}}$. Thus, if $\Xi_{11},\ldots,\Xi_{KK}$ are all
invertible, then $P$ must be $I_{n}$, which implies that $\Xi$
is unique. 

\hfill{}$\square$

\noindent \textbf{Proof of Theorem \ref{thm:ScoreHess} (Score vectors).}
We have $X_{k}=e_{k}^{\prime}X=e_{k}^{\prime}A\left(Y-\mu\right),$where
$X_{k}\sim t_{\nu_{k}}\left(0,I_{n_{k}}\right)$, $k=1,\ldots,K$
are independent, and the log-likelihood function is therefore given
by
\begin{align*}
\ell\left(Y\right) & =-\tfrac{1}{2}\log|\tilde{\Xi}\tilde{\Xi}^{\prime}|+\sum_{k=1}^{K}c_{k}-\tfrac{\nu_{k}+n_{k}}{2}\log\left(1+\tfrac{1}{\nu_{k}}X_{k}^{\prime}X_{k}\right),
\end{align*}
where $c_{k}=\log\Gamma\left(\frac{\nu_{k}+n_{k}}{2}\right)-\log\Gamma\left(\frac{\nu_{k}}{2}\right)-\frac{n_{k}}{2}\log\left(\nu_{k}\pi\right)$.
We find that
\begin{align*}
\frac{\partial\left(X_{k}^{\prime}X_{k}\right)}{\partial{\rm vec}(\tilde{\Xi})^{\prime}} & =\frac{\partial\left(X_{k}^{\prime}X_{k}\right)}{\partial X_{k}^{\prime}}\frac{\partial{\rm vec}(e_{k}^{\prime}A\left(Y-\mu\right))}{\partial{\rm vec}(A)^{\prime}}\frac{\partial{\rm vec}(A)}{\partial{\rm vec}(\tilde{\Xi})^{\prime}}\\
 & =-2X_{k}^{\prime}\left(\left(Y-\mu\right)^{\prime}\otimes e_{k}^{\prime}\right)\left(A^{\prime}\otimes A\right)\\
 & =-2X_{k}^{\prime}\left(X^{\prime}\otimes e_{k}^{\prime}A\right)\\
 & =-2{\rm vec}\left(A^{\prime}e_{k}X_{k}X^{\prime}\right)^{\prime},
\end{align*}
and using the formulae
\[
\frac{\partial{\rm vec}\left(\tilde{\Xi}\tilde{\Xi}^{\prime}\right)}{\partial{\rm vec}(\tilde{\Xi})^{\prime}}=\left(I_{n^{2}}+K_{n}\right)\left(\Xi^{\prime}\otimes I_{n}\right),\quad\text{and}\quad\frac{\partial\log|M|}{\partial{\rm vec}(M)^{\prime}}={\rm vec}\left(M^{-1}\right)^{\prime},
\]
for any symmetric matrix $M$ with positive determinant, we obtain
\begin{align*}
-\tfrac{1}{2}\frac{\partial\log|\tilde{\Xi}\tilde{\Xi}^{\prime}|}{\partial{\rm vec}(\tilde{\Xi})^{\prime}} & =-\tfrac{1}{2}{\rm vec}\left(AA^{\prime}\right)^{\prime}\left(I_{n^{2}}+K_{n}\right)\left(\tilde{\Xi}^{\prime}\otimes I_{n}\right)=-{\rm vec}\left(A^{\prime}\right)^{\prime}.
\end{align*}
Define $W_{k}=\left(v_{k}+n_{k}\right)/\left(v_{k}+X_{k}^{\prime}X_{k}\right)$,
then we have the form of the score given by
\begin{align*}
\nabla_{\tilde{\Xi}}^{\prime}=\frac{\partial\ell}{\partial{\rm vec}(\tilde{\Xi})^{\prime}} & =\sum_{k=1}^{K}W_{k}{\rm vec}\left(A^{\prime}e_{k}X_{k}X^{\prime}\right)^{\prime}-{\rm vec}\left(A^{\prime}\right)^{\prime}.
\end{align*}

The expressions for $\nabla_{\nu_{k}}=\frac{\partial\ell}{\partial\nu_{k}}$
and $\nabla_{\mu}^{\prime}=\frac{\partial\ell}{\partial\mu^{\prime}}$
follows directly.

\hfill{}$\square$

\noindent \textbf{Proof of Theorem \ref{thm:ScoreHess} (Hessian
matrix).} First we note that 
\[
\frac{\partial X_{k}}{\partial\mu^{\prime}}=-e_{k}^{\prime}A,\quad\frac{\partial X_{k}}{\partial{\rm vec}(\tilde{\Xi})^{\prime}}=-X^{\prime}\otimes e_{k}^{\prime}A,\quad\frac{\partial X_{k}}{\partial\nu_{l}}=0,
\]
and
\[
\frac{\partial W_{k}}{\partial X_{k}^{\prime}}=\frac{\partial W_{k}}{\partial\left(X_{k}^{\prime}X_{k}\right)}\frac{\partial\left(X_{k}^{\prime}X_{k}\right)}{\partial X_{k}^{\prime}}=-\tfrac{\nu_{k}+n_{k}}{\left(\nu_{k}+Z_{k}^{\prime}Z_{k}\right)^{2}}2X_{k}^{\prime}=-\tfrac{2}{\nu_{k}+n_{k}}W_{k}^{2}X_{k}^{\prime},
\]
such that
\begin{align*}
\frac{\partial W_{k}}{\partial\mu^{\prime}} & =\tfrac{2}{\nu_{k}+n_{k}}W_{k}^{2}X_{k}^{\prime}e_{k}^{\prime}A,\\
\frac{\partial W_{k}}{\partial{\rm vec}(\tilde{\Xi})^{\prime}} & =\tfrac{2}{\nu_{k}+n_{k}}W_{k}^{2}{\rm vec}\left(A^{\prime}e_{k}X_{k}X^{\prime}\right)^{\prime},\\
\frac{\partial W_{k}}{\partial\nu_{k}} & =\tfrac{X_{k}^{\prime}X_{k}-n_{k}}{\left(v_{k}+X_{k}^{\prime}X_{k}\right)^{2}}=\tfrac{1}{\nu_{k}+n_{k}}\left(W_{k}-W_{k}^{2}\right).
\end{align*}

We proceed to derive the expressions for the three terms, $\nabla_{\mu\mu^{\prime}}$,
$\nabla_{\mu\Xi^{\prime}}$, and $\nabla_{\mu\nu_{k}}$, derived from
$\nabla_{\mu}=\sum_{k=1}^{K}W_{k}A^{\prime}e_{k}X_{k}$. The first
is given by
\begin{align*}
\nabla_{\mu\mu^{\prime}} & =\sum_{k=1}^{K}A^{\prime}e_{k}X_{k}\frac{\partial W_{k}}{\partial\mu^{\prime}}+W_{k}A^{\prime}e_{k}\frac{\partial Z_{k}}{\partial\mu^{\prime}}\\
 & =\sum_{k=1}^{K}\tfrac{2W_{k}^{2}}{\nu_{k}+n_{k}}A^{\prime}e_{k}X_{k}X_{k}^{\prime}e_{k}^{\prime}A-W_{k}A^{\prime}J_{k}A.
\end{align*}
For the next term we make use of an alternative expression for $\nabla_{\mu}$
given by:
\[
\nabla_{\mu}=\sum_{k=1}^{K}W_{k}A^{\prime}e_{k}X_{k}=\sum_{k=1}^{K}W_{k}\left(X_{k}^{\prime}e_{k}^{\prime}\otimes I_{n}\right){\rm vec}\left(A^{\prime}\right),
\]
and find that 
\begin{align*}
\nabla_{\mu\tilde{\Xi}^{\prime}} & =\sum_{k=1}^{K}W_{k}\left(X_{k}^{\prime}e_{k}^{\prime}\otimes I_{n}\right)\frac{\partial{\rm vec}\left(A^{\prime}\right)}{\partial{\rm vec}(\tilde{\Xi})^{\prime}}+A^{\prime}e_{k}X_{k}\frac{\partial W_{k}}{\partial{\rm vec}(\tilde{\Xi})^{\prime}}+W_{k}A^{\prime}e_{k}\frac{\partial X_{k}}{\partial{\rm vec}(\tilde{\Xi})^{\prime}}\\
 & =\sum_{k=1}^{K}W_{k}\left(X_{k}^{\prime}e_{k}^{\prime}\otimes I_{n}\right)K_{n}\ensuremath{\left(A^{\prime}\otimes A\right)}+\tfrac{2W_{k}^{2}}{\nu_{k}+n_{k}}A^{\prime}e_{k}X_{k}{\rm vec}\left(A^{\prime}e_{k}X_{k}X^{\prime}\right)^{\prime}-W_{k}A^{\prime}e_{k}\left(X^{\prime}\otimes e_{k}^{\prime}A\right)\\
 & =\sum_{k=1}^{K}W_{k}\left(AX_{k}^{\prime}e_{k}^{\prime}\otimes A^{\prime}\right)K_{n}+\tfrac{2W_{k}^{2}}{\nu_{k}+n_{k}}A^{\prime}e_{k}X_{k}{\rm vec}\left(A^{\prime}e_{k}X_{k}X^{\prime}\right)^{\prime}-W_{k}A^{\prime}\left(X^{\prime}\otimes J_{k}A\right).
\end{align*}
For the third term we have
\[
\nabla_{\mu\nu_{k}}=\sum_{k=1}^{K}A^{\prime}e_{k}X_{k}\frac{\partial W_{k}}{\partial\nu_{k}}=\sum_{k=1}^{K}\tfrac{1}{\nu_{k}+n_{k}}A^{\prime}e_{k}X_{k}\left(W_{k}-W_{k}^{2}\right).
\]

Next we obtain the expressions for $\nabla_{\tilde{\Xi}\nu_{k}}$
and $\nabla_{\tilde{\Xi}\tilde{\Xi}^{\prime}}$, starting from $\nabla_{\tilde{\Xi}}=\sum_{k=1}^{K}W_{k}{\rm vec}\left(A^{\prime}e_{k}X_{k}X^{\prime}\right)-{\rm vec}\left(A^{\prime}\right)$.
The first expression is simply
\[
\nabla_{\tilde{\Xi}\nu_{k}}=\tfrac{1}{\nu_{k}+n_{k}}\left(W_{k}-W_{k}^{2}\right){\rm vec}\left(A^{\prime}e_{k}X_{k}X^{\prime}\right),
\]
and the second takes the form,
\[
\nabla_{\tilde{\Xi}\tilde{\Xi}^{\prime}}=\sum_{k=1}^{K}{\rm vec}\left(A^{\prime}e_{k}X_{k}X^{\prime}\right)\frac{\partial W_{k}}{\partial{\rm vec}(\tilde{\Xi})^{\prime}}+W_{k}\frac{\partial{\rm vec}\left(A^{\prime}e_{k}X_{k}X^{\prime}\right)}{\partial{\rm vec}(\tilde{\Xi})^{\prime}}-\frac{\partial{\rm vec}\left(A^{\prime}\right)}{\partial{\rm vec}(\tilde{\Xi})^{\prime}}
\]
where
\[
\frac{\partial{\rm vec}\left(A^{\prime}\right)}{\partial{\rm vec}(\tilde{\Xi})^{\prime}}=K_{n}\ensuremath{\left(A^{\prime}\otimes A\right)}=\ensuremath{\left(A\otimes A^{\prime}\right)}K_{n}=\left(I_{n}\otimes A^{\prime}\right)K_{n}\left(I_{n}\otimes A\right),
\]
\begin{align*}
{\rm vec}\left(A^{\prime}e_{k}X_{k}X^{\prime}\right)\frac{\partial W_{k}}{\partial{\rm vec}(\tilde{\Xi})^{\prime}} & =\tfrac{2W_{k}^{2}}{\nu_{k}+n_{k}}{\rm vec}\left(A^{\prime}e_{k}X_{k}X^{\prime}\right){\rm vec}\left(A^{\prime}e_{k}X_{k}X^{\prime}\right)^{\prime},
\end{align*}
and
\begin{equation}
\frac{\partial{\rm vec}\left(A^{\prime}e_{k}X_{k}X^{\prime}\right)}{\partial{\rm vec}(\tilde{\Xi})^{\prime}}=\ensuremath{\left(XX_{k}^{\prime}\otimes I_{n}\right)\frac{\partial{\rm vec}\left(A^{\prime}e_{k}\right)}{\partial{\rm vec}\left(\Xi\right)^{\prime}}+\left(I_{n}\otimes A^{\prime}e_{k}\right)\frac{\partial{\rm vec}\left(X_{k}X^{\prime}\right)}{\partial{\rm vec}\left(\Xi\right)^{\prime}}}.\label{eq:daex-dXi}
\end{equation}
Next,
\[
\frac{\partial{\rm vec}\left(A^{\prime}e_{k}\right)}{\partial{\rm vec}(\tilde{\Xi})^{\prime}}=\left(e_{k}^{\prime}\otimes I_{n}\right)\frac{\partial{\rm vec}\left(A^{\prime}\right)}{\partial{\rm vec}(\tilde{\Xi})^{\prime}}=\left(e_{k}^{\prime}\otimes I_{n}\right)\ensuremath{\left(A\otimes A^{\prime}\right)}K_{n}=\left(e_{k}^{\prime}A\otimes A^{\prime}\right)K_{n},
\]
shows that the first term in (\ref{eq:daex-dXi}) can be expressed
as
\[
\left(XX_{k}^{\prime}\otimes I_{n}\right)\frac{\partial{\rm vec}\left(A^{\prime}e_{k}\right)}{\partial{\rm vec}(\tilde{\Xi})^{\prime}}=\left(XX_{k}^{\prime}\otimes I_{n}\right)\left(e_{k}^{\prime}A\otimes A^{\prime}\right)K_{n}=\left(XX_{k}^{\prime}e_{k}^{\prime}A\otimes A^{\prime}\right)K_{n},
\]
and for the second term in (\ref{eq:daex-dXi}) we have
\[
\frac{\partial{\rm vec}\left(X_{k}X^{\prime}\right)}{\partial{\rm vec}(\tilde{\Xi})^{\prime}}=\left(X\otimes I_{n_{k}}\right)\frac{\partial{\rm vec}\left(X_{k}\right)}{\partial{\rm vec}(\tilde{\Xi})^{\prime}}+\left(I_{n}\otimes X_{k}\right)\frac{\partial{\rm vec}\left(X^{\prime}\right)}{\partial{\rm vec}(\tilde{\Xi})^{\prime}},
\]
where
\[
\frac{\partial{\rm vec}\left(X_{k}\right)}{\partial{\rm vec}(\tilde{\Xi})^{\prime}}=-X^{\prime}\otimes e_{k}^{\prime}A,
\]
and
\begin{align*}
\frac{\partial{\rm vec}\left(X^{\prime}\right)}{\partial{\rm vec}(\tilde{\Xi})^{\prime}} & =\frac{\partial{\rm vec}\left(\ensuremath{\sum_{k=1}^{K}e_{k}X_{k}}\right)}{\partial{\rm vec}(\tilde{\Xi})^{\prime}}=\sum_{k=1}^{K}\frac{\partial{\rm vec}\left(e_{k}X_{k}\right)}{\partial{\rm vec}(\tilde{\Xi})^{\prime}}\\
 & =\sum_{k=1}^{K}\frac{\partial\left(e_{k}X_{k}\right)}{\partial{\rm vec}(\tilde{\Xi})^{\prime}}=\sum_{k=1}^{K}e_{k}\frac{\partial{\rm vec}\left(\ensuremath{X_{k}}\right)}{\partial{\rm vec}(\tilde{\Xi})^{\prime}}\\
 & =-\sum_{k=1}^{K}e_{k}\left(X^{\prime}\otimes e_{k}^{\prime}A\right)=-\sum_{k=1}^{K}\left(X^{\prime}\otimes J_{k}A\right)\\
 & =-\left(X^{\prime}\otimes A\right).
\end{align*}
So we have
\begin{align*}
\frac{\partial{\rm vec}\left(X_{k}X^{\prime}\right)}{\partial{\rm vec}(\tilde{\Xi})^{\prime}} & =-\left(X\otimes I_{n_{k}}\right)\left(X^{\prime}\otimes e_{k}^{\prime}A\right)-\left(I_{n}\otimes X_{k}\right)\left(X^{\prime}\otimes A\right)\\
 & =-\left(XX^{\prime}\otimes e_{k}^{\prime}A\right)-\left(I_{n}\otimes X_{k}\right)\left(A\otimes X^{\prime}\right)K_{n}\\
 & =-\left(XX^{\prime}\otimes e_{k}^{\prime}A\right)-\left(A\otimes X_{k}X^{\prime}\right)K_{n}
\end{align*}
where we use that $\ensuremath{\left(Y\otimes x^{\prime}\right)K_{sm}=x^{\prime}\otimes Y}$
when $x$ is a vector.

Combined, we have
\begin{align*}
\frac{\partial{\rm vec}\left(A^{\prime}e_{k}X_{k}X^{\prime}\right)}{\partial{\rm vec}(\tilde{\Xi})^{\prime}} & =\ensuremath{\left(XX_{k}^{\prime}e_{k}^{\prime}A\otimes A^{\prime}\right)K_{n}-\left(I_{n}\otimes A^{\prime}e_{k}\right)\left(\left(XX^{\prime}\otimes e_{k}^{\prime}A\right)+\left(A\otimes X_{k}X^{\prime}\right)K_{n}\right)}\\
 & =\left(XX_{k}^{\prime}e_{k}^{\prime}A\otimes A^{\prime}\right)K_{n}-\left(XX^{\prime}\otimes A^{\prime}J_{k}A\right)-\left(A\otimes A^{\prime}e_{k}X_{k}X^{\prime}\right)K_{n},
\end{align*}
such that 
\begin{align*}
\nabla_{\tilde{\Xi}\tilde{\Xi}^{\prime}} & =-\ensuremath{\left(A\otimes A^{\prime}\right)}K_{n}+\sum_{k=1}^{K}\tfrac{2W_{k}^{2}}{\nu_{k}+n_{k}}{\rm vec}\left(A^{\prime}e_{k}X_{k}X^{\prime}\right){\rm vec}\left(A^{\prime}e_{k}X_{k}X^{\prime}\right)^{\prime}\\
 & \quad+\sum_{k=1}^{K}W_{k}\left[\left(XX_{k}^{\prime}e_{k}^{\prime}A\otimes A^{\prime}\right)K_{n}-\left(XX^{\prime}\otimes A^{\prime}J_{k}A\right)-\left(A\otimes A^{\prime}e_{k}X_{k}X^{\prime}\right)K_{n}\right].
\end{align*}

Finally, we derive $\nabla_{\nu\nu^{\prime}}$ starting from 
\[
\nabla_{\nu_{k}}=\tfrac{1}{2}\left[\psi\left(\tfrac{v_{k}+n_{k}}{2}\right)-\psi\left(\tfrac{v_{k}}{2}\right)+1-W_{k}-\log\left(1+\tfrac{X_{k}^{\prime}X_{k}}{\nu_{k}}\right)\right].
\]
For $k\neq l$ we have $\nabla_{\nu_{k}\nu_{l}}=0$, and for $k=l$
we have
\begin{align*}
\nabla_{\nu_{k}\nu_{k}} & =\tfrac{1}{2}\left[\tfrac{1}{2}\psi^{\prime}\left(\tfrac{v_{k}+n_{k}}{2}\right)-\tfrac{1}{2}\psi^{\prime}\left(\tfrac{v_{k}}{2}\right)+\tfrac{1}{\nu_{k}+n_{k}}\left(W_{k}^{2}-W_{k}\right)+\tfrac{1}{\nu_{k}}\left[1-\tfrac{\nu_{k}}{\nu_{k}+X_{k}^{\prime}X_{k}}\right]\right]\\
 & =\tfrac{1}{4}\psi^{\prime}\left(\tfrac{v_{k}+n_{k}}{2}\right)-\tfrac{1}{4}\psi^{\prime}\left(\tfrac{v_{k}}{2}\right)+\tfrac{1}{2\nu_{k}}+\tfrac{1}{2}\tfrac{1}{\nu_{k}+n_{k}}W_{k}^{2}-\tfrac{1}{\left(\nu_{k}+n_{k}\right)}W_{k}.
\end{align*}

\hfill{}$\square$

Next, we derive some expectations for quantities that involve a multivariate
$t$-distribution.
\begin{lem}
\label{lem:qHomogeneous}Suppose that $X\sim t_{\nu}(0,I_{n})$ and
let $W=\frac{\nu+n}{\nu+X^{\prime}X}$ and
\[
\zeta_{p,q}=\left(\tfrac{\nu+n}{\nu}\right)^{\frac{p}{2}}\left(\tfrac{\nu}{2}\right)^{\frac{q}{2}}\frac{\Gamma(\tfrac{\nu+n}{2})}{\Gamma(\tfrac{\nu}{2})}\frac{\Gamma(\tfrac{\nu+p-q}{2})}{\Gamma(\tfrac{\nu+p+n}{2})}\in\mathbb{R}.
\]

(i) For any integrable function $g$ and any $p>-\nu$, it holds that
\[
\mathbb{E}\left[W^{\frac{p}{2}}g(X)\right]=\zeta_{p,0}\mathbb{E}\left[g(Y)\right],\quad Y\sim t_{\nu+p}\left(0,\tfrac{v}{v+p}I_{n}\right).
\]

(ii) Moreover, if $g$ is homogeneous of degree $q<\nu+p$, then
\[
\mathbb{E}\left[W^{\frac{p}{2}}g(X)\right]=\zeta_{p,q}\mathbb{E}\left[g(Z)\right],\quad Z\sim N\left(0,I_{n}\right).
\]

(iii) The following expectations hold
\begin{align*}
\mathbb{E}\left[W\log\left(1+\tfrac{X^{\prime}X}{\nu}\right)\right] & =\psi\left(\tfrac{\nu+n}{2}+1\right)-\psi\left(\tfrac{\nu}{2}+1\right),\\
\mathbb{E}\left[W\log\left(1+\tfrac{Q}{\nu}\right){\rm vec}\left(XX^{\prime}\right)\right] & =\left[\psi\left(\tfrac{\nu+n}{2}+1\right)-\psi\left(\tfrac{\nu}{2}\right)\right]{\rm vec}\left(I_{n}\right),\\
\mathbb{E}\left[\log\left(1+\tfrac{X^{\prime}X}{\nu}\right)\right] & =\psi\left(\tfrac{\nu+n}{2}\right)-\psi\left(\tfrac{\nu}{2}\right),\\
\mathbb{E}\left[\log^{2}\left(1+\tfrac{X^{\prime}X}{\nu}\right)\right] & =\psi^{\prime}\left(\tfrac{\nu}{2}\right)-\psi^{\prime}\left(\tfrac{\nu+n}{2}\right)+\left[\psi\left(\tfrac{\nu+n}{2}\right)-\psi\left(\tfrac{\nu}{2}\right)\right]^{2},
\end{align*}
where $\psi\left(\cdot\right)$ and $\psi^{\prime}\left(\cdot\right)$
are the digamma and trigamma functions, respectively. 
\end{lem}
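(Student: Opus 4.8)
The plan is to obtain parts (i) and (ii) by direct manipulation of the $t$-density and then to derive all four identities in (iii) from a single generating function built from part (i).

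For part (i), I would write the density of $X\sim t_{\nu}(0,I_n)$ explicitly and use $W^{p/2}=\bigl(\tfrac{\nu+n}{\nu}\bigr)^{p/2}\bigl(1+\tfrac{x'x}{\nu}\bigr)^{-p/2}$, so that $W^{p/2}f_X(x)$ is, up to a multiplicative constant, the density of a $t_{\nu+p}\bigl(0,\tfrac{\nu}{\nu+p}I_n\bigr)$ variable; the constant that restores normalization works out to exactly $\zeta_{p,0}$ after simplifying the ratio of $\Gamma$-factors and of the scale determinants. The hypothesis $p>-\nu$ is precisely what keeps the shifted exponent $\tfrac{\nu+p+n}{2}$ large enough for the resulting density to be integrable (equivalently, $\Gamma(\tfrac{\nu+p}{2})<\infty$). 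For part (ii), I would invoke the scale-mixture representation $Y=\sqrt{\nu/\xi}\,Z$ with $Z\sim N(0,I_n)$ and $\xi\sim\mathrm{Gamma}(\tfrac{\nu+p}{2},2)$ independent, which is the representation of $t_{\nu+p}(0,\tfrac{\nu}{\nu+p}I_n)$ used in Lemma~\ref{lem:MVt2Y1}. Homogeneity of degree $q$ then gives $g(Y)=(\nu/\xi)^{q/2}g(Z)$, hence $\mathbb{E}[g(Y)]=\nu^{q/2}\,\mathbb{E}[\xi^{-q/2}]\,\mathbb{E}[g(Z)]$; the inverse moment $\mathbb{E}[\xi^{-q/2}]=2^{-q/2}\Gamma(\tfrac{\nu+p-q}{2})/\Gamma(\tfrac{\nu+p}{2})$ exists exactly when $q<\nu+p$, and combining it with (i) and collapsing the $\Gamma$-ratios produces $\zeta_{p,q}$.

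For part (iii), the key device is the function $h(t)=\mathbb{E}\bigl[(1+\tfrac{X'X}{\nu})^{-t}\bigr]$, which by part (i) applied with $g\equiv 1$ and $p=2t$ (absorbing the constant rescaling $W^{t}=(\tfrac{\nu+n}{\nu})^{t}(1+\tfrac{X'X}{\nu})^{-t}$) equals $\tfrac{\Gamma(\frac{\nu+n}{2})}{\Gamma(\frac{\nu}{2})}\,\tfrac{\Gamma(\frac{\nu}{2}+t)}{\Gamma(\frac{\nu+n}{2}+t)}$, so that $\log h$ is affine in $\log\Gamma$ and $h'(t)/h(t)=\psi(\tfrac{\nu}{2}+t)-\psi(\tfrac{\nu+n}{2}+t)$. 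Differentiating under the expectation gives $h'(t)=-\mathbb{E}[(1+\tfrac{X'X}{\nu})^{-t}\log(1+\tfrac{X'X}{\nu})]$ and $h''(t)=\mathbb{E}[(1+\tfrac{X'X}{\nu})^{-t}\log^{2}(1+\tfrac{X'X}{\nu})]$. Evaluating at $t=0$ with $h(0)=1$ yields the third identity immediately and the fourth from $h''(0)=[h'(0)]^{2}+\psi'(\tfrac{\nu}{2})-\psi'(\tfrac{\nu+n}{2})$. The first identity follows by writing $W=\tfrac{\nu+n}{\nu}(1+\tfrac{X'X}{\nu})^{-1}$, so $\mathbb{E}[W\log(1+\tfrac{X'X}{\nu})]=-\tfrac{\nu+n}{\nu}h'(1)$, and using $h(1)=\tfrac{\nu}{\nu+n}$. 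For the matrix-valued second identity I would first note that spherical symmetry of the density of $X$ forces $\mathbb{E}[W\log(1+\tfrac{X'X}{\nu})XX']$ to be proportional to $I_n$, so it suffices to evaluate its trace $\mathbb{E}[W\log(1+\tfrac{X'X}{\nu})X'X]$; the algebraic identity $WX'X=(\nu+n)-\nu W$ reduces this to a combination of the first and third identities, and the recurrence $\psi(x+1)=\psi(x)+1/x$ collapses the answer to $n[\psi(\tfrac{\nu+n}{2}+1)-\psi(\tfrac{\nu}{2})]$.

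The only genuinely delicate step is the interchange of differentiation and expectation used to represent $h'$ and $h''$; I would justify it by dominated convergence, using that the $t$-density has polynomial tails and that $(1+u)^{-t}|\log(1+u)|^{k}$ admits an integrable envelope uniformly for $t$ in a neighbourhood of the points $0$ and $1$ at which we evaluate (the slack $p>-\nu$ is ample there). Everything else is bookkeeping with $\Gamma$ and (poly)gamma functions, and I would flag the digamma recurrences in the matrix-valued identity as the one place where a little care with the shift $\tfrac{\nu}{2}\mapsto\tfrac{\nu}{2}+1$ is needed.
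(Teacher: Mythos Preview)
Your approach to parts (i) and (ii) is essentially identical to the paper's: both rewrite $W^{p/2}f_X$ as a constant times the $t_{\nu+p}(0,\tfrac{\nu}{\nu+p}I_n)$ density, and both then invoke the Gaussian scale-mixture representation and the inverse moment of a $\chi^2_{\nu+p}$ variable to pass from (i) to (ii).

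Part (iii) is where you genuinely diverge. The paper proceeds identity by identity: it observes that $S=(1+X'X/\nu)^{-1}\sim\mathrm{Beta}(\tfrac{\nu}{2},\tfrac{n}{2})$, and then evaluates each expectation as a Beta integral of the form $\int_0^1 x^{a-1}(1-x)^{b-1}(\log x)^k\,dx$, quoting closed forms from Gradshteyn--Ryzhik; for the matrix-valued identity it uses the elliptic decomposition $X=\sqrt{Q}\,U$ with $U$ uniform on the sphere and independent of $Q$. Your route is more unified: you encode all the scalar identities in the single generating function $h(t)=\mathbb{E}[(1+X'X/\nu)^{-t}]$, whose closed form $\Gamma(\tfrac{\nu+n}{2})\Gamma(\tfrac{\nu}{2}+t)/[\Gamma(\tfrac{\nu}{2})\Gamma(\tfrac{\nu+n}{2}+t)]$ follows immediately from (i), and then read off the identities by evaluating $h'$, $h''$ at $t=0,1$. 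For the matrix identity you replace the paper's $(Q,U)$ factorisation by the cleaner algebraic observation $WX'X=(\nu+n)-\nu W$, which reduces the trace to a linear combination of the first and third scalar identities. Your approach avoids any appeal to integral tables and makes the appearance of the digamma and trigamma functions transparent (they are simply $(\log\Gamma)'$ and $(\log\Gamma)''$); the paper's approach, on the other hand, does not require justifying differentiation under the integral. Both are correct.
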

Note that $p$ in part (iii) may be negative, since $-\nu<0$. If
$p/2$ is a positive integer, then
\begin{eqnarray*}
\zeta_{p,q} & = & \left(\tfrac{\nu+n}{\nu}\right)^{\frac{p}{2}}\left(\tfrac{\nu}{2}\right)^{\frac{q}{2}}\prod_{k=0}^{\tfrac{p}{2}-1}\frac{\nu+q+2k}{\nu+n+2k}.
\end{eqnarray*}
where we used $\Gamma(x+1)=x\Gamma(x)$, repeatedly, and the following
simplifications follow: 
\[
\begin{array}{rclcrcl}
\zeta_{0,2} & = & \tfrac{\nu}{\nu-2}, & \qquad\qquad & \zeta_{0,4} & = & \frac{\nu^{2}}{\left(\nu-2\right)\left(\nu-4\right)},\\
\zeta_{2,0} & = & 1, & \qquad\qquad & \zeta_{2,2} & = & 1,\\
\zeta_{2,1} & = & \left(\tfrac{\nu+1}{\nu}\right)\left(\tfrac{\nu}{2}\right)^{\frac{1}{2}}, &  & \zeta_{4,2} & = & \frac{\nu+n}{\nu+n+2},\\
\zeta_{4,0} & = & \frac{(\nu+2)(\nu+n)}{\nu(\nu+n+2)}, &  & \zeta_{4,4} & = & \frac{\nu+n}{\nu+n+2}.
\end{array}
\]

\noindent\textbf{Proof of Lemma }\ref{lem:qHomogeneous}. Let $\kappa_{\nu,n}=\Gamma(\tfrac{\nu+n}{2})/\Gamma(\tfrac{\nu}{2})$,
and the density for $X\sim t_{\nu}(0,I_{n})$ is 
\[
f_{x}(x)=\kappa_{\nu,n}[\nu\pi]^{-\frac{n}{2}}\left(1+\tfrac{x^{\prime}x}{\nu}\right)^{-\frac{\nu+n}{2}},
\]
whereas the density for $Y\sim t_{\nu+p}(0,\ensuremath{\frac{v}{v+p}}I_{n})$
is
\begin{eqnarray*}
f_{y}(y) & = & \kappa_{\nu+p,n}[(\nu+p)\pi]^{-\frac{n}{2}}\left(\tfrac{\nu}{\nu+p}\right)^{-\frac{n}{2}}\left(1+\tfrac{1}{\nu+p}x^{\prime}\left[\tfrac{\nu}{\nu+p}I_{n}\right]^{-1}x\right)^{-\frac{\nu+p+n}{2}}\\
 & = & \kappa_{\nu+p,n}[\nu\pi]^{-\frac{n}{2}}\left(1+\tfrac{x^{\prime}x}{\nu}\right)^{-\frac{\nu+p+n}{2}}.
\end{eqnarray*}
The expected value we seek is given by
\begin{eqnarray*}
\mathbb{E}\left[W^{\frac{p}{2}}g(X)\right] & = & \int\left(\tfrac{\nu+n}{\nu+x^{\prime}x}\right)^{\frac{p}{2}}g(x)\kappa_{\nu,n}[\nu\pi]^{-\frac{n}{2}}\left(1+\tfrac{x^{\prime}x}{\nu}\right)^{-\frac{\nu+n}{2}}\ensuremath{\mathrm{d}x}\\
 & = & \left(\tfrac{\nu+n}{\nu}\right)^{\frac{p}{2}}\int g(x)\kappa_{\nu,n}[\nu\pi]^{-\frac{n}{2}}\left(1+\tfrac{x^{\prime}x}{\nu}\right)^{-\frac{\nu+p+n}{2}}\ensuremath{\mathrm{d}x}\\
 & = & \left(\tfrac{\nu+n}{\nu}\right)^{\frac{p}{2}}\frac{\kappa_{\nu,n}}{\kappa_{\nu+p,n}}\int g(x)f_{y}(x)\ensuremath{\mathrm{d}x,}\\
 & = & \zeta_{p,0}\mathbb{E}\left[g(Y)\right],\quad Y\sim t_{\nu+p}(0,\tfrac{v}{v+p}I_{n}),
\end{eqnarray*}
and the results for part (i) follows, since
\[
\zeta_{p,0}=\left(\tfrac{\nu+n}{\nu}\right)^{\frac{p}{2}}\frac{\Gamma(\tfrac{\nu+n}{2})/\Gamma(\tfrac{\nu}{2})}{\Gamma(\tfrac{\nu+p+n}{2})/\Gamma(\tfrac{\nu+p}{2})}=\left(\tfrac{\nu+n}{\nu}\right)^{\frac{p}{2}}\frac{\kappa_{\nu,n}}{\kappa_{\nu+p,n}}.
\]
This completes the proof of (i).

To prove (ii) we use that $Y\sim t_{\nu+p}\left(0,\ensuremath{\frac{v}{v+p}}I_{n}\right)$
can be expressed as $Y=Z/\sqrt{\xi/\nu}$ where $Z\sim N(0,I_{n})$
and $\xi$ is an independent $\chi^{2}$-distributed random variable
with $\nu+p$ degrees of freedom. Hence, $Y=\psi Z$, with $\psi=1/\sqrt{\xi/\nu}$,
such that
\[
\psi^{q}=\left(\tfrac{\nu}{\xi}\right)^{\frac{q}{2}}.
\]
 Now using part (i) and that $g$ is homogeneous, we find
\begin{eqnarray*}
\mathbb{E}\left[W^{\frac{p}{2}}g(X)\right] & = & \zeta_{p,0}\mathbb{E}\left[g(Y)\right]=\zeta_{p,0}\mathbb{E}\left[\psi^{q}g(Z)\right]\\
 & = & \zeta_{p,0}\nu^{\frac{q}{2}}\mathbb{E}[\xi^{-\frac{q}{2}}]\mathbb{E}\left[g(Z)\right],\\
 & = & \zeta_{p,0}\nu^{\frac{q}{2}}\frac{\Gamma(\tfrac{\nu+p-q}{2})}{\Gamma(\tfrac{\nu+p}{2})}\left(\tfrac{1}{2}\right)^{\frac{q}{2}}\mathbb{E}\left[g(Z)\right]\\
 & = & \left(\tfrac{\nu+n}{\nu}\right)^{\frac{p}{2}}\frac{\Gamma(\tfrac{\nu+n}{2})/\Gamma(\tfrac{\nu}{2})}{\Gamma(\tfrac{\nu+p+n}{2})/\Gamma(\tfrac{\nu+p}{2})}\nu^{\frac{q}{2}}\frac{\Gamma(\tfrac{\nu+p-q}{2})}{\Gamma(\tfrac{\nu+p}{2})}\left(\tfrac{1}{2}\right)^{\frac{q}{2}}\mathbb{E}\left[g(Z)\right]\\
 & = & \zeta_{p,q}\mathbb{E}\left[g(Z)\right],
\end{eqnarray*}
where we used that $\xi\sim\chi_{\nu+p}^{2}$ and $Z$ are independent,
and that 
\[
\ensuremath{\mathbb{E}\left(\xi^{-\frac{q}{2}}\right)=\frac{\Gamma(\tfrac{\nu+p-q}{2})}{\Gamma(\tfrac{\nu+p}{2})}\left(\tfrac{1}{2}\right)^{\frac{q}{2}}},\qquad\text{for}\quad q<\nu+p.
\]
This completes the proof of (ii).

For the first expression in (iii), we find that for $X\sim t_{\nu}\left(0,I_{n}\right)$,
we have $X^{\prime}X/n\sim F_{n,v}$, and $S=(1+X^{\prime}X/\nu)^{-1}\sim{\rm Beta}\left(\tfrac{\nu}{2},\tfrac{n}{2}\right)$.
So we have
\begin{align*}
\mathbb{E}\left(W\log\left(1+\tfrac{X^{\prime}X}{\nu}\right)\right) & =\tfrac{\nu+n}{\nu}\mathbb{E}\left(S\log\left(S\right)\right)=\tfrac{\nu+n}{\nu}\frac{1}{\mathrm{B}(\tfrac{\nu}{2},\tfrac{n}{2})}\int_{0}^{1}\log\left(x\right)x^{\frac{\nu}{2}}(1-x)^{\frac{n}{2}-1}\mathrm{d}x.
\end{align*}
By using the following integral from \citet[p. 540]{GradshteynRyzhik2007},
\begin{equation}
\ensuremath{\int_{0}^{1}x^{\mu-1}\left(1-x^{r}\right)^{s-1}\log x\ {\rm d}x=\tfrac{1}{r^{2}}\mathrm{B}\left(\tfrac{\mu}{r},s\right)\left[\psi\left(\tfrac{\mu}{r}\right)-\psi\left(\tfrac{\mu}{r}+s\right)\right]},\label{eq:Int1}
\end{equation}
where $\psi\left(\cdot\right)$ is the digamma function, and let $r=1,s=\tfrac{n}{2},\mu=\frac{\nu}{2}+1$,
we have
\[
\mathbb{E}\left(W\log\left(1+\tfrac{X^{\prime}X}{\nu}\right)\right)=\tfrac{\nu+n}{\nu}\tfrac{\mathrm{B}\left(\frac{\nu}{2}+1,\frac{n}{2}\right)}{\mathrm{B}(\tfrac{\nu}{2},\tfrac{n}{2})}\left[\psi\left(\tfrac{\nu}{2}+1\right)-\psi\left(\tfrac{\nu+n}{2}+1\right)\right]=\psi\left(\tfrac{\nu}{2}+1\right)-\psi\left(\tfrac{\nu+n}{2}+1\right).
\]

For the first expression in (iii), it is well known that for an elliptic
distribution $X\sim EL\left(0,I_{n}\right)$, the two random variables,
$Q\equiv X^{\prime}X$ and $U\equiv Q^{-\frac{1}{2}}X$, are independent,
and $U$ has a uniform distribution on a unit sphere, see e.g. \citet[section 2]{Mitchell1989}.
This result applies to the multivariate $t$-distribution, $X\sim t_{\nu}\left(0,I_{n}\right)$,
such that
\begin{align*}
\mathbb{E}\left[W\log\left(1+\tfrac{Q}{\nu}\right){\rm vec}\left(XX^{\prime}\right)\right] & =\mathbb{E}\left[QW\log\left(1+\tfrac{Q}{\nu}\right){\rm vec}\left(XX^{\prime}/Q\right)\right]\\
 & =\mathbb{E}\left[QW\log\left(1+\tfrac{Q}{\nu}\right){\rm vec}\left(UU^{\prime}\right)\right]\\
 & =\mathbb{E}\left[QW\log\left(1+\tfrac{Q}{\nu}\right)\right]\mathbb{E}\left[{\rm vec}\left(UU^{\prime}\right)\right]\\
 & =\tfrac{\nu+n}{n}\mathbb{E}\left[\tfrac{Q/v}{1+Q/v}\log\left(1+\tfrac{Q}{\nu}\right)\right]{\rm vec}\left(I_{n}\right),
\end{align*}
where the last equality uses that $\mathbb{E}\left[{\rm vec}\left(UU^{\prime}\right)\right]=\tfrac{1}{n}{\rm vec}\left(I_{n}\right)$.
Then we have
\begin{align*}
\mathbb{E}\left[\tfrac{Q/v}{1+Q/v}\log\left(1+\tfrac{Q}{\nu}\right)\right] & =-\mathbb{E}\left[\left(1-S\right)\log\left(S\right)\right]\\
 & =\frac{-1}{\mathrm{B}(\tfrac{\nu}{2},\tfrac{n}{2})}\int_{0}^{1}\log\left(x\right)x^{\frac{\nu}{2}-1}(1-x)^{\frac{n}{2}}dx\\
 & =\frac{\mathrm{B}\left(\frac{\nu}{2},\tfrac{n}{2}+1\right)}{\mathrm{B}(\tfrac{\nu}{2},\tfrac{n}{2})}\left[\psi\left(\tfrac{\nu+n}{2}+1\right)-\psi\left(\tfrac{\nu}{2}\right)\right]\\
 & =\tfrac{n}{\nu+n}\left[\psi\left(\tfrac{\nu+n}{2}+1\right)-\psi\left(\tfrac{\nu}{2}\right)\right],
\end{align*}
where we let $r=1,s=\tfrac{n}{2}+1,\mu=\frac{\nu}{2}$ in the integral
(\ref{eq:Int1}). This leads to the second expression in (iii)
\[
\mathbb{E}\left[W\log\left(1+\tfrac{X^{\prime}X}{\nu}\right){\rm vec}\left(XX^{\prime}\right)\right]=\left[\psi\left(\tfrac{\nu+n}{2}+1\right)-\psi\left(\tfrac{\nu}{2}\right)\right]{\rm vec}\left(I_{n}\right).
\]
If we set $r=1,s=\tfrac{n}{2},\mu=\tfrac{\nu}{2}$ in (\ref{eq:Int1})
then we obtain the third expression in (iii) 
\[
\mathbb{E}\left[\log\left(1+\tfrac{Q}{\nu}\right)\right]=-\mathbb{E}\left[\log\left(S\right)\right]=\frac{-1}{\mathrm{B}(\tfrac{\nu}{2},\tfrac{n}{2})}\int_{0}^{1}\log\left(x\right)x^{\frac{\nu}{2}-1}(1-x)^{\frac{n}{2}-1}dx=\psi\left(\tfrac{\nu+n}{2}\right)-\psi\left(\tfrac{\nu}{2}\right).
\]

Finally, for the fourth expression in (iii), we use 
\[
\ensuremath{\ensuremath{\int_{0}^{1}(\log)^{2}x^{p-1}\left(1-x^{r}\right)^{q-1}dx=\tfrac{\mathrm{B}\left(\tfrac{p}{r},q\right)}{r^{3}}\left\{ \psi^{\prime}\left(\tfrac{p}{r}\right)-\psi^{\prime}\left(\tfrac{p}{r}+q\right)+\left[\psi\left(\tfrac{p}{r}\right)-\psi\left(\tfrac{p}{r}+q\right)\right]^{2}\right\} }},
\]
see \citet[p. 543]{GradshteynRyzhik2007}, and set $p=\frac{\nu}{2},r=1,q=\tfrac{n}{2}$,
whereby we obtain
\begin{align*}
\mathbb{E}\left[\log^{2}\left(1+\tfrac{Q}{\nu}\right)\right] & =\mathbb{E}\left[\log^{2}\left(S\right)\right]\\
 & =\frac{1}{\mathrm{B}(\tfrac{\nu}{2},\tfrac{n}{2})}\int_{0}^{1}\log^{2}\left(x\right)x^{\frac{\nu}{2}-1}(1-x)^{\frac{n}{2}-1}dx\\
 & =\frac{1}{\mathrm{B}(\tfrac{\nu}{2},\tfrac{n}{2})}\mathrm{B}\left(\tfrac{\nu}{2},\tfrac{n}{2}\right)\left\{ \psi^{\prime}\left(\tfrac{\nu}{2}\right)-\psi^{\prime}\left(\tfrac{\nu+n}{2}\right)+\left[\psi\left(\tfrac{\nu}{2}\right)-\psi\left(\tfrac{\nu+n}{2}\right)\right]^{2}\right\} \\
 & =\psi^{\prime}\left(\tfrac{\nu}{2}\right)-\psi^{\prime}\left(\tfrac{\nu+n}{2}\right)+\left[\psi\left(\tfrac{\nu+n}{2}\right)-\psi\left(\tfrac{\nu}{2}\right)\right]^{2}.
\end{align*}
This completes the proof.

\hfill{}$\square$

\noindent \textbf{Proof of Theorem \ref{thm:ExpecScoreHess} (Expected
Scores).} We have $\mathbb{E}\left(\nabla_{\mu}\right)=\sum_{k=1}^{K}A^{\prime}e_{k}\mathbb{E}\left(W_{k}X_{k}\right)=0$,
\begin{align*}
\mathbb{E}\left(\nabla_{\nu_{k}}\right) & =\tfrac{1}{2}\mathbb{E}\left[\psi\left(\tfrac{v_{k}+n_{k}}{2}\right)-\psi\left(\tfrac{v_{k}}{2}\right)+1-W_{k}-\log\left(1+\tfrac{X_{k}^{\prime}X_{k}}{\nu_{k}}\right)\right]\\
 & =\tfrac{1}{2}\left[\psi\left(\tfrac{v_{k}+n_{k}}{2}\right)-\psi\left(\tfrac{v_{k}}{2}\right)+1-1-\left[\psi\left(\tfrac{v_{k}+n_{k}}{2}\right)-\psi\left(\tfrac{v_{k}}{2}\right)\right]\right]=0.
\end{align*}
Next, observe that and $\mathbb{E}\left(W_{k}X_{k}X_{k}^{\prime}\right)=I_{n_{k}}$
and for $k\neq l$ we have $\mathbb{E}\left(W_{k}X_{k}X_{l}^{\prime}\right)=0$,
which we used $\nu_{l}>1$. We therefore have $\mathbb{E}\left(W_{k}X_{k}X^{\prime}\right)=e_{k}^{\prime}$
and it follows that 
\begin{eqnarray*}
\mathbb{E}\left(\nabla_{\tilde{\Xi}}\right) & = & \sum_{k=1}^{K}{\rm vec}\left(A^{\prime}e_{k}\mathbb{E}\left(W_{k}X_{k}X^{\prime}\right)\right)-{\rm vec}\left(A^{\prime}\right)\\
 & = & \sum_{k=1}^{K}{\rm vec}\left(A^{\prime}e_{k}e_{k}^{\prime}\right)-{\rm vec}\left(A^{\prime}\right)=0,
\end{eqnarray*}
since $\sum_{k=1}^{K}e_{k}e_{k}^{\prime}=I_{n}$.

\hfill{}$\square$

\noindent \textbf{Proof of Theorem \ref{thm:ExpecScoreHess} (Information
matrix).} We have 
\begin{align*}
\mathbb{E}\left(\nabla_{\mu}\nabla_{\mu}^{\prime}\right) & =\sum_{k=1}^{K}\sum_{l=1}^{K}A^{\prime}e_{l}\mathbb{E}\left(W_{l}W_{k}X_{l}X_{k}^{\prime}\right)e_{k}^{\prime}A\\
 & =\sum_{k=1}^{K}A^{\prime}e_{k}\mathbb{E}\left(W_{k}^{2}X_{k}X_{k}^{\prime}\right)e_{k}^{\prime}A+\sum_{k=1}^{K}\sum_{l\neq k}^{K}A^{\prime}e_{l}\mathbb{E}\left(W_{l}X_{l}\right)\mathbb{E}\left(W_{k}X_{k}^{\prime}\right)e_{k}^{\prime}A\\
 & =\sum_{k=1}^{K}A^{\prime}e_{k}\mathbb{E}\left(W_{k}^{2}X_{k}X_{k}^{\prime}\right)e_{k}^{\prime}A\\
 & =\sum_{k=1}^{K}\phi_{k}A^{\prime}e_{k}e_{k}^{\prime}A=\sum_{k=1}^{K}\phi_{k}A^{\prime}J_{k}A,
\end{align*}
where we used Lemma \ref{lem:qHomogeneous} $\mathbb{E}\left(W_{k}^{2}X_{k}X_{k}^{\prime}\right)=\zeta_{4,2}I_{n_{k}}$,
where $\zeta_{4,2}=\frac{\nu_{k}+n_{k}}{\nu_{k}+n_{k}+2}=\phi_{k}$.
Next we have,
\[
\mathbb{E}\left(\nabla_{\mu}\nabla_{\tilde{\Xi}}^{\prime}\right)=\mathbb{E}\left(\nabla_{\mu}\nabla_{\nu}^{\prime}\right)=0,
\]
because $\nabla_{\tilde{\Xi}}^{\prime}$ and $\nabla_{\nu}^{\prime}$
are both even functions of $X$, whereas $\nabla_{\mu}$ is an odd
function.

\subsubsection*{The term: $\mathbb{E}\left(\nabla_{\tilde{\Xi}}\nabla_{\tilde{\Xi}}^{\prime}\right)$ }

It is convenient to define $M=\left(I\otimes A^{\prime}\right)$ and
$\nabla_{\tilde{\Xi}}^{s}=\sum_{k=1}^{K}W_{k}{\rm vec}\left(e_{k}X_{k}X^{\prime}\right)-{\rm vec}\left(I_{n}\right)$
(whose distribution does not depend on $A$), such that 

\[
\nabla_{\tilde{\Xi}}=\sum_{k=1}^{K}W_{k}{\rm vec}\left(A^{\prime}e_{k}X_{k}X^{\prime}\right)-{\rm vec}\left(A^{\prime}\right)=M\nabla_{\tilde{\Xi}}^{s}.
\]
It follows that $\mathcal{I}_{\tilde{\Xi}}=M\mathbb{E}\left(\nabla_{\tilde{\Xi}}^{s}\nabla_{\tilde{\Xi}}^{s\prime}\right)M^{\prime}$,
where
\begin{align*}
\mathbb{E}\left(\nabla_{\tilde{\Xi}}^{s}\nabla_{\tilde{\Xi}}^{s\prime}\right) & =\mathbb{E}\left[\sum_{k=1}^{K}\sum_{l=1}^{K}W_{k}W_{l}{\rm vec}\left(e_{k}X_{k}X^{\prime}\right){\rm vec}\left(e_{l}X_{l}X^{\prime}\right)^{\prime}-{\rm vec}\left(I_{n}\right){\rm vec}\left(I_{n}\right)^{\prime}\right].
\end{align*}

Next we derive the expected value of 
\[
\varphi\left(k,l\right)\equiv W_{k}W_{l}{\rm vec}\left(e_{k}X_{k}X^{\prime}\right){\rm vec}\left(e_{l}X_{l}X^{\prime}\right)^{\prime}.
\]

For $k=l$ we have
\begin{align*}
\mathbb{E}\left[\varphi\left(k,k\right)\right]= & \mathbb{E}\left[W_{k}^{2}\sum_{p=1}^{K}{\rm vec}\left(e_{k}X_{k}X_{p}^{\prime}e_{p}^{\prime}\right)\sum_{q=1}^{K}{\rm vec}\left(e_{k}X_{k}X_{q}^{\prime}e_{q}^{\prime}\right)^{\prime}\right]\\
= & \mathbb{E}\left[W_{k}^{2}\sum_{p\neq k}{\rm vec}\left(e_{k}X_{k}X_{p}^{\prime}e_{p}^{\prime}\right){\rm vec}\left(e_{k}X_{k}X_{p}^{\prime}e_{p}^{\prime}\right)^{\prime}+W_{k}^{2}{\rm vec}\left(e_{k}X_{k}X_{k}^{\prime}e_{k}^{\prime}\right){\rm vec}\left(e_{k}X_{k}X_{k}^{\prime}e_{k}^{\prime}\right)^{\prime}\right].
\end{align*}
From Lemma \ref{lem:qHomogeneous} we have the expression for $\phi_{k}=\zeta_{42}=\zeta_{44}$
and $\zeta_{0,2}=\frac{\nu}{\nu-2}$, we have
\begin{align*}
 & \sum_{p\neq k}\mathbb{E}\left[W_{k}^{2}{\rm vec}\left(e_{k}X_{k}X_{p}^{\prime}e_{p}^{\prime}\right){\rm vec}\left(e_{k}X_{k}X_{p}^{\prime}e_{p}^{\prime}\right)^{\prime}\right]\\
= & \sum_{p\neq k}\left(e_{p}\otimes e_{k}\right)\mathbb{E}\left[W_{k}^{2}{\rm vec}\left(X_{k}X_{p}^{\prime}\right){\rm vec}\left(X_{k}X_{p}^{\prime}\right)^{\prime}\right]\left(e_{p}^{\prime}\otimes e_{k}^{\prime}\right)\\
= & \phi_{k}\sum_{p\neq k}\tfrac{\nu_{p}}{\nu_{p}-2}\left(e_{p}\otimes e_{k}\right)\left(e_{p}^{\prime}\otimes e_{k}^{\prime}\right)\\
= & \phi_{k}\sum_{p\neq k}\tfrac{\nu_{p}}{\nu_{p}-2}\left(J_{p}\otimes J_{k}\right),
\end{align*}
where we applied Lemma \ref{lem:qHomogeneous} (ii) to the conditional
expectation, $\mathbb{E}[W_{k}^{2}g(X_{k})|X_{p}]=\zeta_{4,2}\mathbb{E}[g(Z)|X_{p}]$
with $g(X_{k})={\rm vec}\left(X_{k}X_{p}^{\prime}\right){\rm vec}\left(X_{k}X_{p}^{\prime}\right)^{\prime}$,
using that $X_{k}$ and $X_{p}$ are independent, followed by taking
unconditional expectation
\[
\mathbb{E}[g(Z)]=\mathbb{E}[{\rm vec}\left(ZX_{p}^{\prime}\right){\rm vec}\left(ZX_{p}^{\prime}\right)^{\prime}]=\tfrac{\nu_{p}}{\nu_{p}-2}I_{n_{k}n_{p}}.
\]
Similarly, using Lemma \ref{lem:qHomogeneous} and $\zeta_{4,4}=\phi_{k}$
we have 
\begin{align*}
 & \mathbb{E}\left[W_{k}^{2}{\rm vec}\left(e_{k}X_{k}X_{k}^{\prime}e_{k}^{\prime}\right){\rm vec}\left(e_{k}X_{k}X_{k}^{\prime}e_{k}^{\prime}\right)^{\prime}\right]\\
= & \left(e_{k}\otimes e_{k}\right)\mathbb{E}\left[\omega_{k}^{2}{\rm vec}\left(X_{k}X_{k}^{\prime}\right){\rm vec}\left(X_{k}X_{k}^{\prime}\right)^{\prime}\right]\left(e_{k}^{\prime}\otimes e_{k}^{\prime}\right)\\
= & \phi_{k}\left(e_{k}\otimes e_{k}\right)\left[I_{n_{k}^{2}}+K_{n_{k}}+{\rm vec}(I_{n_{k}}){\rm vec}(I_{n_{k}})^{\prime}\right]\left(e_{k}^{\prime}\otimes e_{k}^{\prime}\right)\\
= & \phi_{k}\left[J_{k\otimes}\left(K_{n}+I_{n^{2}}\right)+{\rm vec}(J_{k}){\rm vec}(J_{k})^{\prime}\right].
\end{align*}
Combining the results, we have
\begin{align}
\mathbb{E}\left[\varphi\left(k,k\right)\right] & =\phi_{k}\sum_{p\neq k}\tfrac{\nu_{p}}{\nu_{p}-2}J_{p}\otimes J_{k}+\phi_{k}\left[J_{k\otimes}\left(K_{n}+I_{n^{2}}\right)+{\rm vec}(J_{k}){\rm vec}(J_{k})^{\prime}\right]\nonumber \\
 & =\phi_{k}J_{k}^{\bullet}\otimes J_{k}-\phi_{k}J_{k\otimes}+\phi_{k}\left[J_{k\otimes}\left(K_{n}+I_{n^{2}}\right)+{\rm vec}(J_{k}){\rm vec}(J_{k})^{\prime}\right]\nonumber \\
 & =\phi_{k}J_{k}^{\bullet}\otimes J_{k}+\phi_{k}\left[J_{k\otimes}K_{n}+{\rm vec}(J_{k}){\rm vec}(J_{k})^{\prime}\right],\label{eq:phikk}
\end{align}
where $J_{k}^{\bullet}=\sum_{p\neq k}\frac{\nu_{p}}{\nu_{p}-2}J_{p}+J_{k}$.

Next, for $k\neq l$, we have
\begin{align*}
\mathbb{E}\left[\varphi\left(k,l\right)\right]= & \mathbb{E}\left[W_{k}W_{l}\sum_{p=1}^{K}{\rm vec}\left(e_{k}X_{k}X_{p}^{\prime}e_{p}^{\prime}\right)\sum_{q=1}^{K}{\rm vec}\left(e_{l}X_{l}X_{q}^{\prime}e_{q}^{\prime}\right)^{\prime}\right]\\
= & \mathbb{E}\left[W_{k}W_{l}{\rm vec}\left(e_{k}X_{k}X_{k}^{\prime}e_{k}^{\prime}\right){\rm vec}\left(e_{l}X_{l}X_{l}^{\prime}e_{l}^{\prime}\right)^{\prime}+W_{k}W_{l}{\rm vec}\left(e_{k}X_{k}X_{l}^{\prime}e_{l}^{\prime}\right){\rm vec}\left(e_{l}X_{l}X_{k}^{\prime}e_{k}^{\prime}\right)^{\prime}\right].
\end{align*}
From Lemma \ref{lem:qHomogeneous} with $p=q=2$ we find that
\begin{align*}
 & \mathbb{E}\left[W_{k}W_{l}{\rm vec}\left(e_{k}X_{k}X_{k}^{\prime}e_{k}^{\prime}\right){\rm vec}\left(e_{l}X_{l}X_{q}^{\prime}e_{q}^{\prime}\right)^{\prime}\right]\\
= & {\rm vec}\left(e_{k}\mathbb{E}\left[W_{k}X_{k}X_{k}^{\prime}\right]e_{k}^{\prime}\right){\rm vec}\left(e_{l}\mathbb{E}\left[W_{l}X_{l}X_{l}^{\prime}\right]e_{l}^{\prime}\right)^{\prime}\\
= & {\rm vec}\left(J_{k}\right){\rm vec}\left(J_{l}\right)^{\prime},
\end{align*}
and we also have
\begin{align*}
 & \mathbb{E}\left[W_{k}W_{l}{\rm vec}\left(e_{k}X_{k}X_{l}^{\prime}e_{l}^{\prime}\right){\rm vec}\left(e_{l}X_{l}X_{k}^{\prime}e_{k}^{\prime}\right)^{\prime}\right]\\
= & \mathbb{E}\left[W_{k}W_{l}{\rm vec}\left(e_{k}X_{k}X_{l}^{\prime}e_{l}^{\prime}\right){\rm vec}\left(e_{k}X_{k}X_{l}^{\prime}e_{l}^{\prime}\right)^{\prime}K_{n}\right]\\
= & \left(e_{l}\otimes e_{k}\right)\mathbb{E}\left[{\rm vec}\left(Z_{k}Z_{l}^{\prime}\right){\rm vec}\left(Z_{k}Z_{l}^{\prime}\right)^{\prime}\right]\left(e_{l}^{\prime}\otimes e_{k}^{\prime}\right)K_{n}\\
= & \left(e_{l}\otimes e_{k}\right)\left(e_{l}^{\prime}\otimes e_{k}^{\prime}\right)K_{n}\\
= & \left(J_{l}\otimes J_{k}\right)K_{n},
\end{align*}
where $Z_{k}\sim N(0,I_{n_{k}})$. Finally, we have
\[
\mathbb{E}\left[\varphi\left(k,l\right)\right]={\rm vec}\left(J_{k}\right){\rm vec}\left(J_{l}\right)^{\prime}+\left(J_{l}\otimes J_{k}\right)K_{n}.
\]

We are finally ready to derive the expression for $\mathbb{E}\left(\nabla_{\tilde{\Xi}}^{s}\nabla_{\tilde{\Xi}}^{s\prime}\right)$,
which takes the form
\begin{align*}
\mathbb{E}\left(\nabla_{\tilde{\Xi}}^{s}\nabla_{\tilde{\Xi}}^{s\prime}\right) & =\mathbb{E}\left[\sum_{k=1}^{K}\sum_{l=1}^{K}W_{k}W_{l}{\rm vec}\left(e_{k}X_{k}X^{\prime}\right){\rm vec}\left(e_{l}V_{l}X^{\prime}\right)^{\prime}\right]-{\rm vec}\left(I_{n}\right){\rm vec}\left(I_{n}\right)^{\prime}\\
 & =\sum_{k=1}^{K}\sum_{l=1}^{K}\left[{\rm vec}\left(J_{k}\right){\rm vec}\left(J_{l}\right)^{\prime}+\left(J_{l}\otimes J_{k}\right)K_{n}\right]-{\rm vec}\left(I_{n}\right){\rm vec}\left(I_{n}\right)^{\prime}\\
 & \quad+\sum_{k=1}^{K}\left[\mathbb{E}\left[\varphi\left(k,k\right)\right]-{\rm vec}\left(J_{k}\right){\rm vec}\left(J_{k}\right)^{\prime}-J_{k\otimes}K_{n}\right]\\
 & =K_{n}+\tilde{\Upsilon}_{K},
\end{align*}
where $\tilde{\Upsilon}_{K}=\sum_{k}\tilde{\Psi}_{k}$ with 
\begin{align*}
\tilde{\Psi}_{k} & =\phi_{k}J_{k}^{\bullet}\otimes J_{k}+\phi_{k}\left[J_{k\otimes}K_{n}+{\rm vec}(J_{k}){\rm vec}(J_{k})^{\prime}\right]\ensuremath{-{\rm vec}\left(J_{k}\right){\rm vec}\left(J_{k}\right)^{\prime}-J_{k\otimes}K_{n}}\\
 & =\phi_{k}J_{k}^{\bullet}\otimes J_{k}+\left(\phi_{k}-1\right)\left(J_{k\otimes}K_{n}+{\rm vec}(J_{k}){\rm vec}(J_{k})^{\prime}\right).
\end{align*}
So, the final formula for the information matrix is given by
\begin{align*}
\mathcal{I}_{\tilde{\Xi}} & =M\mathbb{E}\left(\nabla_{\tilde{\Xi}}^{s}\nabla_{\tilde{\Xi}}^{s\prime}\right)M^{\prime}=M\left(K_{n}+\tilde{\Upsilon}_{K}\right)M^{\prime}\\
 & =\left(A\otimes A^{\prime}\right)K_{n}+\Upsilon_{K},
\end{align*}
where $\Upsilon_{K}=\sum_{k=1}^{K}\Psi_{k}$ with
\begin{align*}
\Psi_{k}=M\tilde{\Psi}_{k}M^{\prime} & =\phi_{k}J_{k}^{\bullet}\otimes A^{\prime}J_{k}A+\left(\phi_{k}-1\right)\left[\left(J_{k}A\otimes A^{\prime}J_{k}\right)K_{n}+{\rm vec}(A^{\prime}J_{k}){\rm vec}(A^{\prime}J_{k})^{\prime}\right].
\end{align*}

\subsubsection*{Cross-Covariance: $\mathbb{E}\left(\nabla_{\nu_{k}}\nabla_{\nu_{l}}\right)$}

Because $X_{k}$ and $X_{l}$ are independent for $k\neq l$, we have
$\mathbb{E}\left(\nabla_{\nu_{k}}\nabla_{\nu_{l}}\right)=0$ for $k\neq l$.
For $k=l$, we first note that 
\begin{align*}
\mathbb{E}\left[W_{k}^{2}\right] & =\tfrac{\left(\nu_{k}+n_{k}\right)\left(\nu_{k}+2\right)}{\nu_{k}\left(\nu_{k}+n_{k}+2\right)},\\
\mathbb{E}\left[W_{k}\log\left(1+\tfrac{X_{k}^{\prime}X_{k}}{\nu_{k}}\right)\right] & =\psi\left(\tfrac{\nu+n}{2}+1\right)-\psi\left(\tfrac{\nu}{2}+1\right),\\
\mathbb{E}\left[\log^{2}\left(1+\tfrac{X_{k}^{\prime}X_{k}}{\nu_{k}}\right)\right] & =\psi^{\prime}\left(\tfrac{\nu_{k}}{2}\right)-\psi^{\prime}\left(\tfrac{\nu_{k}+n_{k}}{2}\right)+\left[\psi\left(\tfrac{\nu_{k}+n_{k}}{2}\right)-\psi\left(\tfrac{\nu_{k}}{2}\right)\right]^{2},
\end{align*}
using Lemma \ref{lem:qHomogeneous} (ii) and (iii) and, after some
algebra, it follows that 
\begin{eqnarray*}
\mathbb{E}\left(\nabla_{\nu_{k}}^{2}\right) & = & \tfrac{1}{4}\mathbb{E}\left[W_{k}^{2}+\log^{2}\left(1+\tfrac{X_{k}^{\prime}X_{k}}{\nu_{k}}\right)+2W_{k}\log\left(1+\tfrac{X_{k}^{\prime}X_{k}}{\nu_{k}}\right)\right]-\tfrac{1}{4}\left[\psi\left(\tfrac{v_{k}+n_{k}}{2}\right)-\psi\left(\tfrac{v_{k}}{2}\right)+1\right]^{2}.\\
 & = & \tfrac{1}{4}\left[\psi^{\prime}\left(\tfrac{\nu_{k}}{2}\right)-\psi^{\prime}\left(\tfrac{\nu_{k}+n_{k}}{2}\right)\right]-\tfrac{n_{k}\left(\nu_{k}+n_{k}+4\right)}{2\nu_{k}\left(\nu_{k}+n_{k}+2\right)\left(\nu_{k}+n_{k}\right)}.
\end{eqnarray*}

\subsubsection*{Cross-Covariance: $\mathbb{E}\left(\nabla_{\nu}\nabla_{\tilde{\Xi}}^{\prime}\right)$}

We first compute
\begin{align}
\mathbb{E}\left[\nabla_{\nu_{k}}\nabla_{\tilde{\Xi}}^{s}\right] & =\tfrac{1}{2}\left[\psi\left(\tfrac{v_{k}+n_{k}}{2}\right)-\psi\left(\tfrac{v_{k}}{2}\right)+1-W_{k}-\log\left(1+\tfrac{X_{k}^{\prime}X_{k}}{\nu_{k}}\right)\right]\left[\sum_{l=1}^{K}W_{l}{\rm vec}\left(e_{l}X_{l}X^{\prime}\right)-{\rm vec}\left(I_{n}\right)\right]\nonumber \\
 & =-\tfrac{1}{2}\mathbb{E}\left[\left[W_{k}+\log\left(1+\tfrac{X_{k}^{\prime}X_{k}}{\nu_{k}}\right)\right]\sum_{l=1}^{K}W_{l}{\rm vec}\left(e_{l}X_{l}X^{\prime}\right)\right]+\tfrac{1}{2}\left(\psi\left(\tfrac{v_{k}+n_{k}}{2}\right)-\psi\left(\tfrac{v_{k}}{2}\right)+1\right){\rm vec}\left(I_{n}\right),\label{eq:somecrossterm}
\end{align}
For $k=l$, we have
\begin{align*}
\mathbb{E}\left[W_{k}^{2}{\rm vec}\left(e_{k}X_{k}X\right)\right]= & \mathbb{E}\left[W_{k}^{2}\sum_{p=1}^{K}{\rm vec}\left(e_{k}X_{k}X_{p}^{\prime}e_{p}^{\prime}\right)\right]\\
= & \mathbb{E}\left[W_{k}^{2}{\rm vec}\left(e_{k}X_{k}X_{k}^{\prime}e_{k}^{\prime}\right)\right]\\
= & e_{k\otimes}\mathbb{E}\left[W_{k}^{2}{\rm vec}\left(X_{k}X_{k}^{\prime}\right)\right]\\
= & e_{k\otimes}\phi_{k}{\rm vec}\left(I_{n}\right)=\phi_{k}{\rm vec}\left(J_{k}\right),
\end{align*}
and for $k\neq l$ we have
\begin{align*}
\mathbb{E}\left[W_{k}W_{l}{\rm vec}\left(e_{l}X_{l}X^{\prime}\right)\right]= & \mathbb{E}\left[W_{k}W_{l}\sum_{p=1}^{K}{\rm vec}\left(e_{l}X_{l}X_{p}^{\prime}e_{p}^{\prime}\right)\right]\\
= & \mathbb{E}\left[W_{k}W_{l}{\rm vec}\left(e_{l}X_{l}X_{k}^{\prime}e_{k}^{\prime}\right)\right]+\mathbb{E}\left[W_{k}W_{l}{\rm vec}\left(e_{l}X_{g}X_{l}^{\prime}e_{l}^{\prime}\right)\right]\\
= & \mathbb{E}\left[{\rm vec}\left(e_{l}\left(W_{l}X_{l}\right)\left(W_{k}X_{k}^{\prime}\right)e_{k}^{\prime}\right)\right]+\mathbb{E}\left[W_{l}{\rm vec}\left(e_{l}X_{l}X_{l}^{\prime}e_{l}^{\prime}\right)\right]\\
= & \mathbb{E}\left[W_{l}{\rm vec}\left(e_{l}X_{l}X_{l}^{\prime}e_{l}^{\prime}\right)\right]\\
= & {\rm vec}(J_{l}),
\end{align*}
where we used $\zeta_{2,2}=1$. Thus, we have
\begin{align*}
\mathbb{E}\left[W_{k}\sum_{l=1}^{K}W_{l}{\rm vec}\left(e_{l}X_{l}X^{\prime}\right)\right] & =\mathbb{E}\left[W_{k}^{2}{\rm vec}\left(e_{k}X_{k}X^{\prime}\right)\right]+\mathbb{E}\left[W_{k}\sum_{l\neq k}^{K}W_{l}{\rm vec}\left(e_{l}X_{l}X^{\prime}\right)\right]\\
 & =\phi_{k}{\rm vec}\left(J_{k}\right)+\sum_{l\neq k}^{K}{\rm vec}\left(J_{l}\right)\\
 & =\left(\phi_{k}-1\right){\rm vec}\left(J_{k}\right)+{\rm vec}\left(I_{n}\right).
\end{align*}
For the remaining term in (\ref{eq:somecrossterm}), we have, for
$k=l$, 
\begin{align*}
\mathbb{E}\left[\log\left(1+\tfrac{X_{k}^{\prime}X_{k}}{\nu_{k}}\right)W_{k}{\rm vec}\left(e_{k}X_{k}X^{\prime}\right)\right]= & \mathbb{E}\left[\log\left(1+\tfrac{X_{k}^{\prime}X_{k}}{\nu_{k}}\right)W_{k}{\rm vec}\left(e_{k}X_{k}X_{k}^{\prime}e_{k}^{\prime}\right)\right]\\
= & e_{k\otimes}\mathbb{E}\left[\log\left(1+\tfrac{X_{k}^{\prime}X_{k}}{\nu_{k}}\right)W_{k}{\rm vec}\left(X_{k}X_{k}^{\prime}\right)\right]\\
= & \left[\psi\left(\tfrac{v_{k}+n_{k}}{2}+1\right)-\psi\left(\tfrac{v_{k}}{2}\right)\right]{\rm vec}\left(J_{k}\right),
\end{align*}
where we used Lemma \ref{lem:qHomogeneous} (iii). For $k\neq l$,
we have
\begin{align*}
 & \mathbb{E}\left[\log\left(1+\tfrac{X_{k}^{\prime}X_{k}}{\nu_{k}}\right)W_{l}{\rm vec}\left(e_{l}X_{g}X^{\prime}\right)\right]\\
= & \mathbb{E}\left[\log\left(1+\tfrac{X_{k}^{\prime}X_{k}}{\nu_{k}}\right)W_{l}\sum_{p=1}^{K}{\rm vec}\left(e_{l}X_{l}X_{p}^{\prime}e_{p}^{\prime}\right)\right]\\
= & \mathbb{E}\left[\log\left(1+\tfrac{X_{k}^{\prime}X_{k}}{\nu_{k}}\right)W_{l}{\rm vec}\left(e_{l}X_{l}X_{l}^{\prime}e_{l}^{\prime}\right)\right]+\mathbb{E}\left[\log\left(1+\tfrac{X_{k}^{\prime}X_{k}}{\nu_{k}}\right)W_{l}{\rm vec}\left(e_{l}X_{l}X_{k}^{\prime}e_{k}^{\prime}\right)\right]\\
= & \mathbb{E}\left[\log\left(1+\tfrac{X_{k}^{\prime}X_{k}}{\nu_{k}}\right)\right]\times\mathbb{E}\left[W_{l}{\rm vec}\left(e_{l}X_{l}X_{l}^{\prime}e_{l}^{\prime}\right)\right]\\
= & \left[\psi\left(\tfrac{v_{k}+n_{k}}{2}\right)-\psi\left(\tfrac{v_{k}}{2}\right)\right]{\rm vec}\left(J_{l}\right),
\end{align*}
and combined we have
\begin{align*}
 & \mathbb{E}\left[\log\left(1+\tfrac{X_{k}^{\prime}X_{k}}{\nu_{k}}\right)\sum_{l=1}^{K}W_{l}{\rm vec}\left(e_{l}X_{l}X^{\prime}\right)\right]\\
= & \left[\psi\left(\tfrac{v_{k}+n_{k}}{2}+1\right)-\psi\left(\tfrac{v_{k}}{2}\right)\right]{\rm vec}\left(J_{k}\right)+\left[\psi\left(\tfrac{v_{k}+n_{k}}{2}\right)-\psi\left(\tfrac{v_{k}}{2}\right)\right]\sum_{l\neq k}{\rm vec}\left(J_{l}\right)\\
= & \left[\psi\left(\tfrac{v_{k}+n_{k}}{2}+1\right)-\psi\left(\tfrac{v_{k}}{2}\right)\right]{\rm vec}\left(J_{k}\right)+\left[\psi\left(\tfrac{v_{k}+n_{k}}{2}\right)-\psi\left(\tfrac{v_{k}}{2}\right)\right]\left[{\rm vec}\left(I_{n}\right)-{\rm vec}\left(J_{k}\right)\right]\\
= & \left[\psi\left(\tfrac{v_{k}+n_{k}}{2}\right)-\psi\left(\tfrac{v_{k}}{2}\right)\right]{\rm vec}\left(I_{n}\right)+{\rm vec}\left(J_{k}\right)\left[\psi\left(\tfrac{v_{k}+n_{k}}{2}+1\right)-\psi\left(\tfrac{v_{k}}{2}\right)-\psi\left(\tfrac{v_{k}+n_{k}}{2}\right)+\psi\left(\tfrac{v_{k}}{2}\right)\right]\\
= & \left[\psi\left(\tfrac{v_{k}+n_{k}}{2}\right)-\psi\left(\tfrac{v_{k}}{2}\right)\right]{\rm vec}\left(I_{n}\right)+\tfrac{2}{v_{k}+n_{k}}{\rm vec}\left(J_{k}\right).
\end{align*}
where we used $\ensuremath{\psi(z+1)=\psi(z)+\frac{1}{z}}$ in the
last identity. Thus, we have
\begin{align*}
 & \mathbb{E}\left[\left[W_{k}+\log\left(1+\tfrac{X_{k}^{\prime}X_{k}}{\nu_{k}}\right)\right]\sum_{l=1}^{K}W_{l}{\rm vec}\left(e_{l}X_{l}X^{\prime}\right)\right]\\
= & \left(\phi_{k}-1\right){\rm vec}\left(J_{k}\right)+{\rm vec}\left(I_{n}\right)+\left[\psi\left(\tfrac{v_{k}+n_{k}}{2}\right)-\psi\left(\tfrac{v_{k}}{2}\right)\right]{\rm vec}\left(I_{n}\right)+\tfrac{2}{v_{k}+n_{k}}{\rm vec}\left(J_{k}\right)\\
= & \left(\phi_{k}-1+\tfrac{2}{v_{k}+n_{k}}\right){\rm vec}\left(J_{k}\right)+\left[\psi\left(\tfrac{v_{k}+n_{k}}{2}\right)-\psi\left(\tfrac{v_{k}}{2}\right)+1\right]{\rm vec}\left(I_{n}\right),
\end{align*}
where the last term (scaled by $-\frac{1}{2}$) is cancelled by the
last term in (\ref{eq:somecrossterm}), and we have
\begin{align*}
\mathbb{E}\left[\nabla_{\nu_{k}}\nabla_{\tilde{\Xi}}^{s}\right] & =-\tfrac{1}{2}\left(\phi_{k}-1+\tfrac{2}{v_{k}+n_{k}}\right){\rm vec}\left(J_{k}\right)\\
 & =-\tfrac{1}{2}\left(\tfrac{4}{\left(v_{k}+n_{k}\right)\left(v_{k}+n_{k}+2\right)}\right){\rm vec}\left(J_{k}\right)\\
 & =-\tfrac{2}{\left(v_{k}+n_{k}\right)\left(v_{k}+n_{k}+2\right)}{\rm vec}\left(J_{k}\right).
\end{align*}
It follows that
\begin{align*}
\mathbb{E}\left[\nabla_{\nu_{k}}\nabla_{\tilde{\Xi}}\right] & =-\tfrac{2}{\left(v_{k}+n_{k}\right)\left(v_{k}+n_{k}+2\right)}M{\rm vec}\left(J_{k}\right)=-\tfrac{2}{\left(v_{k}+n_{k}\right)\left(v_{k}+n_{k}+2\right)}{\rm vec}\left(A^{\prime}J_{k}\right).
\end{align*}
$\square$

\noindent \textbf{Proof of Theorem \ref{thm:ExpecScoreHess} (Expected
Hessian matrix).}

\subsubsection*{The terms: $\mathbb{E}(\nabla_{\mu\mu^{\prime}})$, $\mathbb{E}(\nabla_{\mu\tilde{\Xi}^{\prime}})$,
$\mathbb{E}(\nabla_{\mu\nu_{k}})$}

\begin{align*}
\mathbb{E}[\nabla_{\mu\mu^{\prime}}] & =\sum_{k=1}^{K}\tfrac{2}{\nu_{k}+n_{k}}A^{\prime}e_{k}\mathbb{E}[W_{k}^{2}X_{k}X_{k}^{\prime}]e_{k}^{\prime}A-\mathbb{E}[W_{k}]A^{\prime}J_{k}A\\
 & =\sum_{k=1}^{K}\left(\tfrac{2}{\nu_{k}+n_{k}}\phi_{k}-1\right)A^{\prime}e_{k}e_{k}^{\prime}A-A^{\prime}J_{k}A\\
 & =-\sum_{k=1}^{K}\phi_{k}A^{\prime}J_{k}A.
\end{align*}
The expectations of $\nabla_{\mu\tilde{\Xi}^{\prime}}$ and $\nabla_{\mu\nu_{k}}$
are both zeros, because they are odd functions of $X$.

\subsubsection*{The term: $\mathbb{E}(\nabla_{\Xi\Xi^{\prime}})$}

Let $\nabla_{\tilde{\Xi}\tilde{\Xi}^{\prime}}=\left(I_{n}\otimes A^{\prime}\right)\nabla_{\tilde{\Xi}\tilde{\Xi}^{\prime}}^{s}\left(I_{n}\otimes A\right)$,
so we have
\begin{align*}
\nabla_{\tilde{\Xi}\tilde{\Xi}^{\prime}}^{s} & =-K_{n}+\sum_{k=1}^{K}\tfrac{2W_{k}^{2}}{\nu_{k}+n_{k}}{\rm vec}\left(e_{k}X_{k}X^{\prime}\right){\rm vec}\left(e_{k}X_{k}X^{\prime}\right)^{\prime}\\
 & \quad+\sum_{k=1}^{K}W_{k}\left[\left(XX_{k}^{\prime}e_{k}^{\prime}\otimes I_{n}\right)K_{n}-\left(XX^{\prime}\otimes J_{k}\right)-\left(I_{n}\otimes e_{k}X_{k}X^{\prime}\right)K_{n}\right].
\end{align*}
From (\ref{eq:phikk}) we have
\[
\mathbb{E}\left[{\rm vec}\left(e_{k}X_{k}X^{\prime}\right){\rm vec}\left(e_{k}X_{k}X^{\prime}\right)^{\prime}\right]=\phi_{k}\left[J_{k}^{\bullet}\otimes J_{k}+J_{k\otimes}K_{n}+{\rm vec}(J_{k}){\rm vec}(J_{k})^{\prime}\right],
\]
where $J_{k}^{\bullet}=\sum_{p\neq k}\xi_{p}J_{p}+J_{k}$. Next,
\begin{align*}
\mathbb{E}\left[W_{k}\left(XX_{k}^{\prime}e_{k}^{\prime}\otimes I_{n}\right)K_{n}\right] & =\mathbb{E}[\sum_{p}W_{k}\left(e_{p}X_{p}X_{k}^{\prime}e_{k}^{\prime}\otimes I_{n}\right)K_{n}]\\
 & =\mathbb{E}[W_{k}\left(e_{k}X_{k}X_{k}^{\prime}e_{k}^{\prime}\otimes I_{n}\right)K_{n}]\\
 & =\left(e_{k}e_{k}^{\prime}\otimes I_{n}\right)K_{n},\qquad(\text{using }\mathbb{E}[W_{k}X_{k}X_{k}^{\prime}]=I_{n_{k}})\\
 & =\left(J_{k}\otimes I_{n}\right)K_{n},
\end{align*}
similarly
\[
\mathbb{E}\left[W_{k}\left(I_{n}\otimes e_{k}X_{k}X^{\prime}\right)K_{n}\right]K_{n}=\left(I_{n}\otimes J_{k}\right)K_{n},
\]
and
\begin{align*}
\mathbb{E}\left[W_{k}XX^{\prime}\right] & =\mathbb{E}\left[\sum_{l}\sum_{p}W_{k}e_{l}X_{l}X_{p}^{\prime}e_{p}^{\prime}\right]\\
 & =\mathbb{E}\left[\sum_{l\neq k}W_{k}e_{l}X_{l}X_{l}^{\prime}e_{l}^{\prime}+W_{k}e_{k}X_{k}X_{k}^{\prime}e_{k}^{\prime}\right]\\
 & =\sum_{l\neq k}e_{l}\mathbb{E}\left[X_{l}X_{l}^{\prime}\right]e_{l}^{\prime}+e_{k}\mathbb{E}\left[W_{k}X_{k}X_{k}^{\prime}\right]e_{k}^{\prime}\\
 & =J_{k}+\sum_{l\neq k}\tfrac{\nu_{l}}{\nu_{l}-2}J_{l}=J_{k}^{\bullet},
\end{align*}
leads to $\mathbb{E}\left[W_{k}\left(XX^{\prime}\otimes J_{k}\right)\right]=\left(J_{k}^{\bullet}\otimes J_{k}\right)$.
Combining the terms we have 
\begin{align*}
\mathbb{E}\left(\nabla_{\tilde{\Xi}\tilde{\Xi}^{\prime}}^{s}\right) & =-K_{n}+\sum_{k=1}^{K}\varphi_{k},
\end{align*}
where
\begin{align*}
\varphi_{k}= & \tfrac{2}{\nu_{k}+n_{k}}\phi_{k}\left[J_{k}^{\bullet}\otimes J_{k}+J_{k\otimes}K_{n}+{\rm vec}(J_{k}){\rm vec}(J_{k})^{\prime}\right]\\
 & +\left(J_{k}\otimes I_{n}\right)K_{n}-\left(J_{k}^{\bullet}\otimes J_{k}\right)-\left(I_{n}\otimes J_{k}\right)K_{n}\\
= & \left(\tfrac{2}{\nu_{k}+n_{k}+2}-1\right)J_{k}^{\bullet}\otimes J_{k}+\tfrac{2}{\nu_{k}+n_{k}+2}\left[J_{k\otimes}K_{n}+{\rm vec}(J_{k}){\rm vec}(J_{k})^{\prime}\right]+\left[\left(J_{k}\otimes I_{n}\right)-\left(I_{n}\otimes J_{k}\right)\right]K_{n}\\
= & -\phi_{k}J_{k}^{\bullet}\otimes J_{k}+\left(1-\phi_{k}\right)\left[J_{k\otimes}K_{n}+{\rm vec}(J_{k}){\rm vec}(J_{k})^{\prime}\right]+\left(J_{k}\otimes I_{n}\right)K_{n}-\left(I_{n}\otimes J_{k}\right)K_{n},
\end{align*}
such that
\begin{align*}
\mathbb{E}\left(\nabla_{\tilde{\Xi}\tilde{\Xi}^{\prime}}^{s}\right) & =-K_{n}+\sum_{k=1}^{K}\varphi_{k}=-K_{n}-\sum_{k=1}^{K}\phi_{k}J_{k}^{\bullet}\otimes J_{k}+\left(\phi_{k}-1\right)\left[J_{k\otimes}K_{n}+{\rm vec}(J_{k}){\rm vec}(J_{k})^{\prime}\right],
\end{align*}
where we used $\sum_{k=1}^{K}\left(J_{k}\otimes I_{n}\right)K_{n}-\left(I_{n}\otimes J_{k}\right)K_{n}=0$.
Finally, we have shown that
\begin{align*}
\mathbb{E}\left(\nabla_{\tilde{\Xi}\tilde{\Xi}^{\prime}}\right) & =\left(I_{n}\otimes A^{\prime}\right)\mathbb{E}\left(\nabla_{\Xi\Xi^{\prime}}^{s}\right)\left(I_{n}\otimes A\right)\\
 & =-K_{n}-\sum_{k=1}^{K}\phi_{k}J_{k}^{\bullet}\otimes A^{\prime}J_{k}A-\left(\phi_{k}-1\right)\left[\left(J_{k}A\otimes A^{\prime}J_{k}\right)K_{n}+{\rm vec}\left(A^{\prime}J_{k}\right){\rm vec}\left(A^{\prime}J_{k}\right)^{\prime}\right],
\end{align*}
which proves that $\mathbb{E}\left(\nabla_{\tilde{\Xi}\tilde{\Xi}^{\prime}}\right)=-\mathcal{I}_{\tilde{\Xi}}$.

\subsubsection*{The term: $\mathbb{E}(\nabla_{\tilde{\Xi}\nu^{\prime}})$}

We have
\[
\mathbb{E}[\nabla_{\tilde{\Xi}\nu_{k}}]=\tfrac{1}{\nu_{k}+n_{k}}\left[\mathbb{E}\left[W_{k}{\rm vec}\left(A^{\prime}e_{k}X_{k}X^{\prime}\right)\right]-\mathbb{E}\left[W_{k}^{2}{\rm vec}\left(A^{\prime}e_{k}X_{k}X^{\prime}\right)\right]\right],
\]
where
\begin{align*}
\mathbb{E}\left[W_{k}{\rm vec}\left(A^{\prime}e_{k}X_{k}X^{\prime}\right)\right] & =\sum_{l=1}^{K}\mathbb{E}\left[W_{k}{\rm vec}\left(A^{\prime}e_{k}X_{k}X_{l}^{\prime}e_{l}^{\prime}\right)\right]\\
 & =\mathbb{E}\left[W_{k}{\rm vec}\left(A^{\prime}e_{k}X_{k}X_{k}^{\prime}e_{k}^{\prime}\right)\right]\\
 & ={\rm vec}\left(A^{\prime}J_{k}\right),
\end{align*}
and
\begin{align*}
\mathbb{E}\left[W_{k}^{2}{\rm vec}\left(A^{\prime}e_{k}X_{k}X^{\prime}\right)\right] & =\sum_{l=1}^{K}\mathbb{E}\left[W_{k}^{2}{\rm vec}\left(A^{\prime}e_{k}X_{k}X_{l}^{\prime}e_{l}^{\prime}\right)\right]\\
 & =\mathbb{E}\left[W_{k}^{2}{\rm vec}\left(A^{\prime}e_{k}X_{k}X_{k}^{\prime}e_{k}^{\prime}\right)\right]\\
 & =\phi_{k}{\rm vec}\left(A^{\prime}J_{k}\right).
\end{align*}
So, we have
\begin{align*}
\mathbb{E}[\nabla_{\tilde{\Xi}\nu_{k}}] & =\tfrac{1}{\nu_{k}+n_{k}}\left[{\rm vec}\left(A^{\prime}J_{k}\right)-\phi_{k}{\rm vec}\left(A^{\prime}J_{k}\right)\right]\\
 & =\tfrac{1}{\left(\nu_{k}+n_{k}+2\right)\left(\nu_{k}+n_{k}\right)}{\rm vec}\left(A^{\prime}J_{k}\right),
\end{align*}
which equals $-\mathcal{I}_{\tilde{\Xi}\nu_{k}}$.

\subsubsection*{The term: $\mathbb{E}(\nabla_{\nu\nu^{\prime}})$}

Using $\mathbb{E}\left[W_{k}^{2}\right]=\ensuremath{\tfrac{\left(\nu_{k}+2\right)(\nu_{k}+n_{k})}{\nu_{k}(\nu_{k}+n_{k}+2)}}$
and $\mathbb{E}\left[W_{k}\right]=1$, we have
\begin{eqnarray*}
\nabla_{\nu_{k}\nu_{k}} & = & \tfrac{1}{4}\psi^{\prime}\left(\tfrac{v_{k}+n_{k}}{2}\right)-\tfrac{1}{4}\psi^{\prime}\left(\tfrac{v_{k}}{2}\right)+\tfrac{1}{2\nu_{k}}+\tfrac{1}{2}\tfrac{1}{\nu_{k}+n_{k}}W_{k}^{2}-\tfrac{1}{\left(\nu_{k}+n_{k}\right)}W_{k}\\
 & = & \tfrac{1}{2}\left[\tfrac{1}{2}\psi^{\prime}\left(\tfrac{v_{k}+n_{k}}{2}\right)-\tfrac{1}{2}\psi^{\prime}\left(\tfrac{v_{k}}{2}\right)+\tfrac{1}{\nu_{k}+n_{k}}\left(W_{k}^{2}-W_{k}\right)+\tfrac{1}{\nu_{k}}\left[1-\tfrac{\nu_{k}}{\nu_{k}+X_{k}^{\prime}X_{k}}\right]\right],\\
 & = & \tfrac{1}{4}\psi^{\prime}\left(\tfrac{v_{k}+n_{k}}{2}\right)-\tfrac{1}{4}\psi^{\prime}\left(\tfrac{v_{k}}{2}\right)+\tfrac{1}{2\nu_{k}}+\tfrac{1}{2}\tfrac{1}{\nu_{k}+n_{k}}\ensuremath{\tfrac{\left(\nu_{k}+2\right)(\nu_{k}+n_{k})}{\nu_{k}(\nu_{k}+n_{k}+2)}}-\tfrac{1}{\left(\nu_{k}+n_{k}\right)}\\
 & = & \tfrac{1}{4}\psi^{\prime}\left(\tfrac{v_{k}+n_{k}}{2}\right)-\tfrac{1}{4}\psi^{\prime}\left(\tfrac{v_{k}}{2}\right)+\tfrac{1}{2}\left[\tfrac{1}{\nu_{k}}+\ensuremath{\tfrac{\nu_{k}+2}{\nu_{k}(\nu_{k}+n_{k}+2)}}-\tfrac{2}{\nu_{k}+n_{k}}\right]\\
 & = & \tfrac{1}{4}\psi^{\prime}\left(\tfrac{v_{k}+n_{k}}{2}\right)-\tfrac{1}{4}\psi^{\prime}\left(\tfrac{v_{k}}{2}\right)+\tfrac{1}{2}\left[\ensuremath{\tfrac{n_{k}\left(\nu_{k}+n_{k}+4\right)}{\nu_{k}(\nu_{k}+n_{k}+2)\left(\nu_{k}+n_{k}\right)}}\right],
\end{eqnarray*}
which equals $-\mathcal{I}_{\nu_{k}}$.

\hfill{}$\square$

\noindent \textbf{Proof of Lemma \ref{lem:JacobianMatrix}.} First,
we can express the $\Xi$ by using the cluster/group structure $\tilde{\Xi}=\left[\tilde{\Xi}_{1},\tilde{\Xi}_{2},\ldots,\tilde{\Xi}_{K}\right]$
with $\tilde{\Xi}_{k}\in\mathbb{R}^{n\times n_{k}}$, and $\tilde{\Xi}_{k}=\left[\tilde{\Xi}_{1k}^{\prime},\tilde{\Xi}_{2k}^{\prime},\ldots,\tilde{\Xi}_{Kk}^{\prime}\right]^{\prime}$
with $\tilde{\Xi}_{lk}\in\mathbb{R}^{n_{l}\times n_{k}}$. Then, we
have
\[
\frac{\partial{\rm vec}\left(\Xi\right)}{\partial{\rm vec}\left(\tilde{\Xi}\right)^{\prime}}=\left[\begin{array}{cccc}
\Gamma_{11} & \Gamma_{12} & \ldots & \Gamma_{1K}\\
\Gamma_{21} & \Gamma_{22} &  & \Gamma_{2K}\\
\vdots & \vdots & \ddots & \vdots\\
\Gamma_{K1} & \Gamma_{K2} &  & \Gamma_{KK}
\end{array}\right],\quad{\rm where}\quad\Gamma_{kl}=\frac{\partial{\rm vec}\left(\Xi_{k}\right)}{\partial{\rm vec}\left(\tilde{\Xi}_{l}\right)^{\prime}}\in\mathbb{R}^{nn_{k}\times nn_{k}}.
\]
Because $\Xi_{k}=\tilde{\Xi}_{k}P_{kk}$ and $P_{kk}$ only depends
on $\tilde{\Xi}_{kk}$, we have $\Gamma_{kl}=0$ if $k\neq l$. Then,
we have
\[
\Gamma_{kk}=\frac{\partial{\rm vec}\left(\Xi_{k}\right)}{\partial{\rm vec}\left(\Xi_{k}^{\prime}\right)^{\prime}}\frac{\partial{\rm vec}\left(\Xi_{k}^{\prime}\right)}{\partial{\rm vec}\left(\tilde{\Xi}_{k}^{\prime}\right)^{\prime}}\frac{\partial{\rm vec}\left(\tilde{\Xi}_{k}^{\prime}\right)}{\partial{\rm vec}\left(\tilde{\Xi}_{k}\right)^{\prime}}=K_{n_{k},n}\frac{\partial{\rm vec}\left(\Xi_{k}^{\prime}\right)}{\partial{\rm vec}\left(\tilde{\Xi}_{k}^{\prime}\right)^{\prime}}K_{n,n_{k}}
\]
and we have
\[
\frac{\partial{\rm vec}\left(\Xi_{k}^{\prime}\right)}{\partial{\rm vec}\left(\tilde{\Xi}_{k}^{\prime}\right)^{\prime}}=\left[\begin{array}{cccc}
\Pi_{11} & \Pi_{12} & \ldots & \Pi_{1K}\\
\Pi_{21} & \Pi_{22} &  & \Pi_{2K}\\
\vdots & \vdots & \ddots & \vdots\\
\Pi_{K1} & \Pi_{K2} & \cdots & \Pi_{KK}
\end{array}\right],
\]
and we have
\[
\Pi_{ij}=\frac{\partial{\rm vec}\left(\Xi_{ik}^{\prime}\right)}{\partial{\rm vec}\left(\tilde{\Xi}_{jk}^{\prime}\right)^{\prime}}=\frac{\partial{\rm vec}\left(P_{kk}^{\prime}\tilde{\Xi}_{ik}^{\prime}\right)}{\partial{\rm vec}\left(\tilde{\Xi}_{jk}^{\prime}\right)^{\prime}}=\begin{cases}
\bm{0} & i\neq j,j\neq k\\
I_{n_{i}}\otimes P_{kk}^{\prime} & i=j\neq k\\
\left(\tilde{\Xi}_{ik}\otimes I_{n_{k}}\right)\frac{\partial{\rm vec}\left(P_{kk}^{\prime}\right)}{\partial{\rm vec}\left(\tilde{\Xi}_{kk}^{\prime}\right)^{\prime}} & i\neq j,j=k\\
\left(I_{n_{i}}\otimes P_{kk}^{\prime}\right)+\left(\tilde{\Xi}_{ik}\otimes I_{n_{k}}\right)\frac{\partial{\rm vec}\left(P_{kk}^{\prime}\right)}{\partial{\rm vec}\left(\tilde{\Xi}_{kk}^{\prime}\right)^{\prime}} & i=j=k.
\end{cases}
\]
Because we have
\begin{align*}
\mathrm{d}{\rm vec}\left(P_{kk}^{\prime}\tilde{\Xi}_{ik}^{\prime}\right) & =\left(\tilde{\Xi}_{ik}\otimes I_{n_{k}}\right)d{\rm vec}\left(P_{kk}^{\prime}\right)+\left(I_{n_{i}}\otimes P_{kk}^{\prime}\right)d{\rm vec}\left(\tilde{\Xi}_{ik}^{\prime}\right),
\end{align*}
and 
\begin{align*}
 & \frac{\partial{\rm vec}\left(P_{kk}\right)}{\partial{\rm vec}\left(\tilde{\Xi}_{kk}\right)^{\prime}}=\frac{\partial{\rm vec}\left(\tilde{\Xi}_{kk}^{\prime}(\tilde{\Xi}_{kk}\tilde{\Xi}_{kk}^{\prime})^{-\frac{1}{2}}\right)}{\partial{\rm vec}\left(\tilde{\Xi}_{kk}\right)^{\prime}}\\
= & \ensuremath{\left((\tilde{\Xi}_{kk}\tilde{\Xi}_{kk}^{\prime})^{-\frac{1}{2}}\otimes I_{n_{k}}\right)\frac{\partial{\rm vec}\left(\tilde{\Xi}_{kk}^{\prime}\right)}{\partial{\rm vec}\left(\tilde{\Xi}_{kk}\right)^{\prime}}+\left(I_{n_{k}}\otimes\tilde{\Xi}_{kk}^{\prime}\right)\frac{\partial{\rm vec}\left((\tilde{\Xi}_{kk}\tilde{\Xi}_{kk}^{\prime})^{-\frac{1}{2}}\right)}{\partial{\rm vec}\left(\tilde{\Xi}_{kk}\right)^{\prime}}}\\
= & \left((\tilde{\Xi}_{kk}\tilde{\Xi}_{kk}^{\prime})^{-\frac{1}{2}}\otimes I_{n_{k}}\right)K_{n_{k}}+\left(I_{n_{k}}\otimes\tilde{\Xi}_{kk}^{\prime}\right)\frac{\partial{\rm vec}\left((\tilde{\Xi}_{kk}\tilde{\Xi}_{kk}^{\prime})^{-\frac{1}{2}}\right)}{\partial{\rm vec}\left(\tilde{\Xi}_{kk}\right)^{\prime}}\\
= & \left(I_{n_{k}}\otimes(\tilde{\Xi}_{kk}\tilde{\Xi}_{kk}^{\prime})^{-\frac{1}{2}}\right)K_{n_{k}}+\left(I_{n_{k}}\otimes\tilde{\Xi}_{kk}^{\prime}\right)\frac{\partial{\rm vec}\left((\tilde{\Xi}_{kk}\tilde{\Xi}_{kk}^{\prime})^{-\frac{1}{2}}\right)}{\partial{\rm vec}\left(\left(\tilde{\Xi}_{kk}\tilde{\Xi}_{kk}^{\prime}\right)^{-1}\right)^{\prime}}\frac{\partial{\rm vec}\left(\left(\tilde{\Xi}_{kk}\tilde{\Xi}_{kk}^{\prime}\right)^{-1}\right)^{\prime}}{\partial{\rm vec}\left(\tilde{\Xi}_{kk}\tilde{\Xi}_{kk}^{\prime}\right)^{\prime}}\frac{\partial{\rm vec}\left(\tilde{\Xi}_{kk}\tilde{\Xi}_{kk}^{\prime}\right)^{\prime}}{\partial{\rm vec}\left(\tilde{\Xi}_{kk}\right)^{\prime}}\\
= & \left((\tilde{\Xi}_{kk}\tilde{\Xi}_{kk}^{\prime})^{-\frac{1}{2}}\otimes I_{n_{k}}\right)K_{n_{k}}-\left\{ \left(I_{n_{k}}\otimes\tilde{\Xi}_{kk}^{\prime}\right)\ensuremath{\left((\tilde{\Xi}_{kk}\tilde{\Xi}_{kk}^{\prime})^{-\frac{1}{2}}\oplus(\tilde{\Xi}_{kk}\tilde{\Xi}_{kk}^{\prime})^{-\frac{1}{2}}\right)^{-1}}\right.\\
 & \quad\left.\times\left[\left(\tilde{\Xi}_{kk}\tilde{\Xi}_{kk}^{\prime}\right)^{-1}\otimes\left(\tilde{\Xi}_{kk}\tilde{\Xi}_{kk}^{\prime}\right)^{-1}\right]\left(I_{n_{k}^{2}}+K_{n_{k}}\right)\left(\tilde{\Xi}_{kk}\otimes I_{n_{k}}\right)\right\} .
\end{align*}
Next, from 
\begin{align*}
 & \ensuremath{\left((\tilde{\Xi}_{kk}\tilde{\Xi}_{kk}^{\prime})^{-\frac{1}{2}}\oplus(\tilde{\Xi}_{kk}\tilde{\Xi}_{kk}^{\prime})^{-\frac{1}{2}}\right)^{-1}}\left[\left(\tilde{\Xi}_{kk}\tilde{\Xi}_{kk}^{\prime}\right)^{-1}\otimes\left(\tilde{\Xi}_{kk}\tilde{\Xi}_{kk}^{\prime}\right)^{-1}\right]\\
= & \left[\left(\tilde{\Xi}_{kk}\tilde{\Xi}_{kk}^{\prime}\otimes\tilde{\Xi}_{kk}\tilde{\Xi}_{kk}^{\prime}\right)\left[(\tilde{\Xi}_{kk}\tilde{\Xi}_{kk}^{\prime})^{-\frac{1}{2}}\oplus(\tilde{\Xi}_{kk}\tilde{\Xi}_{kk}^{\prime})^{-\frac{1}{2}}\right]\right]^{-1}\\
= & \left[\left(\tilde{\Xi}_{kk}\tilde{\Xi}_{kk}^{\prime}\otimes\tilde{\Xi}_{kk}\tilde{\Xi}_{kk}^{\prime}\right)\left[(\tilde{\Xi}_{kk}\tilde{\Xi}_{kk}^{\prime})^{-\frac{1}{2}}\otimes I+I\otimes(\tilde{\Xi}_{kk}\tilde{\Xi}_{kk}^{\prime})^{-\frac{1}{2}}\right]\right]^{-1}\\
= & \left[(\tilde{\Xi}_{kk}\tilde{\Xi}_{kk}^{\prime})^{\frac{1}{2}}\otimes\tilde{\Xi}_{kk}\tilde{\Xi}_{kk}^{\prime}+\tilde{\Xi}_{kk}\tilde{\Xi}_{kk}^{\prime}\otimes(\tilde{\Xi}_{kk}\tilde{\Xi}_{kk}^{\prime})^{\frac{1}{2}}\right]^{-1}\\
= & \left[(\tilde{\Xi}_{kk}\tilde{\Xi}_{kk}^{\prime})^{\frac{1}{2}}\oplus\tilde{\Xi}_{kk}\tilde{\Xi}_{kk}^{\prime}\right]^{-1},
\end{align*}
we find
\begin{align*}
 & \left(I_{n_{k}}\otimes\tilde{\Xi}_{kk}^{\prime}\right)\left[(\tilde{\Xi}_{kk}\tilde{\Xi}_{kk}^{\prime})^{\frac{1}{2}}\oplus\tilde{\Xi}_{kk}\tilde{\Xi}_{kk}^{\prime}\right]^{-1}\\
= & \left[\left((\tilde{\Xi}_{kk}\tilde{\Xi}_{kk}^{\prime})^{\frac{1}{2}}\otimes\tilde{\Xi}_{kk}\tilde{\Xi}_{kk}^{\prime}+\tilde{\Xi}_{kk}\tilde{\Xi}_{kk}^{\prime}\otimes(\tilde{\Xi}_{kk}\tilde{\Xi}_{kk}^{\prime})^{\frac{1}{2}}\right)\left(I_{n_{k}}\otimes\tilde{\Xi}_{kk}^{\prime-1}\right)\right]^{-1}\\
= & \left[\left((\tilde{\Xi}_{kk}\tilde{\Xi}_{kk}^{\prime})^{\frac{1}{2}}\otimes\tilde{\Xi}_{kk}+\tilde{\Xi}_{kk}\tilde{\Xi}_{kk}^{\prime}\otimes(\tilde{\Xi}_{kk}\tilde{\Xi}_{kk}^{\prime})^{\frac{1}{2}}\tilde{\Xi}_{kk}^{\prime-1}\right)\right]^{-1}\\
= & \left[(\tilde{\Xi}_{kk}\tilde{\Xi}_{kk}^{\prime})^{\frac{1}{2}}\otimes\tilde{\Xi}_{kk}+\tilde{\Xi}_{kk}\tilde{\Xi}_{kk}^{\prime}\otimes P_{kk}^{-1}\right]^{-1}\\
= & \left[P_{kk}^{-1}\tilde{\Xi}_{kk}^{\prime}\otimes\tilde{\Xi}_{kk}+\tilde{\Xi}_{kk}\tilde{\Xi}_{kk}^{\prime}\otimes P_{kk}^{-1}\right]^{-1}\\
= & \left[I_{n_{k}}\otimes\tilde{\Xi}_{kk}+\tilde{\Xi}_{kk}P_{kk}\otimes P_{kk}^{-1}\left(P_{kk}^{-1}\tilde{\Xi}_{kk}^{\prime}\otimes I_{n_{k}}\right)\right]^{-1}\\
= & \left(\tilde{\Xi}_{kk}^{\prime-1}P_{kk}\otimes I_{n_{k}}\right)\left[I_{n_{k}}\otimes\tilde{\Xi}_{kk}+\tilde{\Xi}_{kk}P_{kk}\otimes P_{kk}^{-1}\right]^{-1}.
\end{align*}
Thus, we have
\begin{align*}
\frac{\partial{\rm vec}\left(P_{kk}\right)}{\partial{\rm vec}\left(\tilde{\Xi}_{kk}\right)^{\prime}} & =\left((\tilde{\Xi}_{kk}\tilde{\Xi}_{kk}^{\prime})^{-\frac{1}{2}}\otimes I_{n_{k}}\right)K_{n_{k}}-\left[P_{kk}^{-1}\tilde{\Xi}_{kk}^{\prime}\otimes\tilde{\Xi}_{kk}+\tilde{\Xi}_{kk}\tilde{\Xi}_{kk}^{\prime}\otimes P_{kk}^{-1}\right]^{-1}\left(I_{n_{k}^{2}}+K_{n_{k}}\right)\left(\tilde{\Xi}_{kk}\otimes I_{n_{k}}\right)\\
 & =\left(\tilde{\Xi}_{kk}^{\prime-1}P_{kk}\otimes I_{n_{k}}\right)K_{n_{k}}-\left[P_{kk}^{-1}\tilde{\Xi}_{kk}^{\prime}\otimes\tilde{\Xi}_{kk}+\tilde{\Xi}_{kk}\tilde{\Xi}_{kk}^{\prime}\otimes P_{kk}^{-1}\right]^{-1}\left(I_{n_{k}^{2}}+K_{n_{k}}\right)\left(\tilde{\Xi}_{kk}\otimes I_{n_{k}}\right)\\
 & =\left(\tilde{\Xi}_{kk}^{\prime-1}P_{kk}\otimes I_{n_{k}}\right)\left[K_{n_{k}}-\left[I_{n_{k}}\otimes\tilde{\Xi}_{kk}+\tilde{\Xi}_{kk}P_{kk}\otimes P_{kk}^{-1}\right]^{-1}\left(I_{n_{k}^{2}}+K_{n_{k}}\right)\left(\tilde{\Xi}_{kk}\otimes I_{n_{k}}\right)\right].
\end{align*}
Finally, we have
\begin{align*}
\frac{\partial{\rm vec}\left(P_{kk}^{\prime}\right)}{\partial{\rm vec}\left(\tilde{\Xi}_{kk}^{\prime}\right)^{\prime}} & =\frac{\partial{\rm vec}\left(P_{kk}^{\prime}\right)}{\partial{\rm vec}\left(P_{kk}\right)^{\prime}}\frac{\partial{\rm vec}\left(P_{kk}\right)}{\partial{\rm vec}\left(\tilde{\Xi}_{kk}\right)^{\prime}}\frac{\partial{\rm vec}\left(\tilde{\Xi}_{kk}\right)}{\partial{\rm vec}\left(\tilde{\Xi}_{kk}^{\prime}\right)^{\prime}}=K_{n_{k}}\frac{\partial{\rm vec}\left(P_{kk}\right)}{\partial{\rm vec}\left(\tilde{\Xi}_{kk}\right)^{\prime}}K_{n_{k}}\\
 & =K_{n_{k}}\left(\tilde{\Xi}_{kk}^{\prime-1}P_{kk}\otimes I_{n_{k}}\right)\\
 & \quad\times\left[I_{n_{k}^{2}}-\left[I_{n_{k}}\otimes\tilde{\Xi}_{kk}+\tilde{\Xi}_{kk}P_{kk}\otimes P_{kk}^{-1}\right]^{-1}\left(I_{n_{k}^{2}}+K_{n_{k}}\right)\left(I_{n_{k}}\otimes\tilde{\Xi}_{kk}\right)\right].
\end{align*}
$\square$

\noindent \textbf{Proof of Theorem \ref{thm:MLE-consistent-asN}.}
Part (i). The consistency of the MLE, $\hat{\theta}_{T}$, follows
by verifying the four conditions of \citet[theorem 2.5]{NeweyMcFadden:1994}.
(i) Our identification assumptions ensures that $\theta\neq\theta_{0}$
implies $f\left(y_{i}|\theta\right)\neq f\left(y_{i}|\theta_{0}\right)$.
(ii) The compactness condition is Assumption \ref{assu:Compact}.
(iii) the log-likelihood function 
\begin{equation}
\log f(Y|\theta)=-\tfrac{1}{2}\log|\Xi\Xi^{\prime}|+\sum_{k=1}^{K}c_{k}-\tfrac{\nu_{k}+n_{k}}{2}\log\left[1+\tfrac{1}{\nu_{k}}(Y-\mu)^{\prime}A^{\prime}e_{k}e_{k}^{\prime}A(Y-\mu)\right],\label{eq:LLequ}
\end{equation}
where $A=\Xi^{-1}$ is continuous for $\theta\in\Theta$ with probability
one. What remains to verified is the dominance condition: (iv) $\mathbb{E}_{\theta_{0}}\left[\sup_{\theta\in\Theta}|\log f(y|\theta)|\right]<\infty$.

First note that
\[
\left|\log f(Y|\theta)\right|\leq\left|\tfrac{1}{2}\log|\Xi\Xi^{\prime}|\right|+\sum_{k=1}^{K}|c_{k}|+R_{k}(\theta)\leq C_{0}+\sum_{k=1}^{K}R_{k}(\theta),
\]
where $C_{0}=\sup_{\theta\in\Theta}\left|\tfrac{1}{2}\log|\Xi\Xi^{\prime}|\right|+\sum_{k=1}^{K}|c_{k}|<\infty$
(since $\log|\Xi\Xi^{\prime}|$ is continuous and $\Theta$ is compact)
and
\begin{eqnarray*}
R_{k}(\theta) & = & \tfrac{\nu_{k}+n_{k}}{2}\log\left[1+\tfrac{1}{\nu_{k}}(Y-\mu)^{\prime}A^{\prime}e_{k}e_{k}^{\prime}A(Y-\mu)\right].\\
 & = & \tfrac{\nu_{k}+n_{k}}{2}\log\left[1+\tfrac{1}{\nu_{k}}(X-\xi)^{\prime}\Psi(X-\xi)\right]\\
 & \leq & \tfrac{\nu_{k}+n_{k}}{2}\log\left[1+\tfrac{\rho(\Psi)}{\nu_{k}}(X-\xi)^{\prime}(X-\xi)\right],
\end{eqnarray*}
where we substituted, $Y=\mu_{0}+\Xi_{0}X$, such that
\[
(Y-\mu)=(\mu_{0}+\Xi_{0}X-\mu)=\Xi_{0}(X-A_{0}(\mu-\mu_{0}))=\Xi_{0}(X-\xi),
\]
with $\xi=A_{0}(\mu-\mu_{0})$, $A_{0}=\Xi_{0}^{-1}$, $\Psi_{k}=\Xi_{0}^{\prime}A^{\prime}e_{k}e_{k}^{\prime}A\Xi_{0}$,
and $\rho(\Psi_{k})$ denoting the largest eigenvalue of $\Psi_{k}$. 

Next we add the positive term $\tfrac{\rho(\Psi_{k})}{\nu_{k}}(X+\xi)^{\prime}(X+\xi)$
\begin{eqnarray*}
R_{k}(\theta) & \leq & \tfrac{\nu_{k}+n_{k}}{2}\log\left[1+\tfrac{\rho(\Psi_{k})}{\nu_{k}}(X-\xi)^{\prime}(X-\xi)+\tfrac{\rho(\Psi_{k})}{\nu_{k}}(X+\xi)^{\prime}(X+\xi)\right]\\
 & = & \tfrac{\nu_{k}+n_{k}}{2}\log\left[1+2\tfrac{\rho(\Psi_{k})}{\nu_{k}}X^{\prime}X+2\tfrac{\rho(\Psi_{k})}{\nu_{k}}\xi^{\prime}\xi\right]\\
 & = & \tfrac{\nu_{k}+n_{k}}{2}\log\left[1+2\tfrac{\rho(\Psi_{k})}{\nu_{k}}\xi^{\prime}\xi+\sum_{l=1}^{K}2\tfrac{\rho(\Psi_{k})}{\nu_{k}}X_{l}^{\prime}X_{l}\right]\\
 & \leq & \tfrac{\nu_{k}+n_{k}}{2}\log\left[1+2\tfrac{\rho(\Psi_{k})}{\nu_{k}}\xi^{\prime}\xi\right]+\tfrac{\nu_{k}+n_{k}}{2}\sum_{l=1}^{K}\log\left[1+2\tfrac{\rho(\Psi_{k})}{\nu_{k}}X_{l}^{\prime}X_{l}\right],
\end{eqnarray*}
where we used $\log(1+a_{1}+\cdots+a_{M})\leq\sum_{j=1}^{M}\log(1+a_{j})$
for $a_{1},\ldots,a_{M}\geq0$. The first term is bounded by
\[
C_{1,k}=\sup_{\theta\in\Theta}\tfrac{\nu_{k}+n_{k}}{2}\log\left[1+2\tfrac{\rho(\Psi_{k})}{\nu_{k}}\xi^{\prime}\xi\right]<\infty,
\]
since $\lim_{\nu\rightarrow\infty}\nu\log(1+a/\nu)\rightarrow a<\infty$.
Next, using $\log(1+ab)\leq\log(1+a)(1+b)]$ for $a,b>0$ we find
\begin{eqnarray*}
\log\left[1+2\tfrac{\rho(\Psi_{k})}{\nu_{k}}X_{l}^{\prime}X_{l}\right] & = & \log\left[1+2\tfrac{\rho(\Psi_{k})\nu_{l,0}}{\nu_{k}}\tfrac{X_{l}^{\prime}X_{l}}{\nu_{l,0}}\right]\\
 & \leq & \log\left(1+2\tfrac{\rho(\Psi_{k})\nu_{l,0}}{\nu_{k}}\right)+\log\left(1+\tfrac{X_{l}^{\prime}X_{l}}{\nu_{l,0}}\right),
\end{eqnarray*}
where 
\[
C_{2,k}\equiv\sup_{\theta\in\Theta}\tfrac{\nu_{k}+n_{k}}{2}\log\left(1+2\tfrac{\rho(\Psi_{k})\nu_{l,0}}{\nu_{k}}\right)<\infty,
\]
and since 
\[
\mathbb{E}_{\theta_{0}}\left[\log\left(1+\tfrac{X_{l}^{\prime}X_{l}}{\nu_{l,0}}\right)\right]=\psi\left(\tfrac{\nu_{l,0}+n_{l}}{2}\right)-\psi\left(\tfrac{\nu_{l,0}}{2}\right),
\]
it follows that
\[
\mathbb{E}_{\theta_{0}}\left[\sup_{\theta\in\Theta}\left|\log f(Y|\theta)\right|\right]\leq C_{0}+\sum_{k=1}^{K}C_{1,k}+C_{2,k}+\tfrac{\nu_{k}+n_{k}}{2}\sum_{l=1}^{K}\left[\psi\left(\tfrac{\nu_{l,0}+n_{l}}{2}\right)-\psi\left(\tfrac{\nu_{l,0}}{2}\right)\right],
\]
which proves that $\mathbb{E}_{\theta_{0}}\left[\sup_{\theta\in\Theta}R_{k}(\theta)\right]<\infty$.

Next we prove part (ii) concerning the limit distribution. Given the
four conditions of \citet[theorem 2.5]{NeweyMcFadden:1994} in the
proof of consistency, we proceed to establish the asymptotic normality
by verifying the conditions of \citet[theorem 2.4]{Amemiya(85)}.
These are as follows: (i) $\frac{\partial^{2}f(y\mid\theta)}{\partial\theta\partial\theta^{\prime}}$
exists and is continuous in an open, convex neighborhood $\mathscr{N}$
of $\theta_{0}$ (ii) $\int\frac{\partial f(y\mid\theta)}{\partial\theta}dy=0$,
(iii) $\int\frac{\partial^{2}f\left(y|\theta\right)}{\partial\theta\partial\theta^{\prime}}dy=0$,
(iv) $\ensuremath{{\rm plim}\frac{1}{T}\sum_{t=1}^{T}\frac{\partial^{2}\log f(y_{t}\mid\theta)}{\partial\theta\partial\theta^{\prime}}=\mathbb{E}_{\theta_{0}}\left[\frac{\partial^{2}\log f\left(y|\theta\right)}{\partial\theta\partial\theta^{\prime}}\right]}$
uniformly in $\theta$ in an open neighborhood of $\theta_{0}$. Then,
we have $\ensuremath{\sqrt{T}\left(\hat{\theta}_{T}-\theta_{0}\right)\rightarrow N\left\{ 0,\mathcal{J}^{-1}\right\} }$,
where $\mathcal{J}=-\mathbb{E}_{\theta_{0}}\left[\left.\frac{\partial^{2}\log f\left(y|\theta\right)}{\partial\theta\partial\theta^{\prime}}\right|_{\theta=\theta_{0}}\right]$.

Note that because $\mathscr{N}\subset\Theta$ where $\Theta$ is the
parameter sets, it is sufficient to verify the Conditions (i), (ii),
(iii), and (iv) hold for $\forall\theta\in\Theta$ and $\theta_{0}$
is interior to $\Theta$, where $\Theta$ is compact by our assumption
in the proof of consistency. Before giving a formal proof, we first
define the random vector $Z=Z(Y|\theta)$, $Z_{k}=Z_{k}(Y|\theta)$,
and random variable $S_{k}=S_{k}\left(Y|\theta\right)$, given by
\[
Z=A\left(Y-\mu\right),\quad Z_{k}=e_{k}^{\prime}Z,\quad S_{k}=\tfrac{\nu_{k}+n_{k}}{\nu_{k}+Z_{k}^{\prime}Z_{k}}
\]
where $A=\Xi^{-1}$ and $Y=\mu_{0}+\Xi_{0}X$. When $\mu=\mu_{0}$
and $\Xi=\Xi_{0}$, we have $Z=X$, $Z_{k}=X_{k}$, and $S_{k}=W_{k}=\tfrac{\nu_{k}+n_{k}}{\nu_{k}+X_{k}^{\prime}X_{k}}$.

\noindent \textbf{Condition (i)}: That $\theta_{0}$ is interior
to $\Theta$ is assumed. We prove that $\frac{\partial^{2}f(y\mid\theta)}{\partial\theta\partial\theta^{\prime}}$
exists and is continuous for $\forall\theta\in\Theta$ directly by
deriving the expression for $\frac{\partial^{2}\log f\left(Y|\theta\right)}{\partial\theta\partial\theta^{\prime}}$.
The form of such hessian matrix can be easily adapted from e Theorem
\ref{thm:ScoreHess} by replacing $X$, $X_{k}$, and $W_{k}$ with
$Z$, $Z_{k}$, and $S_{k}$. 

\noindent \textbf{Condition (ii)}: It is sufficient to show that
for $\forall\theta\in\Theta$, we have
\begin{align*}
\int\frac{\partial f(y|\theta)}{\partial\theta}\mathrm{d}y & =\int\nabla_{\theta}f(Y|\theta)\mathrm{d}Y=\mathbb{E}_{\theta}\left[\nabla_{\theta}\right]=0,\quad\nabla_{\theta}=\frac{\partial\log f\left(Y|\theta\right)}{\partial\theta}
\end{align*}
Adapted from Theorem \ref{thm:ScoreHess}, we can easily obtain that
$\mathbb{E}_{\theta}\left[\nabla_{\theta}\right]=0$ when $\min_{k}\nu_{k}>1$.
Note that $\mathbb{E}\left[\nabla_{\Xi}\right]=\mathbb{E}\left[M_{\Xi}^{+\prime}\nabla_{\Xi}\right]=M_{\Xi}^{+\prime}\mathbb{E}\left[\nabla_{\Xi}\right]=0$.

\noindent \textbf{Condition (iii)}: It is sufficient to show that
for $\forall\ \theta\in\Theta$, we have 
\begin{align*}
\int\frac{\partial^{2}f\left(y|\theta\right)}{\partial\theta\partial\theta^{\prime}}dy & =\int\left[\nabla_{\theta}\nabla_{\theta}^{\prime}+\nabla_{\theta\theta^{\prime}}\right]f(y\mid\theta)\mathrm{d}y=\underbrace{\mathbb{E}_{\theta}\left[\nabla_{\theta}\nabla_{\theta}^{\prime}\right]}_{\mathcal{I}^{\theta}}+\underbrace{\mathbb{E}_{\theta}\left[\nabla_{\theta\theta^{\prime}}\right]}_{-\mathcal{J^{\theta}}}=0.
\end{align*}
This requires to verify the existences both of the Fisher information
matrix $\mathcal{I}^{\theta}=\mathbb{E}_{\theta}\left[\nabla_{\theta}\nabla_{\theta}^{\prime}\right]$
and the expectation of Hessian matrix $\mathcal{J^{\theta}}=-\mathbb{E}_{\theta}\left[\nabla_{\theta\theta^{\prime}}\right]$,
and show that $\mathcal{I^{\theta}}=\mathcal{J^{\theta}}$. Such results
can be easily verified by adapting from Theorem \ref{thm:ScoreHess},
which requires $\min_{k}\nu_{k}>2$. Note that for $\Xi$, instead
of $\tilde{\Xi}$, we have
\begin{align*}
\mathcal{I}_{\Xi} & =\mathbb{E}\left[\nabla_{\Xi}\nabla_{\Xi}^{\prime}\right]=M_{\Xi}^{+\prime}\mathbb{E}\left[\nabla_{\tilde{\Xi}}\nabla_{\tilde{\Xi}}^{\prime}\right]M_{\Xi}^{+}=M_{\Xi}^{+\prime}\mathcal{I}_{\tilde{\Xi}}M_{\Xi}^{+}\\
\mathcal{J}_{\Xi} & =\mathbb{E}\left[\nabla_{\Xi\Xi^{\prime}}\right]=M_{\Xi}^{+\prime}\mathbb{E}\left[\nabla_{\tilde{\Xi}\tilde{\Xi}^{\prime}}\right]M_{\Xi}^{+\prime}-\left(\mathbb{E}(\nabla_{\tilde{\Xi}}^{\prime})\otimes I_{n}\right)\frac{\partial{\rm vec}(M_{\Xi}^{+\prime})}{\partial{\rm vec}(\Xi)^{\prime}}=M_{\Xi}^{+\prime}\mathcal{J}_{\tilde{\Xi}}M_{\Xi}^{+},
\end{align*}
and from $\mathcal{I}_{\tilde{\Xi}}=\mathcal{J}_{\tilde{\Xi}}$, we
have $\mathcal{I}_{\Xi}=\mathcal{J}_{\Xi}$.

\noindent \textbf{Condition (iv)}: If there exist a $g(y)$ function
with $\|\frac{\partial^{2}\log f\left(y|\theta\right)}{\partial\theta\partial\theta^{\prime}}\|\leqslant g(y)$
for all $\theta\in\Theta$ and $\mathbb{E}_{\theta_{0}}\left[g(z)\right]<\infty$,
then we have $\ensuremath{{\rm plim}\ \frac{1}{T}\sum_{t=1}^{T}\frac{\partial^{2}\log f(y_{t}\mid\theta)}{\partial\theta\partial\theta^{\prime}}=\mathbb{E}_{\theta_{0}}\left[\frac{\partial^{2}\log f\left(y|\theta\right)}{\partial\theta\partial\theta^{\prime}}\right]}$
uniformly in $\theta\in\Theta$, see e.g. \citet[lemma 2.4]{NeweyMcFadden:1994}.
This section, we determine the dominating function, $g(y)$.

First, adapted from Theorem \ref{thm:ScoreHess}, by replacing $X$,
$X_{k}$, and $W_{k}$ with $Z$, $Z_{k}$, and $S_{k}$, the expression
of Hessian matrix of $\log f\left(Y|\theta\right)$ with respective
to $\theta$ are given by
\begin{align*}
\nabla_{\mu\mu^{\prime}}^{\theta} & =\sum_{k=1}^{K}\tfrac{2S_{k}^{2}}{\nu_{k}+n_{k}}A^{\prime}e_{k}Z_{k}Z_{k}^{\prime}e_{k}^{\prime}A-S_{k}A^{\prime}J_{k}A\\
\nabla_{\mu\tilde{\Xi}^{\prime}}^{\theta} & =\sum_{k=1}^{K}S_{k}\left(AZ_{k}^{\prime}e_{k}^{\prime}\otimes A^{\prime}\right)K_{n}+\tfrac{2S_{k}^{2}}{\nu_{k}+n_{k}}A^{\prime}e_{k}Z_{k}{\rm vec}\left(A^{\prime}e_{k}Z_{k}Z^{\prime}\right)^{\prime}-S_{k}A^{\prime}\left(Z^{\prime}\otimes J_{k}A\right)\\
\nabla_{\mu\nu_{k}}^{\theta} & =\sum_{k=1}^{K}\tfrac{1}{\nu_{k}+n_{k}}A^{\prime}e_{k}Z_{k}\left(S_{k}-S_{k}^{2}\right)\\
\nabla_{\tilde{\Xi}\tilde{\Xi}^{\prime}}^{\theta} & =-\ensuremath{\left(A\otimes A^{\prime}\right)}K_{n}+\sum_{k=1}^{K}\tfrac{2S_{k}^{2}}{\nu_{k}+n_{k}}{\rm vec}\left(A^{\prime}e_{k}Z_{k}Z^{\prime}\right){\rm vec}\left(A^{\prime}e_{k}Z_{k}Z^{\prime}\right)^{\prime}\\
 & \quad+\sum_{k=1}^{K}S_{k}\left[\left(ZZ_{k}^{\prime}e_{k}^{\prime}A\otimes A^{\prime}\right)K_{n}-\left(ZZ^{\prime}\otimes A^{\prime}J_{k}A\right)-\left(A\otimes A^{\prime}e_{k}Z_{k}Z^{\prime}\right)K_{n}\right]\\
\nabla_{\tilde{\Xi}\nu_{k}}^{\theta} & =\tfrac{1}{\nu_{k}+n_{k}}\left(S_{k}-S_{k}^{2}\right){\rm vec}\left(A^{\prime}e_{k}Z_{k}Z^{\prime}\right)\\
\nabla_{\nu_{k}\nu_{k}}^{\theta} & =\tfrac{1}{4}\psi^{\prime}\left(\tfrac{v_{k}+n_{k}}{2}\right)-\tfrac{1}{4}\psi^{\prime}\left(\tfrac{v_{k}}{2}\right)+\tfrac{1}{2\nu_{k}}+\tfrac{1}{2}\tfrac{1}{\nu_{k}+n_{k}}S_{k}^{2}-\tfrac{1}{\left(\nu_{k}+n_{k}\right)}S_{k}\\
\nabla_{\nu_{k}\nu_{l}}^{\theta} & =0,\quad{\rm for}\quad k\neq l,
\end{align*}
where the $\theta$ in the superscript means that the Score and Hessian
matrix is evaluated at $\theta$. 

Note that we have $\nabla_{\mu\Xi^{\prime}}^{\theta}=\nabla_{\mu\tilde{\Xi}^{\prime}}^{\theta}M_{\Xi}^{+}$,
and $\nabla_{\Xi\Xi^{\prime}}^{\theta}=M_{\Xi}^{+\prime}\nabla_{\tilde{\Xi}\tilde{\Xi}^{\prime}}M_{\Xi}^{+\prime}+\left(\nabla_{\tilde{\Xi}}^{\prime}\otimes I_{n}\right)\frac{\partial{\rm vec}(M_{\Xi}^{+\prime})}{\partial{\rm vec}(\Xi)^{\prime}}$,
and $\nabla_{\Xi\nu_{k}}^{\theta}=M_{\Xi}^{+\prime}\nabla_{\tilde{\Xi}\nu_{k}}^{\theta}$,
and from the following inequalities
\begin{align*}
\left\Vert \nabla_{\mu\Xi^{\prime}}^{\theta}\right\Vert  & \leq\left\Vert \nabla_{\mu\tilde{\Xi}^{\prime}}^{\theta}\right\Vert \left\Vert M_{\Xi}^{+}\right\Vert ,\quad\left\Vert \nabla_{\Xi\nu_{k}}^{\theta}\right\Vert =\left\Vert M_{\Xi}^{+\prime}\right\Vert \left\Vert \nabla_{\tilde{\Xi}\nu_{k}}^{\theta}\right\Vert \\
\left\Vert \nabla_{\Xi\Xi^{\prime}}^{\theta}\right\Vert  & \leq\left\Vert M_{\Xi}^{+}\right\Vert ^{2}\left\Vert \nabla_{\tilde{\Xi}\tilde{\Xi}^{\prime}}\right\Vert +n\left\Vert \nabla_{\tilde{\Xi}}\right\Vert \left\Vert \tfrac{\partial{\rm vec}(M_{\Xi}^{+\prime})}{\partial{\rm vec}(\Xi)^{\prime}}\right\Vert ,
\end{align*}
and because $\left\Vert \nabla_{\tilde{\Xi}}\right\Vert \leq\sum_{k=1}^{K}\left\Vert A^{\prime}e_{k}\right\Vert \left\Vert S_{k}Z_{k}Z^{\prime}\right\Vert +\left\Vert A^{\prime}\right\Vert $,
and from Equation (\ref{eq:SkZkZ}), we have 
\[
\mathbb{E}_{\theta_{0}}\sup_{\theta\in\Theta}\left|S_{k}Z_{k,i}Z_{l,j}\right|<\infty,\quad{\rm if\ all}\ \nu_{k}>1
\]
It is enough to determine the dominating function, $g(y)$ under $\theta=\{\mu,\tilde{\Xi},\bm{\nu}\}$,
instead of $\theta=\{\mu,\Xi,\bm{\nu}\}$.

Next we establish bounds for each of the terms. First, we have
\begin{align*}
\left\Vert \nabla_{\mu\mu^{\prime}}^{\theta}\right\Vert  & =\left\Vert \sum_{k=1}^{K}\tfrac{2S_{k}^{2}}{\nu_{k}+n_{k}}A^{\prime}e_{k}Z_{k}Z_{k}^{\prime}e_{k}^{\prime}A-S_{k}A^{\prime}J_{k}A\right\Vert \\
 & \leq\sum_{k=1}^{K}\tfrac{2}{\nu_{k}+n_{k}}\left\Vert A^{\prime}e_{k}\right\Vert ^{2}\left\Vert S_{k}^{2}Z_{k}Z_{k}^{\prime}\right\Vert +S_{k}\left\Vert A^{\prime}J_{k}A\right\Vert \\
 & \leq\sum_{k=1}^{K}\tfrac{2}{\nu_{k}+n_{k}}\left\Vert A^{\prime}e_{k}\right\Vert ^{2}\left\Vert S_{k}^{2}Z_{k}Z_{k}^{\prime}\right\Vert +\tfrac{\nu_{k}+n_{k}}{\nu_{k}}\left\Vert A^{\prime}J_{k}A\right\Vert 
\end{align*}
which we use the fact that $S_{k}=\frac{\nu_{k}+n_{k}}{\nu_{k}+Z_{k}^{\prime}Z_{k}}\leq\frac{\nu_{k}+n_{k}}{\nu_{k}}$.
As for the $S_{k}^{2}Z_{k}Z_{k}^{\prime}$, it involves the following
terms
\begin{align*}
\left|S_{k}^{2}Z_{k,i}Z_{k,j}\right| & =\tfrac{\left(\nu_{k}+n_{k}\right)^{2}}{\left(\nu_{k}+\sum_{p}Z_{k,p}^{2}\right)^{2}}\left|Z_{k,i}Z_{k,j}\right|\\
 & \leq\tfrac{\left(\nu_{k}+n_{k}\right)^{2}}{\nu_{k}\left(\nu_{k}+\sum_{p}Z_{k,p}^{2}\right)}\left|Z_{k,i}Z_{k,j}\right|,\quad{\rm use}\ \tfrac{1}{\nu_{k}+\sum_{p}Z_{k,p}^{2}}<\tfrac{1}{\nu_{k}}\\
 & \leq\tfrac{\left(\nu_{k}+n_{k}\right)^{2}}{\nu_{k}\left(\nu_{k}+\sum_{p}Z_{k,p}^{2}\right)}\left|Z_{k}\right|_{\max}^{2},\quad\left|Z_{k}\right|_{\max}=\max_{i=1,\ldots,n_{k}}\ \left|Z_{k,i}\right|\\
 & \leq\tfrac{\left(\nu_{k}+n_{k}\right)^{2}}{\nu_{k}\left(\nu_{k}+\left|Z_{k}\right|_{\max}^{2}\right)}\left|Z_{k}\right|_{\max}^{2}\\
 & \leq\tfrac{1}{\nu_{k}}\left(\nu_{k}+n_{k}\right)^{2}
\end{align*}
So the term $\left\Vert A^{\prime}e_{k}\left(S_{k}^{2}Z_{k}Z_{k}^{\prime}\right)e_{k}^{\prime}A\right\Vert $
is bounded. This leads to $\left\Vert \nabla_{\mu\mu^{\prime}}^{\theta}\right\Vert $
is bounded. 

Next, we have
\begin{align*}
\left\Vert \nabla_{\mu\tilde{\Xi^{\prime}}}^{\theta}\right\Vert  & =\left\Vert \sum_{k=1}^{K}S_{k}\left(AZ_{k}^{\prime}e_{k}^{\prime}\otimes A^{\prime}\right)K_{n}+\tfrac{2S_{k}^{2}}{\nu_{k}+n_{k}}A^{\prime}e_{k}Z_{k}{\rm vec}\left(A^{\prime}e_{k}Z_{k}Z^{\prime}\right)^{\prime}-S_{k}A^{\prime}\left(Z^{\prime}\otimes J_{k}A\right)\right\Vert \\
 & \leq\sum_{k=1}^{K}\left\Vert S_{k}\left(AZ_{k}^{\prime}e_{k}^{\prime}\otimes A^{\prime}\right)K_{n}\right\Vert +\tfrac{2S_{k}^{2}}{\nu_{k}+n_{k}}\left\Vert A^{\prime}e_{k}Z_{k}{\rm vec}\left(A^{\prime}e_{k}Z_{k}Z^{\prime}\right)^{\prime}\right\Vert +\left\Vert S_{k}A^{\prime}\left(Z^{\prime}\otimes J_{k}A\right)\right\Vert ,
\end{align*}
where the first term involves 
\begin{align}
\left|S_{k}Z_{k,i}\right| & =\tfrac{\nu_{k}+n_{k}}{\nu_{k}+\sum_{p}Z_{k,p}^{2}}\left|Z_{k,i}\right|\nonumber \\
 & \leq\tfrac{\nu_{k}+n_{k}}{\nu_{k}+Z_{k,i}^{2}}\left|Z_{k,i}\right|\nonumber \\
 & \leq\tfrac{\nu_{k}+n_{k}}{2\sqrt{\nu_{k}}\left|Z_{k,i}\right|}\left|Z_{k,i}\right|\nonumber \\
 & =\tfrac{\nu_{k}+n_{k}}{2\sqrt{\nu_{k}}},\label{eq:SkZk}
\end{align}
the second term involves
\begin{align*}
\left|S_{k}^{2}Z_{k,p}Z_{k,i}Z_{l,j}\right| & =S_{k}^{2}\left|Z_{k,p}Z_{k,i}\right|\left|Z_{l,j}\right|\\
 & \leq\left(\nu_{k}+n_{k}\right)^{2}\left|Z_{l,j}\right|\\
 & =\left(\nu_{k}+n_{k}\right)^{2}\left|\tilde{e}_{l,i}Z\right|,\quad\tilde{e}_{l,i}Z=Z_{l,j}\\
 & =\left(\nu_{k}+n_{k}\right)^{2}\left|\tilde{e}_{l,i}A\left(Y-\mu\right)\right|\\
 & =\left(\nu_{k}+n_{k}\right)^{2}\left|\tilde{e}_{l,i}A\Xi_{0}\left(X-\xi\right)\right|,\quad\xi=A_{0}(\mu-\mu_{0})\\
 & \leq\left(\nu_{k}+n_{k}\right)^{2}\left[\left|\tilde{e}_{l,i}A\Xi_{0}X\right|+\left|\tilde{e}_{l,i}A\Xi_{0}\xi\right|\right]\\
 & \leq C_{1}\left(\theta\right)+\sum_{k=1}^{K}C_{2}\left(\theta\right)\left|X_{k}\right|,\quad C_{1}\left(\theta\right),C_{2}\left(\theta\right)>0\\
 & \leq C_{1}^{*}+\sum_{k=1}^{K}C_{2}^{*}\left|X_{k}\right|,\quad C_{i}^{*}=\sup_{\theta\in\Theta}C_{i}\left(\theta\right)\\
\sup_{\theta\in\Theta}\left|S_{k}^{2}Z_{k,p}Z_{k,i}Z_{l,j}\right| & \leq C_{1}^{*}+\sum_{k=1}^{K}C_{2}^{*}\left|X_{k}\right|\\
\mathbb{E}_{\theta_{0}}\sup_{\theta\in\Theta}\left|S_{k}^{2}Z_{k,p}Z_{k,i}Z_{l,j}\right| & \leq C_{1}^{*}+\sum_{k=1}^{K}C_{2}^{*}\mathbb{E}_{\theta_{0}}\left[\left|X_{k}\right|\right]<\infty,\quad{\rm if\ all}\ \nu_{k}>1,
\end{align*}
and the third term and final term involves
\begin{align*}
\left|S_{k}Z_{l,i}\right| & \leq\tfrac{\nu_{k}+n_{k}}{\nu_{k}}\left|Z_{l,i}\right|\leq\tilde{C}_{1}+\sum_{k=1}^{K}\tilde{C}_{2}\left|X_{k}\right|,\\
\mathbb{E}_{\theta_{0}}\sup_{\theta\in\Theta}\left|S_{k}^{2}Z_{k,p}Z_{k,i}Z_{l,j}\right| & <\tilde{C}_{1}+\sum_{k=1}^{K}\tilde{C}_{2}\mathbb{E}_{\theta_{0}}\left[\left|X_{k}\right|\right]<\infty,\quad{\rm if\ all}\ \nu_{k}>1.
\end{align*}
So, we have shown that $\mathbb{E}_{\theta_{0}}\left\Vert \sup_{\theta\in\Theta}\nabla_{\mu\Xi^{\prime}}^{\theta}\right\Vert <\infty$
for all $\nu_{k}>1$.

Next, we have
\begin{align*}
\left\Vert \nabla_{\mu\nu_{k}}^{\theta}\right\Vert  & =\left\Vert \sum_{k=1}^{K}\tfrac{1}{\nu_{k}+n_{k}}A^{\prime}e_{k}Z_{k}\left(S_{k}-S_{k}^{2}\right)\right\Vert \\
 & \leq\sum_{k=1}^{K}\tfrac{1}{\nu_{k}+n_{k}}\left\Vert A^{\prime}e_{k}S_{k}Z_{k}\right\Vert +\tfrac{1}{\nu_{k}+n_{k}}\left\Vert A^{\prime}e_{k}S_{k}^{2}Z_{k}\right\Vert .
\end{align*}
Again, the first term involves
\[
\left|S_{k}Z_{k}\right|=\tfrac{\nu_{k}+n_{k}}{\nu_{k}+Z_{k}^{\prime}Z_{k}}\left|Z_{k}\right|\leq\tfrac{\nu_{k}+n_{k}}{2\sqrt{\nu_{k}}\left|Z_{k}\right|}\left|Z_{k}\right|=\tfrac{\nu_{k}+n_{k}}{2\sqrt{\nu_{k}}}
\]
The second term involves
\[
\left|S_{k}^{2}Z_{k}\right|=\tfrac{\left(\nu_{k}+n_{k}\right)^{2}}{\left(\nu_{k}+Z_{k}^{\prime}Z_{k}\right)^{2}}\left|Z_{k}\right|\leq\tfrac{1}{\nu_{k}}\tfrac{\left(\nu_{k}+n_{k}\right)^{2}}{\left(\nu_{k}+Z_{k}^{\prime}Z_{k}\right)}\left|Z_{k}\right|\leq\tfrac{1}{\nu_{k}}\tfrac{\nu_{k}+n_{k}}{2\sqrt{\nu_{k}}\left|Z_{k}\right|}\left|Z_{k}\right|=\tfrac{1}{\nu_{k}}\tfrac{\nu_{k}+n_{k}}{2\sqrt{\nu_{k}}},
\]
which shows that $\left\Vert \nabla_{\mu\nu_{k}}^{\theta}\right\Vert $
is bounded for all $\nu_{k}>0$.

Next, we have
\begin{align*}
\nabla_{\tilde{\Xi}\tilde{\Xi^{\prime}}}^{\theta} & =-\ensuremath{\left(A\otimes A^{\prime}\right)}K_{n}+\sum_{k=1}^{K}\tfrac{2S_{k}^{2}}{\nu_{k}+n_{k}}{\rm vec}\left(A^{\prime}e_{k}Z_{k}Z^{\prime}\right){\rm vec}\left(A^{\prime}e_{k}Z_{k}Z^{\prime}\right)^{\prime}\\
 & \quad+\sum_{k=1}^{K}S_{k}\left[\left(ZZ_{k}^{\prime}e_{k}^{\prime}A\otimes A^{\prime}\right)K_{n}-\left(ZZ^{\prime}\otimes A^{\prime}J_{k}A\right)-\left(A\otimes A^{\prime}e_{k}Z_{k}Z^{\prime}\right)K_{n}\right].
\end{align*}
The first term involves
\begin{align*}
\left|S_{k}^{2}Z_{k,i}Z_{k,j}Z_{l,p}Z_{m,q}\right| & =\tfrac{\left(\nu_{k}+n_{k}\right)^{2}}{\left(\nu_{k}+Z_{k}^{\prime}Z_{k}\right)^{2}}\left|Z_{k,i}Z_{k,j}Z_{l,p}Z_{m,q}\right|\\
 & \leq\tfrac{\left(\nu_{k}+n_{k}\right)^{2}}{\nu_{k}}\left|Z_{l,p}Z_{m,q}\right|\\
 & =\tfrac{\left(\nu_{k}+n_{k}\right)^{2}}{\nu_{k}}\left|\tilde{e}_{l,p}ZZ^{\prime}\tilde{e}_{m,q}^{\prime}\right|\\
 & =\tfrac{\left(\nu_{k}+n_{k}\right)^{2}}{\nu_{k}}\left|\tilde{e}_{l,p}A\Xi_{0}\left(X-\xi\right)\left(X-\xi\right)^{\prime}\Xi_{0}^{\prime}A^{\prime}\tilde{e}_{m,q}^{\prime}\right|\\
 & =\tfrac{\left(\nu_{k}+n_{k}\right)^{2}}{\nu_{k}}\left|\tilde{e}_{l,p}A\Xi_{0}\left(XX^{\prime}-\xi X^{\prime}-X\xi^{\prime}+\xi\xi^{\prime}\right)\Xi_{0}^{\prime}A^{\prime}\tilde{e}_{m,q}^{\prime}\right|,\quad\Psi_{l,p}=\tilde{e}_{l,p}A\Xi_{0}\\
 & \leq C_{0}+C_{1}\left|\Psi_{l,p}XX^{\prime}\Psi_{l,p}^{\prime}\right|+C_{2}\left|\Psi_{l,p}\xi X^{\prime}\Psi_{l,p}^{\prime}\right|+C_{3}\left|\Psi_{l,p}X\xi^{\prime}\Psi_{l,p}^{\prime}\right|\\
 & \leq C_{0}^{*}+C_{1}^{*}\sum_{i}\sum_{j}\left|X_{i}X_{j}\right|+C_{1}^{2}\sum_{i}\left|X_{i}\right|.
\end{align*}
Thus, if $\nu_{k}>2$ for all $k$, we have
\[
\mathbb{E}_{\theta_{0}}\sup_{\theta\in\Theta}\left|S_{k}^{2}Z_{k,i}Z_{k,j}Z_{l,p}Z_{m,q}\right|\leq C_{0}^{*}+C_{1}^{*}\sum_{i}\sum_{j}\mathbb{E}_{\theta_{0}}\left|X_{i}X_{j}\right|+C_{1}^{2}\sum_{i}\mathbb{E}_{\theta_{0}}\left|X_{i}\right|<\infty.
\]
The second term involves
\begin{align}
\left|S_{k}Z_{k,i}Z_{l,j}\right| & =\left|S_{k}Z_{k,i}\right|\left|Z_{l,j}\right|\leq\tfrac{\nu_{k}+n_{k}}{2\sqrt{\nu_{k}}}\left|Z_{l,j}\right|\nonumber \\
 & \leq C_{1}^{*}+\sum_{k=1}^{K}C_{2}^{*}\left|X_{k}\right|,\nonumber \\
\mathbb{E}_{\theta_{0}}\sup_{\theta\in\Theta}\left|S_{k}Z_{k,i}Z_{l,j}\right| & \leq C_{1}^{*}+\sum_{k=1}^{K}C_{2}^{*}\mathbb{E}_{\theta_{0}}\left|X_{k}\right|<\infty,\quad\quad{\rm if}\ \min_{k}\nu_{k}>1.\label{eq:SkZkZ}
\end{align}
The third term, $S_{k}\left(ZZ^{\prime}\otimes A^{\prime}J_{k}A\right)$,
involves
\begin{align*}
\left|S_{k}Z_{p,i}Z_{q,j}\right| & \leq\tfrac{\nu_{k}+n_{k}}{\nu_{k}}\left|Z_{p,i}Z_{q,j}\right|\\
 & \leq C_{0}^{*}+C_{1}^{*}\sum_{i}\sum_{j}\left|X_{i}X_{j}\right|+C_{1}^{2}\sum_{i}\left|X_{i}\right|,\\
\mathbb{E}_{\theta_{0}}\sup_{\theta\in\Theta}\left|S_{k}Z_{p,i}Z_{q,j}\right| & \leq C_{0}^{*}+C_{1}^{*}\sum_{i}\sum_{j}\mathbb{E}_{\theta_{0}}\left|X_{i}X_{j}\right|+C_{1}^{2}\sum_{i}\mathbb{E}_{\theta_{0}}\left|X_{i}\right|\\
 & <\infty,\quad{\rm if}\ \min_{k}\nu_{k}>2
\end{align*}
Finally, the fourth term, $S_{k}\left(A\otimes A^{\prime}e_{k}Z_{k}Z^{\prime}\right)K_{n}$,
involves $\left|S_{k}Z_{k,i}Z_{l,j}\right|$, which satisfies $\mathbb{E}_{\theta_{0}}\sup_{\theta\in\Theta}\left|S_{k}Z_{k,i}Z_{l,j}\right|<\infty$,
if $\min_{k}\nu_{k}>1$.

Next, we have
\begin{align*}
\left\Vert \nabla_{\tilde{\Xi}\nu_{k}}^{\theta}\right\Vert  & =\tfrac{1}{\nu_{k}+n_{k}}\left\Vert \left(S_{k}-S_{k}^{2}\right){\rm vec}\left(A^{\prime}e_{k}Z_{k}Z^{\prime}\right)\right\Vert \\
 & \leq\tfrac{1}{\nu_{k}+n_{k}}\left\Vert S_{k}{\rm vec}\left(A^{\prime}e_{k}Z_{k}Z^{\prime}\right)\right\Vert +\tfrac{1}{\nu_{k}+n_{k}}\left\Vert S_{k}^{2}{\rm vec}\left(A^{\prime}e_{k}Z_{k}Z^{\prime}\right)\right\Vert ,
\end{align*}
which involves $\left|S_{k}Z_{k,i}Z_{l,k}\right|$ and $\left|S_{k}^{2}Z_{k,i}Z_{l,k}\right|$.
We need $\nu_{k}>1$ to ensure $\mathbb{E}_{\theta_{0}}\sup_{\theta\in\Theta}\left\Vert \nabla_{\tilde{\Xi}\nu_{k}}^{\theta}\right\Vert <\infty$
as shown earlier.

Nest, we have
\begin{align*}
\left|\nabla_{\nu_{k}\nu_{k}}^{\theta}\right| & =\left|\tfrac{1}{2}\left[\psi^{\prime}\left(\tfrac{v_{k}+n_{k}}{2}\right)-\tfrac{1}{2}\psi^{\prime}\left(\tfrac{v_{k}}{2}\right)\right]+\tfrac{1}{\nu_{k}+n_{k}}\left(S_{k}^{2}-S_{k}\right)+\tfrac{1}{\nu_{k}}-\tfrac{1}{\nu_{k}+Z_{k}^{\prime}Z_{k}}\right|\\
 & \leq\tfrac{1}{2}\psi^{\prime}\left(\tfrac{v_{k}+n_{k}}{2}\right)+\tfrac{1}{2}\psi^{\prime}\left(\tfrac{v_{k}}{2}\right)+\tfrac{1}{\nu_{k}+n_{k}}S_{k}^{2}+\tfrac{1}{\nu_{k}+n_{k}}S_{k}+\tfrac{1}{\nu_{k}}+\tfrac{1}{\nu_{k}+Z_{k}^{\prime}Z_{k}}\\
 & \leq\tfrac{1}{2}\psi^{\prime}\left(\tfrac{v_{k}+n_{k}}{2}\right)+\tfrac{1}{2}\psi^{\prime}\left(\tfrac{v_{k}}{2}\right)+\tfrac{1}{\nu_{k}+n_{k}}\tfrac{\left(\nu_{k}+n_{k}\right)^{2}}{\nu_{k}^{2}}+\tfrac{1}{\nu_{k}+n_{k}}\tfrac{\nu_{k}+n_{k}}{\nu_{k}}+\tfrac{1}{\nu_{k}}+\tfrac{1}{\nu_{k}}\\
 & <\infty.
\end{align*}
This verifies condition (iv) and completes the proof.

\hfill{}$\square$

\setcounter{table}{0}
\global\long\def\thetable{S.\arabic{table}}%

\setcounter{figure}{0}
\global\long\def\thefigure{S.\arabic{figure}}%

\part*{Supplemental Material for `Convolution-$t$ Distributions'}

\section*{General Moments of Convolution-$t$ Distributions\label{subsec:Moments-of-Convol-t}}

The $r$-th moment of $Y_{1}=\mu+\beta^{\prime}X$ exists if and only
if $r<\nu_{\min}\equiv\min_{k}(\nu_{k})$, and all odd moments, less
than $\nu_{\min}$, are zero because a convolution of symmetric distributions
is symmetric. I.e. $\mathbb{E}\left(Y_{1}-\mu\right)^{r}=0$ for $r=2m-1$
for some $m\in\mathbb{N}$, provided $r<\nu_{\min}$. We derived the
2nd and 4th centralized moments, $\mathbb{E}(Y_{1}-\mu)^{2}$ and
$\mathbb{E}(Y_{1}-\mu)^{4}$, in Theorem \ref{thm:Convo-t-kurtosis}. 

The $r$-th moment of $Y_{1}=\mu+\beta^{\prime}X$ exists if and only
if $r<\nu_{\min}\equiv\min_{k}(\nu_{k})$, and all odd moments, less
than $\nu_{\min}$, are zero because a convolution of symmetric distributions
is symmetric. General moments, including fractional moments, $r>0$,
can be obtained with the method in \citet[Theorem 11.4.4]{Kawata1972},
which states,
\[
\mathbb{E}|Y_{1}|^{r}=C_{p,r}\int_{0}^{\infty}s^{-(1+r)}\left[-{\rm Re}[\varphi_{Y_{1}}(s)]+\sum_{k=0}^{p}\frac{s^{2k}}{(2k)!}\varphi_{Y_{1}}^{(2k)}(0)\right]\mathrm{d}s,
\]
where $p=\lfloor r/2\rfloor$ and $C_{p,r}$ is the positive constant
given by
\[
\ensuremath{C_{p,r}=\int_{0}^{\infty}\left[L_{p}(x)/x^{1+r}\right]\mathrm{d}x},\quad\ensuremath{L_{p}(x)=-\cos x+\sum_{k=0}^{p}(-1)^{k}\frac{x^{2k}}{(2k)!}}.
\]
For small moments, $0<r<2$, this expression simplifies to 
\[
\mathbb{E}|Y_{1}|^{r}=\frac{r(1-r)}{\Gamma(2-r)}\frac{1}{\sin\left[(1-r)\pi/2\right]}\int_{0}^{\infty}\frac{1-{\rm Re}[\varphi_{Y_{1}}(s)]}{s^{1+r}}\mathrm{d}s.
\]
The coefficient of the integral on the right-hand side is interpreted
to be $2/\pi$ when $r=1$.

And because $\varphi_{Y_{1}}(s)=e^{is\mu}\prod_{k=1}^{K}\phi_{\nu_{k}}(\omega_{k}s)$
with 
\[
\phi_{\nu}(s)=\frac{K_{\frac{\nu}{2}}(\sqrt{\nu}|s|)(\sqrt{\nu}|s|)^{\frac{1}{2}\nu}}{\Gamma\left(\frac{\nu}{2}\right)2^{\frac{\nu}{2}-1}},\qquad\text{for}\quad s\in\mathbb{R},
\]
The derivatives of $\varphi_{Y_{1}}\left(s\right)$ with respect to
$s$ uses the following formula
\[
\ensuremath{\frac{\partial K_{\nu}(z)}{\partial z}=\frac{\nu}{z}K_{\nu}(z)-K_{\nu+1}(z)}.
\]

\section*{Simulation Results for Non-Standard Situations}

\subsection*{Simulations with Small Samples\label{subsec:Simulations_SmallSample}}

In Table \ref{tab:MLESimuSmall}, we conduct a simulation study with
$T=50,100,200,400$, based on 50,000 Monte-Carlo simulations. Distributional
specification is the same as in Section \ref{sec:Simulation-Study}.

\begin{table}
\caption{Simulations with Small Samples}

\begin{centering}
\vspace{0.2cm}
\begin{footnotesize}
\begin{tabularx}{\textwidth}{p{1.0cm}Yp{-0.2cm}YYYYYp{-0.2cm}YYYYYY}
\toprule
\midrule
          & {True} &       & Mean   & Std   & aStd  &$\alpha_L^{0.025}$ &$\alpha_R^{0.025}$ &       & Mean   & Std   & aStd &$\alpha_L^{0.025}$ &$\alpha_R^{0.025}$\\
\midrule
\\[-0.3cm]
          &   &  &  \multicolumn{5}{c}{$T=50$}     &       & \multicolumn{5}{c}{$T=100$} \\
\\[-0.3cm]
    $\mu_1$   & 0.1000 &       & 0.1006 & 0.1170 & 0.1117 & 0.0309 & 0.0315 &       & 0.1006 & 0.0811 & 0.0790 & 0.0284 & 0.0293 \\
\\[-0.3cm]    
    $\mu_2$   & 0.2000 &       & 0.2011 & 0.1457 & 0.1404 & 0.0290 & 0.0302 &       & 0.2006 & 0.1013 & 0.0993 & 0.0270 & 0.0284 \\
\\[-0.3cm]
    $\mu_3$   & 0.3000 &       & 0.3007 & 0.1486 & 0.1442 & 0.0283 & 0.0287 &       & 0.3004 & 0.1035 & 0.1020 & 0.0263 & 0.0263 \\
\\[-0.3cm]
    \multirow{9}[0]{*}{$\rm{vec}(\Xi)$} & 0.6000 &       & 0.5119 & 0.1434 & 0.1203 & 0.1352 & 0.0109 &       & 0.5562 & 0.1057 & 0.0851 & 0.0953 & 0.0185 \\
\\[-0.3cm]
          & 0.5000 &       & 0.4265 & 0.2029 & 0.1581 & 0.1072 & 0.0295 &       & 0.4630 & 0.1562 & 0.1118 & 0.0908 & 0.0429 \\
\\[-0.3cm]
          & 0.4000 &       & 0.3398 & 0.2261 & 0.1645 & 0.1031 & 0.0424 &       & 0.3708 & 0.1735 & 0.1163 & 0.0897 & 0.0572 \\
\\[-0.3cm]
          & 0.3000 &       & 0.3126 & 0.1770 & 0.1426 & 0.0530 & 0.0660 &       & 0.3074 & 0.1368 & 0.1009 & 0.0608 & 0.0750 \\
\\[-0.3cm]
          & 0.7000 &       & 0.6720 & 0.1650 & 0.1429 & 0.0687 & 0.0216 &       & 0.6880 & 0.1256 & 0.1010 & 0.0652 & 0.0358 \\
\\[-0.3cm]
          & 0.2000 &       & 0.2069 & 0.1180 & 0.0993 & 0.0509 & 0.0469 &       & 0.2040 & 0.0900 & 0.0702 & 0.0550 & 0.0585 \\
\\[-0.3cm]
          & 0.1000 &       & 0.1188 & 0.1564 & 0.1237 & 0.0462 & 0.0760 &       & 0.1093 & 0.1217 & 0.0874 & 0.0579 & 0.0794 \\
\\[-0.3cm]
          & 0.2000 &       & 0.2069 & 0.1180 & 0.0993 & 0.0509 & 0.0469 &       & 0.2040 & 0.0900 & 0.0702 & 0.0550 & 0.0585 \\
\\[-0.3cm]
          & 0.8000 &       & 0.7532 & 0.1581 & 0.1345 & 0.0829 & 0.0153 &       & 0.7774 & 0.1176 & 0.0951 & 0.0718 & 0.0243 \\
\\[-0.3cm]
    $1/\nu_1$  & 0.2500 &       & 0.2702 & 0.1699 & 0.1464 & 0.0000 & 0.0595 &       & 0.2531 & 0.1147 & 0.1035 & 0.0543 & 0.0368 \\
\\[-0.3cm]
    $1/\nu_2$  & 0.1250 &       & 0.1232 & 0.0998 & 0.0884 & 0.0000 & 0.0508 &       & 0.1195 & 0.0698 & 0.0625 & 0.0752 & 0.0350 \\
\\[-0.3cm]
          &       &       &       &       &       &       &       &       &  \\
          &  &    & \multicolumn{5}{c}{$T=200$}     &       & \multicolumn{5}{c}{$T=400$} \\
\\[-0.2cm]
     $\mu_1$   & 0.1000 &       & 0.1000 & 0.0567 & 0.0559 & 0.0279 & 0.0268 &       & 0.1000 & 0.0397 & 0.0395 & 0.0237 & 0.0245 \\
\\[-0.3cm]          
     $\mu_2$   & 0.2000 &       & 0.1997 & 0.0711 & 0.0702 & 0.0270 & 0.0256 &       & 0.2001 & 0.0498 & 0.0496 & 0.0248 & 0.0261 \\
\\[-0.3cm]
     $\mu_3$   & 0.3000 &       & 0.3001 & 0.0725 & 0.0721 & 0.0262 & 0.0259 &       & 0.3000 & 0.0512 & 0.0510 & 0.0252 & 0.0249 \\
\\[-0.3cm]
    \multirow{9}[0]{*}{$\rm{vec}(\Xi)$} & 0.6000 &       & 0.5824 & 0.0698 & 0.0602 & 0.0577 & 0.0246 &       & 0.5937 & 0.0449 & 0.0425 & 0.0351 & 0.0271 \\
\\[-0.3cm]
          & 0.5000 &       & 0.4855 & 0.1034 & 0.0790 & 0.0621 & 0.0445 &       & 0.4949 & 0.0636 & 0.0559 & 0.0411 & 0.0379 \\
\\[-0.3cm]
          & 0.4000 &       & 0.3882 & 0.1119 & 0.0822 & 0.0616 & 0.0506 &       & 0.3960 & 0.0672 & 0.0581 & 0.0415 & 0.0407 \\
\\[-0.3cm]
          & 0.3000 &       & 0.3017 & 0.0923 & 0.0713 & 0.0527 & 0.0555 &       & 0.3002 & 0.0574 & 0.0504 & 0.0395 & 0.0398 \\
\\[-0.3cm]
          & 0.7000 &       & 0.6945 & 0.0861 & 0.0714 & 0.0530 & 0.0364 &       & 0.6976 & 0.0554 & 0.0505 & 0.0393 & 0.0305 \\
\\[-0.3cm]
          & 0.2000 &       & 0.2013 & 0.0601 & 0.0496 & 0.0464 & 0.0458 &       & 0.2003 & 0.0386 & 0.0351 & 0.0366 & 0.0336 \\
\\[-0.3cm]
          & 0.1000 &       & 0.1036 & 0.0803 & 0.0618 & 0.0492 & 0.0580 &       & 0.1010 & 0.0499 & 0.0437 & 0.0388 & 0.0413 \\
\\[-0.3cm]
          & 0.2000 &       & 0.2013 & 0.0601 & 0.0496 & 0.0464 & 0.0458 &       & 0.2003 & 0.0386 & 0.0351 & 0.0366 & 0.0336 \\
\\[-0.3cm]
          & 0.8000 &       & 0.7915 & 0.0792 & 0.0672 & 0.0513 & 0.0293 &       & 0.7965 & 0.0509 & 0.0475 & 0.0385 & 0.0263 \\
\\[-0.3cm]
    $1/\nu_1$  & 0.2500 &       & 0.2504 & 0.0781 & 0.0732 & 0.0356 & 0.0307 &       & 0.2498 & 0.0533 & 0.0518 & 0.0306 & 0.0264 \\
\\[-0.3cm]
    $1/\nu_2$  & 0.1250 &       & 0.1210 & 0.0476 & 0.0442 & 0.0462 & 0.0261 &       & 0.1230 & 0.0324 & 0.0313 & 0.0345 & 0.0251 \\
\\[0.0cm]
\\[-0.5cm]
\midrule
\bottomrule
\end{tabularx}
\end{footnotesize}
\par\end{centering}
{\small{}Note: The mean, standard deviation of the MLE estimated parameters
for convolution-$t$ distribution. The asymptotic standard deviations
are also included for comparison. we conduct a simulation study with
sample sizes $T=50,100,200,400$, based on 50,000 Monte-Carlo simulations.
\label{tab:MLESimuSmall}}{\small\par}
\end{table}

\subsection*{Simulations when $\min_{k}\nu_{k}\protect\leq2$\label{subsec:Simulations_smallV}}

We conduct a simulation study to investigate the asymptotic properties
of the maximum likelihood estimators when $\min_{k}\nu_{k}\leq2$.
To provide a straightforward investigation for the effects of different
values of degree of freedom, we focus on the case when $\Xi$ is an
identity matrix, $\mu$ are zeros, and $\nu_{1}=\nu_{2}=\nu^{*}$.
The trivariate system is given by
\[
\mu=\left[\begin{array}{c}
\cellcolor{black!10}0\\
\cellcolor{black!5}0\\
\cellcolor{black!5}0
\end{array}\right],\quad\Xi=\left[\begin{array}{ccc}
\cellcolor{black!10}1 & \cellcolor{black!5}0 & \cellcolor{black!5}0\\
\cellcolor{black!10}0 & \cellcolor{black!5}1 & \cellcolor{black!5}0\\
\cellcolor{black!10}0 & \cellcolor{black!5}0 & \cellcolor{black!5}1
\end{array}\right]\quad\nu=\left[\begin{array}{c}
\cellcolor{black!10}\nu^{*}\\
\cellcolor{black!5}\nu^{*}
\end{array}\right]
\]
where the first two elements belong to one group, i.e. $n_{1}=1$
and $n_{2}=2$. We conduct a simulation study for several values of
$\nu^{*}=2,3/2,1,1/2$. To examine the converge rate of different
parameters, for each value of degree of freedom, we adopt $T=500,1000,2000,4000$,
based on 50,000 Monte-Carlo simulations. We report the mean, standard
deviation of the estimated parameters ${\rm Std}(\hat{\theta})$,
and two tails parameters, $\alpha_{L}^{0.025}$ and $\alpha_{R}^{0.025}$,
defined below
\[
\alpha_{L}^{0.025}=\frac{1}{T}\sum_{t=1}^{T}\left[\frac{\hat{\theta}_{t}-\theta}{{\rm Std}(\hat{\theta})}<-1.96\right],\quad\alpha_{R}^{0.025}=\frac{1}{T}\sum_{t=1}^{T}\left[\frac{\hat{\theta}_{t}-\theta}{{\rm Std}(\hat{\theta})}>1.96\right]
\]
and we will have $\alpha_{L}^{0.025}\rightarrow0.025$ and $\alpha_{R}^{0.025}\rightarrow0.025$
if $\hat{\theta}_{t}$ is approximately normally distributed. We also
include the ratio of standard deviations $R_{\sigma}$, defined as
\[
R_{\sigma}\left(T\right)=\frac{{\rm Std}(\hat{\theta}|\frac{T}{2})}{{\rm Std}(\hat{\theta}|T)},\quad{\rm for}\quad T=500,1000,2000,4000.
\]
If the converge rate is $\sqrt{T}$, we should have $R_{\sigma}\left(T\right)\rightarrow\sqrt{2}\approx1.4142$
when $T\rightarrow\infty$. Note that we also conduct a simulation
with $T=250$ to provide the ${\rm Std}(\hat{\theta})$ for $R_{\sigma}\left(500\right)$.

Tables \ref{tab:MLEv2}-\ref{tab:MLEv05} present the simulation results
for $\nu^{*}=2,3/2,1,1/2$, accordingly. Interestingly, the estimates
of the inverse degree of freedom parameter, appear to be well-approximated
by a Gaussian distribution in all these designs. Figures \ref{fig:nu2_N4000}-\ref{fig:nu05_N4000}
{\small{}present histograms of the elements of $\hat{\Xi}$ for the
sample size $T=4,000$ for} $\nu^{*}=2,3/2,1,1/2$, accordingly. These
designs have degrees of freedom that are smaller than the range covered
by Theorem \ref{thm:MLE-consistent-asN}, and the histograms reveal
a non-Gaussian limit distribution for all $\hat{\Xi}$-elements, which
is most pronounced for the elements in the off-diagonal blocks.

\begin{table}
\caption{Simulations with $\nu=2$}

\begin{centering}
\vspace{0.2cm}
\begin{footnotesize}
\begin{tabularx}{\textwidth}{p{0.9cm}Yp{-0.6cm}YYYYYp{-0.6cm}YYYYYY}
\toprule
\midrule
          & True &       & Mean   & Std   & $\alpha_L^{0.025}$  &$\alpha_R^{0.025}$ & $R_\sigma$ &       & Mean   & Std   & $\alpha_L^{0.025}$  &$\alpha_R^{0.025}$ & $R_\sigma$ \\
\midrule
\\[-0.3cm]
          &   &  &  \multicolumn{5}{c}{$T=500$}     &       & \multicolumn{5}{c}{$T=1,000$} \\
\\[-0.3cm]
    $\mu_1$   & 0     &       & 0.0002 & 0.0579 & 0.0250 & 0.0256 & 1.4332 &       & 0.0001 & 0.0407 & 0.0246 & 0.0246 & 1.4220 \\
    \\[-0.3cm]
    $\mu_2$   & 0     &       & 0.0003 & 0.0548 & 0.0250 & 0.0242 & 1.4184 &       & 0.0002 & 0.0387 & 0.0253 & 0.0255 & 1.4152 \\
    \\[-0.3cm]
    $\mu_3$   & 0     &       & 0.0006 & 0.0550 & 0.0231 & 0.0261 & 1.4125 &       & 0.0001 & 0.0390 & 0.0253 & 0.0250 & 1.4089 \\
\\[-0.3cm]
    \multirow{9}[0]{*}{$\rm{vec}(\Xi)$} & 1     &       & 0.9971 & 0.0635 & 0.0236 & 0.0272 & 1.4172 &       & 0.9986 & 0.0443 & 0.0232 & 0.0270 & 1.4324 \\
\\[-0.3cm]
          & \cellcolor{black!8}0     &  \cellcolor{black!8}     &\cellcolor{black!8} 0.0000 & \cellcolor{black!8}0.0233 &\cellcolor{black!8} 0.0296 &\cellcolor{black!8} 0.0291 & \cellcolor{black!8}1.5733 & \cellcolor{black!8}      &\cellcolor{black!8} 0.0000 &\cellcolor{black!8} 0.0154 &\cellcolor{black!8} 0.0290 &\cellcolor{black!8} 0.0290 &\cellcolor{black!8} 1.5108 \\
\\[-0.3cm]
          & \cellcolor{black!8}0     &   \cellcolor{black!8}    & \cellcolor{black!8}0.0002 & \cellcolor{black!8}0.0235 & \cellcolor{black!8}0.0281 & \cellcolor{black!8}0.0290 & \cellcolor{black!8}1.5541 &  \cellcolor{black!8}     &\cellcolor{black!8} 0.0001 &\cellcolor{black!8} 0.0154 &\cellcolor{black!8} 0.0280 & \cellcolor{black!8}0.0293 &\cellcolor{black!8} 1.5256 \\
\\[-0.3cm]
          & \cellcolor{black!8}0     &   \cellcolor{black!8}    & \cellcolor{black!8}0.0001 & \cellcolor{black!8} 0.0255 & \cellcolor{black!8}0.0296 & \cellcolor{black!8}0.0284 & \cellcolor{black!8}1.5660 &  \cellcolor{black!8}     & \cellcolor{black!8}0.0000 & \cellcolor{black!8}0.0166 & \cellcolor{black!8}0.0280 & \cellcolor{black!8}0.0279 & \cellcolor{black!8}1.5324 \\
\\[-0.3cm]
          & 1     &       & 0.9979 & 0.0526 & 0.0230 & 0.0273 & 1.4247 &       & 0.9990 & 0.0371 & 0.0237 & 0.0255 & 1.4179 \\
\\[-0.3cm]
          & 0     &       & 0.0002 & 0.0274 & 0.0252 & 0.0250 & 1.4121 &       & 0.0000 & 0.0194 & 0.0247 & 0.0249 & 1.4113 \\
\\[-0.3cm]
          & \cellcolor{black!8} 0     &  \cellcolor{black!8}     & \cellcolor{black!8}0.0000 & \cellcolor{black!8}0.0253 &\cellcolor{black!8} 0.0279 &\cellcolor{black!8} 0.0288 & \cellcolor{black!8}1.5845 & \cellcolor{black!8}      &\cellcolor{black!8} 0.0000 &\cellcolor{black!8} 0.0166 &\cellcolor{black!8} 0.0281 & \cellcolor{black!8}0.0276 & \cellcolor{black!8}1.5259 \\
\\[-0.3cm]
          & 0     &       & 0.0002 & 0.0274 & 0.0252 & 0.0250 & 1.4121 &       & 0.0000 & 0.0194 & 0.0247 & 0.0249 & 1.4113 \\
\\[-0.3cm]
          & 1     &       & 0.9974 & 0.0527 & 0.0244 & 0.0264 & 1.4205 &       & 0.9989 & 0.0373 & 0.0235 & 0.0270 & 1.4121 \\
\\[-0.3cm]
    $1/\nu_1$  & 0.5   &       & 0.4997 & 0.0574 & 0.0256 & 0.0253 & 1.4402 &       & 0.4998 & 0.0404 & 0.0254 & 0.0239 & 1.4202 \\
\\[-0.3cm]
    $1/\nu_2$  & 0.5   &       & 0.5001 & 0.0452 & 0.0239 & 0.0265 & 1.4277 &       & 0.4995 & 0.0318 & 0.0247 & 0.0243 & 1.4205 \\
\\[-0.3cm]
          &       &       &       &       &       &       &       &       &  \\
          &  &    & \multicolumn{5}{c}{$T=2,000$}     &       & \multicolumn{5}{c}{$T=4,000$} \\
\\[-0.2cm]
    $\mu_1$   & 0     &       & 0.0000 & 0.0290 & 0.0249 & 0.0251 & 1.4019 &       & 0.0001 & 0.0203 & 0.0246 & 0.0251 & 1.4308 \\
\\[-0.3cm]          
    $\mu_2$   & 0     &       & 0.0001 & 0.0275 & 0.0241 & 0.0251 & 1.4098 &       & 0.0000 & 0.0194 & 0.0252 & 0.0243 & 1.4196 \\
\\[-0.3cm]
    $\mu_3$   & 0     &       & 0.0002 & 0.0275 & 0.0244 & 0.0259 & 1.4190 &       & 0.0001 & 0.0194 & 0.0240 & 0.0251 & 1.4188 \\
\\[-0.3cm]
    \multirow{9}[0]{*}{$\rm{vec}(\Xi)$} & 1     &       & 0.9992 & 0.0316 & 0.0255 & 0.0249 & 1.4036 &       & 0.9997 & 0.0222 & 0.0246 & 0.0264 & 1.4228 \\
\\[-0.3cm]
          & \cellcolor{black!8}0     &   \cellcolor{black!8}    & \cellcolor{black!8}0.0000 & \cellcolor{black!8}0.0101 & \cellcolor{black!8}0.0282 & \cellcolor{black!8}0.0279 &\cellcolor{black!8} 1.5207 & \cellcolor{black!8}      & \cellcolor{black!8}0.0000 &\cellcolor{black!8} 0.0068 & \cellcolor{black!8}0.0287 & \cellcolor{black!8}0.0290 & \cellcolor{black!8}1.4903 \\
\\[-0.3cm]
          & \cellcolor{black!8}0     &   \cellcolor{black!8}    & \cellcolor{black!8}0.0001 &\cellcolor{black!8} 0.0102 & \cellcolor{black!8}0.0278 &\cellcolor{black!8} 0.0288 & \cellcolor{black!8}1.5194 &   \cellcolor{black!8}    &\cellcolor{black!8} 0.0001 &\cellcolor{black!8} 0.0068 & \cellcolor{black!8}0.0292 & \cellcolor{black!8}0.0283 & \cellcolor{black!8}1.4906 \\
\\[-0.3cm]
          & \cellcolor{black!8}0     &   \cellcolor{black!8}    &\cellcolor{black!8} 0.0000 & \cellcolor{black!8} 0.0109 & \cellcolor{black!8} 0.0283 & \cellcolor{black!8}0.0276 & \cellcolor{black!8}1.5258 &  \cellcolor{black!8}     &\cellcolor{black!8} 0.0000 &\cellcolor{black!8} 0.0074 & \cellcolor{black!8}0.0276 & \cellcolor{black!8}0.0275 &\cellcolor{black!8} 1.4805 \\
\\[-0.3cm]
          & 1     &       & 0.9994 & 0.0262 & 0.0254 & 0.0247 & 1.4175 &       & 0.9997 & 0.0185 & 0.0246 & 0.0258 & 1.4158 \\
\\[-0.3cm]
          & 0     &       & 0.0001 & 0.0137 & 0.0254 & 0.0257 & 1.4156 &       & 0.0000 & 0.0097 & 0.0250 & 0.0254 & 1.4177 \\
\\[-0.3cm]
          & \cellcolor{black!8}0     & \cellcolor{black!8}      & \cellcolor{black!8}0.0001 & \cellcolor{black!8}0.0109 & \cellcolor{black!8}0.0278 &\cellcolor{black!8} 0.0268 & \cellcolor{black!8}1.5125 & \cellcolor{black!8}      & \cellcolor{black!8}0.0000 & \cellcolor{black!8}0.0073 & \cellcolor{black!8}0.0274 & \cellcolor{black!8}0.0279 &\cellcolor{black!8} 1.4951 \\
\\[-0.3cm]
          & 0     &       & 0.0001 & 0.0137 & 0.0254 & 0.0257 & 1.4156 &       & 0.0000 & 0.0097 & 0.0250 & 0.0254 & 1.4177 \\
\\[-0.3cm]
          & 1     &       & 0.9995 & 0.0262 & 0.0241 & 0.0263 & 1.4239 &       & 0.9998 & 0.0186 & 0.0239 & 0.0251 & 1.4087 \\
\\[-0.3cm]
    $1/\nu_1$  & 0.5   &       & 0.4999 & 0.0285 & 0.0253 & 0.0253 & 1.4172 &       & 0.5000 & 0.0201 & 0.0245 & 0.0247 & 1.4176 \\
\\[-0.3cm]
   $1/\nu_2$  & 0.5   &       & 0.4999 & 0.0223 & 0.0255 & 0.0246 & 1.4241 &       & 0.4999 & 0.0158 & 0.0255 & 0.0251 & 1.4127 \\
\\[0.0cm]
\\[-0.5cm]
\midrule
\bottomrule
\end{tabularx}
\end{footnotesize}
\par\end{centering}
{\small{}Note: The mean, standard deviation of the MLE estimated parameters
for convolution-$t$ distribution when $\nu_{1}=\nu_{2}=2$. We conduct
a simulation study with sample sizes $T=500,1000,2000,4000$, based
on 50,000 Monte-Carlo simulations. This design has $\nu_{1}=\nu_{2}=2$
that is outside the range covered by the results in Theorem \ref{thm:MLE-consistent-asN}.
The shaded regions highlights the coefficients in the off-diagonal
blocks of $\Xi$, which appear to converge at a faster rate. \label{tab:MLEv2}}{\small\par}
\end{table}

\begin{table}
\caption{Simulations with  $\nu=3/2$}

\begin{centering}
\vspace{0.2cm}
\begin{footnotesize}
\begin{tabularx}{\textwidth}{p{0.9cm}Yp{-0.6cm}YYYYYp{-0.6cm}YYYYYY}
\toprule
\midrule
          & True &       & Mean   & Std   & $\alpha_L^{0.025}$  &$\alpha_R^{0.025}$ & $R_\sigma$ &       & Mean   & Std   & $\alpha_L^{0.025}$  &$\alpha_R^{0.025}$ & $R_\sigma$ \\
\midrule
\\[-0.3cm]
          &   &  &  \multicolumn{5}{c}{$T=500$}     &       & \multicolumn{5}{c}{$T=1,000$} \\
\\[-0.3cm]
    $\mu_1$   & 0     &       & 0.0002 & 0.0604 & 0.0246 & 0.0256 & 1.4275 &       & 0.0002 & 0.0423 & 0.0246 & 0.0255 & 1.4279 \\
\\[-0.3cm]    
    $\mu_2$   & 0     &       & 0.0003 & 0.0564 & 0.0246 & 0.0248 & 1.4171 &       & 0.0003 & 0.0398 & 0.0259 & 0.0247 & 1.4167 \\
\\[-0.3cm]
    $\mu_3$   & 0     &       & 0.0005 & 0.0565 & 0.0243 & 0.0243 & 1.4226 &       & 0.0001 & 0.0397 & 0.0254 & 0.0256 & 1.4230 \\
\\[-0.3cm]
    \multirow{9}[0]{*}{$\rm{vec}(\Xi)$} & 1     &       & 0.9964 & 0.0679 & 0.0233 & 0.0260 & 1.4210 &       & 0.9982 & 0.0474 & 0.0242 & 0.0260 & 1.4336 \\
\\[-0.3cm]
          & \cellcolor{black!8}0 & \cellcolor{black!8} & \cellcolor{black!8}0.0000 & \cellcolor{black!8}0.0142 & \cellcolor{black!8}0.0321 & \cellcolor{black!8}0.0320 & \cellcolor{black!8}1.6632 & \cellcolor{black!8} & \cellcolor{black!8}0.0001 & \cellcolor{black!8}0.0086 & \cellcolor{black!8}0.0324 & \cellcolor{black!8}0.0310 & \cellcolor{black!8}1.6536 \\
\\[-0.3cm]
          & \cellcolor{black!8}0 & \cellcolor{black!8} & \cellcolor{black!8}0.0000 & \cellcolor{black!8}0.0142 & \cellcolor{black!8}0.0320 & \cellcolor{black!8}0.0317 & \cellcolor{black!8}1.6913 & \cellcolor{black!8} & \cellcolor{black!8}0.0000 & \cellcolor{black!8}0.0086 & \cellcolor{black!8}0.0314 & \cellcolor{black!8}0.0307 & \cellcolor{black!8}1.6453 \\
\\[-0.3cm]
          & \cellcolor{black!8}0 & \cellcolor{black!8} & \cellcolor{black!8}0.0000 & \cellcolor{black!8}0.0160 & \cellcolor{black!8}0.0298 & \cellcolor{black!8}0.0306 & \cellcolor{black!8}1.6909 & \cellcolor{black!8} & \cellcolor{black!8}0.0001 & \cellcolor{black!8}0.0099 & \cellcolor{black!8}0.0292 & \cellcolor{black!8}0.0314 & \cellcolor{black!8}1.6173 \\
\\[-0.3cm]
          & 1     &       & 0.9971 & 0.0559 & 0.0240 & 0.0255 & 1.4228 &       & 0.9986 & 0.0397 & 0.0238 & 0.0256 & 1.4066 \\
\\[-0.3cm]
          & 0     &       & 0.0000 & 0.0280 & 0.0246 & 0.0258 & 1.4208 &       & 0.0001 & 0.0199 & 0.0242 & 0.0252 & 1.4077 \\
\\[-0.3cm]
          & \cellcolor{black!8}0 & \cellcolor{black!8} & \cellcolor{black!8}0.0001 & \cellcolor{black!8}0.0162 & \cellcolor{black!8}0.0293 & \cellcolor{black!8}0.0315 & \cellcolor{black!8}1.6665 & \cellcolor{black!8} & \cellcolor{black!8}0.0000 & \cellcolor{black!8}0.0098 & \cellcolor{black!8}0.0299 & \cellcolor{black!8}0.0299 & \cellcolor{black!8}1.6539 \\
\\[-0.3cm]
          & 0     &       & 0.0000 & 0.0280 & 0.0246 & 0.0258 & 1.4208 &       & 0.0001 & 0.0199 & 0.0242 & 0.0252 & 1.4077 \\
\\[-0.3cm]
          & 1     &       & 0.9970 & 0.0557 & 0.0234 & 0.0259 & 1.4266 &       & 0.9987 & 0.0396 & 0.0236 & 0.0261 & 1.4069 \\
\\[-0.3cm]
    $1/\nu_1$  & 0.6667 &       & 0.6669 & 0.0650 & 0.0245 & 0.0262 & 1.4305 &       & 0.6669 & 0.0457 & 0.0242 & 0.0266 & 1.4219 \\
\\[-0.3cm]
    $1/\nu_2$  & 0.6667 &       & 0.6674 & 0.0527 & 0.0228 & 0.0268 & 1.4173 &       & 0.6667 & 0.0372 & 0.0243 & 0.0260 & 1.4167 \\
\\[-0.3cm]
          &       &       &       &       &       &       &       &       &  \\
          &  &    & \multicolumn{5}{c}{$T=2,000$}     &       & \multicolumn{5}{c}{$T=4,000$} \\
\\[-0.2cm]
    $\mu_1$   & 0     &       & 0.0000 & 0.0300 & 0.0250 & 0.0255 & 1.4115 &       & 0.0001 & 0.0212 & 0.0250 & 0.0257 & 1.4120 \\
\\[-0.3cm]          
    $\mu_2$   & 0     &       & 0.0001 & 0.0280 & 0.0246 & 0.0259 & 1.4215 &       & 0.0001 & 0.0198 & 0.0245 & 0.0245 & 1.4152 \\
\\[-0.3cm]
    $\mu_3$   & 0     &       & 0.0000 & 0.0279 & 0.0247 & 0.0240 & 1.4219 &       & 0.0000 & 0.0198 & 0.0244 & 0.0253 & 1.4073 \\
\\[-0.3cm]
    \multirow{9}[0]{*}{$\rm{vec}(\Xi)$} & 1     &       & 0.9990 & 0.0335 & 0.0247 & 0.0252 & 1.4136 &       & 0.9994 & 0.0236 & 0.0258 & 0.0239 & 1.4205 \\
\\[-0.3cm]
          & \cellcolor{black!8}0 & \cellcolor{black!8} & \cellcolor{black!8}0.0000 & \cellcolor{black!8}0.0053 & \cellcolor{black!8}0.0310 & \cellcolor{black!8}0.0312 & \cellcolor{black!8}1.6105 & \cellcolor{black!8} & \cellcolor{black!8}0.0000 & \cellcolor{black!8}0.0033 & \cellcolor{black!8}0.0319 & \cellcolor{black!8}0.0320 & \cellcolor{black!8}1.6321 \\
\\[-0.3cm]
          & \cellcolor{black!8}0 & \cellcolor{black!8} & \cellcolor{black!8}0.0000 & \cellcolor{black!8}0.0053 & \cellcolor{black!8}0.0322 & \cellcolor{black!8}0.0307 & \cellcolor{black!8}1.6211 & \cellcolor{black!8} & \cellcolor{black!8}0.0000 & \cellcolor{black!8}0.0033 & \cellcolor{black!8}0.0325 & \cellcolor{black!8}0.0313 & \cellcolor{black!8}1.6264 \\
\\[-0.3cm]
          & \cellcolor{black!8}0 & \cellcolor{black!8} & \cellcolor{black!8}0.0000 & \cellcolor{black!8}0.0060 & \cellcolor{black!8}0.0305 & \cellcolor{black!8}0.0291 & \cellcolor{black!8}1.6457 & \cellcolor{black!8} & \cellcolor{black!8}0.0000 & \cellcolor{black!8}0.0037 & \cellcolor{black!8}0.0310 & \cellcolor{black!8}0.0297 & \cellcolor{black!8}1.6102 \\
\\[-0.3cm]
          & 1     &       & 0.9992 & 0.0279 & 0.0250 & 0.0238 & 1.4268 &       & 0.9997 & 0.0198 & 0.0241 & 0.0252 & 1.4080 \\
\\[-0.3cm]
          & 0     &       & 0.0000 & 0.0140 & 0.0250 & 0.0241 & 1.4205 &       & 0.0000 & 0.0100 & 0.0254 & 0.0256 & 1.4084 \\
\\[-0.3cm]
          & \cellcolor{black!8}0 & \cellcolor{black!8} & \cellcolor{black!8}0.0000 & \cellcolor{black!8}0.0060 & \cellcolor{black!8}0.0311 & \cellcolor{black!8}0.0289 & \cellcolor{black!8}1.6245 & \cellcolor{black!8} & \cellcolor{black!8}0.0000 & \cellcolor{black!8}0.0037 & \cellcolor{black!8}0.0306 & \cellcolor{black!8}0.0313 & \cellcolor{black!8}1.6164 \\
\\[-0.3cm]
          & 0     &       & 0.0000 & 0.0140 & 0.0250 & 0.0241 & 1.4205 &       & 0.0000 & 0.0100 & 0.0254 & 0.0256 & 1.4084 \\
\\[-0.3cm]
          & 1     &       & 0.9992 & 0.0280 & 0.0250 & 0.0249 & 1.4160 &       & 0.9997 & 0.0197 & 0.0239 & 0.0256 & 1.4166 \\
\\[-0.3cm]
  $1/\nu_1$  & 0.6667 &       & 0.6669 & 0.0321 & 0.0235 & 0.0257 & 1.4226 &       & 0.6667 & 0.0227 & 0.0260 & 0.0242 & 1.4135 \\
\\[-0.3cm]
   $1/\nu_2$  & 0.6667 &       & 0.6666 & 0.0262 & 0.0239 & 0.0250 & 1.4183 &       & 0.6668 & 0.0185 & 0.0241 & 0.0261 & 1.4209 \\
\\[0.0cm]
\\[-0.5cm]
\midrule
\bottomrule
\end{tabularx}
\end{footnotesize}
\par\end{centering}
{\small{}Note: The mean, standard deviation of the MLE estimated parameters
for convolution-$t$ distribution $\nu_{1}=\nu_{2}=3/2$. We conduct
a simulation study with sample sizes $T=500,1000,2000,4000$, based
on 50,000 Monte-Carlo simulations. This design has $\nu_{1}=\nu_{2}=3/2$
that is outside the range covered by the results in Theorem \ref{thm:MLE-consistent-asN}.
The shaded regions highlights the coefficients in the off-diagonal
blocks of $\Xi$, which appear to converge at a faster rate. \label{tab:MLEv15}}{\small\par}
\end{table}

\begin{table}
\caption{Simulations with $\nu=1$}

\begin{centering}
\vspace{0.2cm}
\begin{footnotesize}
\begin{tabularx}{\textwidth}{p{0.9cm}Yp{-0.6cm}YYYYYp{-0.6cm}YYYYYY}
\toprule
\midrule
          & True &       & Mean   & Std   & $\alpha_L^{0.025}$  &$\alpha_R^{0.025}$ & $R_\sigma$ &       & Mean   & Std   & $\alpha_L^{0.025}$  &$\alpha_R^{0.025}$ & $R_\sigma$ \\
\midrule
\\[-0.3cm]
          &   &  &  \multicolumn{5}{c}{$T=500$}     &       & \multicolumn{5}{c}{$T=1,000$} \\
\\[-0.3cm]
    $\mu_1$   & 0     &       & 0.0012 & 0.0571 & 0.0251 & 0.0273 & 1.4230 &       & 0.0006 & 0.0406 & 0.0254 & 0.0281 & 1.4068 \\
\\[-0.3cm]    
    $\mu_2$   & 0     &       & 0.0009 & 0.0530 & 0.0253 & 0.0275 & 1.4164 &       & 0.0006 & 0.0377 & 0.0243 & 0.0279 & 1.4067 \\
\\[-0.3cm]
    $\mu_3$   & 0     &       & 0.0015 & 0.0531 & 0.0240 & 0.0287 & 1.4141 &       & 0.0006 & 0.0375 & 0.0256 & 0.0266 & 1.4168 \\
\\[-0.3cm]
    \multirow{9}[0]{*}{$\rm{vec}(\Xi)$} & 1     &       & 0.9970 & 0.0662 & 0.0236 & 0.0291 & 1.4115 &       & 0.9981 & 0.0467 & 0.0256 & 0.0275 & 1.4180 \\
\\[-0.3cm]
          & \cellcolor{black!8}0 & \cellcolor{black!8} & \cellcolor{black!8}0.0000 & \cellcolor{black!8}0.0046 & \cellcolor{black!8}0.0311 & \cellcolor{black!8}0.0316 & \cellcolor{black!8}2.0418 & \cellcolor{black!8} & \cellcolor{black!8}0.0000 & \cellcolor{black!8}0.0023 & \cellcolor{black!8}0.0322 & \cellcolor{black!8}0.0310 & \cellcolor{black!8}2.0194 \\
\\[-0.3cm]
          & \cellcolor{black!8}0 & \cellcolor{black!8} & \cellcolor{black!8}0.0000 & \cellcolor{black!8}0.0046 & \cellcolor{black!8}0.0312 & \cellcolor{black!8}0.0297 & \cellcolor{black!8}1.9968 & \cellcolor{black!8} & \cellcolor{black!8}0.0000 & \cellcolor{black!8}0.0023 & \cellcolor{black!8}0.0308 & \cellcolor{black!8}0.0308 & \cellcolor{black!8}2.0126 \\
\\[-0.3cm]
          & \cellcolor{black!8}0 & \cellcolor{black!8} & \cellcolor{black!8}0.0000 & \cellcolor{black!8}0.0058 & \cellcolor{black!8}0.0308 & \cellcolor{black!8}0.0317 & \cellcolor{black!8}2.0385 & \cellcolor{black!8} & \cellcolor{black!8}0.0000 & \cellcolor{black!8}0.0030 & \cellcolor{black!8}0.0320 & \cellcolor{black!8}0.0319 & \cellcolor{black!8}1.9756 \\
\\[-0.3cm]
          & 1     &       & 0.9982 & 0.0542 & 0.0243 & 0.0281 & 1.4208 &       & 0.9995 & 0.0388 & 0.0244 & 0.0280 & 1.3956 \\
\\[-0.3cm]
          & 0     &       & 0.0004 & 0.0286 & 0.0254 & 0.0257 & 1.4143 &       & 0.0001 & 0.0201 & 0.0249 & 0.0258 & 1.4202 \\
\\[-0.3cm]
          & \cellcolor{black!8}0 & \cellcolor{black!8} & \cellcolor{black!8}0.0000 & \cellcolor{black!8}0.0059 & \cellcolor{black!8}0.0313 & \cellcolor{black!8}0.0314 & \cellcolor{black!8}2.0388 & \cellcolor{black!8} & \cellcolor{black!8}0.0000 & \cellcolor{black!8}0.0029 & \cellcolor{black!8}0.0314 & \cellcolor{black!8}0.0312 & \cellcolor{black!8}1.9978 \\
\\[-0.3cm]
          & 0     &       & 0.0004 & 0.0286 & 0.0254 & 0.0257 & 1.4143 &       & 0.0001 & 0.0201 & 0.0249 & 0.0258 & 1.4202 \\
\\[-0.3cm]
          & 1     &       & 0.9975 & 0.0546 & 0.0244 & 0.0281 & 1.4134 &       & 0.9993 & 0.0388 & 0.0238 & 0.0282 & 1.4048 \\
\\[-0.3cm]
    $1/\nu_1$  & 1     &       & 1.0016 & 0.0690 & 0.0245 & 0.0294 & 1.4058 &       & 1.0008 & 0.0492 & 0.0262 & 0.0275 & 1.4023 \\
\\[-0.3cm]
    $1/\nu_2$  & 1     &       & 1.0017 & 0.0612 & 0.0232 & 0.0290 & 1.4062 &       & 1.0006 & 0.0431 & 0.0245 & 0.0293 & 1.4188 \\
\\[-0.3cm]
          &       &       &       &       &       &       &       &       &  \\
          &  &    & \multicolumn{5}{c}{$T=2,000$}     &       & \multicolumn{5}{c}{$T=4,000$} \\
\\[-0.2cm]
    $\mu_1$   & 0     &       & 0.0005 & 0.0287 & 0.0246 & 0.0283 & 1.4114 &       & 0.0003 & 0.0204 & 0.0257 & 0.0279 & 1.4078 \\
\\[-0.3cm]          
    $\mu_2$   & 0     &       & 0.0003 & 0.0266 & 0.0260 & 0.0269 & 1.4149 &       & 0.0001 & 0.0189 & 0.0262 & 0.0259 & 1.4100 \\
\\[-0.3cm]
    $\mu_3$   & 0     &       & 0.0003 & 0.0266 & 0.0259 & 0.0272 & 1.4115 &       & 0.0002 & 0.0188 & 0.0259 & 0.0267 & 1.4125 \\
\\[-0.3cm]
    \multirow{9}[0]{*}{$\rm{vec}(\Xi)$} & 1     &       & 0.9990 & 0.0331 & 0.0253 & 0.0281 & 1.4108 &       & 0.9996 & 0.0234 & 0.0268 & 0.0274 & 1.4113 \\
\\[-0.3cm]
          & \cellcolor{black!8}0 & \cellcolor{black!8} & \cellcolor{black!8}0.0000 & \cellcolor{black!8}0.0011 & \cellcolor{black!8}0.0308 & \cellcolor{black!8}0.0311 & \cellcolor{black!8}1.9808 & \cellcolor{black!8} & \cellcolor{black!8}0.0000 & \cellcolor{black!8}0.0006 & \cellcolor{black!8}0.0301 & \cellcolor{black!8}0.0319 & \cellcolor{black!8}2.0367 \\
\\[-0.3cm]
          & \cellcolor{black!8}0 & \cellcolor{black!8} & \cellcolor{black!8}0.0000 & \cellcolor{black!8}0.0011 & \cellcolor{black!8}0.0304 & \cellcolor{black!8}0.0312 & \cellcolor{black!8}2.0205 & \cellcolor{black!8} & \cellcolor{black!8}0.0000 & \cellcolor{black!8}0.0006 & \cellcolor{black!8}0.0309 & \cellcolor{black!8}0.0313 & \cellcolor{black!8}2.0302 \\
\\[-0.3cm]
          & \cellcolor{black!8}0 & \cellcolor{black!8} & \cellcolor{black!8}0.0000 & \cellcolor{black!8}0.0015 & \cellcolor{black!8}0.0320 & \cellcolor{black!8}0.0304 & \cellcolor{black!8}2.0041 & \cellcolor{black!8} & \cellcolor{black!8}0.0000 & \cellcolor{black!8}0.0007 & \cellcolor{black!8}0.0315 & \cellcolor{black!8}0.0311 & \cellcolor{black!8}2.0121 \\
\\[-0.3cm]
          & 1     &       & 0.9999 & 0.0275 & 0.0237 & 0.0286 & 1.4132 &       & 0.9997 & 0.0196 & 0.0257 & 0.0274 & 1.3995 \\
\\[-0.3cm]
          & 0     &       & 0.0001 & 0.0142 & 0.0250 & 0.0259 & 1.4159 &       & 0.0001 & 0.0100 & 0.0243 & 0.0268 & 1.4223 \\
\\[-0.3cm]
          & \cellcolor{black!8}0 & \cellcolor{black!8} & \cellcolor{black!8}0.0000 & \cellcolor{black!8}0.0015 & \cellcolor{black!8}0.0306 & \cellcolor{black!8}0.0314 & \cellcolor{black!8}2.0211 & \cellcolor{black!8} & \cellcolor{black!8}0.0000 & \cellcolor{black!8}0.0007 & \cellcolor{black!8}0.0303 & \cellcolor{black!8}0.0308 & \cellcolor{black!8}1.9923 \\
\\[-0.3cm]
          & 0     &       & 0.0001 & 0.0142 & 0.0250 & 0.0259 & 1.4159 &       & 0.0001 & 0.0100 & 0.0243 & 0.0268 & 1.4223 \\
\\[-0.3cm]
          & 1     &       & 0.9997 & 0.0276 & 0.0262 & 0.0278 & 1.4079 &       & 0.9999 & 0.0197 & 0.0252 & 0.0278 & 1.3992 \\
\\[-0.3cm]
    $1/\nu_1$  & 1     &       & 1.0003 & 0.0347 & 0.0263 & 0.0282 & 1.4183 &       & 1.0004 & 0.0246 & 0.0259 & 0.0286 & 1.4120 \\
\\[-0.3cm]
    $1/\nu_2$  & 1     &       & 1.0004 & 0.0306 & 0.0254 & 0.0282 & 1.4095 &       & 1.0002 & 0.0217 & 0.0251 & 0.0278 & 1.4110 \\
\\[0.0cm]
\\[-0.5cm]
\midrule
\bottomrule
\end{tabularx}
\end{footnotesize}
\par\end{centering}
{\small{}Note: The mean, standard deviation of the MLE estimated parameters
for convolution-$t$ distribution $\nu_{1}=\nu_{2}=1$. We conduct
a simulation study with sample sizes $T=500,1000,2000,4000$, based
on 50,000 Monte-Carlo simulations. This design has $\nu_{1}=\nu_{2}=1$
that is outside the range covered by the results in Theorem \ref{thm:MLE-consistent-asN}.
The shaded regions highlights the coefficients in the off-diagonal
blocks of $\Xi$, which appear to converge at a faster rate.\label{tab:MLEv1}}{\small\par}
\end{table}

\begin{table}
\caption{Simulations with $\nu=1/2$}

\begin{centering}
\vspace{0.2cm}
\begin{footnotesize}
\begin{tabularx}{\textwidth}{p{0.8cm}Yp{-0.6cm}YYYYYp{-0.6cm}YYYYYY}
\toprule
\midrule
          & True &       & Mean   & Std   & $\alpha_L^{0.025}$  &$\alpha_R^{0.025}$ & $R_\sigma$ &       & Mean   & Std   & $\alpha_L^{0.025}$  &$\alpha_R^{0.025}$ & $R_\sigma$ \\
\midrule
\\[-0.3cm]
          &   &  &  \multicolumn{5}{c}{$T=500$}     &       & \multicolumn{5}{c}{$T=1,000$} \\
\\[-0.3cm]
    $\mu_1$   & 0     &       & 0.0008 & 0.0519 & 0.0320 & 0.0287 & 1.4181 &       & 0.0006 & 0.0362 & 0.0311 & 0.0292 & 1.4329 \\
\\[-0.3cm]    
    $\mu_2$   & 0     &       & 0.0008 & 0.0469 & 0.0303 & 0.0303 & 1.4207 &       & 0.0001 & 0.0332 & 0.0315 & 0.0287 & 1.4112 \\
\\[-0.3cm]
    $\mu_3$   & 0     &       & 0.0004 & 0.0473 & 0.0310 & 0.0286 & 1.4094 &       & 0.0002 & 0.0333 & 0.0325 & 0.0282 & 1.4213 \\
\\[-0.3cm]
    \multirow{9}[0]{*}{$\rm{vec}(\Xi)$} & 1     &       & 0.9962 & 0.0646 & 0.0333 & 0.0286 & 1.4107 &       & 0.9977 & 0.0459 & 0.0336 & 0.0273 & 1.4083 \\
\\[-0.3cm]
          & \cellcolor{black!8}0 & \cellcolor{black!8} & \cellcolor{black!8}0.0000 & \cellcolor{black!8} 1.0261$^\dagger$ & \cellcolor{black!8}0.0180 & \cellcolor{black!8}0.0188 & \cellcolor{black!8}3.9108 & \cellcolor{black!8} & \cellcolor{black!8}0.0000 & \cellcolor{black!8}0.2622$^\dagger$ & \cellcolor{black!8}0.0183 & \cellcolor{black!8}0.0178 & \cellcolor{black!8}3.9136 \\
\\[-0.3cm]
          & \cellcolor{black!8}0 & \cellcolor{black!8} & \cellcolor{black!8}0.0000 & \cellcolor{black!8}1.0403$^\dagger$ & \cellcolor{black!8}0.0178 & \cellcolor{black!8}0.0190 & \cellcolor{black!8}3.7635 & \cellcolor{black!8} & \cellcolor{black!8}0.0000 & \cellcolor{black!8}0.2644$^\dagger$ & \cellcolor{black!8}0.0184 & \cellcolor{black!8}0.0179 & \cellcolor{black!8}3.9347 \\
\\[-0.3cm]
          & \cellcolor{black!8}0 & \cellcolor{black!8} & \cellcolor{black!8}0.0000 & \cellcolor{black!8}1.5685$^\dagger$ & \cellcolor{black!8}0.0217 & \cellcolor{black!8}0.0217 & \cellcolor{black!8}4.1445 & \cellcolor{black!8} & \cellcolor{black!8}0.0000 & \cellcolor{black!8}0.4179$^\dagger$ & \cellcolor{black!8}0.0208 & \cellcolor{black!8}0.0213 & \cellcolor{black!8}3.7537 \\
\\[-0.3cm]
          & 1     &       & 0.9989 & 0.0506 & 0.0306 & 0.0297 & 1.4193 &       & 0.9995 & 0.0363 & 0.0311 & 0.0312 & 1.3929 \\
\\[-0.3cm]
          & 0     &       & 0.0004 & 0.0271 & 0.0269 & 0.0277 & 1.4190 &       & 0.0002 & 0.0190 & 0.0279 & 0.0283 & 1.4270 \\
\\[-0.3cm]
          & \cellcolor{black!8}0 & \cellcolor{black!8} & \cellcolor{black!8}0.0000 & \cellcolor{black!8}1.5737$^\dagger$ & \cellcolor{black!8}0.0211 & \cellcolor{black!8}0.0215 & \cellcolor{black!8}4.0704 & \cellcolor{black!8} & \cellcolor{black!8}0.0000 & \cellcolor{black!8}0.4232$^\dagger$ & \cellcolor{black!8}0.0207 & \cellcolor{black!8}0.0211 & \cellcolor{black!8}3.7182 \\
\\[-0.3cm]
          & 0     &       & 0.0004 & 0.0271 & 0.0269 & 0.0277 & 1.4190 &       & 0.0002 & 0.0190 & 0.0279 & 0.0283 & 1.4270 \\
\\[-0.3cm]
          & 1     &       & 0.9989 & 0.0509 & 0.0299 & 0.0312 & 1.4046 &       & 0.9993 & 0.0360 & 0.0315 & 0.0289 & 1.4131 \\
\\[-0.3cm]
    $1/\nu_1$  & 2     &       & 2.0026 & 0.1201 & 0.0221 & 0.0283 & 1.4040 &       & 2.0012 & 0.0841 & 0.0226 & 0.0267 & 1.4293 \\
\\[-0.3cm]
    $1/\nu_2$  & 2     &       & 2.0018 & 0.1089 & 0.0219 & 0.0286 & 1.4223 &       & 2.0010 & 0.0768 & 0.0236 & 0.0262 & 1.4173 \\
\\[-0.3cm]
          &       &       &       &       &       &       &       &       &  \\
          &  &    & \multicolumn{5}{c}{$T=2,000$}     &       & \multicolumn{5}{c}{$T=4,000$} \\
\\[-0.2cm]
    $\mu_1$   & 0     &       & 0.0001 & 0.0263 & 0.0332 & 0.0288 & 1.3797 &       & 0.0001 & 0.0190 & 0.0318 & 0.0292 & 1.3828 \\
\\[-0.3cm]          
    $\mu_2$   & 0     &       & 0.0002 & 0.0239 & 0.0339 & 0.0271 & 1.3910 &       & 0.0000 & 0.0173 & 0.0314 & 0.0281 & 1.3815 \\
\\[-0.3cm]
    $\mu_3$   & 0     &       & 0.0000 & 0.0237 & 0.0313 & 0.0280 & 1.4020 &       & 0.0001 & 0.0173 & 0.0308 & 0.0288 & 1.3673 \\
\\[-0.3cm]
    \multirow{9}[0]{*}{$\rm{vec}(\Xi)$} & 1     &       & 0.9989 & 0.0325 & 0.0353 & 0.0290 & 1.4099 &       & 0.9993 & 0.0240 & 0.0350 & 0.0282 & 1.3547 \\
\\[-0.3cm]
          & \cellcolor{black!8}0 & \cellcolor{black!8} & \cellcolor{black!8}0.0000 & \cellcolor{black!8}0.0658$^\dagger$ & \cellcolor{black!8}0.0176 & \cellcolor{black!8}0.0185 & \cellcolor{black!8}3.9835 & \cellcolor{black!8} & \cellcolor{black!8}0.0000 & \cellcolor{black!8}0.0160$^\dagger$ & \cellcolor{black!8}0.0173 & \cellcolor{black!8}0.0175 & \cellcolor{black!8}4.1040 \\
\\[-0.3cm]
          & \cellcolor{black!8}0 & \cellcolor{black!8} & \cellcolor{black!8}0.0000 & \cellcolor{black!8}0.0675$^\dagger$ & \cellcolor{black!8}0.0171 & \cellcolor{black!8}0.0179 & \cellcolor{black!8}3.9150 & \cellcolor{black!8} & \cellcolor{black!8}0.0000 & \cellcolor{black!8}0.0153$^\dagger$ & \cellcolor{black!8}0.0188 & \cellcolor{black!8}0.0188 & \cellcolor{black!8}4.4151 \\
\\[-0.3cm]
          & \cellcolor{black!8}0 & \cellcolor{black!8} & \cellcolor{black!8}0.0000 & \cellcolor{black!8}0.1058$^\dagger$ & \cellcolor{black!8}0.0195 & \cellcolor{black!8}0.0215 & \cellcolor{black!8}3.9487 & \cellcolor{black!8} & \cellcolor{black!8}0.0000 & \cellcolor{black!8}0.0257$^\dagger$ & \cellcolor{black!8}0.0213 & \cellcolor{black!8}0.0218 & \cellcolor{black!8}4.1253 \\
\\[-0.3cm]
          & 1     &       & 0.9997 & 0.0259 & 0.0313 & 0.0296 & 1.4041 &       & 0.9998 & 0.0190 & 0.0314 & 0.0302 & 1.3608 \\
\\[-0.3cm]
          & 0     &       & 0.0001 & 0.0136 & 0.0282 & 0.0277 & 1.3949 &       & 0.0001 & 0.0097 & 0.0280 & 0.0280 & 1.3985 \\
\\[-0.3cm]
          & \cellcolor{black!8}0 & \cellcolor{black!8} & \cellcolor{black!8}0.0000 & \cellcolor{black!8}0.1071$^\dagger$ & \cellcolor{black!8}0.0200 & \cellcolor{black!8}0.0216 & \cellcolor{black!8}3.9518 & \cellcolor{black!8} & \cellcolor{black!8}0.0000 & \cellcolor{black!8}0.0262$^\dagger$ & \cellcolor{black!8}0.0207 & \cellcolor{black!8}0.0210 & \cellcolor{black!8}4.0892 \\
\\[-0.3cm]
          & 0     &       & 0.0001 & 0.0136 & 0.0282 & 0.0277 & 1.3949 &       & 0.0001 & 0.0097 & 0.0280 & 0.0280 & 1.3985 \\
\\[-0.3cm]
          & 1     &       & 0.9997 & 0.0259 & 0.0315 & 0.0303 & 1.3908 &       & 0.9997 & 0.0190 & 0.0320 & 0.0287 & 1.3608 \\
\\[-0.3cm]
    $1/\nu_1$  & 2     &       & 2.0006 & 0.0596 & 0.0242 & 0.0257 & 1.4109 &       & 2.0004 & 0.0423 & 0.0240 & 0.0264 & 1.4086 \\
\\[-0.3cm]
    $1/\nu_2$  & 2     &       & 2.0008 & 0.0545 & 0.0219 & 0.0271 & 1.4092 &       & 2.0005 & 0.0387 & 0.0236 & 0.0261 & 1.4106 \\
\\[0.0cm]
\\[-0.5cm]
\midrule
\bottomrule
\end{tabularx}
\end{footnotesize}
\par\end{centering}
{\small{}Note: The mean, standard deviation of the MLE estimated parameters
for convolution-$t$ distribution $\nu_{1}=\nu_{2}=1/2$. We conduct
a simulation study with sample sizes $T=500,1000,2000,4000$, based
on 50,000 Monte-Carlo simulations. The unit of the values with $\dagger$
in the superscript is $10^{-4}$. This design has $\nu_{1}=\nu_{2}=1/2$
that is outside the range covered by the results in Theorem \ref{thm:MLE-consistent-asN}.
The shaded regions highlights the coefficients in the off-diagonal
blocks of $\Xi$, which appear to converge at a faster rate.\label{tab:MLEv05}}{\small\par}
\end{table}

\begin{figure}[ph]
\begin{centering}
\includegraphics[width=1\textwidth]{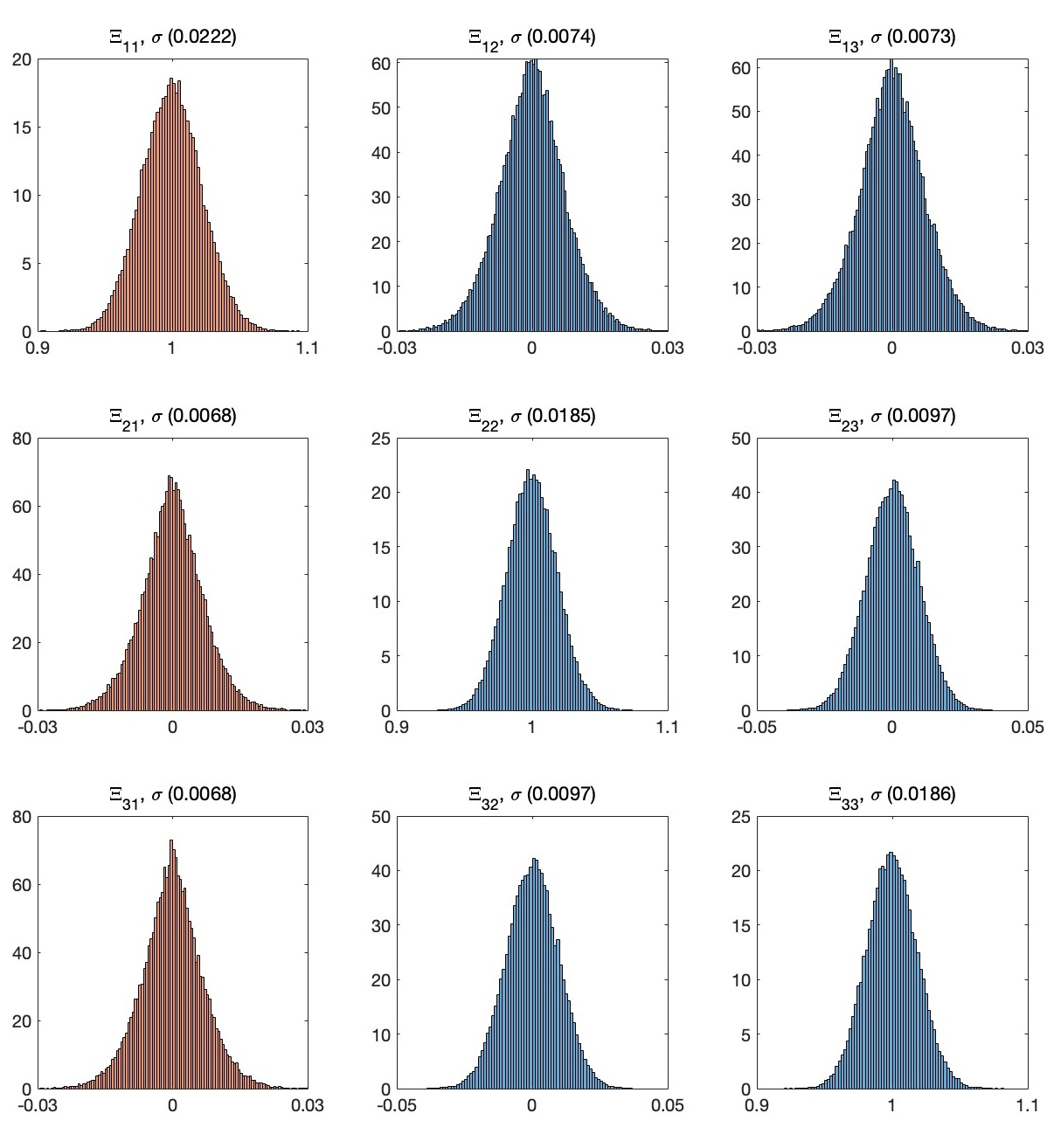}
\par\end{centering}
\caption{{\small{}Histograms for the elements of $\hat{\Xi}$ when $\nu=2$
and $T=4,000$. We also report the standard deviations in each subtitles.\label{fig:nu2_N4000}}}
\end{figure}

\begin{figure}[ph]
\begin{centering}
\includegraphics[width=1\textwidth]{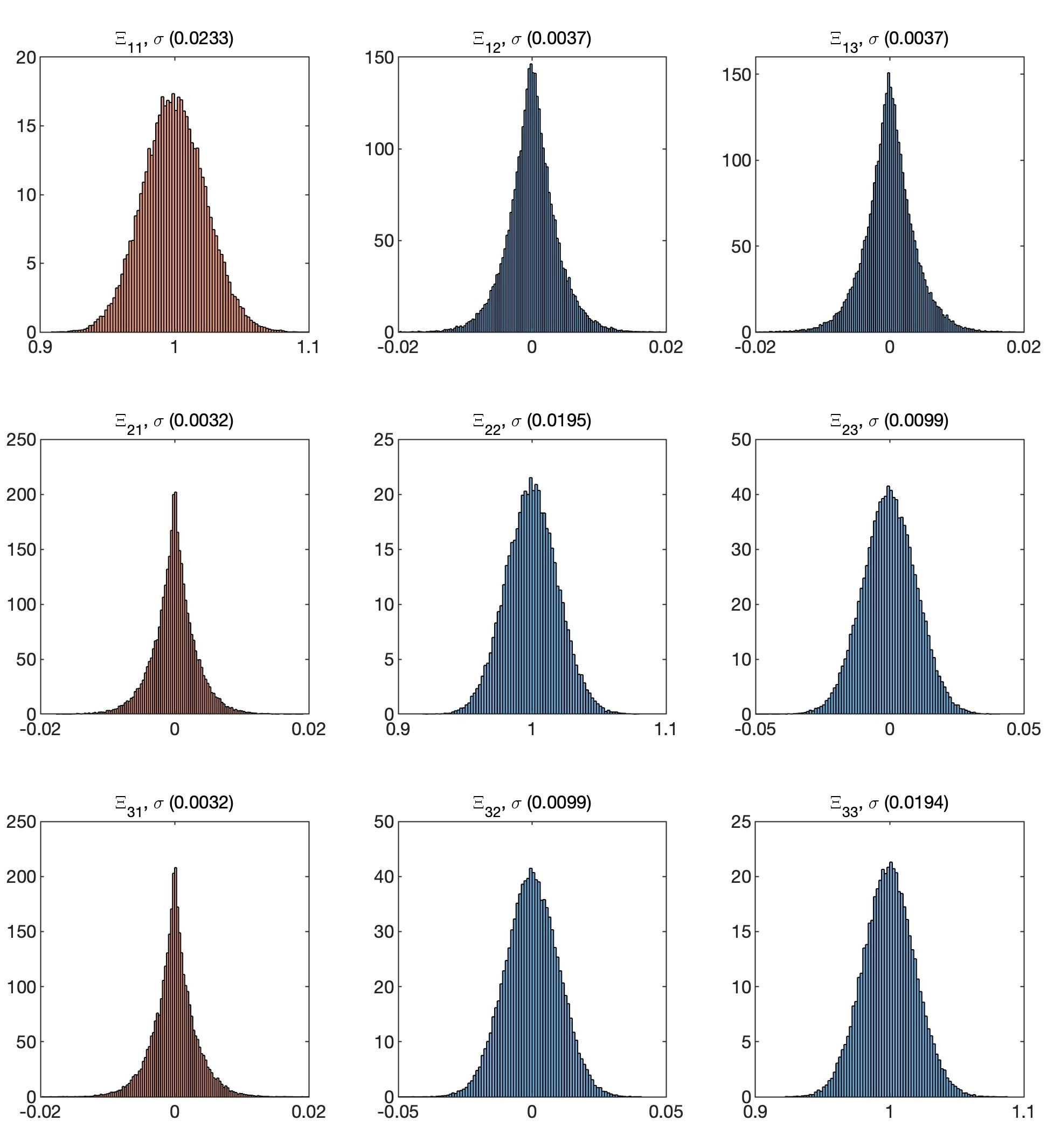}
\par\end{centering}
\caption{{\small{}Histograms for the elements of $\hat{\Xi}$ when $\nu=3/2$
and $T=4,000$. We also report the standard deviations in each subtitles.\label{fig:nu15_N4000}}}
\end{figure}

\begin{figure}[ph]
\begin{centering}
\includegraphics[width=1\textwidth]{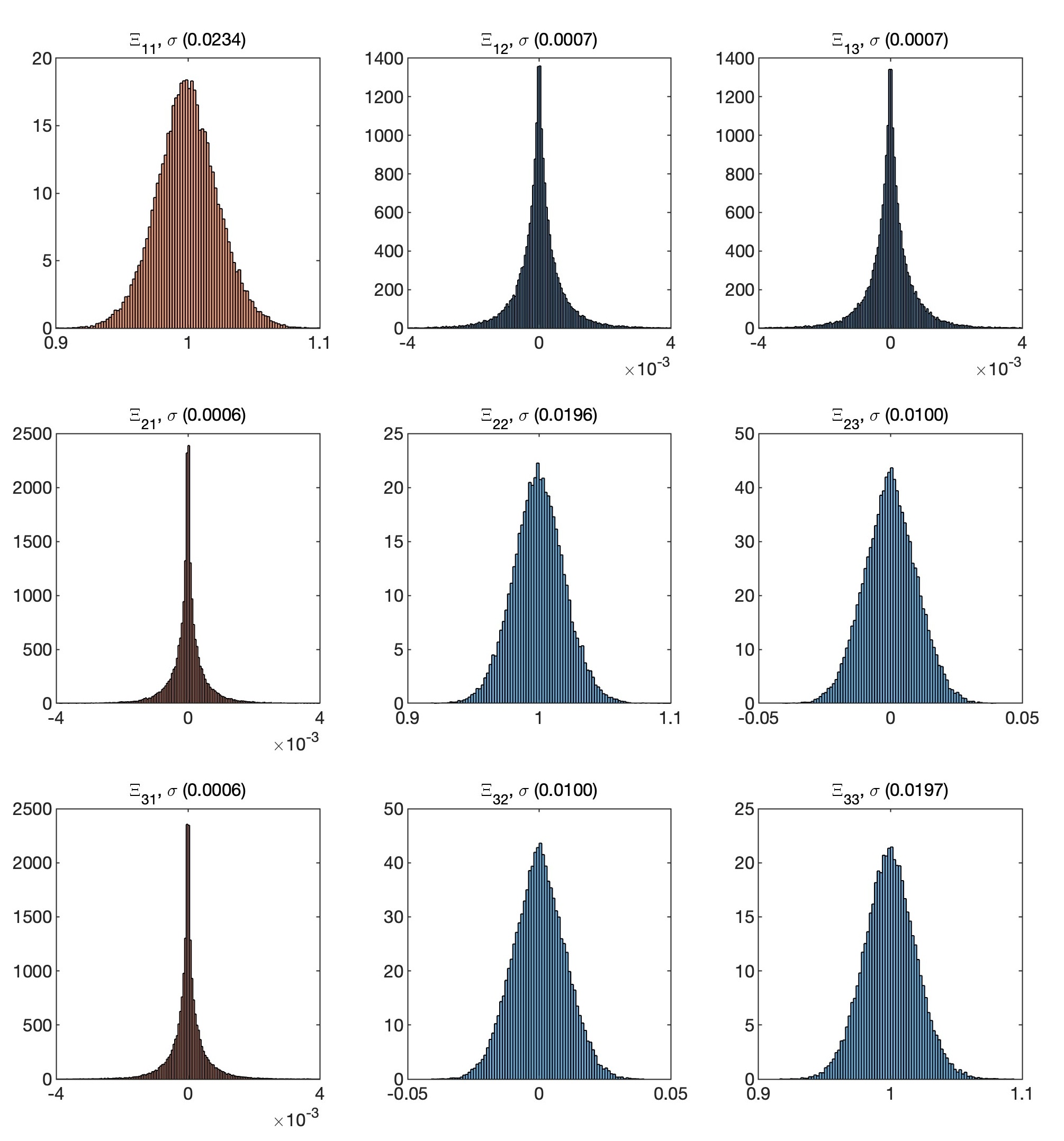}
\par\end{centering}
\caption{{\small{}Histograms for the elements of $\hat{\Xi}$ when $\nu=1$
and $T=4,000$. We also report the standard deviations in each subtitles.\label{fig:nu1_N4000}}}
\end{figure}

\begin{figure}[ph]
\begin{centering}
\includegraphics[width=1\textwidth]{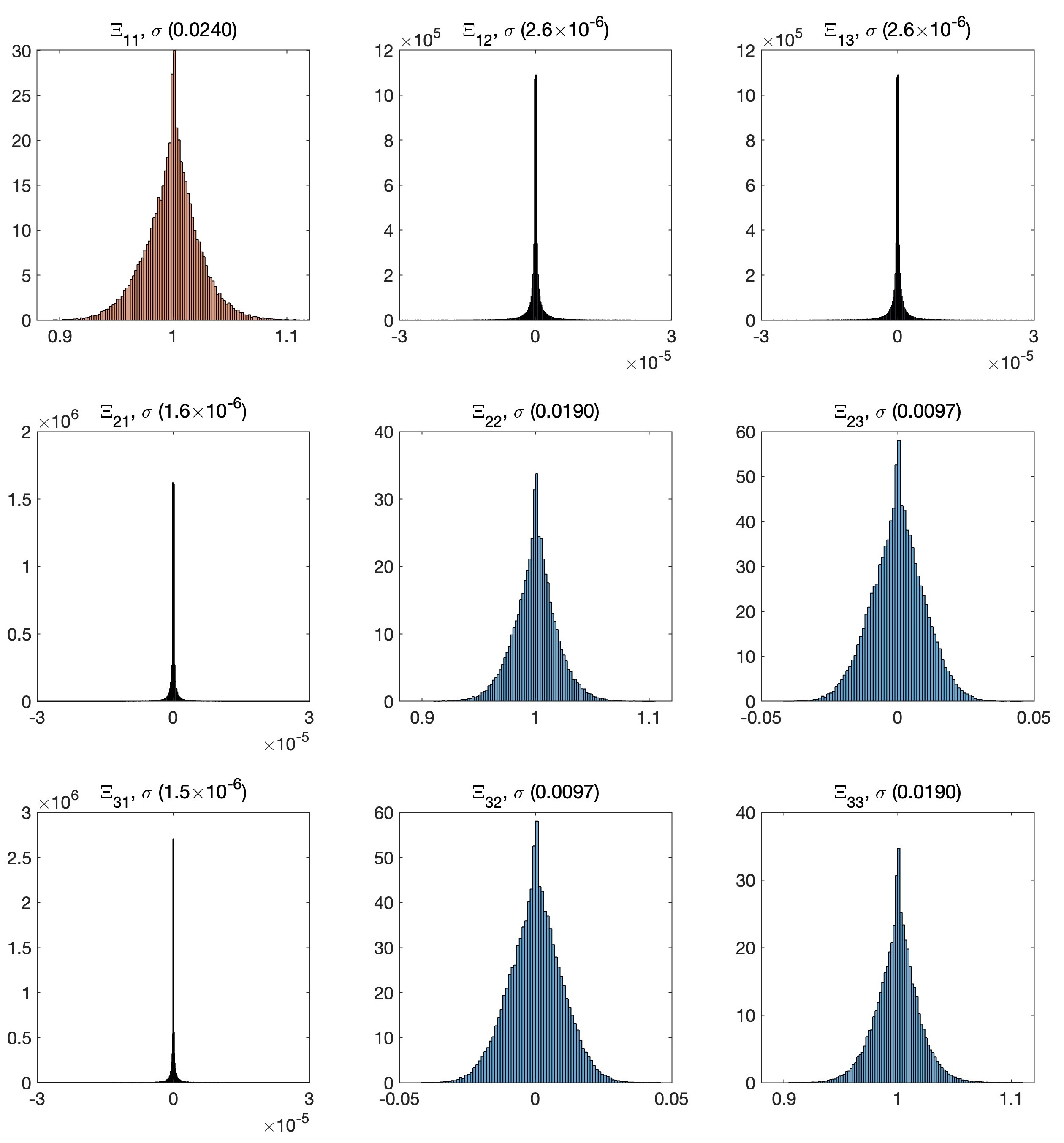}
\par\end{centering}
\caption{{\small{}Histograms for the elements of $\hat{\Xi}$ when $\nu=1/2$
and $T=4,000$. We also report the standard deviations in each subtitles.
\label{fig:nu05_N4000}}}
\end{figure}

\end{document}